\documentclass{article}
\usepackage{authblk}
\usepackage[english]{babel}
\usepackage[top=1in,bottom=1in,left=1in,right=1in]{geometry}
\usepackage{indentfirst}
\usepackage{mathtools}
\usepackage{amssymb}
\usepackage{amsthm}
\usepackage{bm}
\usepackage[table]{xcolor}
\usepackage{graphicx}
\usepackage[percent]{overpic}
\usepackage{tikz}
\usepackage{caption}
\usepackage{subfigure} 
\usepackage{float}
\usepackage{enumerate}
\usepackage[title]{appendix}
\usepackage{todonotes}
\usepackage{hyperref}

\presetkeys{todonotes}{inline}{}
\presetkeys{todonotes}{
  color=white,      
  bordercolor=black,
  textcolor=black,
  tickmarkheight=0pt  
}{}
\let\oldtodo\todo
\renewcommand{\todo}[1]{\oldtodo{\textbf{\color{red} ToDo:}~#1}}

\newcommand{\swatch}[1]{%
  \begingroup
  \definecolor{swatchcol}{HTML}{#1}%
  \tikz[baseline=(sw.base)]{
    \node[
      inner sep=0pt,
      outer sep=0pt,
      minimum width=1.9em,
      minimum height=0.9em,
      rounded corners=1.5pt,
      draw=black!40,
      fill=swatchcol
    ] (sw) {};
  }%
  \endgroup
}
\newcommand{\legenditem}[2]{\swatch{#1} & \raisebox{-0.75ex}{\texttt{#2}}}

\numberwithin{equation}{section}
\newtheorem{theorem}{Theorem}[section]
\newtheorem{proposition}[theorem]{Proposition}
\newtheorem{conjecture}[theorem]{Conjecture}
\newtheorem{corollary}[theorem]{Corollary}

\newtheorem{remark}[theorem]{Remark}

\newtheorem{lemma}[theorem]{Lemma}

\newcommand{\calS}{\mathcal{S}}
\newcommand{\R}{\mathbb{R}}
\newcommand{\Z}{\mathbb{Z}}

\newcommand{\muT}{\mu_{\Tt}}
\newcommand{\Ps}[1]{\mathcal{P}_{\!#1}}

\DeclareMathOperator{\Ll}{L}
\DeclareMathOperator{\Aa}{A}
\DeclareMathOperator{\Ee}{E}
\DeclareMathOperator{\Tt}{T}
\DeclareMathOperator{\Xx}{X}
\DeclareMathOperator{\Zz}{Z}

\title{Lagrange points of the restricted three-body problem in spaces of constant curvature}

\author{
Miguel Ayala\thanks{Department of Mathematics and Statistics, McGill University, 805 Sherbrooke Street West, Montreal, QC H3A 0B9, Canada. E-mail: \url{miguel.ayala@mail.mcgill.ca}},
Carlos Barrera-Anzaldo\thanks{Departamento Académico de Matemáticas, Instituto Tecnológico Autónomo de México, Calle Río Hondo 1, 01080 Ciudad de México, Mexico. E-mail: \url{carlos.barrera.anzaldo@itam.mx}},
Luis C. García-Naranjo\thanks{Dipartimento di Matematica ``Tullio Levi-Civita'', Università degli Studi di Padova, Via Trieste 63, 35121 Padova, Italy. E-mail: \url{luis.garcianaranjo@math.unipd.it}}
}

\date{\today}

\definecolor{graycover}{RGB}{80,80,80}
\definecolor{azulciencias}{RGB}{11,61,98}
\definecolor{tealink}{RGB}{21, 154, 137}
\definecolor{BrickRed}{rgb}{0.6, 0.2, 0.2}
\hypersetup{
   pdftitle={},
   pdfauthor={},
   pdfsubject={},
   pdfkeywords={},
   bookmarksnumbered=true,     
   bookmarksopen=true,         
   bookmarksopenlevel=1,       
   colorlinks=true,         
    linkcolor=black, 
   citecolor=BrickRed,
   pdfstartview=Fit,           
   pdfpagemode=UseOutlines,    
   pdfpagelayout=TwoPageRight
}

\begin{document}

\maketitle

\begin{abstract}
We continue the study initiated by Kilin ({\em Reg. Chaot. Dyn.} 4, (1999)) and by Mart\'inez and Sim\'o  ({\em Celest. Mech. Dynam. Astronom.} 128, (2017)) 
on the classification and stability of the relative equilibria of the restricted three-body problem in two-dimensional spaces of constant curvature, which generalize the
 classical Lagrange points of the planar problem. After formulating the problem as an autonomous Lagrangian system with two degrees of freedom, whose only parameters are the curvature 
$\kappa$ and the mass ratio $\mu$ of the primaries, we establish several classification results for the case $\kappa>0$ by combining analytical methods with computer-assisted proofs. 
These results provide rigorous confirmation of phenomena for which previously only numerical evidence was available. We also provide topological explanations for the qualitative 
differences between the behavior of relative equilibria in positive curvature and that observed in the planar and negative-curvature cases. 
Our analysis focuses on the regime of small $\mu$ and indicates that positive curvature has a stabilizing effect on the triangular equilibria 
$\Ll_4$ and $\Ll_5$, whereas negative curvature appears to have the opposite effect.
\end{abstract}

\medskip

\noindent\textbf{MSC (2020):} 70F07, 37N05, 37J20, 37J25, 65G30.

\medskip

\noindent\textbf{Keywords:} $N$-body problem in curved spaces, restricted three-body problem, Lagrange points, bifurcation, stability, computer-assisted proofs.

\tableofcontents

\section{Introduction}

This paper continues the research program started by Kilin \cite{Ki99}
and Mart\'inez and Sim\'o \cite{MS17} on the
classification and stability of relative
equilibria (RE) of the (circular) restricted
3-body problem 
(R3BP) in two-dimensional spaces
of constant curvature. The goal is to understand the behavior of these solutions, which generalize
the classical Lagrangian libration points
from the planar case, as functions of the
mass ratio between the primaries, $\mu:=\mu_2/\mu_1$, and the spatial curvature, $\kappa$.
As already pointed out in the abstract of
\cite{MS17},  the situation for negative
$\kappa$ is very similar to the 
planar one. Instead, for $\kappa>0$ new RE
arise and several bifurcations occur 
as the parameters $(\mu, \kappa)$ are varied.
Hence, this work considers mainly 
$\kappa>0$. We also  focus on the 
regime of small  $\mu \in (0,1)$,
in which one of the primaries is
much larger than the other, since it
is in this situation that the classical equilateral
Lagrange points $\Ll_4,\Ll_5$, become
linearly stable for $\kappa=0$. Our study is therefore complementary to that of 
 Andrade, P\'erez-Chavela, Vidal \cite{APV18}, who study the same problem  in the
case of equal masses ($\mu=1$).

One of the driving motivations for this work was to understand mechanisms leading to stability of 
relative equilibria in the full three-body problem in spaces of constant curvature,
especially in the positive case. A number
of recent papers focus on their existence,
\cite{FuPC24, Fuj24,  Zhu24},
but there are few stability results.
On the other hand, it is known that (at least in the planar case) the presence
of one large mass in a RE has a stabilizing 
effect \cite{R60}. As was mentioned
above, it is because of this reason that our study focuses on the regime of small  $\mu$.

The  motivation described above to study the R3BP is unrelated to 
Poincar\'e's  discovery of the homoclinic tangling that  leads to chaotic behavior  in the planar $N$-body problem as the number of
$N$ bodies increases from two to three, and which 
continues to motivate research these days (e.g. \cite{GuarMarSe16, BaGiGuar22, BaGiGuar23, LaGuarSe25}). 
In the case of
non-zero curvature, the underlying two-body problem is
already non-integrable, so the appropriate system to investigate the transition from
integrability to chaos as the number of bodies increases is the restricted
2-body problem considered in~\cite{Prz03, BoMa06}. We refer the reader to~\cite{Shche06, BorMaKi16, Bol25}
for the nonintegrability of the 2-body problem for non-zero curvature (see also~\cite{Jack23} for discussion of secular dynamics). 

Even though the two-body problem is non-integrable for non-zero curvature, 
the R3BP, in its circular version, is nevertheless well defined,
 since the circular orbits of the
planar two-body problem persist analytically 
 for non-vanishing curvature
\cite{GNM21}. Moreover, these circular orbits appear to be nonlinearly stable 
on an open dense subset of the parameter space \cite{GNBMM}, providing  a
natural and robust framework  for formulating the restricted three-body problem.

Although our interest in the problem is mainly mathematical,
our results may be useful to understand the influence
of spatial curvature on the stability of the Lagrange libration 
points  when relativistic effects are
not completely negligible, as for the Sun-Mercury R3BP. For instance, our results in Sec.~\ref{sec:stabilitytriangular}
 prove that positive curvature
has a stabilizing effect for $\Ll_4$ and   $\Ll_5$, whereas  numerical evidence indicates that the opposite is
 true for  negative curvature.

\subsection{Contributions}

A first contribution of this work is to 
formulate the R3BP in spaces of constant curvature
as an autonomous two-degree of freedom Lagrangian system
in which the mass ratio $\mu$ and the curvature
$\kappa$ appear as the only parameters. In our approach, which
borrows ideas from \cite{MT13, GNM21}, the
Riemannian distance between the primaries is normalized to $1$
regardless of the curvature $\kappa$. This is essential
to appropriately distinguish the role played by the curvature, which constitutes a considerable improvement with respect to the normalization 
approaches followed in \cite{MS17}. Our approach also exploits the full classification of circular orbits in spaces of nonzero constant curvature, which correspond to certain relative equilibria, obtained in \cite{GNBMM}. This classification was not available when \cite{Ki99, MS17} were written.  Our set-up
allows us to translate the problem of 
existence of RE into the problem of finding critical
points of a potential function 
  $ \mathcal{V}_{\kappa,\mu}$ (defined in \ref{eq:scaledpotential}) that depends on the parameters $\kappa$ and $\mu$, and is also useful to
study stability.

A second contribution is to give a topological
explanation for the appearance of new RE for positive curvature. As  observed in \cite{MS17}, the 
classical Lagrange points $\Ll_1,\dots, \Ll_5$, of the planar case are non-degenerate and they therefore must persist as RE, at least for small non-zero curvature. On the other hand,
the numerical results of \cite{Ki99,MS17}, suggest
existence of  new RE  for small positive $\kappa$.  The mechanism responsible for this phenomenon is clarified in 
the light of Theorem~\ref{prop:eqptsSpos} (see Corollary \ref{cor:extraRE}) and may be illustrated with an even simpler example: consider  
the Kepler (or, more generally, the central
force problem) on the sphere. It is clear that, in addition to the singularity at the force field's center,
there must be a singularity or an equilibrium at its antipodal point. Moreover, the distance from
such an antipodal point to the field's center tends to infinity as the radius of the sphere grows, explaining   how this additional singularity or equilibrium disappears when $\kappa= 0$.
 In a similar manner, the compactness
of the sphere restricts the number and type of
RE  for the R3BP for $\kappa>0$ in a way that is quantifiable
by the Poincar\'e-Hopf Index Theorem. This topological
restriction implies that 
`new' RE must necessarily exist for small $\kappa>0$. The
distance from these new RE to the primaries
tends to infinity as $\kappa\to 0^+$ and only the
Lagrange points $\Ll_1,\dots, \Ll_5$ remain for $\kappa=0$. These observations represent a significant contribution with respect to \cite{MS17}, where the analysis for $\kappa>0$ is restricted to the hemisphere containing the primaries and  no
global aspects are taken into consideration. Moreover, Theorem~\ref{prop:eqptsSpos} is also useful 
in the analysis of the bifurcations of RE for $\kappa>0$, since it leads to a restriction on the types of bifurcations
that may occur.

A third contribution of this work is the derivation of necessary and sufficient conditions for the 
existence and stability of RE that are valid for all parameter values $(\kappa,\mu)$. 
Our work relies on these conditions to obtain results in certain regions of the parameter space, but the
criteria   are equally applicable to the investigation of regions not considered in this paper.

A fourth contribution is to provide rigorous classification and stability results for the RE of the problem in certain regions of the parameter space $(\kappa,\mu)$ with $\kappa>0$. Such regions often require $\mu$ to be small. Our approach to investigating existence is analogous to the one usually employed in the planar case (see, e.g., \cite{MO17}): we divide the analysis into collinear and triangular RE and make appropriate coordinate choices. Our analytical proofs are complemented by computer-assisted proofs (CAPs), which rely on the Intermediate Value Theorem from elementary calculus combined with interval arithmetic. These rigorous results are all original since, for $\kappa>0$, and with the exception of the limit cases $\mu=0$ and $\mu=1$, the investigation in \cite{Ki99, MS17} (including classification and stability results) is only numerical. Our combination of 
analytical and computer-assisted  proofs is particularly powerful for the following reason. On the one hand, 
CAPs yields rigorous results on large regions of the parameter space. On the other hand, analytical
proofs are necessary to formulate rigorous continuation arguments as the curvature $\kappa$ approaches $0$, since CAPs become inconclusive for small values of $\kappa$.

Our rigorous existence and stability results are complemented by numerical investigations in wider regions of the parameter space $(\kappa,\mu)$, as well as by bifurcation diagrams. Some of these reproduce earlier results from \cite{Ki99, MS17}, whereas others, such as Fig.~\ref{fig:asympcollinear}, are presented in terms of the Riemannian distance from the Lagrange points to the center of mass of the primaries. This approach clarifies their behavior as the curvature $\kappa \to 0$. We also perform asymptotic expansions for the locations of the RE as $\kappa \to 0$, showing that some of them converge to the classical Lagrange points $\Ll_1,\dots,\Ll_5$, while others tend to infinity.

\subsection{The parameter region}

As mentioned above, our work considers the case in which the primaries have different masses and their mutual Riemannian distance is normalized to 1. Furthermore, in the case of positive curvature $\kappa$, we restrict our attention to circular motions of the primaries that subtend an acute angle, since only this family admits a continuation as a function of $\kappa$ through $\kappa=0$ \cite{GNM21}. As explained in more
detail in Section~\ref{ss:PrimaryMotion}, these considerations imply that the parameters $\kappa$, $\mu$
take values on the domain
\begin{equation*}
(\kappa, \mu)\in \left (-\infty, \frac{\pi^2}{4}\right ) \times (0,1).
\end{equation*}

In our treatment of the problem for positive curvature, given $0<\mu_0\leq 1$, we will often denote
by $\Ps{\mu_0}$ the parameter region consisting of all allowed positive values of the curvature and mass ratios 
less than $\mu_0$. Namely, 
\begin{equation}
\label{eq:defPs}
\Ps{\mu_0}:= \left (0, \frac{\pi^2}{4}\right ) \times (0,\mu_0).
\end{equation}
In particular, in order to provide a complete description of the structure of  the
RE, our CAPs and numerical investigations were,
for the most part,  implemented only for $(\kappa,\mu)\in \Ps{\mu_{\Tt}}$
where $\mu_{\Tt}\approx 0.089$ is defined in Sec.~\ref{subsec:massratio}. Our code for the 
existence and stability CAPs
\cite{github_codes} also works  outside this parameter region.
However, a complete description of the behavior of the RE as a function of the parameter $\kappa$, valid for a broader range of values of $\mu$, becomes considerably more intricate, and we therefore 
did not pursue more generality.

\subsection{Organization of the paper}

We begin by recalling a number of preliminaries in  Sec.~\ref{sec:pre}. In particular,
we review a geometric construction of surfaces of constant 
curvature, originally due to Montaldi, and Tokieda~\cite{MT13}, which allows the curvature to vary continuously
 from negative, through zero, to positive values whlie 
keeping the distance between two points (in our case, the primaries)
fixed. We also recall the results of \cite{GNBMM,GNM21} on circular solutions of the two-body problem in spaces
of constant curvature, which model the motion of the primaries, and adapt it to our setting.

In Sec.~\ref{sec:R3BP} we formulate the R3BP in surfaces of constant curvature as an autonomous $2$-degree of
freedom mechanical Lagrangian system. The key point of our construction is that 
the curvature $\kappa$ and the mass ratio $\mu$ are the only parameters appearing in the Lagrangian.
In particular, we give explicit expressions for the (augmented) potential, $\mathcal{V}_{\kappa,\mu}$,
 whose critical points are in one-to-one correspondence with the RE of the system. 

In Sec.~\ref{sec:collinearclass} we formulate Theorem~\ref{prop:eqptsSpos} about the topological restrictions on the
number and type of RE for $\kappa>0$. We also present necessary and sufficient conditions for existence of 
RE and convenient stability criteria that are used in the following sections. The discussion is naturally
divided into collinear and triangular RE, treating separately the cases of positive and negative $\kappa$.

Our existence results for collinear and triangular RE for positive $\kappa$ are respectively given in 
Secs.~\ref{sec:collinear} and \ref{sec:triangular}. These include Theorems~\ref{thm:numbercollinearRE} and~\ref{thm:numbertriangularRE}, which are proved analytically, complemented
with existence CAPs. Summaries of the bifurcation behavior of RE in the parameter region $\Ps{\muT}$, 
based on our rigorous and 
numerical results, are presented in Conjectures~\ref{thm:classcollinear} and \ref{thm:classtriangular}.

Sections~\ref{s:stability-collinear} and \ref{sec:stabilitytriangular} are devoted to the stability analysis of the collinear and triangular RE, respectively.
In both cases, for $\kappa>0$,  we present  CAPs of stability, complemented by numerical investigations. For the
collinear  RE, we also provide the analytically proved Theorem~\ref{thm:stabilitycollinear}. Based on our results, the stability behavior
 in the parameter region $\Ps{\muT}$, 
is summarized in  Conjectures~\ref{thm:class-stability-collinear} and \ref{thm:class-stability-triangular}.
Additionally, in Sec.~\ref{ss:stabilityL4L5}, we present numerical evidence that positive curvature promotes the gyroscopic stabilization of $\Ll_4$ and $\Ll_5$,
whereas  negative curvature has the opposite effect.

In Sec.~\ref{sec:asymp}, we present  asymptotic expansions that determine the
locations of both collinear and triangular RE in the limit as $\kappa\to 0$.

We discuss several directions of future research in Sec.~\ref{sec:future-work} and conclude the paper with three appendices. The first appendix 
reviews standard results on the stability analysis of equilibria in Lagrangian mechanical systems with two degrees of freedom. The
second appendix contains analytical proofs of the main existence and stability theorems and derives the asymptotic expansions
presented in Sec.~\ref{sec:asymp}. We have chosen to defer these proofs to an appendix in order to improve the 
readability of the main text and to highlight the principal contributions of the paper. Finally, the third appendix summarizes
 the methodology underlying the  CAPs developed in this work.
 
The support code for all CAPs is available in \cite{github_codes}.

\section{Preliminaries}
\label{sec:pre}

In this section, we review the aspects of the classical planar restricted three-body problem relevant to our analysis, the family of surfaces of constant curvature introduced in~\cite{MT13}, and the circular solutions (relative equilibria) of the two-body problem in spaces of constant curvature, which determine the motion of the primaries in the curved restricted three-body problem.

\subsection{Lagrange libration points}
\label{subsec:libpoints}
 The  classical  planar restricted
3-body problem (R3BP) 
is a limit case of the planar 3-body problem in which one of the bodies (called \textbf{\textit{satellite}}) has a negligible mass, and the other two bodies, of masses $\mu_{1}$ and $\mu_{2}$ (called \textbf{\textit{primaries}}), move on a  Keplerian orbit. In this work
we will only consider the case in which the Keplerian orbit is circular, so the terminology R3BP will
always mean \textit{circular}, planar, restricted
3-body problem. 
The R3BP concerns the 
determination of the motion of the satellite and
is conveniently studied on a rotating coordinate
system in which the primaries are at rest (and whose origin is their common center of rotation which coincides with their center of mass). The resulting mechanical system for the motion of the
satellite is autonomous and has two degrees
of freedom. After an appropriate scaling of 
units, one may assume
that the distance between the primaries is $1$, and
the only external parameter left in the problem is the mass ratio $\mu = \mu_2/\mu_1\in (0,1]$.

The equilibrium points of the equations of motion for the satellite are the famous  \textbf{\textit{Lagrange  points}}
(or Lagrange libration points) and  are denoted by $\Ll_{1}\dots, \Ll_{5}$. The  points $\Ll_{1}$, $\Ll_{2}$, and $\Ll_{3}$ are called \textbf{\textit{collinear}} since they lie on the line between the primaries and their precise position is a function of  $\mu$.  The Lagrange points $\Ll_{4}$ and $\Ll_{5}$ 
are called \textbf{\textit{triangular}} and lie at the   vertices of the two equilateral triangles whose common base is the line segment between the primaries. 
We refer the reader to \cite{Sz67} for details and
historical context.

\vspace{0.3cm}

\begin{figure}[ht]
\centering
\begin{overpic}[width=.5\linewidth]{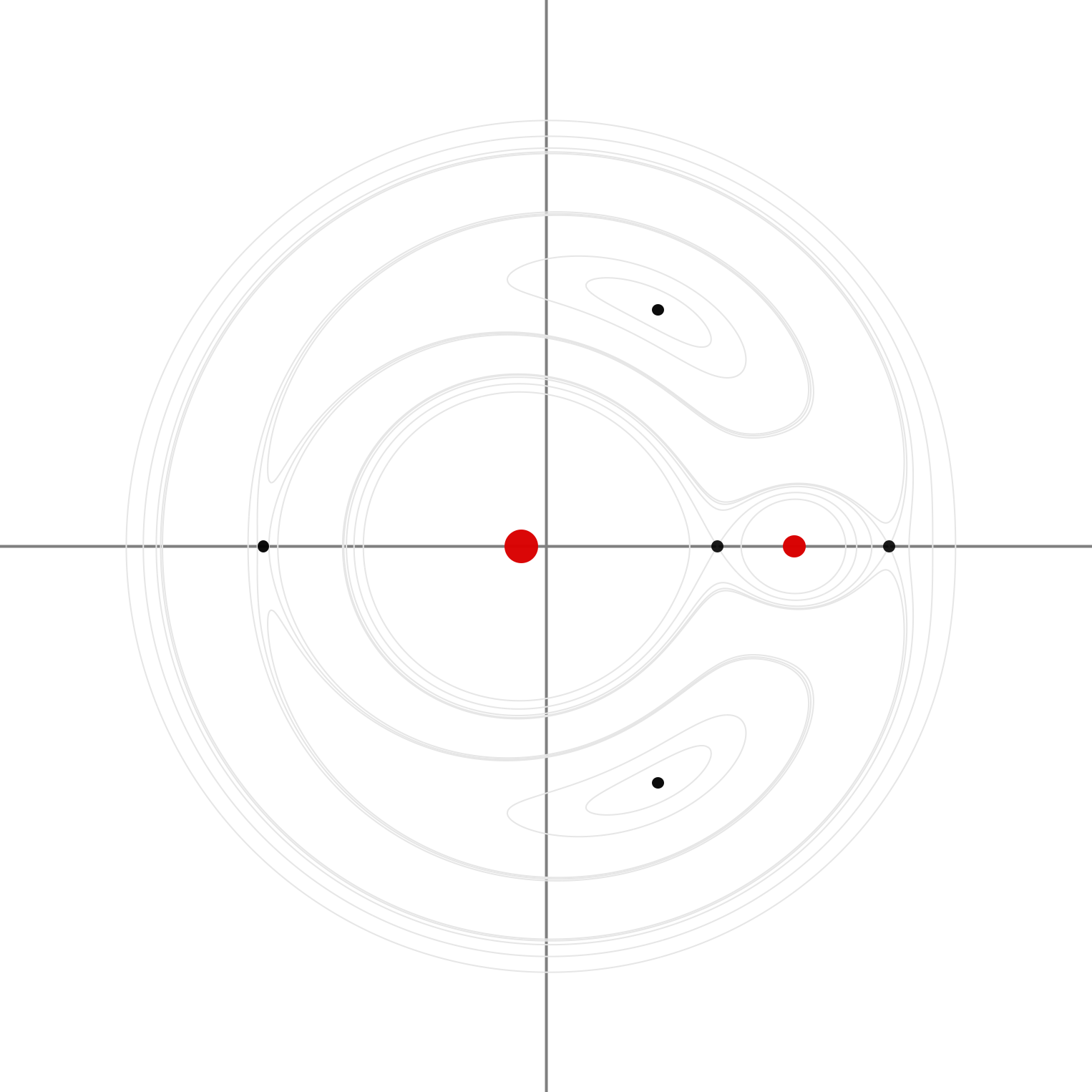}
    \put(46,53){$\mu_{1}$}
    \put(71,53){$\mu_{2}$}
    \put(64,46){$\Ll_{1}$}
    \put(80,46){$\Ll_{2}$}
    \put(22.5,46){$\Ll_{3}$}
    \put(58.75,67.5){$\Ll_{4}$}
    \put(58.75,24.5){$\Ll_{5}$}
\end{overpic}
\captionsetup{width=.75\textwidth}
\caption{Lagrange  points $\Ll_1, \dots, \Ll_5$ for the R3BP. These points correspond to critical points of the augmented potential $V_{0,\mu}$ given in Eq.~\eqref{eq:amended0}, projected onto the plane whose origin coincides with the center of mass of the primaries $\mu_1$, $\mu_2$. Some level curves of $V_{0,\mu}$ are plotted in gray.}
\label{fig:lagrangeplane}
\end{figure}

In Sec.~\ref{subsec:libpointsS0} below, we will give explicitly the Lagrangian system associated with the R3BP on a useful Riemannian model of the plane $\R^{2}$.

\subsection{Family of surfaces of constant curvature}
\label{subsec:surface}

We  follow the approach of \cite{GNM21} for the treatment of  the $2$-body problem in spaces with constant curvature (see also \cite{MT13}, where this approach was first introduced to study vortex rings dynamics). This setting
will allow us to vary the curvature maintaining 
the distance between the primaries normalized to $1$.
Consider the family of surfaces in $\R^3$ parametrized by 
$\kappa \in \R$,
\begin{equation*}
    \mathcal{S}_{\kappa}=\lbrace (x,y,z)\in\mathbb{R}^{3} : x^{2}+y^{2}+\kappa z^{2}-2z=0 \rbrace.
\end{equation*}
For every $p\in\calS_{\kappa}$, we  endow
$T_p\calS_\kappa$
with the scalar product, 
\begin{equation}
\label{eq:innerproduct}
    \langle u_p,v_p\rangle_\kappa = u_p^T K_{\kappa} v_p, \qquad u_p,v_p\in T_p\calS_\kappa \subset \R^3,
\end{equation}
where $K_\kappa=\mbox{diag}[1,1,\kappa]$. We denote by $\Vert\cdot\Vert_{\kappa}$ the norm induced by the inner product \eqref{eq:innerproduct}, namely
\begin{equation}
\label{eq:inducednorm}
    \Vert u_{p} \Vert_{\kappa}=\langle u_{p},u_{p}\rangle_{\kappa}^{1/2}.
\end{equation}

The inner product \eqref{eq:innerproduct}
defines a  Riemannian 
 metric on $\calS_\kappa$ that coincides with
 the restriction to $\calS_\kappa$ of the ambient 
 (pseudo) metric in $\R^3$ given by
\begin{equation}
\label{eq:metric}
    \mbox{d}s_{\kappa}^2=\mbox{d}x^2+\mbox{d}y^2+ \kappa\mbox{ d}z^2.
\end{equation}
The key point is that, equipped with this Riemannian 
metric, the surface $\calS_\kappa$ has constant
curvature $\kappa$. A  description of 
$\calS_\kappa$, together with an isometry to 
the standard models of constant curvature spaces, is
given below according to the sign of $\kappa$:

\begin{itemize}
    \item $\kappa>0$. The surface $\calS_\kappa$ is an ellipsoid with centre at $(0,0,1/\kappa)$.
    The mapping  
    \begin{equation}
    \label{eq:isopositive}
       \Phi^{+}:\calS_{\kappa}\to\mathbb{S}^{2}_{1/\sqrt{\kappa}}, \qquad  \Phi^{+}(x,y,z)=\left(x,y,\sqrt{\kappa}\ z-\frac{1}{\sqrt{\kappa}}\right),
    \end{equation}
    is an isometry where, $\mathbb{S}^{2}_{R}$ denotes the sphere of radius $R$ equipped with the standard Euclidean metric. In particular,  this implies that the maximum Riemannian distance between two points on $\calS_{\kappa}$ for $\kappa>0$ is $\pi/\sqrt{\kappa}$.
    \item $\kappa=0$. The surface $\calS_0$ is a paraboloid. In this case the metric \eqref{eq:metric} on $\R^3$ is degenerate, but
    restricted to $\calS_0$ is a Riemannian metric. The vertical projection, 
    \begin{equation*}
       \Phi^{0}:\calS_{0}\to\R^{2}\times\lbrace 0 \rbrace, \qquad \Phi^{0}(x,y,z)=(x,y,0),
    \end{equation*}
    is an isometry with the Euclidean plane. 
    \item $\kappa<0$. The surface $\calS_\kappa$ is a hyperboloid of two sheets and  
    we restrict to the upper sheet, $z\geq0$. In this case, we give
    an isometry of $\calS_\kappa$ with the classic pseudo-sphere model for
    the constant negative curvature space. Recall that this is given by the 
    surface 
    $$\mathbb{L}^{2}_{R}:=\left \lbrace (x,y,z)\in\R^{3}:x^{2}+y^{2}-z^{2}=-R^2, \ z>0 \right \rbrace,$$
    equipped with the restriction of the Minkowski pseudo-metric $\mbox{d}s_{M}^{2}=\mbox{d}x^{2}+\mbox{d}y^{2}-\mbox{d}z^{2}$. As is well-known, the resulting Riemannian surface
    has curvature $-1/R^2$. 
    It can be checked that
    \begin{equation}
    \label{eq:isonegative}
       \Phi^{-}:\calS_{\kappa}\to \mathbb{L}^{2}_{1/\sqrt{-\kappa}} \qquad  \Phi^{-}(x,y,z)=\left(x,y,\sqrt{-\kappa}\ z+\frac{1}{\sqrt{-\kappa}}\right),
    \end{equation}
    is an isometry.
\end{itemize}

\subsubsection{Distance formulas and the generalized gravitational
potential}

Fix $\kappa \in \R$ and consider the  functions $\sin_{\kappa}, \cos_{\kappa}:\R \to \R$ defined by 
\begin{equation*}
\begin{gathered}
    \sin_{\kappa}(q)=\left\lbrace \begin{matrix} \frac{1}{\sqrt{\kappa}}\sin(\sqrt{\kappa} \ q) && \mbox{if} && \kappa>0,  \\ q && \mbox{if} && \kappa=0, \\  \frac{1}{\sqrt{-\kappa}}\sinh(\sqrt{-\kappa} \ q) && \mbox{if} && \kappa<0, \end{matrix} \right. \qquad 
    \cos_{\kappa}(q)=\left\lbrace \begin{matrix} \cos(\sqrt{\kappa} \ q) && \mbox{if} && \kappa>0,  \\ 1 && \mbox{if} && \kappa=0, \\  \cosh(\sqrt{-\kappa} \ q) && \mbox{if} && \kappa<0. \end{matrix} \right. 
\end{gathered}
\end{equation*}
These functions provide an  interpolation between the circular and the hyperbolic trigonometric functions 
which is analytic in $\kappa$. 
They often appear in the study of mechanical problems
in spaces of constant curvature, e.g. \cite{CarRaSan08,MT13,GNM21}. 

With this notation, it is a simple exercise to verify that  the Riemannian distance $d_{\calS_{\kappa}}$ between two points $\mathbf{x}_{1}, \mathbf{x}_{2}\in\calS_{\kappa}$ satisfies, in the case $\kappa\neq 0$,
\begin{equation}
\label{eq:distance}
    \cos_{\kappa}(d_{\cal S_{\kappa}}(\mathbf{x}_{1},\mathbf{x}_{2}))= \vert \kappa \vert \left\langle \Phi^{\pm}(\mathbf{x}_{1}), \Phi^{\pm}(\mathbf{x}_{2}) \right\rangle_{\pm},
\end{equation}
where $\langle \cdot, \cdot \rangle_{\pm} = \langle \cdot, \cdot \rangle_{\pm 1}$. If instead,  $\kappa=0$,
\begin{equation*}
    d_{\calS_{0}}(\mathbf{x}_{1},\mathbf{x}_{2})=\left[\left\langle \Phi^{0}(\mathbf{x}_{1}-\mathbf{x}_{2}), \Phi^{0}(\mathbf{x}_{1}-\mathbf{x}_{2}) \right\rangle_{+}\right]^{1/2}.
\end{equation*}

The function $\cot_\kappa$ defined by 
$$
\cot_\kappa(q)=\frac{\cos_\kappa(q)}{\sin_\kappa(q)},
$$
is fundamental in the generalization of the
$N$-body problem to spaces of non-zero constant
curvature since it is the widely accepted generalization of the classical gravitational potential 
$\frac{1}{q}$, see e.g. \cite{KozHar92,DiaPCSan12}. Note that, apart from the `collision
singularity' at $q=0$, for $\kappa>0$, $\cot_\kappa$
has an additional singularity at $q=\frac{\pi}{\sqrt{\kappa}}$ corresponding to antipodal configurations.

\subsection{The motion of the primaries: relative equilibria of the 2-body problem on \texorpdfstring{$\calS_{\kappa}$}{Sₖ}}
\label{ss:PrimaryMotion}

We now recall the classification of the relative equilibria (RE) of the curved two-body problem (C2BP).
This is fundamental for our purposes since the 
circular motion of the primaries in our setting of
the curved R3BP are special types of RE of the C2BP. 
Our presentation follows mainly \cite{GNBMM, GNM21}. Previous results on RE for the C2BP were given in \cite{DiaPCSan12, GNMarPCRO16, BorMaKi16}

The C2BP concerns the motion of two particles with positive masses $\mu_{1}$ and $\mu_{2}$  on a surface with constant curvature $\kappa\in\R$ under their mutual gravitational attaraction. In our treatment of the curved R3BP, these point masses represent the
primary bodies. We assume that $\mu_1>\mu_2$ and denote the mass ratio
$\mu=\mu_2/\mu_1\in (0,1)$.
We will use the surface $\calS_{\kappa}$ described in Section \ref{subsec:surface} as our model of a surface with constant curvature $\kappa$. The   attractive gravitational
potential is
\begin{equation}
\label{eq:potential}
    V_{\kappa}(\mathbf{x}_{1},\mathbf{x}_{2})=-G\mu_{1}\mu_{2}\cot_{\kappa}(q).
\end{equation}
where $\mathbf{x}_{j}\in \calS_\kappa$ is the position 
of the primary of mass $\mu_j$, $q$ is the distance between the masses (i.e. $q=d_{\calS_{\kappa}}(\mathbf{x}_{1},\mathbf{x}_{2})$) and $G$ is a gravitational constant.

The configuration space $Q_{\kappa}$ of the C2BP is
\begin{equation*}
    Q_{\kappa}=\calS_{\kappa}\times\calS_{\kappa} \setminus \Delta_\kappa,
\end{equation*}
where $\Delta_\kappa\subseteq \calS_{\kappa}\times\calS_{\kappa}$ consists of the collision points on the diagonal (for all $\kappa$) together with pairs of antipodal points if $\kappa>0$. The Lagrangian $L_{C2BP}:TQ_{\kappa}\to\R$ is given by 
\begin{equation}
\label{eq:lagrangianC2BP}
    L_{C2BP} = \frac{1}{2}\mu_{1}\Vert \dot{\mathbf{x}}_{1} \Vert_{\kappa}^{2}+\frac{1}{2}\mu_{2}\Vert \dot{\mathbf{x}}_{2} \Vert_{\kappa}^{2}-V_{\kappa}(\mathbf{x}_{1},\mathbf{x}_{2}).
\end{equation}
This Lagrangian is invariant under the action of
the group $\operatorname{G}_\kappa$\footnote{$\operatorname{G}_\kappa$ is isomorphic to $\operatorname{SO}(3)$ if $\kappa>0$, $\operatorname{SO}(2,1)$ if $\kappa<0$ and $\operatorname{SE}(2)$ if $\kappa=0$.} of isometries of $\mathcal{S}_\kappa$ that 
simultaneously translates the masses. The RE are
solutions of the equations of motion which at the same time  are orbits of one-parameter subgroups of
the $\operatorname{G}_\kappa$ action. In particular, along these  
solutions, the distance between the particles remains constant.  The classification  and stability results  in the case $\mu_{1}\neq\mu_{2}$ are summarized below.
\begin{itemize}
    \item If $\kappa<0$, then for any $q>0$ there are exactly two types of RE, depending on whether the one-parameter subgroup of the isometries is compact (\textbf{\textit{elliptic RE}}) or not (\textbf{\textit{hyperbolic RE}}). In the elliptic RE case, the primaries execute a periodic motion,  rotating uniformly about a fixed point on the geodesic connecting the primaries (center of rotation).
    \item If $\kappa>0$, then for any $0<q<\pi/\sqrt{\kappa}$, $q\neq \pi/(2\sqrt{\kappa})$, there is exactly one relative equilibria. If $0<q<\pi/(2\sqrt{\kappa})$, it is called \textbf{\textit{acute RE}}. Otherwise, it is called \textbf{\textit{obtuse RE}}. In both cases, the primaries
    undergo a periodic motion, rotating about a center of rotation that again lies on the geodesic between them.
    \item If $\kappa=0$, then for any $q>0$ there is exactly one RE, which corresponds with the usual Keplerian RE: the primaries rotate uniformly about their center of mass. 
\end{itemize}

\begin{remark}
    For $\kappa>0$, the classification of RE is different when $\mu_{1}=\mu_{2}$. See Theorem 4.1 of \cite{GNBMM} for details. 
\end{remark}

According to Theorem 4.1 of \cite{GNM21}, given a fixed value of the distance between the primaries $q>0$ and a mass ratio $\mu=\mu_{2}/\mu_{1}\in (0,1)$, there is an analytic transition in $\kappa$ from the elliptic RE for $\kappa<0$ to the acute RE for $\kappa>0$, for $\kappa\in (-\infty,(\pi/2q)^2)$. These RE are interpolated at $\kappa=0$ by the Keplerian RE with distance $q$ between the primaries.
\textit{These RE constitute the circular motions of the primaries in our setting
of the curved R3BP.} As mentioned in the introduction
we 
normalize the Riemannian distance between the primaries to 
$$q=1.$$
In particular, in order to remain within the range of acute RE for positive $\kappa$,  the value of 
the curvature must satisfy 
$$\kappa<\frac{\pi^2}{4}.$$

We now give explicit
expressions for the RE discussed above which will be needed in the treatment of the curved R3BP in Section \ref{sec:R3BP}. We may assume that the center of rotation, $C$, of these
circular motions of the primaries is located at the minimum of $\calS_\kappa$.   Picking a convenient initial configuration, these circular motions are explicitly given, for $\kappa\neq 0$, by 
 \begin{equation}
 \label{eq:REC2BP}
    \mathbf{x}_{1}(t)=R_{\omega t}\begin{pmatrix} -\sin_{\kappa}(q_{1}) \\ 0 \\ \frac{1}{\kappa}\left(1-\cos_{\kappa}(q_{1})\right) \end{pmatrix}, \qquad  \mathbf{x}_{2}(t)=R_{\omega t}\begin{pmatrix} \sin_{\kappa}(q_{2})\\ 0 \\ \frac{1}{\kappa}\left(1-\cos_{\kappa}(q_{2})\right) \end{pmatrix},  
\end{equation}
where $R_{\omega t}$ denotes the rotation matrix
$$
R_{\omega t}=\begin{pmatrix} \cos \omega t & -\sin  \omega t &  0 \\\sin  \omega t & \cos  \omega t & 0\\ 0 & 0 & 1 \end{pmatrix}.
$$
As can be verified by direct calculation using the  equations of motion associated with the Lagrangian \eqref{eq:lagrangianC2BP}, the angular velocity $\omega$ satisfies
\begin{equation}
\label{eq:angularvelocity}
     \omega^{2}=\frac{G\mu_{1}\sqrt{\mu^{2}+2\mu\cos_{\kappa}(2)+1}}{\sin_{\kappa}^{3}(1)\cos_{\kappa}(1)}.
\end{equation}
In Eq.~\eqref{eq:REC2BP},  $q_j$, $j=1,2$, denotes the Riemannian
distance between the  primary of mass $\mu_j$  
and the center of rotation, and is determined implicitly by
\begin{equation}
\label{eq:rotationRE}
	\sin_{\kappa}(2q_{1}) = \mu \sin_{\kappa}(2q_{2}), \qquad q_{2} = 1 - q_{1}.
\end{equation}
Moreover, we can manipulate relation \eqref{eq:rotationRE} to obtain an expression depending only  on $q_{1}$, namely
\begin{equation}
\label{eq:positionprimariesaux}
\begin{gathered}
    \cot_{\kappa}(2q_{1})=\frac{1+\mu\cos_{\kappa}(2)}{\mu\sin_{\kappa}(2)}.
\end{gathered}
\end{equation}

For $\kappa =0$ we may rewrite formulas \eqref{eq:REC2BP} as
\begin{equation}
\label{eq:RE2BP}
    \mathbf{x}_{1}(t)=R_{\omega t}\begin{pmatrix} -q_{1}\\  0 \\ \frac{1}{2}q_{1}^2 \end{pmatrix}, \quad \mathbf{x}_{2}(t)=R_{\omega t}\begin{pmatrix}  q_{2}\\ 0 \\  \frac{1}{2}q_{2}^2\end{pmatrix}
\end{equation}
with 
\begin{equation*}
    \omega^2=G(\mu_1+\mu_2), \qquad q_{1}=\frac{\mu_2}{\mu_1+\mu_2}, \qquad q_{2}=\frac{\mu_1}{\mu_1+\mu_2}.
\end{equation*}
We recognize from the first two components in formulas \eqref{eq:RE2BP} the standard circular solutions of the 2-body problem on the plane. 

One may directly check that the family given in the formulas \eqref{eq:REC2BP} and \eqref{eq:RE2BP} depends analytically on $\kappa$.

\begin{remark}
    The results from \cite{GNMarPCRO16} (see also \cite{GNBMM,GNM21}) imply that the periodic
    solution \eqref{eq:REC2BP} is (nonlinearly) 
    stable on the
    range $\kappa^*(\mu)<\kappa<0$ and unstable for $\kappa<\kappa^*(\mu)$ for a certain $\kappa^*$ that depends on $\mu$. On the other hand, the results 
    from  \cite{GNBMM,GNM21} imply that it is linearly
    stable for $0<\kappa<(\pi/2)^2$, with numerical
    evidence for nonlinear stability given
    in \cite{GNBMM} for almost all $(\kappa,\mu)\in \Ps{1}$.
\end{remark}

\begin{remark}
    Some of the numerical results from \cite{Ki99,MS17} 
    concern the R3BP on the sphere when the
    primaries rotate on an obtuse RE. We do not
    consider them in this work since this branch of RE
    of the C2BP does not connect with the Keplerian
    RE for $\kappa=0$. We refer the reader to \cite{GNM21} for geometric explanations on the
    existence of this branch of RE which involve the 
    motion of two masses on the sphere under the effect of a repelling potential. 
\end{remark}

\subsection{The Lagrange  points in \texorpdfstring{$\calS_0$}{S0}}
\label{subsec:libpointsS0}

 We now give a few details
of the classical (planar) R3BP when defined
in the $\calS_{0}$ model. Our purpose is to give expressions
for the Lagrange points $\Ll_1,\dots, \Ll_5$ for
later comparison with the case $\kappa\neq 0$.
For the standard treatment of the problem on 
$\R^2$ we refer the reader to  Chapter 4 of \cite{MO17}.

 The position of the primaries on the rotating coordinate system is given by $\mathbf{y}_{1}, \mathbf{y}_{2}\in \calS_0$, with
\begin{equation}
\label{eq:primariespositions}
    \mathbf{y}_{1}= \left ( -\frac{\mu}{1+\mu} , 0 ,  \frac{\mu^{2}}{2(1+\mu)^{2}}\right ), \qquad \mathbf{y}_{2}=\left (\frac{1}{1+\mu} , 0 ,  \frac{1}{2(1+\mu)^{2}}\right ).
\end{equation}
After an appropriate rescaling in space and time (see Section \ref{ss:time-independent} below), the Lagrangian for the motion of the satellite in this rotating system is
\begin{equation}
\label{eq:lagS0}
    \tilde L_{0,\mu}:=T(\calS_{0}\setminus\lbrace \mathbf{y}_{1},\mathbf{y}_{2} \rbrace)\to\R, \qquad \tilde L_{0,\mu}(\mathbf{y},\mathbf{y}')=\frac{1}{2}\Vert \mathbf{y}' \Vert_{0}^{2}+\langle \mathbf{e}_{3}\times \mathbf{y},\mathbf{y}'\rangle_{0}-V_{0,\mu}(\mathbf{y}),
\end{equation}
where $\mathbf{e}_{3}=(0,0,1)$, the symbol $\times$ denotes the classical vector product on $\R^{3}$, and the so-called \textit{amended potential}, $V_{0,\mu}$, depends parametrically on $\mu$ and is given by 
\begin{equation}
\label{eq:amended0}
    V_{0,\mu}(\mathbf{y})=-\frac{1}{2}(x^{2}+y^{2})-\frac{1}{1+\mu}\left (\frac{1}{\sqrt{(x+\frac{\mu}{1+\mu})^{2}+y^{2}}}+\frac{\mu}{\sqrt{(x-\frac{1}{1+\mu})^{2}+y^{2}}}\right ),
\end{equation}
where $\mathbf{y}=(x,y,z)\in \calS_{0}\setminus\lbrace \mathbf{y}_{1},\mathbf{y}_{2} \rbrace$.

The Lagrange  points  correspond to critical points of the amended potential $V_{0,\mu}$. The 
collinear points $\Ll_1, \Ll_2, \Ll_3$, are saddle points of $V_{0,\mu}$ and their coordinates are determined in terms of  the roots of certain fifth-degree polynomials
(see~\cite[Chapter 4.4]{Sz67}). In the limit $\mu\to 0^+$, they are approximated by
\begin{equation}
\label{eq:planarcollinear}
\begin{aligned}
    \Ll_{1}&=\left ( 1-\left( \frac{\mu}{3} \right)^{1/3} \,  , \,  0 \, , \, \frac{1}{2}-\mu^{1/3}   \right )+\mathcal{O}(\mu^{2/3}), \qquad \Ll_{2}=\left ( 1+\left( \frac{\mu}{3} \right)^{1/3}  \,  , \,  0 \, , \, \frac{1}{2}+\mu^{1/3}  \right ) +\mathcal{O}(\mu^{2/3}), \\ &\hspace{3.5cm}
    \Ll_{3}=\left ( -1-\frac{5}{12}\mu  \,  , \,  0 \, , \, \frac{1}{2}+\frac{5}{12}\mu  \right )+\mathcal{O}(\mu^{2}).
\end{aligned}
\end{equation}
 Meanwhile, the triangular  points,  $\Ll_4, \Ll_5$,  are local
maxima of $V_{0,\mu}$, and are given by
\begin{equation}
\label{eq:planartriangular}
    \Ll_{4}=\left ( \frac{1-\mu}{2(1+\mu)}  \,  , \,\frac{\sqrt{3}}{2}  \,  , \,  \frac{1+\mu+\mu^{2}}{2(1+\mu)^{2}} \right ), \qquad \Ll_{5}=\left ( \frac{1-\mu}{2(1+\mu)} \,  , \, -\frac{\sqrt{3}}{2}  \,  , \,  \frac{1+\mu+\mu^{2}}{2(1+\mu)^{2}}\right ).
\end{equation}

\section{The restricted 3-body problem in spaces of constant curvature 
}
\label{sec:R3BP}

The restricted $3$-body problem (R3BP) in a space of constant curvature $\kappa$ concerns the motion of a particle with negligible  mass (called  \textbf{\textit{satellite}}) under the gravitational attraction  of two bodies (called  \textbf{\textit{primaries}}) that rotate uniformly along a RE of the curved  2-body problem as explained in Subsection \ref{ss:PrimaryMotion} above. 
In this section, we derive
a two-degree of freedom, time-independent,
mechanical Lagrangian for the problem
depending parametrically on the curvature $\kappa$ and the mass ratio 
$\mu$ (Eqs. \eqref{eq:Lagmukappa} and \eqref{eq:scaledpotential}).
This will allow us to systematically
study the RE of the curved R3BP  as functions of $\kappa$ and $\mu$
in the setting of
mechanical systems of this type (which are reviewed in Appendix \ref{ap:linear}).

\subsection{Setting of the problem in 
\texorpdfstring{$\calS_{\kappa}$}{S	extsubscript{κ}}.}

As before, our model for a surface with constant curvature $\kappa$ will be $\calS_{\kappa}$.  We will concentrate in the case in which the primaries have different masses,  assuming $0<\mu_2<\mu_1$, and we normalize the distance
between the primaries to $q=1$. We also assume that the center of rotation, $C$, of the
circular motions of the primaries is located at the minimum of $\calS_\kappa$.   Under these assumptions, the position  of the
primaries at time $t$ is  $\mathbf{x}_{j}(t)\in \calS_\kappa$, $j=1,2$, given by Eq.~\eqref{eq:REC2BP} for $\kappa\neq0$ and by Eq.~\eqref{eq:RE2BP} for $\kappa=0$. 

The configuration space $\mathcal{Q}_\kappa$ of the problem is time-dependent and consists of the points $\mathbf{x}\in\calS_{\kappa}$ excluding the points where the satellite collides with the primaries and their antipodal points when $\kappa>0$. More precisely,
\begin{equation*}
    \mathcal{Q}_\kappa=\mathcal{S}_\kappa\setminus \Delta_\kappa(t),
\end{equation*}
where 
\begin{equation*}
    \Delta_\kappa(t)=\begin{cases} \bigcup_{j=1,2} \left\lbrace  \mathbf{x}\in\calS_{\kappa} : d_{\calS_{\kappa}}(\mathbf{x},\mathbf{x}_{j}(t))=0\right\rbrace \quad \mbox{if} \quad \kappa \leq 0, \\ \bigcup_{j=1,2} \left\lbrace  \mathbf{x}\in\calS_{\kappa} : d_{\calS_{\kappa}}(\mathbf{x},\mathbf{x}_{j}(t))=0  \quad \mbox{or}
    \quad d_{\calS_{\kappa}}(\mathbf{x},\mathbf{x}_{j}(t))=\frac{\pi}{\sqrt{\kappa}} \right \rbrace  \quad \mbox{if} \quad \kappa > 0. \end{cases}
\end{equation*}

Considering that the interaction of the satellite
with the primaries is given by a  potential analogous to $V_{\kappa}$ given in Eq.~\eqref{eq:potential},
the (time-dependent)   Lagrangian of the system  $L:T\mathcal{Q}_\kappa\times\R\to \R$  is given by
\begin{equation}
\label{eq:LagTimeDep}
    L(\mathbf{x},\dot{\mathbf{x}},t)=\frac{1}{2}\Vert \dot{\mathbf{x}} \Vert_k^{2} + G\mu_{1}\cot_{\kappa}\big(d_{\mathcal{\calS_{\kappa}}}(\mathbf{x},\mathbf{x}_{1}(t))\big)+ G\mu_{2}\cot_{\kappa}\big(d_{\mathcal{\calS_{\kappa}}}(\mathbf{x},\mathbf{x}_{2}(t))\big).
\end{equation}

We seek solutions of the Lagrangian system associated with the Lagrangian \eqref{eq:LagTimeDep} in which the satellite rotates about the center of rotation $C$
 with the same angular velocity as the primaries. We refer to such solutions as \textbf{\textit{Relative Equilibria (RE) for the R3BP in spaces of constant curvature}}.\footnote{This terminology is motivated by the fact that these solutions correspond to relative equilibria of the curved three-body problem in the limit case where one of the bodies has negligible mass. This nomenclature is classical in the planar case.} Our goal is to classify these solutions and analyze their stability as functions of the curvature $\kappa\in (-\infty , \pi^2/4)$ and the mass ratio $\mu = \mu_2 / \mu_1\in (0,1)$.

\subsection{A rotating coordinate frame and elimination of parameters.}
\label{ss:time-independent}

Similar to the planar R3BP, 
we  introduce a rotating coordinate
system on $\mathcal{Q}_{\kappa}$ which renders
the Lagrangian \eqref{eq:LagTimeDep}  autonomous. Specifically, we define
$\mathbf{y}\in\calS_{\kappa}$ via  the  time-dependent change of variables:
\begin{equation}
\label{eq:changevar}
    \mathbf{x}=R_{\omega t}\mathbf{y}.
\end{equation}
Additionally,  we introduce the dimensionless time variable $\tau:=\omega t$. 

In terms of $\mathbf{y}$ and $\mathbf{y}':=\frac{d}{d\tau}\mathbf{y}$, the Lagrangian \eqref{eq:LagTimeDep} is autonomous and only depends on the parameters $\mu$ and $\kappa$. Up to a negligible overall constant factor, we obtain the new Lagrangian
\begin{equation}
\label{eq:LagAut}
  \tilde{L}_{\kappa,\mu}=T\tilde{\mathcal{Q}}_{\kappa}\to\R, \qquad  \tilde{L}_{\kappa,\mu}(\mathbf{y},\mathbf{y}')=\frac{1}{2}\Vert \mathbf{y}' \Vert_{\kappa}^{2} + \langle \mathbf{e}_{3}\times \mathbf{y},\mathbf{y}' \rangle_{\kappa}-V_{\kappa,\mu}(\mathbf{y}),
\end{equation}
where  the potential $V_{\kappa,\mu}$ is given by
\begin{equation}
\label{eq:augpotential}
    V_{\kappa,\mu}(\mathbf{y})=-\frac{1}{2}\Vert \mathbf{e}_{3}\times\mathbf{y} \Vert_{\kappa}^{2} - \frac{G \mu_{1}}{\omega^{2}}\left[\cot_{\kappa}\big(d_{\calS_{\kappa}}(\mathbf{y},\mathbf{y}_{1})\big)+\mu \cot_{\kappa}\big(d_{\calS_{\kappa}}(\mathbf{y},\mathbf{y}_{2})\big)\right].
\end{equation}  
Note that the coefficient $G\mu_{1}/\omega^{2}$ appearing above can be expressed in terms of  $\kappa$ and $\mu$ (see formulas \eqref{eq:angularvelocity} and \eqref{eq:constantgamma} below). 

The position of the primaries in the rotating frame,
given by $\mathbf{y}_{j}:=R_{-\omega t}\mathbf{x}_{j}(t)\in \calS_\kappa$, $j=1,2$,
is time-independent and, in view of formulas \eqref{eq:REC2BP}, depends on $\kappa,\mu$ as follows:
 \begin{equation}
 \label{eq:primariesSkappa}
 \mathbf{y}_{1}= \mathbf{y}_{1}(\kappa, \mu)=\begin{pmatrix} -\sin_{\kappa}(q_{1}(\kappa, \mu)) \\ 0 \\  \frac{1}{\kappa}\left(1-\cos_{\kappa}(q_{1}(\kappa, \mu))\right) \end{pmatrix}, \qquad  \mathbf{y}_{2}=\mathbf{y}_{2}( \kappa,\mu)=\begin{pmatrix} \sin_{\kappa}(q_{2}(\kappa,\mu)) \\ 0 \\  \frac{1}{\kappa}\left(1-\cos_{\kappa}(q_{2}(\kappa,\mu)\right) \end{pmatrix}.
\end{equation}
Here, $q_j(\kappa,\mu)$, $j=1,2$, is the Riemannian distance between
$\mathbf{y}_{j}(\kappa,\mu)$ and the center of rotation, and, in accordance with formula \eqref{eq:positionprimariesaux}, are implicitly determined
 by
\begin{equation}
\label{eq:positionprimaries}
\begin{gathered}
    \cot_{\kappa}(2q_{1}(\kappa, \mu)) = \frac{1+\mu\cos_{\kappa}(2)}{\mu\sin_{\kappa}(2)} ,\qquad q_{2}(\kappa, \mu)=1-q_{1}(\kappa, \mu).
\end{gathered}
\end{equation}

The new configuration space $\tilde{\mathcal{Q}}_{\kappa}$ is time-independent and consists of all points  $\mathbf{y}\in \calS_\kappa$ for which the potential $V_{\kappa,\mu}$ is defined. The singularities of $V_{\kappa,\mu}$ occur at the primaries, $\mathbf{y}_{j}(\kappa,\mu)$, and, in the case of positive curvature, also at their antipodal points.

The RE of the curved R3BP correspond to equilibrium solutions of the Lagrangian system defined by the Lagrangian \eqref{eq:LagAut}. Due to the form of the change of variables \eqref{eq:changevar} and the mechanical structure of the Lagrangian \eqref{eq:LagAut}, the following holds (see Appendix \ref{ap:linear}).
\begin{proposition}
    Relative Equilibria for the curved R3BP are in one-to-one correspondence with critical points of the potential $V_{\kappa,\mu}:\tilde{Q}_{\kappa}\to\R$ given by Eq.~\eqref{eq:augpotential}.
\end{proposition}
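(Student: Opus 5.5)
The plan is to show two things. First, under the change of variables \eqref{eq:changevar} with $\tau=\omega t$, a relative equilibrium in the sense of Section~\ref{sec:R3BP} is exactly an equilibrium $\mathbf{y}(\tau)\equiv\mathbf{y}_0$ of the Lagrangian system defined by $\tilde L_{\kappa,\mu}$ in \eqref{eq:LagAut}. Second, equilibria of a Lagrangian of the form ``kinetic energy $+$ gyroscopic term $-$ potential'' are precisely the critical points of the potential.

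\textbf{Step 1: relative equilibria become equilibria.} By definition, a relative equilibrium is a solution of the system with Lagrangian \eqref{eq:LagTimeDep} of the form $\mathbf{x}(t)=R_{\omega t}\mathbf{y}_0$, with $\mathbf{y}_0\in\tilde{\mathcal Q}_\kappa$ fixed. The rotations $R_{\omega t}$ about the $z$-axis are isometries of $\calS_\kappa$ for the metric \eqref{eq:metric}, and they fix the center of rotation $C$. Moreover, $\mathbf{x}_j(t)=R_{\omega t}\mathbf{y}_j$. Hence
\[
d_{\calS_\kappa}(R_{\omega t}\mathbf{y},\mathbf{x}_j(t))=d_{\calS_\kappa}(\mathbf{y},\mathbf{y}_j).
\]
Substituting $\mathbf{x}=R_{\omega t}\mathbf{y}$ gives $\dot{\mathbf{x}}=R_{\omega t}(\dot{\mathbf{y}}+\omega\,\mathbf{e}_3\times\mathbf{y})$. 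Using that $R_{\omega t}$ preserves $\langle\cdot,\cdot\rangle_\kappa$ and then rescaling time by $\tau=\omega t$, one recovers \eqref{eq:LagAut} with potential \eqref{eq:augpotential}, up to the overall factor $\omega^2$.

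A time-dependent change of coordinates maps solutions to solutions, because Hamilton's principle is coordinate-invariant. Therefore $\mathbf{x}(t)$ solves the original system if and only if $\mathbf{y}(\tau)$ solves the Euler--Lagrange equations of $\tilde L_{\kappa,\mu}$. In particular, relative equilibria correspond bijectively to constant solutions $\mathbf{y}\equiv\mathbf{y}_0$. The map is bijective since $\mathbf{y}_0=\mathbf{x}(0)$ recovers the point.

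\textbf{Step 2: equilibria are critical points.} Work in local coordinates $q$ on $\tilde{\mathcal Q}_\kappa$ and write
\[
\tilde L=\tfrac12 q'^{T}M(q)q'+A(q)\cdot q'-V_{\kappa,\mu}(q),
\]
where $A(q)\cdot q'=\langle \mathbf{e}_3\times\mathbf{y},\mathbf{y}'\rangle_\kappa$ and $M(q)$ is positive definite. The Euler--Lagrange equations read
\[
M q''+\Gamma(q)(q',q')+B(q)q'+\nabla V_{\kappa,\mu}(q)=0,
\]
where $\Gamma(q)(q',q')$ collects the terms quadratic in $q'$ and $B=\partial A-(\partial A)^T$ is the skew gyroscopic matrix. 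Setting $q'=q''=0$ shows that a constant $q_0$ is a solution if and only if $\nabla V_{\kappa,\mu}(q_0)=0$. This is the standard fact recalled in Appendix~\ref{ap:linear}.

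\textbf{Main obstacle.} The only delicate point is the bookkeeping in Step 1. One must check that $\|\mathbf{e}_3\times\mathbf{y}\|_\kappa^2$ appears with the correct sign and coefficient. One must also confirm that the cross term $\langle R\dot{\mathbf{y}},R(\mathbf{e}_3\times\mathbf{y})\rangle_\kappa$ is well defined on $T\calS_\kappa$: the vector $\mathbf{e}_3\times\mathbf{y}$ is tangent to $\calS_\kappa$ because it is the generator of the rotational symmetry. Finally, the factor $G\mu_1/\omega^2$ arises exactly from dividing by $\omega^2$. Once this is done, the equivalence follows directly from Steps 1 and 2.
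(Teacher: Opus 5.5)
Your proposal is correct and follows the same route as the paper. The rotating-frame substitution \eqref{eq:changevar} with $\tau=\omega t$ turns relative equilibria into equilibria of the autonomous Lagrangian \eqref{eq:LagAut}, and the standard fact recalled in Appendix~\ref{ap:linear} identifies those equilibria with critical points of the potential; you simply make explicit the computation $\dot{\mathbf{x}}=R_{\omega t}(\dot{\mathbf{y}}+\omega\,\mathbf{e}_3\times\mathbf{y})$ and the $R_{\omega t}$-invariance of $\langle\cdot,\cdot\rangle_\kappa$, which the paper leaves implicit (the constant factor, $\omega^2$ pointwise or $\omega$ after $dt=d\tau/\omega$, is irrelevant).
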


\begin{remark}
    The time independence of the 
    Lagrangian \eqref{eq:LagAut} implies that the corresponding equations of
    motion possess the Jacobi integral 
    \begin{equation*}
\label{eq:JacobiIntegral}
  E_{\kappa,\mu}=T\tilde{\mathcal{Q}}_{\kappa}\to\R, \qquad  E_{\kappa,\mu}(\mathbf{y},\mathbf{y}')=\frac{1}{2}\Vert \mathbf{y}' \Vert_{\kappa}^{2} + V_{\kappa,\mu}(\mathbf{y}).
\end{equation*}
\end{remark}

\subsection{A convenient rescaling}
\label{subsec:rescaling}

The critical points of $V_{0,\mu}$ correspond to the classical Lagrange points $\Ll_{1},\dots, \Ll_{5}$,
 reviewed in sections \ref{subsec:libpoints} and \ref{subsec:libpointsS0}. To study their continuation to positive and negative $\kappa$, we must analyze the critical points of $V_{\kappa,\mu}$. A difficulty arises from the fact that the domain of $V_{\kappa,\mu}$, namely $\tilde {\mathcal{Q}}_\kappa$, 
  also depends on  $\kappa$. To address this issue, we consider separately the cases $\kappa>0$ and $\kappa<0$, and introduce a suitable scaling that reformulates the problem on spaces of constant curvature $\pm 1$. Under this rescaling, the dependence on $\kappa$ is transferred to the Lagrangian 
  as a parameter. 

Let $\mathcal{M}^\pm$ denote  the standard sphere or pseudo-sphere of constant curvature $\pm 1$. That is, 
\begin{equation*}
 \mathcal{M}^+= \mathbb{S}_{1}^{2}, \qquad  \mathcal{M}^-= \mathbb{L}_{1}^{2}.
\end{equation*}
According to the sign of  $\kappa\neq 0$, we  consider the diffeomorphism
\begin{equation}
\label{eq:diffeo}
\Psi_{\kappa}: \mathcal{M}^\pm \to \calS_\kappa, \qquad 
    \Psi_{\kappa}(r_{1},r_{2},r_{3})=\left( \frac{r_{1}}{\sqrt{\vert\kappa\vert}}, \frac{r_{2}}{\sqrt{\vert\kappa\vert}} , \frac{r_{3}\pm1}{\vert\kappa\vert}\right).
\end{equation}
The inverse diffeomorphism $\Psi_{\kappa}^{-1}:\calS_\kappa\to \mathcal{M}^\pm$
is obtained by composing the scaling by $\sqrt{|\kappa|}$, which maps $\mathbb{S}_{1}^{2}\to \mathbb{S}_{\sqrt{\kappa}}^{2}$
and $\mathbb{L}_{1}^{2}\to \mathbb{L}_{\sqrt{-\kappa}}^{2}$, according to the sign of $\kappa$, with the isometries $\Phi^{\pm}$ given by Eqs. \eqref{eq:isopositive} and \eqref{eq:isonegative}.
 It is therefore clear that  $\Psi_{\kappa}$ is not an isometry
 unless $\kappa=\pm 1$.  The following proposition is a direct consequence
 of the definitions, and will be used ahead. 
 \begin{proposition}
 \label{prop:scaling}
 Let $\mathbf{r_1}, \mathbf{r_2}\in \mathcal{M}^\pm$ and suppose that their mutual Riemannian distance is $d$. Then,
 the Riemannian distance between  $\Psi_{\kappa}(\mathbf{r_1}), \Psi_{\kappa}(\mathbf{r_2})\in \calS_\kappa$ is $\frac{d}{\sqrt{|\kappa|}}$.
 \end{proposition}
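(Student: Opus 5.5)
The plan is to write $\Psi_\kappa$ as a composition of maps whose effect on Riemannian distances is known. Then the statement follows by tracking distances through each factor. Concretely, $\Psi_\kappa = (\Phi^{\pm})^{-1}\circ D_\kappa$, where $D_\kappa$ is the linear dilation $\mathbf{r}\mapsto \mathbf{r}/\sqrt{|\kappa|}$ of $\R^3$. This map sends $\mathcal{M}^+=\mathbb{S}^2_1$ onto $\mathbb{S}^2_{1/\sqrt{\kappa}}$ and $\mathcal{M}^-=\mathbb{L}^2_1$ onto $\mathbb{L}^2_{1/\sqrt{-\kappa}}$.

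\emph{First step: verify the factorization.} For $\kappa>0$ the inverse of \eqref{eq:isopositive} is $(X,Y,Z)\mapsto (X,Y,(Z+1/\sqrt{\kappa})/\sqrt{\kappa})$. Applying it to $D_\kappa(r_1,r_2,r_3)$ gives $\big(r_1/\sqrt{\kappa},\, r_2/\sqrt{\kappa},\, (r_3+1)/\kappa\big)$, which is \eqref{eq:diffeo} with the plus sign. The case $\kappa<0$ is analogous using \eqref{eq:isonegative}: the inverse is $(X,Y,Z)\mapsto(X,Y,(Z-1/\sqrt{-\kappa})/\sqrt{-\kappa})$, and composing gives third component $(r_3-1)/|\kappa|$, which is the minus sign.

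\emph{Second step: track distances through each factor.} The dilation $D_\kappa$ is the restriction of a linear map of $\R^3$. It therefore scales the ambient Euclidean metric (for $\kappa>0$) or the Minkowski pseudo-metric (for $\kappa<0$) by the constant factor $1/|\kappa|$. Hence it scales the induced Riemannian metric on the surfaces by $1/|\kappa|$, and so it scales lengths of curves by $1/\sqrt{|\kappa|}$. Because it is a bijection between the surfaces, it maps curves joining $\mathbf{r}_1,\mathbf{r}_2$ bijectively onto curves joining their images. Taking infima, Riemannian distances are multiplied by exactly $1/\sqrt{|\kappa|}$. The map $(\Phi^\pm)^{-1}$ is an isometry, as stated in Section~\ref{subsec:surface}, so it preserves distances. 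Composing the two steps, the distance between $\Psi_\kappa(\mathbf{r}_1)$ and $\Psi_\kappa(\mathbf{r}_2)$ is $d/\sqrt{|\kappa|}$.

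As a cross-check, one can instead use \eqref{eq:distance}. On $\mathcal{M}^\pm$ we have $\cos_{\pm1}(d)=\langle\mathbf{r}_1,\mathbf{r}_2\rangle_\pm$, where $\cos_{-1}$ is $\cosh$. After scaling, $|\kappa|\langle D_\kappa\mathbf{r}_1, D_\kappa\mathbf{r}_2\rangle_\pm=\langle \mathbf{r}_1,\mathbf{r}_2\rangle_\pm$. The identity $\cos_\kappa(d/\sqrt{|\kappa|})=\cos_{\pm1}(d)$ then matches the distance formula, given the injectivity of $\cos$ on $[0,\pi]$ and of $\cosh$ on $[0,\infty)$.

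There is no real obstacle. The only point requiring care is the sign and offset bookkeeping in the $z$-component when inverting $\Phi^\pm$, so that the composition really equals \eqref{eq:diffeo}.
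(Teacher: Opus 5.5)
Your argument is correct and follows the paper's route: the paper notes that $\Psi_\kappa^{-1}$ is the isometry $\Phi^\pm$ followed by the dilation by $\sqrt{|\kappa|}$, and treats the proposition as a direct consequence of that factorization; you verify the factorization and the scaling of curve lengths explicitly. One small slip in your optional cross-check: on $\mathcal{M}^-$ the formula should read $\cosh d=-\langle\mathbf{r}_1,\mathbf{r}_2\rangle_-$, as in \eqref{eq:formuladistancenegative}, but since the same sign enters both sides of the comparison, the conclusion is unaffected.
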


 Let $\mathbf{r}\in \mathcal{M}^\pm$ be determined by $\mathbf{y}=\Psi_\kappa(\mathbf{r})$. Then the Lagrangian system \eqref{eq:LagAut} can be reformulated, up to a negligible overall constant factor of $|\kappa|$, as follows\footnote{We 
abbreviate  $\|\cdot \|_{\pm}=\|\cdot \|_{\pm 1}$ and $\langle \cdot, \cdot \rangle_{\pm} =\langle \cdot, \cdot \rangle_{\pm 1} $ to simplify the notation.} 
\begin{equation}
\label{eq:Lagmukappa}
  \mathcal{L}_{\kappa,\mu}:T(\mathcal{M}^\pm\setminus\Delta)\to \R, \qquad \mathcal{L}_{\kappa,\mu}(\mathbf{r},\mathbf{r}') = \frac{1}{2}\Vert \mathbf{r}' \Vert_{\pm}^{2}+\langle \mathbf{e}_{3}\times \mathbf{r},\mathbf{r}'\rangle_{\pm}-\mathcal{V}_{\kappa,\mu}(\mathbf{r}),
\end{equation}
where the potential
\begin{equation}
    \label{eq:scaledpotential}
  \mathcal{V}_{\kappa,\mu}: \mathcal{M}^\pm \setminus \Delta \to \R, \qquad        \mathcal{V}_{\kappa,\mu}(\mathbf{r})=-\frac{1}{2}\Vert \mathbf{e}_{3}\times \mathbf{r} \Vert_{\pm}^{2}-\Gamma_{\kappa,\mu}[\phi(\langle \mathbf{r},\mathbf{p}_{1} \rangle_{\pm})+\mu\phi(\langle \mathbf{r},\mathbf{p}_{2}\rangle_{\pm})].
    \end{equation}
Our notation in the potential \eqref{eq:scaledpotential} is as follows:
\begin{itemize}
\item $\mathbf{p}_{1}=\mathbf{p}_{1}(\kappa,\mu)$ and $\mathbf{p}_{2}=\mathbf{p}_{2}(\kappa,\mu)$ denote  the location of the primaries on $\mathcal{M}^\pm$. So, by their definition, they satisfy $\Psi_\kappa(\mathbf{p}_{j}(\kappa,\mu))=\mathbf{y}_{j}(\kappa,\mu)$, $j=1,2$. Their explicit positions
as functions of  $\kappa$ and $\mu$ are:
\begin{equation}
\label{eq:primariesMpm}
    \mathbf{p}_{1}=\mathbf{p}_{1}(\kappa,\mu)=\begin{pmatrix} -\sqrt{\vert\kappa\vert}\sin_{\kappa}(q_{1}(\kappa,\mu)) \\ 0 \\  \mp\cos_{\kappa}(q_{1}(\kappa,\mu)) \end{pmatrix}, \qquad 
    \mathbf{p}_{2}=\mathbf{p}_{2}(\kappa,\mu)=\begin{pmatrix} \sqrt{\vert\kappa\vert}\sin_{\kappa}(q_{2}(\kappa,\mu)) \\ 0 \\  \mp\cos_{\kappa}(q_{2}(\kappa,\mu)) \end{pmatrix},
    \end{equation}
    with $q_j(\kappa,\mu)$ determined, as before,
    by formulas \eqref{eq:positionprimaries} for $j=1,2$.
\item $\Delta$ is the set of points on $\mathcal{M}^\pm$ at which $\mathcal{V}_{\kappa,\mu}$
is undefined. It consists of the primaries $\lbrace \mathbf{p}_{1},\mathbf{p}_{2} \rbrace$ for $\kappa<0$ and the primaries and their antipodal
points $\lbrace \pm\mathbf{p}_{1},\pm\mathbf{p}_{2} \rbrace$ for $\kappa>0$.

\item $\phi$ denotes the real function $\phi(\lambda)=\frac{\lambda}{\sqrt{\vert 1-\lambda^2\vert}}$.
    
    \item $\Gamma_{\kappa,\mu}$ denotes the constant
\begin{equation}
\label{eq:constantgamma}
    \Gamma_{\kappa,\mu}=\vert\kappa\vert^{3/2}\frac{G\mu_{1}}{\omega^{2}}=\frac{\vert\kappa\vert^{3/2}\sin_{\kappa}^{3}(1)\cos_{\kappa}(1)}{\sqrt{\mu^{2}+2\mu\cos_{\kappa}(2)+1}}.
\end{equation}
\end{itemize}

Note that the expression \eqref{eq:scaledpotential} for  $\mathcal{V}_{\kappa,\mu}$ no longer
involves the generalized cotangent
function $\cot_\kappa$ appearing in the formula
\eqref{eq:augpotential}. Such 
simplification is possible 
in view of the relation 
\begin{equation}
\label{eq:distancerelation}
    \cos_{\kappa}(d_{\calS_{\kappa}}(\mathbf{y},\mathbf{y}_{i}))=\kappa\langle \mathbf{r},\mathbf{p}_{i}\rangle_{\pm},
\end{equation}
that follows from the relation \eqref{eq:distance}.

This is the setting that we will use to study the existence and stability
of the RE of the R3BP for nonzero curvature. Our formulation is convenient since the RE correspond to critical points of the potential $\mathcal{V}_{\kappa,\mu}$ and the curvature $\kappa$ appears as a parameter. Once a branch of equilibria $\textbf{R}_{\kappa,\mu}\in \mathcal{M}^\pm$ has been determined for non-zero $\kappa$, we can study its behavior as $\kappa$ passes through $0$ by considering the corresponding branch $\mathbf{Y}_{\kappa,\mu}\coloneqq\Psi_\kappa(\mathbf{R}_{\kappa,\mu})\in \calS_\kappa$. In this way, we can compare  $\lim_{\kappa \to 0}\mathbf{Y}_{\kappa,\mu}$ with the classical 
Lagrange libration points $\Ll_{1}, \dots, \Ll_{5}\in \calS_0$ given in Eqs. \eqref{eq:planarcollinear} and \eqref{eq:planartriangular} in Sec.~\ref{subsec:libpointsS0}. A good part of our work below consists of the classification of 
the critical points of the potential $\mathcal{V}_{\kappa,\mu}$ as a function of the parameters $\kappa, \mu$.

\section{Classification and stability of RE: general considerations}
\label{sec:collinearclass}

In view of Sec.~\ref{sec:R3BP}, the study of the curved R3BP is reduced to the analysis of a mechanical system on the configuration manifold $\mathcal{M}\setminus\Delta^{\pm}$ defined by the autonomous Lagrangian $\mathcal{L}_{\kappa,\mu}$ given in Eq.~\eqref{eq:Lagmukappa}.
In particular, the analysis of the RE is reduced to the classification of the critical points of the smooth function $\mathcal{V}_{\kappa,\mu}$ in Eq.~\eqref{eq:scaledpotential} (see Appendix \ref{ap:linear}). The complication is the dependence on the two parameters $\kappa,\mu$.

Moreover, using  standard results recalled in Appendix~\ref{ap:linear}, the stability of the equilibrium points can be established by determining the type of critical point of $\mathcal{V}_{\kappa,\mu}$ and, in the case of local maxima, by exploiting the structure of the linearization.

\subsection{Topological restrictions for RE for positive curvature. }

A fundamental difference between the positively curved R3BP and the 
planar and negatively curved versions of the problem is that, aside from 
collisions and antipodal configurations, the 
surface $\calS_\kappa$ is compact for $\kappa>0$ and non-compact otherwise.
This leads to a topological constraint on the kinds of critical points
that a smooth function may have. It turns out that this constraint
persists even in the presence of the collision and antipodal configurations. 

\begin{theorem}
\label{prop:eqptsSpos}
Fix $\mu\in (0,1)$ and $0<\kappa <\frac{\pi^2}{4}$, and suppose that all 
critical points of $\mathcal{V}_{\kappa,\mu}$ defined by the expression \eqref{eq:scaledpotential} are non-degenerate. Then,
\begin{equation}
\label{eq:formulaindex}
    \# \mbox{local max} +   \# \mbox{local min} -  \# \mbox{saddle points} = -2.
\end{equation}
\end{theorem}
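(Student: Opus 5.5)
The plan is to apply the Poincaré–Hopf index theorem to the gradient vector field $X=\nabla\mathcal{V}_{\kappa,\mu}$ on $\mathbb{S}^2$, with the four singular points $\pm\mathbf{p}_1,\pm\mathbf{p}_2$ removed. Since $0<\kappa<\pi^2/4$ and the primaries are at distance $\sqrt{\kappa}<\pi/2$ on $\mathbb{S}^2$, these four points are distinct. At a non-degenerate critical point the index of $X$ equals $(-1)^{\text{Morse index}}$. Local maxima and minima therefore contribute $+1$ each and saddles contribute $-1$, so the left-hand side of \eqref{eq:formulaindex} is the sum of the indices of $X$ over the zeros in $\mathcal{M}^+\setminus\Delta$.

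Next I will cut out small geodesic discs $D_\epsilon(\pm\mathbf{p}_j)$ around the four singular points. The resulting compact surface with boundary $M_\epsilon$ has Euler characteristic $\chi(\mathbb{S}^2)-4=-2$. Non-degenerate critical points are isolated, and I must show that none accumulate at $\Delta$. This follows from the local analysis below, which shows $X\neq 0$ near $\Delta$. Hence there are finitely many critical points, all lying in $M_\epsilon$ for small $\epsilon$. The boundary form of Poincaré–Hopf says that if $X$ points outward (or inward) along every boundary component, then the sum of indices equals $\chi(M_\epsilon)=-2$. If $X$ points inward on some components and outward on others, I will instead glue discs back and extend $X$ inside each disc with a single zero of index $+1$: a source or sink modelled on the radial field. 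The total over $\mathbb{S}^2$ is then $2$, and the interior sum is $2-4=-2$.

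So the key step is the local behaviour of $\mathcal{V}_{\kappa,\mu}$ near each point of $\Delta$, and I expect this to be the main computation. Write $\lambda=\langle\mathbf{r},\mathbf{p}_j\rangle_+$. Near $\mathbf{p}_j$ we have $\lambda\to 1^-$ and $\phi(\lambda)\sim 1/\sqrt{1-\lambda^2}\to+\infty$, so $\mathcal{V}_{\kappa,\mu}\to-\infty$ (recall $\Gamma_{\kappa,\mu}>0$ since $\cos_\kappa(1)>0$). Near $-\mathbf{p}_j$ we have $\lambda\to-1^+$, so $\phi\to-\infty$ and $\mathcal{V}_{\kappa,\mu}\to+\infty$. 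In geodesic polar coordinates $(\rho,\theta)$ centred at $\pm\mathbf{p}_j$, this singular term behaves like $\mp\Gamma\mu_j/\rho$ (with $\mu_1=1$, $\mu_2=\mu$). Its radial derivative is of order $\rho^{-2}$, which dominates the gradients of the smooth remaining terms: the centrifugal term and the other primary's term, which are bounded near the point because the four points of $\Delta$ are distinct. Hence for small $\epsilon$, $X$ has no zeros in the punctured discs. On $\partial D_\epsilon(\mathbf{p}_j)$ the field $X$ points radially inward toward the primary, which is outward from $M_\epsilon$ on that component. On $\partial D_\epsilon(-\mathbf{p}_j)$ it points away from the antipode, i.e. into $M_\epsilon$. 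The directions thus differ between components, which is why I use the gluing version.

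Each singular point then behaves like a non-degenerate extremum: a "minimum" at $-\infty$ at the primaries and a "maximum" at $+\infty$ at the antipodes. Capping each with a radial sink or source gives index $+1$ at each of the four points. Poincaré–Hopf on $\mathbb{S}^2$ then yields $\sum_{\text{crit}}\mathrm{ind}+4=2$, which is \eqref{eq:formulaindex}. The only care needed is that the capped field stays homotopic to $X$ through nonvanishing fields on the annulus $D_\epsilon\setminus D_{\epsilon/2}$. This holds because $X$ has a nonzero radial component there, so the straight-line homotopy to the radial field does not vanish.
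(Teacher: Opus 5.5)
Your proposal is correct and follows essentially the same route as the paper: Poincar\'e--Hopf on $\mathbb{S}^2_1$, with each of the four points $\pm\mathbf{p}_j$ contributing index $+1$. Your excision of small discs and capping by radial fields plays the role of the paper's device of rescaling to $X=\nabla\mathcal{V}/(1+\Vert\nabla\mathcal{V}\Vert_+^2)$ and extending by zero on $\Delta$, and your version is somewhat more careful about the finiteness of the critical set and the index computation at $\Delta$.

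One slip to fix: the singular term is exactly $\mp\Gamma\mu_j\cot\rho$, with radial derivative $\pm\Gamma\mu_j\csc^2\rho$. So the gradient points away from the primaries (where $\mathcal{V}\to-\infty$) and toward the antipodes (where $\mathcal{V}\to+\infty$), the reverse of what you wrote. This is harmless, because sources and sinks both have index $+1$ on a surface.
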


\begin{proof}
To simplify notation, we will write $\mathcal{V}=\mathcal{V}_{\kappa,\mu}$ omitting the subscripts $\kappa, \mu$, 
 throughout the proof. 
 
Recall from the discussion following Eq.~\eqref{eq:scaledpotential}, that $\mathcal{V}$ is a smooth function on $\mathbb{S}_{1}^{2}\setminus\Delta$ where $\Delta=\{\pm \mathbf{p}_{1}, \pm \mathbf{p}_{2}\}$. It is not difficult to deduce the following behavior of $\mathcal{V}$ from its explicit expression in Eq.~\eqref{eq:scaledpotential}:
\begin{equation}
\label{eq:auxtopological}
    \lim_{\mathbf{r}\to \pm \mathbf{p}_i}\mathcal{V}( \mathbf{r}) = \mp \infty, \qquad i=1,2. 
\end{equation}

Consider the  Riemannian gradient  field $\nabla\mathcal{V}$, which is 
a smooth vector field on $\mathbb{S}_{1}^{2}\setminus\Delta$, and define
\begin{equation}
\label{eq:regvecfield}
    X := \frac{1}{1 + \Vert \nabla \mathcal{V} \Vert_{+}^{2}} \, \nabla \mathcal{V}.
\end{equation}
Then $X$ is a smooth vector field on $\mathbb{S}_{1}^{2} \setminus \Delta$, which in view of property \eqref{eq:auxtopological}, may be smoothly extended
to all of $\mathbb{S}_{1}^{2}$ defining $X(\mathbf{r}) := 0$ for $\mathbf{r} \in \Delta$.

By construction, the set $Z(X)$ of equilibrium points
of $X$, and the set $Z(\nabla \mathcal{V})$ of equilibrium points of $\nabla \mathcal{V}$, satisfy
\begin{equation}
\label{eq:zeros}
    Z({X}) = Z(\nabla \mathcal{V}) \cup \Delta.
\end{equation}

Considering that $X$ is a smooth vector field on $\mathbb{S}^{2}_{1}$, we may apply the 
Poincar\'e--Hopf Index Theorem (see e.g.~\cite[p. 134]{GP74}) to obtain
\begin{equation*}
    \sum_{\mathbf{r} \in Z({X})} \operatorname{ind}_{\mathbf{r}}(X) = \chi(\mathbb{S}^{2}_{1})=2,
\end{equation*}
where $\chi(\mathbb{S}^{2})$ denotes the Euler characteristic of $\mathbb{S}^{2}$ (which equals $2$), and $\operatorname{ind}_{\mathbf{r}}(X)$ denotes the index of the equilibrium point $\mathbf{r}$ of $X$.

Since the vector
fields $X$ and  $\nabla\mathcal{V}$ only differ
by a positive scalar factor on $\mathbb{S}_{1}^{2}\setminus \Delta$, their equilibrium points and their indices coincide on this domain. Hence,
\begin{equation*}
    \operatorname{ind}_{\mathbf{r}}({X})
    = \operatorname{ind}_{\mathbf{r}}(\nabla\mathcal{V}),
    \qquad 
    \mathbf{r} \in Z(\nabla\mathcal{V}).
\end{equation*}
Therefore, in view of formula \eqref{eq:zeros}, we have
\begin{equation}
\label{eq:PHsing}
    \sum_{\mathbf{r} \in Z(\nabla \mathcal{V})} 
        \operatorname{ind}_{\mathbf{r}}(\nabla \mathcal{V})
    +
    \sum_{\mathbf{r} \in \Delta} 
        \operatorname{ind}_{\mathbf{r}}(X)
    = 2.
\end{equation}

Now, it is well-known that the non-degenerate critical points of $\mathcal{V}$ satisfy
\begin{equation*}
\operatorname{ind}_{\mathbf{r}}(\nabla \mathcal{V}) = 
\begin{cases}
+1 & \text{if } \mathbf{r} \text{ is a local minimum or a local maximum},\\
-1 & \text{if } \mathbf{r} \text{ is a saddle point}.
\end{cases}
\end{equation*}
Thus, the first sum in formula \eqref{eq:PHsing} represents the total number of maxima and minima minus the number of saddle points, and it only 
remains to determine the contribution of the term $\sum_{\mathbf{r} \in \Delta} 
        \operatorname{ind}_{\mathbf{r}}(X)$.

In view of property \eqref{eq:auxtopological}, we conclude that
the gradient field
$\nabla \mathcal{V}$ points away from  $\mathbf{p}_{i}$ in a neighborhood of $\mathbf{p}_{i}$. Since $X$ differs from $\nabla \mathcal{V}$ only by a positive scalar factor, it has the same direction, and therefore $\mathbf{p}_{i}$, $i=1,2,$
are sources of $X$. A similar argument, shows that
$-\mathbf{p}_{i}$, $i=1,2,$ are sinks of $X$.
In any case, the index of each of these points is
$+1$ and, therefore, $\sum_{\mathbf{r} \in \Delta} 
        \operatorname{ind}_{\mathbf{r}}(X)=4$.
Substituting into formula \eqref{eq:PHsing} gives
\begin{equation*}
    \# \text{local maxima} + \# \text{local minima} - \# \text{saddle points} = 2 - 4 = -2.
\end{equation*}
\end{proof}

As a consequence of the theorem, we conclude the
existence of `new' RE for small positive 
curvature. 

\begin{corollary}
\label{cor:extraRE}
Fix a mass ratio $\mu\in (0,1)$. For small  $\kappa>0$
the curved R3BP possesses other RE
in addition to the continuation
of the Lagrange points $\Ll_{1},\dots,\Ll_{5}$.
\end{corollary}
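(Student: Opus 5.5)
The plan is to count indices. In the planar problem the Lagrange points have a definite type: $\Ll_1,\Ll_2,\Ll_3$ are non-degenerate saddle points of $V_{0,\mu}$, and $\Ll_4,\Ll_5$ are non-degenerate local maxima. Their contribution to the left-hand side of \eqref{eq:formulaindex} is therefore $2-3=-1$, not $-2$. If we can show that, for small $\kappa>0$, the continuations of these five points keep their type, then Theorem~\ref{prop:eqptsSpos} forces at least one further critical point of $\mathcal{V}_{\kappa,\mu}$. By the Proposition of Sec.~\ref{ss:time-independent}, such a point is an additional RE.

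\textbf{Step 1: continuation.} Pull the problem back to $\calS_\kappa$ through $\Psi_\kappa$. Since $\Psi_\kappa$ is a diffeomorphism, critical points of $\mathcal{V}_{\kappa,\mu}$ correspond to critical points of $V_{\kappa,\mu}$ on $\tilde{\mathcal Q}_\kappa$, and their types (max, min, saddle) are preserved. Use the coordinates $(x,y)$ of the vertical projection. In these coordinates $V_{\kappa,\mu}$ depends analytically on $\kappa$ near $\kappa=0$, uniformly on compact sets avoiding the primaries; this uses the analyticity of $\sin_\kappa,\cos_\kappa,\cot_\kappa$, of $q_j(\kappa,\mu)$ and of $G\mu_1/\omega^2$ noted in Sec.~\ref{ss:PrimaryMotion}. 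Near each $\Ll_i$ the function converges in $C^2$ to $V_{0,\mu}$. Because each $\Ll_i$ is a non-degenerate critical point, the implicit function theorem gives a unique nearby critical point for small $\kappa$, with Hessian close to that at $\Ll_i$. Hence it has the same Morse type.

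\textbf{Step 2: the count.} Fix $\mu$ and take $\kappa>0$ small. Suppose the continued five points were the only critical points. Then all critical points are non-degenerate, so Theorem~\ref{prop:eqptsSpos} applies. It gives $2$ maxima $+\,0$ minima $-\,3$ saddles $=-1\neq-2$, a contradiction. So there is at least one more critical point.

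\textbf{Main obstacle.} The difficulty is the case where extra critical points exist but some are degenerate, since the theorem then does not apply directly. I would avoid assuming non-degeneracy by running the Poincar\'e--Hopf argument from the proof of Theorem~\ref{prop:eqptsSpos} with general indices. The sum of the indices of $\nabla\mathcal{V}$ over all its zeros equals $-2$, provided the zeros are isolated. If the zeros are not isolated, there are infinitely many critical points and the corollary holds trivially. The five continued points contribute $-1$, so some other zero with nonzero index must exist.

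The new point must also be genuinely different from the continued Lagrange points, not merely a distant point that we failed to track. This follows from the uniqueness part of Step 1 within fixed neighborhoods: any other critical point lies outside those neighborhoods, so it is a distinct RE. Pulled back to $\calS_\kappa$, these new points are expected to escape to infinity as $\kappa\to0^+$.
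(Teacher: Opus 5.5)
Your proposal is correct and is essentially the paper's own argument: the non-degenerate $\Ll_1,\dots,\Ll_5$ persist with their Morse types (three saddles, two maxima) for small $\kappa>0$, and $2-3=-1\neq-2$ then contradicts Theorem~\ref{prop:eqptsSpos}. Your ``main obstacle'' is not actually an obstacle, because under the contradiction hypothesis of Step~2 the only critical points are the five non-degenerate continuations, so the theorem's hypothesis holds automatically; the general-index Poincar\'e--Hopf refinement only buys the slightly stronger conclusion that some extra critical point has nonzero index.
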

\begin{proof}
    Recall that the collinear Lagrange points $\Ll_{1},\Ll_2,\Ll_3$ are saddle points of  $\mathcal{V}_{0,\mu}$,
    whereas  $\Ll_{4},\Ll_5$ are local maxima. By their
    non-degeneracy, these points persist as
    critical points of $\mathcal{V}_{\kappa,\mu}$ of
    the same type for small $\kappa>0$. These critical
    points on their own do not satisfy formula \eqref{eq:formulaindex},
    so there must exist other critical points
    of $\mathcal{V}_{\kappa,\mu}$.
\end{proof}

\begin{remark}
\label{rmk:top-balance}
Our existence results show that the  additional RE that exist for small $\kappa>0$ are the three collinear points $\Ee_2$, $\Ee_3$ and $\Aa_1$ in
 Theorem~\ref{thm:numbercollinearRE}. As shown in Theorem~\ref{thm:stabilitycollinear},  in the regime where the mass ratio $\mu\in (0,1/10)$, 
 the points $\Ee_3$ and 
$\Aa_1$ are saddle-points of $\mathcal{V}_{\kappa,\mu}$ while $\Ee_2$ is a local minimum. The presence of these additional critical points 
of $\mathcal{V}_{\kappa,\mu}$ balances
Eq.~\eqref{eq:formulaindex}.
\end{remark}

\subsection{Conditions for existence of relative equilibria}
\label{ss:existenceCollinear}

We now derive conditions for RE that will be
investigated analytically and using CAPs in Sections \ref{sec:collinear} and \ref{sec:triangular}. Recall from the discussion in Sec. \ref{subsec:rescaling} that  RE are in one-to-one correspondence with  
critical points of the augmented potential 
$\mathcal{V}_{\kappa,\mu}:\mathcal{M}^\pm\setminus \Delta\to \R$, defined in formula \eqref{eq:scaledpotential}. We will introduce convenient sets of coordinates
in $\mathcal{M}^\pm$ to find these critical points.

Let $\mathcal{G}^\pm \subset \mathcal{M}^\pm$ be the geodesic containing the primaries and the rotation center $C$. This curve generalizes the straight line connecting the primaries and the rotation center
from the planar R3BP which is usually taken as the
horizontal axis. In the curved case, $\mathcal{G}^+$ is a great circle in $\mathbb{S}_1^2$ and $\mathcal{G}^-$ a hyperbola passing through the vertex of $C\in \mathbb{L}_1^2$ (see Figure \ref{fig:coords}).
According to our conventions, $\mathcal{G}^\pm$  is contained in the plane $r_2=0$ (see Eq.~\eqref{eq:primariesMpm}). 
As one may expect, $\mathcal{V}_{\kappa,\mu}$ is  invariant under reflections about $\mathcal{G}^\pm$, and indeed, one checks that  $\mathcal{V}_{\kappa,\mu}$, as defined in formula \eqref{eq:scaledpotential}, satisfies 
\begin{equation}
\label{eq:sym}
    \mathcal{V}_{\kappa,\mu}(r_{1},-r_{2},r_{3})=\mathcal{V}_{\kappa,\mu}(r_{1},r_{2},r_{3}),
\end{equation}
independently of the sign of $\kappa$.
We will exploit this symmetry  to analyze the critical points of $\mathcal{V}_{\kappa,\mu}$.

As in the planar R3BP, it is convenient to treat separately the cases corresponding to points on $\mathcal{G}^{\pm}$ and 
points outside $\mathcal{G}^{\pm}$.
 Following the classical terminology,  critical points of $\mathcal{V}_{\kappa,\mu}$ lying on $\mathcal{G}^\pm$ will be called  \textbf{\textit{collinear RE}}. 
All other critical points of  $\mathcal{V}_{\kappa,\mu}$ will be called \textbf{\textit{triangular RE}}.
Note that, due to the symmetry \eqref{eq:sym}, triangular
RE arise in pairs. This is consistent with the well-known
symmetry between $\Ll_{4}$ and $\Ll_{5}$ in the planar R3BP.
Therefore, for the study of triangular RE, it suffices to restrict our
attention to   the open `half-space'
\begin{equation*}
    \mathcal{M}^\pm_{>0}\coloneqq \lbrace (r_{1},r_{2},r_{3})\in\mathcal{M}^\pm : r_{2} > 0 \rbrace.
\end{equation*}
With this convention, $\mathcal{M}^+_{>0}$  is a hemisphere and 
$\mathcal{M}^-_{>0}$ is half of the one-sheeted hyperboloid, see
 Figure \ref{fig:coords}. 

\begin{figure}[ht]
    \centering
    \captionsetup{width=.75\linewidth}
    \subfigure[\large $\mathcal{M}^+=\mathbb{S}_{1}^{2}$]{\begin{overpic}[width=0.3\textwidth,tics=20]{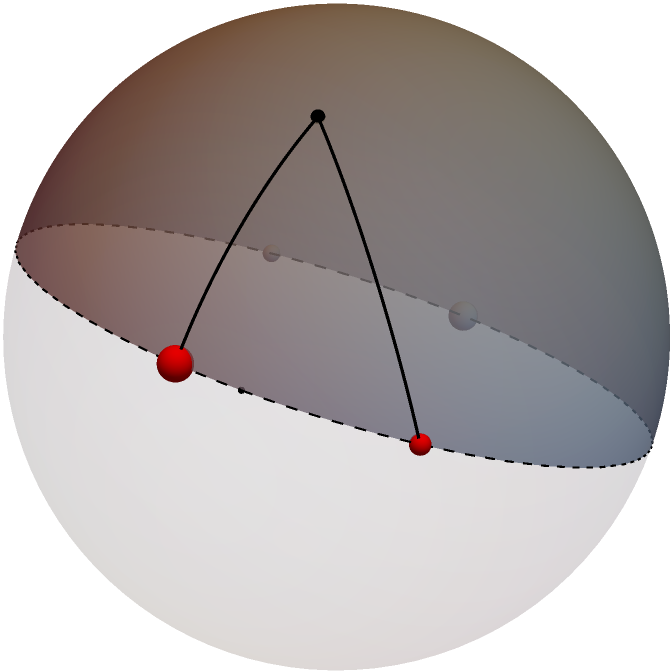}
    \put(36,57){\textcolor{graycover}{$\mathbf{a}_{2}$}}
    \put(70,55){\textcolor{graycover}{$\mathbf{a}_{1}$}}
    \put(19,40){$\mathbf{p}_{1}$}
    \put(59,27){$\mathbf{p}_{2}$}
    \put(58,55){$d_{2}$}
    \put(28,65){$d_{1}$}
    \put(45,85){$\mathbf{r}$}
    \put(37,35){\footnotesize $C$}
    \put(-5,68){$\mathcal{G}^+$}
    \put(83,90){$\mathcal{M}^+_{>0}$}
    \end{overpic}}
    \hspace{2cm}
    \subfigure[\large $\mathcal{M}^-=\mathbb{L}_{1}^{2}$]{\begin{overpic}[width=0.4\textwidth,tics=20]{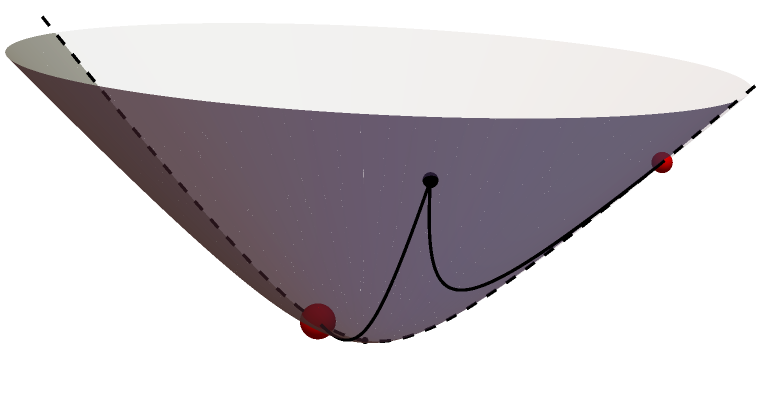}
    \put(89,29){$\mathbf{p}_{2}$}
    \put(38,5){$\mathbf{p}_{1}$}
    \put(45,15){$d_{1}$}
    \put(60,18){$d_{2}$}
    \put(56,32){$\mathbf{r}$}
    \put(46,2){$C$}
    \put(1,50){$\mathcal{G}^-$}
    \put(38,41){$\mathcal{M}^-_{>0}$}
    \end{overpic}} 
    \caption{  The space $\mathcal{M}^\pm$ for positive and negative curvature. The geodesic $\mathcal{G}^\pm$ connecting the primaries 
    $\mathbf{p}_{1}$ and $\mathbf{p}_{2}$ is represented as a dashed curve which contains the center of rotation $C$, and, in the case of positive curvature, the antipodal points $\mathbf{a}_{j}:=-\mathbf{p}_{j}$, $j=1,2$. The curve $\mathcal{G}^\pm$ divides $\mathcal{M}^\pm$ 
     into two regions and for the analysis of triangular
     RE we restrict to $\mathcal{M}^\pm_{>0}$, which is
     shaded in a darker color and contains the 
     satellite, positioned at $\mathbf{r}$. 
     The geodesics connecting $\mathbf{r}$ to the
     primaries, of lengths $d_1,d_2$, are also illustrated. }
    \label{fig:coords}
\end{figure}

\subsubsection{Conditions for existence of collinear RE}

\paragraph{Case $\kappa>0$.}

We  parametrize $\mathbf{r}\in\mathbb{S}_{1}^{2}$ with spherical coordinates $\phi, \theta$ as
\begin{equation}
\label{eq:positionM+}
     \mathbf{r}=\left ( \sin\theta\cos\phi , \sin\phi , -\cos\theta\cos\phi\right ), 
\end{equation}
with $\theta\in\mathbb{R}/2\pi\mathbb{Z}$ and  $\phi \in (-\frac{\pi}{2},\frac{\pi}{2})$. In these coordinates, the great
circle $\mathcal{G}^+$, which contains the primaries and the center of rotation $C$, is given by $\phi=0$  and is parametrized by
$\theta$. Moreover, the center of rotation $C$ corresponds to $\theta=0$.

The symmetry \eqref{eq:sym} implies that $\mathcal{V}_{\kappa,\mu}$ is
an even function of $\phi$ and hence, 
\begin{equation*}
    \left.\dfrac{\partial\mathcal{V}_{\kappa,\mu}}{\partial\phi}\right\vert_{\phi=0}=0.
\end{equation*}
Therefore, a necessary and sufficient condition for collinear RE with coordinates $\phi=0$, $\theta=\theta_{0}$, is that
\begin{equation*}
    \left.\dfrac{\partial\mathcal{V}_{\kappa,\mu}}{\partial\theta}\right\vert_{\phi=0,\theta=\theta_{0}}=0.
\end{equation*}
The explicit form of this condition is the content of the following proposition. 
 \begin{proposition}
    \label{prop:necessarycondpos}
     The point $\mathbf{r}_{0}=(\sin\theta_{0},0,-\cos\theta_{0})\in\mathcal{G}^+$ is a collinear RE for $\kappa>0$ if and only if
    \begin{equation}
    \label{eq:collinearpositive}
        \frac{1}{2}\sin(2\theta_{0}) -\Gamma\left[\frac{\sin(\theta_{0}+\sqrt{\kappa}q_{1})}{\left\vert\sin(\theta_{0}+\sqrt{\kappa}q_{1})\right\vert^{3}} + \mu\frac{\sin(\theta_{0}-\sqrt{\kappa}q_{2})}{\left\vert\sin(\theta_{0}-\sqrt{\kappa}q_{2})\right\vert^{3}}\right]=0.
    \end{equation}
    \end{proposition}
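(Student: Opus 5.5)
The plan is a direct computation: restrict $\mathcal{V}_{\kappa,\mu}$ to the great circle $\mathcal{G}^+$ and differentiate in $\theta$. By the symmetry \eqref{eq:sym}, $\partial\mathcal{V}_{\kappa,\mu}/\partial\phi$ vanishes on $\phi=0$. Hence $\mathbf{r}_0$ is a critical point, equivalently a RE, if and only if $\frac{d}{d\theta}\mathcal{V}_{\kappa,\mu}(\sin\theta,0,-\cos\theta)$ vanishes at $\theta_0$. It therefore suffices to compute the one-variable function $v(\theta):=\mathcal{V}_{\kappa,\mu}(\sin\theta,0,-\cos\theta)$ explicitly and show that $v'(\theta_0)$ equals minus the left-hand side of \eqref{eq:collinearpositive}, with $\Gamma=\Gamma_{\kappa,\mu}$ from \eqref{eq:constantgamma}.

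\emph{Centrifugal term.} For $\mathbf{r}=(\sin\theta,0,-\cos\theta)$ we have $\mathbf{e}_3\times\mathbf{r}=(0,\sin\theta,0)$, so $-\frac12\Vert\mathbf{e}_3\times\mathbf{r}\Vert_+^2=-\frac12\sin^2\theta$. Its $\theta$-derivative is $-\frac12\sin(2\theta)$.

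\emph{Position of the primaries on $\mathcal{G}^+$.} For $\kappa>0$ we have $\sqrt{\kappa}\sin_\kappa(q)=\sin(\sqrt\kappa q)$ and $\cos_\kappa(q)=\cos(\sqrt\kappa q)$. Substituting into \eqref{eq:primariesMpm} gives
\[
\mathbf{p}_1=(-\sin a,0,-\cos a),\qquad \mathbf{p}_2=(\sin b,0,-\cos b),
\]
where $a=\sqrt\kappa q_1$ and $b=\sqrt\kappa q_2$. So $\mathbf{p}_1$ sits at $\theta=-a$ and $\mathbf{p}_2$ at $\theta=b$. The angle-addition formulas then give
\[
\langle\mathbf{r},\mathbf{p}_1\rangle_+=\cos(\theta+a),\qquad \langle\mathbf{r},\mathbf{p}_2\rangle_+=\cos(\theta-b).
\]

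\emph{Gravitational terms.} Since $\phi(\cos x)=\cos x/|\sin x|$, we need the derivative of $g(x)=\cos x/|\sin x|$ away from the zeros of $\sin x$. These zeros correspond exactly to the excluded primaries and antipodes $\Delta$. On each interval where $\sin x$ has constant sign, $g=\operatorname{sign}(\sin x)\cot x$, so
\[
g'(x)=-\frac{\operatorname{sign}(\sin x)}{\sin^2 x}=-\frac{\sin x}{|\sin x|^3}.
\]
Applying the chain rule with $x=\theta+a$ and $x=\theta-b$ yields
\[
v'(\theta)=-\tfrac12\sin(2\theta)+\Gamma_{\kappa,\mu}\left[\frac{\sin(\theta+a)}{|\sin(\theta+a)|^3}+\mu\frac{\sin(\theta-b)}{|\sin(\theta-b)|^3}\right].
\]
Setting $v'(\theta_0)=0$ and multiplying by $-1$ gives exactly \eqref{eq:collinearpositive}.

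\emph{Main obstacle.} The only delicate point is sign bookkeeping. One must check that the $\mp$ in \eqref{eq:primariesMpm} is $-$ for $\kappa>0$. One must also check that the absolute value in $\phi$, namely $\sqrt{|1-\lambda^2|}=|\sin x|$, is handled correctly on both sides of the singularities. The compact formula $-\sin x/|\sin x|^3$ resolves this uniformly, so I would verify it carefully before assembling the result.
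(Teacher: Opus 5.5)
Your proposal is correct and follows the same route as the paper. You use the evenness of $\mathcal{V}_{\kappa,\mu}$ in $\phi$ to reduce the critical-point condition to $\partial_\theta\mathcal{V}_{\kappa,\mu}=0$ on $\phi=0$, then differentiate the explicit potential \eqref{eq:potentialpolarpos}. Your sign bookkeeping ($\langle\mathbf{r},\mathbf{p}_1\rangle_+=\cos(\theta+\sqrt{\kappa}q_1)$, $\langle\mathbf{r},\mathbf{p}_2\rangle_+=\cos(\theta-\sqrt{\kappa}q_2)$, and $\frac{d}{dx}\frac{\cos x}{|\sin x|}=-\frac{\sin x}{|\sin x|^{3}}$) simply spells out what the paper's one-line proof leaves implicit.
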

    Note that, in the statement of the proposition, we have
    omitted the  dependence of $q_{1}$ and $q_{2}$ on the parameters $\kappa$ and $\mu$ given by Eq.~\eqref{eq:positionprimaries}. This convention will be used  throughout the paper. The same  applies
    to the constant 
     $\Gamma$ defined in Eq.~\eqref{eq:constantgamma}.
\begin{proof}
    In view of Eqs. \eqref{eq:scaledpotential} and \eqref{eq:primariesMpm},  the expression for $\mathcal{V}_{\kappa,\mu}$ in the coordinates $(\phi,\theta)$ is
    \begin{equation}
    \label{eq:potentialpolarpos}
    \begin{aligned}
        \mathcal{V}_{\kappa,\mu}(\phi,\theta) = & \frac{1}{2}\left(\cos^{2}\!\theta\cos^{2}\!\phi -1\right) - \\
        & \Gamma \cos\!\phi \left[ \frac{\cos(\theta+\sqrt{\kappa}q_{1}\!)}{\sqrt{1-\cos^{2}\!\phi\cos^{2}(\theta+\sqrt{\kappa}q_{1}\!)}} + \frac{\mu\cos(\theta-\sqrt{\kappa}q_{2}\!)}{\sqrt{1-\cos^{2}\!\phi\cos^{2}(\theta-\sqrt{\kappa}q_{2}\!)}}  \right].
        \end{aligned}
        \end{equation}
     Differentiating with respect to $\theta$ and evaluating at
    $\theta=\theta_0$ gives Eq. \eqref{eq:collinearpositive}.
\end{proof}

\paragraph{Case $\kappa<0$.} The classification of collinear RE for negative curvature has already been established by Mart\'inez and Sim\'o~\cite{MS17} and  
is recalled in Proposition~\ref{prop:ClassCollinearNeg} in Sec.~\ref{sec:collinear}.
Nevertheless, we derive existence for collinear RE when $\kappa<0$, as  they
are needed to obtain the asymptotic expansions in Theorem~\ref{prop:asymLpositive}. Such expansions
 explicitly show that the RE found by 
Mart\'inez and Sim\'o converge to the classical collinear Lagrangian points of the planar R3BP.

We  parametrize $\mathbf{r}\in\mathbb{L}_{1}^{2}$ with  coordinates $\phi, \theta$ as
\begin{equation}
\label{eq:positionM-}
     \mathbf{r}=\left ( \sinh\theta\cosh\phi , \sinh\phi , \cosh\theta\cosh\phi \right), 
\end{equation}
where now $\phi,\theta\in \R$. The geodesic $\mathcal{G}^-$ containing the primaries and the center of rotation $C$, is
again given by $\phi=0$, and is parametrized by
$\theta\in \R$, with the center of rotation corresponding to $\theta=0$. Equations \eqref{eq:scaledpotential} and \eqref{eq:primariesMpm} yield the following expression for $\mathcal{V}_{\kappa,\mu}$ in these coordinates  
\begin{equation}
\label{eq:potentialpolarneg}
	\begin{aligned}
 		\mathcal{V}_{\kappa,\mu}(\phi,\theta) =  & \frac{1}{2}  \left( 1 - \cosh^{2}\!\theta\cosh^{2}\!\phi \right) + \\
		& \Gamma_{\kappa,\mu}\cosh\phi  \left[ \frac{\cosh(\theta+\sqrt{-\kappa}q_{1})}{\sqrt{\cosh^{2}\!\phi\cosh^{2}(\theta+\sqrt{-\kappa}q_{1})-1}} + \frac{\mu\cosh(\theta-\sqrt{-\kappa}q_{2})}{\sqrt{\cosh^{2}\!\phi\cosh^{2}(\theta-\sqrt{-\kappa}q_{2})-1}} \right].
	\end{aligned}
\end{equation}
   Once again, we see that $\mathcal{V}_{\kappa,\mu}$ is even in $\phi$. In analogy with Proposition \ref{prop:necessarycondpos}, evaluating at $\phi=0$ and
differentiating with respect to $\theta$, we obtain
    \begin{proposition}
    \label{prop:necessarycondneg}
     The point $\mathbf{r}_{0}=(\sinh\theta_{0},0,\cosh\theta_{0})\in\mathcal{G}^-$ is a collinear RE for $\kappa<0$ if and only if
    \begin{equation}
    \label{eq:collinearnegative}
        \frac{1}{2}\sinh(2\theta_{0}) + \Gamma_{\kappa,\mu}\left[\frac{\sinh(\theta_{0}+\sqrt{-\kappa}q_{1})}{\left\vert\sinh(\theta_{0}+\sqrt{-\kappa}q_{1})\right\vert^{3}} + \mu\frac{\sinh(\theta_{0}-\sqrt{-\kappa}q_{2})}{\left\vert\sinh(\theta_{0}-\sqrt{-\kappa}q_{2})\right\vert^{3}}\right]=0.
    \end{equation}
    \end{proposition}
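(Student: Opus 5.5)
We prove Proposition~\ref{prop:necessarycondneg} the same way as Proposition~\ref{prop:necessarycondpos}: write $\mathcal{V}_{\kappa,\mu}$ in the coordinates \eqref{eq:positionM-}, which gives \eqref{eq:potentialpolarneg}, and reduce the critical point condition on $\mathcal{G}^-$ to a single derivative in $\theta$. Since \eqref{eq:potentialpolarneg} is even in $\phi$ (a consequence of the symmetry \eqref{eq:sym}), we have $\partial_\phi\mathcal{V}_{\kappa,\mu}|_{\phi=0}=0$ identically. Hence $\mathbf{r}_0=(\sinh\theta_0,0,\cosh\theta_0)$ is a critical point if and only if $\partial_\theta\mathcal{V}_{\kappa,\mu}(0,\theta_0)=0$. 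The chart $(\phi,\theta)\mapsto\mathbf{r}$ is a global diffeomorphism of $\mathbb{R}^2$ onto $\mathbb{L}^2_1$, so criticality in these coordinates is equivalent to criticality on $\mathcal{M}^-\setminus\Delta$.

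First we verify \eqref{eq:potentialpolarneg} briefly. The centrifugal term is $-\tfrac12\|\mathbf{e}_3\times\mathbf{r}\|_-^2=-\tfrac12(r_1^2+r_2^2)$, and with $r_1^2+r_2^2=r_3^2-1$ this equals $\tfrac12(1-\cosh^2\theta\cosh^2\phi)$. By \eqref{eq:primariesMpm} with the lower sign, $\mathbf{p}_1=(-\sinh(\sqrt{-\kappa}q_1),0,\cosh(\sqrt{-\kappa}q_1))$. Using the Minkowski form, $\langle\mathbf{r},\mathbf{p}_1\rangle_-=-\cosh\phi\cosh(\theta+\sqrt{-\kappa}q_1)$, and similarly $\langle\mathbf{r},\mathbf{p}_2\rangle_-=-\cosh\phi\cosh(\theta-\sqrt{-\kappa}q_2)$. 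Since $\phi$ is odd and $|1-\lambda^2|=\lambda^2-1$ here, the term $-\Gamma[\phi(\cdot)+\mu\phi(\cdot)]$ becomes the bracket with a plus sign.

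Next we restrict to $\phi=0$. For $u\neq0$,
\[
\frac{\cosh u}{\sqrt{\cosh^2u-1}}=\frac{\cosh u}{|\sinh u|}.
\]
Its derivative is
\[
\frac{\sinh u\,|\sinh u|-\cosh u\,\operatorname{sgn}(\sinh u)\cosh u}{\sinh^2u}=-\frac{\operatorname{sgn}(\sinh u)}{\sinh^2 u}=-\frac{\sinh u}{|\sinh u|^3}.
\]
Differentiating the centrifugal term at $\phi=0$ gives $-\tfrac12\sinh(2\theta)$. Therefore
\[
\partial_\theta\mathcal{V}_{\kappa,\mu}(0,\theta_0)=-\tfrac12\sinh(2\theta_0)-\Gamma_{\kappa,\mu}\Big[\tfrac{\sinh(\theta_0+\sqrt{-\kappa}q_1)}{|\sinh(\theta_0+\sqrt{-\kappa}q_1)|^3}+\mu\tfrac{\sinh(\theta_0-\sqrt{-\kappa}q_2)}{|\sinh(\theta_0-\sqrt{-\kappa}q_2)|^3}\Big].
\]
Setting this to zero and multiplying by $-1$ gives \eqref{eq:collinearnegative}.

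The only step that needs care is the sign bookkeeping: the Minkowski inner product, the lower sign choice in \eqref{eq:primariesMpm}, and the absolute value inside $\phi$. These together turn the minus sign in \eqref{eq:scaledpotential} into the plus sign in \eqref{eq:potentialpolarneg}. We also note that $\theta_0\neq-\sqrt{-\kappa}q_1,\ \sqrt{-\kappa}q_2$, since these values are the primaries, which are excluded from the domain.
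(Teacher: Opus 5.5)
Your proof is correct and is exactly the paper's argument: express $\mathcal{V}_{\kappa,\mu}$ in the chart \eqref{eq:positionM-} to get \eqref{eq:potentialpolarneg}, use evenness in $\phi$, and differentiate in $\theta$ at $\phi=0$; your sign bookkeeping reproduces \eqref{eq:potentialpolarneg} faithfully from \eqref{eq:scaledpotential}. One caveat, which concerns the paper rather than your reasoning: for $\kappa<0$ the literal \eqref{eq:scaledpotential} produces the repulsive term $+\Gamma_{\kappa,\mu}[\coth d_1+\mu\coth d_2]$, which is inconsistent with \eqref{eq:augpotential} and \eqref{eq:potentialdistancesneg}, so the physically correct condition has $-\Gamma_{\kappa,\mu}$ in \eqref{eq:collinearnegative}; indeed, with $+\Gamma_{\kappa,\mu}$ all terms have the same sign whenever $\theta_0>\sqrt{-\kappa}\,q_2$ or $\theta_0<-\sqrt{-\kappa}\,q_1$, which leaves no collinear RE outside the primaries, contrary to Proposition~\ref{prop:ClassCollinearNeg}.
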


\subsubsection{Conditions for existence of triangular RE}
\label{sec:triangularbalanced}
Since all the singularities of the potential $\mathcal{V}_{\kappa,\mu}$ lie on the geodesic $\mathcal{G}^\pm$, its restriction to $\mathcal{M}^\pm_{>0}$ is smooth. As coordinates for $\mathbf{r}\in \mathcal{M}^\pm_{>0}$ we will use its Riemannian distances $d_1, d_2>0$ to the primaries $\mathbf{p}_{1}, \mathbf{p}_{2}\in\mathcal{G}$, as illustrated in Fig. \ref{fig:coords}. These provide a global chart for $\mathcal{M}^\pm_{>0}$, and will be referred  to  as \textbf{\textit{distance coordinates}}. Our choice 
to use these coordinates was motivated by the analysis for $\Ll_{4}$,  $\Ll_{5}$  in Meyer and Offin's book \cite[Sec. 4.3]{MO17} in  the case of zero curvature. 

\begin{remark}
\label{rmk:distances}
    Note that $d_1, d_2$ are Riemannian distances on our scaled
    manifold $\mathcal{M}^\pm$. The `true' Riemannian distances from
    the satellite to the primaries on $\calS_\kappa$ are $\delta_1=d_1/\sqrt{|\kappa|}, \delta_2=d_2/\sqrt{|\kappa|}$ (see Proposition \ref{prop:scaling}).
\end{remark}

The precise domain $W_\kappa\subset \R^2$ of the distance coordinates $(d_1,d_2)$ 
will be given below according to the value of $\kappa$. We now
introduce an important concept.
We will say that the point $\mathbf r\in \mathcal{M}^\pm$ with coordinates $(d_1,d_2)\in W_\kappa$
defines a \textbf{\textit{triangular-balanced configuration}} if
 \begin{equation}
    \label{eq:mubalancecondition}
    \begin{matrix}
        \sin(d_1) = \Lambda_{\kappa,\mu}^{+} \sin(d_2) & \mbox{if} & \kappa>0, \\ 
        \hspace{-0.74cm} d_1 = d_2 & \mbox{if} & \kappa=0, \\ \sinh(d_1) = \Lambda_{\kappa,\mu}^{-} \sinh(d_2) & \mbox{if} & \kappa<0. 
        \end{matrix}
    \end{equation}
 In the above conditions, the constants $\Lambda_{\kappa,\mu}^{\pm}$
   depend parametrically on $\kappa$ and $\mu$ by 
\begin{equation}
\label{eq:constantLambda}
    \Lambda_{\kappa,\mu}^{+}=\left(\frac{\sin(\sqrt{\kappa}q_{1})}{\mu\sin(\sqrt{\kappa}q_{2})}\right)^{1/3}, \qquad \Lambda_{\kappa,\mu}^{-}=\left(\frac{\sinh(\sqrt{-\kappa}q_{1})}{\mu\sinh(\sqrt{-\kappa}q_{2})}\right)^{1/3}.    
\end{equation}
Note that $\Lambda^\pm_{\kappa,\mu}$ converges to 1 as $\kappa\to 0$ uniformly in $\mu$ providing a continuous passage of conditions \eqref{eq:mubalancecondition} through $\kappa=0$.

\begin{remark} The constant  $\Lambda_{\kappa,\mu}^{\pm}$ enters many of our calculations in the sections below.
We will often omit its dependence on $\kappa$ and $\mu$, and denote it  by $\Lambda^\pm$ ,or simply even by
$\Lambda$   when working with positive curvature. 
In view of the relation between $q_{1}$ and $q_{2}$ given by Eq.~\eqref{eq:rotationRE}, we may write the following alternative
 expressions for $\Lambda^\pm$
\begin{equation}
\label{eq:alternativeLambda}
	\Lambda^{+}=\left(\frac{\cos(\sqrt{\kappa}q_{2})}{\cos(\sqrt{\kappa}q_{1})}\right)^{1/3}, \qquad \Lambda^{-}=\left(\frac{\cosh(\sqrt{-\kappa}q_{2})}{\cosh(\sqrt{-\kappa}q_{1})}\right)^{1/3}.
\end{equation}
 In particular, given that $q_{1}<q_{2}$, we have
\begin{equation}
\label{eq:inequalityLambda}
	0<\Lambda^{+}<1, \qquad 1<\Lambda^{-}.
\end{equation}
\end{remark}

Below, we will show that all triangular RE define triangular-balanced
configurations. Therefore, equation~\eqref{eq:mubalancecondition} provides {\em necessary conditions} for the existence of RE. In the case
$\kappa =0 $, the condition $d_1=d_2$  is well-known (see e.g. \cite{MO17}), and is consistent with the equilateral-triangle form of the Lagrange points $\Ll_{4}, \Ll_{5}$. For non-vanishing curvature, the geometric interpretation of the triangular-balanced configurations is more involved. The coordinate curves
on $W_k\subset \R^2$ leading to triangular balanced configurations are illustrated in Figs. \ref{fig:graphpos}(a) and \ref{fig:graphsneg}(a). The
corresponding curves on $\mathcal{M}^\pm$ are illustrated in  Figs. \ref{fig:graphpos}(b) and \ref{fig:graphsneg}(b). For $\kappa>0$, these curves
have two connected components.

In distance coordinates, the condition for triangular RE becomes,
\begin{equation}
\label{eq:derdistance}
    \dfrac{\partial\mathcal{V}_{\kappa,\mu}}{\partial d_{i}}=0, \qquad i=1,2.
\end{equation}
We  will obtain explicit expressions
for these conditions treating separately the cases of $\kappa$ positive and negative, and make use of the following formulas, which are obtained from relation \eqref{eq:distancerelation} by putting $\kappa=\pm 1$:
\begin{subequations}
\begin{equation}
\label{eq:formuladistancepositive}
     \cos(d_{i})=\langle \mathbf{r},\mathbf{p}_{i} \rangle_{+} \quad \mbox{for} \quad \kappa>0,
\end{equation}
\begin{equation}
\label{eq:formuladistancenegative}
     \cosh(d_{i})=-\langle \mathbf{r},\mathbf{p}_{i} \rangle_{-} \quad \mbox{for} \quad \kappa<0.
\end{equation}
\end{subequations}

\paragraph{Case $\kappa>0$.}

Let $\mathbf{r}=\big(r_{1}, \sqrt{1-r_{1}^{2}-r_{3}^{2}} ,r_{3}\big)\in(\mathbb{S}_{1}^{2})_{>0}$. In view of formulas \eqref{eq:formuladistancepositive} and \eqref{eq:primariesMpm}, the distances $d_{1},d_{2}$ to the primaries satisfy:
    \begin{equation}
    \label{eq:kappatrigpositive}
    \begin{gathered}
        \cos(d_{1})=-\sin(\!\sqrt{\kappa}q_{1})r_{1}-\cos(\!\sqrt{\kappa}q_{1})r_{3}, \\
        \cos(d_{2})=\sin(\!\sqrt{\kappa}q_{2})r_{1}-\cos(\!\sqrt{\kappa}q_{2})r_{3}.
    \end{gathered}
    \end{equation}
    It can be checked that Eq.~\eqref{eq:kappatrigpositive} defines a smooth correspondence between the open disk $r_{1}^{2}+r_{3}^{2}<1$ and the bounded domain
    \begin{equation}
    \label{eq:domaindistancecoordpos}
        W_{\kappa}=\left\lbrace (d_{1},d_{2})\in\R^{2} : \vert d_{1}-d_{2} \vert < \sqrt{\kappa}, \ \vert d_{1}+d_{2}-\pi \vert < \pi-\sqrt{\kappa} \right\rbrace.
    \end{equation}
    Thus, our coordinates $(d_{1},d_{2})$ take values in $W_{\kappa}$ as defined above. Note that the boundary of the domain $W_{\kappa}$ corresponds to the geodesic $\mathcal{G}^+$. The domain $W_{\kappa}$ is illustrated in Figure \ref{fig:graphpos}(a), where we also indicate the corresponding position
    of the primaries and antipodal points in this chart.

    \begin{figure}[ht]
    \centering
    \captionsetup{width=.7\linewidth}
    \begin{tikzpicture}
        \node (A) at (0,0) {
            \subfigure[$W_{\kappa}$]{%
                \begin{overpic}[width=0.25\textwidth]{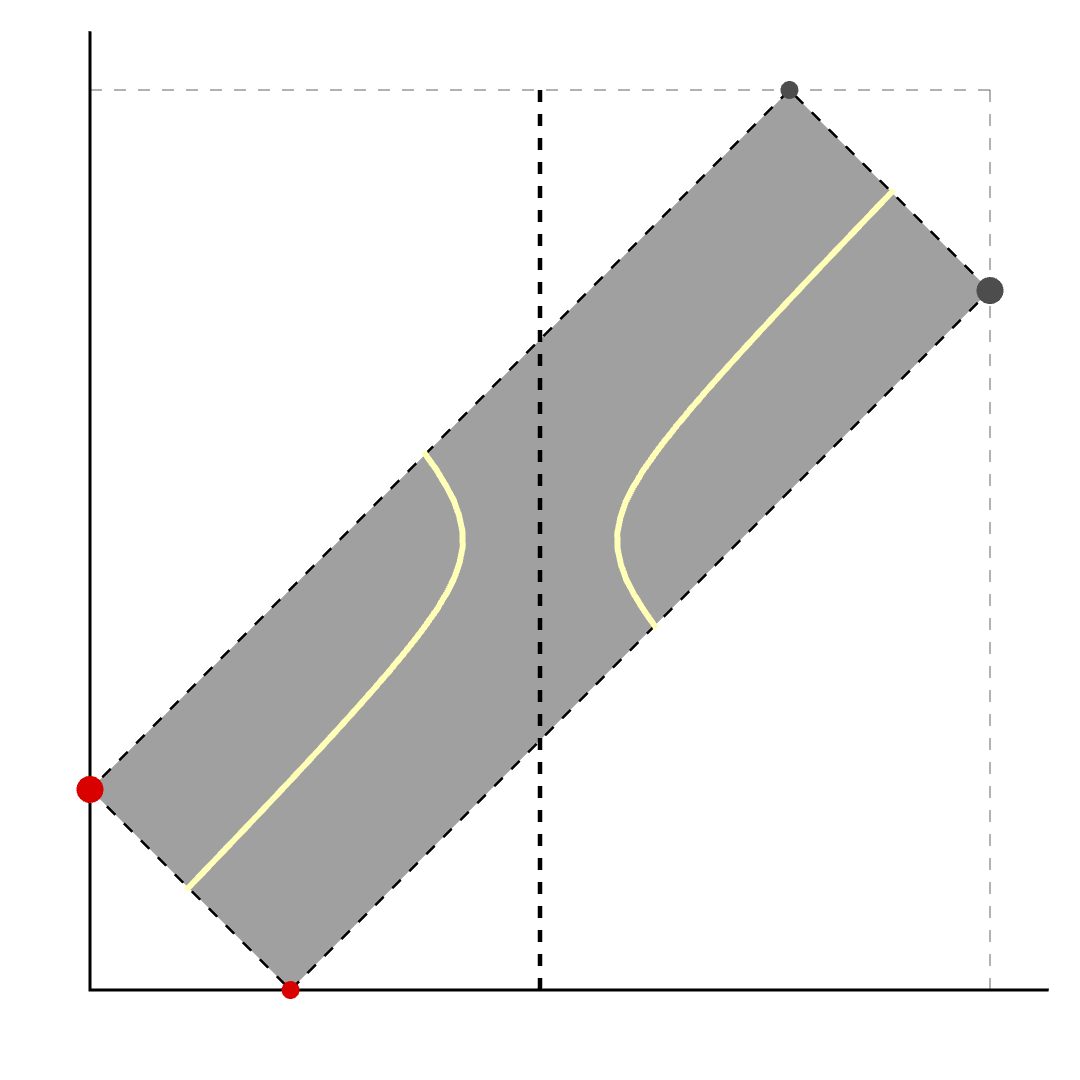}
                    \put(70,95){\scriptsize $\tilde{\mathbf{a}}_{2}$}
                    \put(94,72){\scriptsize $\tilde{\mathbf{a}}_{1}$}
                    \put(-1,26){\scriptsize $\tilde{\mathbf{p}}_{1}$}
                    \put(24,2){\scriptsize $\tilde{\mathbf{p}}_{2}$}
                    \put(98,7){\scriptsize $d_{1}$}
                    \put(5,99){\scriptsize $d_{2}$}
                    \put(90,5){\tiny $\pi$}
                    \put(47,3){\tiny $\frac{\pi}{2}$}
                    \put(4,91){\tiny $\pi$}
                    \put(34,30){\scriptsize $\tilde{\gamma}_{\kappa,\mu}$}
                    \put(32.5,61.5){\scriptsize $\tilde{\Xx}$}
                    \put(36.75,56.5){\scriptsize \textcolor{azulciencias}{$\blacksquare$}}
                \end{overpic}
            }
        };
        \node (B) at (6,0) {
             \subfigure[$(\mathbb{S}_{1}^{2})_{>0}$]{%
                \begin{overpic}[width=0.3\textwidth]{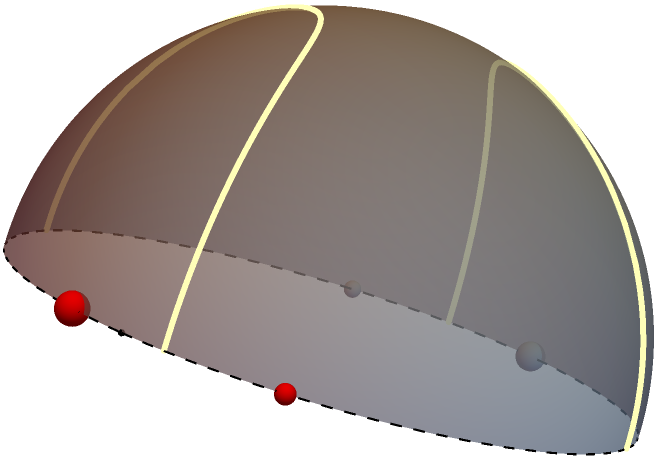}
                    \put(51,29){\scriptsize \textcolor{graycover}{$\mathbf{a}_{2}$}}
                    \put(80,19){\scriptsize \textcolor{graycover}{$\mathbf{a}_{1}$}}
                    \put(8,17){\scriptsize $\mathbf{p}_{1}$}
                    \put(41,5){\scriptsize $\mathbf{p}_{2}$}
                    \put(16,14){\scriptsize $C$}
                    \put(28,55){$\gamma_{\kappa,\mu}$}
                     \put(4,29){\scriptsize $\Xx$}
                     \put(4.9,33.1){\scriptsize \textcolor{azulciencias}{$\blacksquare$}}
                \end{overpic}
            }
            
        };
        \draw[->, thick, bend left=25] (A.east) to node[above]{} (B.west);
    \end{tikzpicture}
    \caption{(a) Bounded domain $W_{\kappa}$ for the distance coordinates, and (b) Half-space $(\mathbb{S}_{1}^{2})_{>0}$. 
    The points $\mathbf{p}_{j}$, $j=1,2$ are the primaries and $\mathbf{a}_{j}:=-\mathbf{p}_{j}$, $j=1,2$ are their antipodal
    points. Their avatars in the domain $W_{\kappa}$ are indicated as
    $\tilde{\mathbf{p}}_{j}$ and $\tilde{\mathbf{a}}_{j}$.
     ${\gamma}_{\kappa,\mu}$ and $\tilde{\gamma}_{\kappa,\mu}$ are the curves of triangle-balanced configurations in the respective domain.}
    \label{fig:graphpos}
    \end{figure}

\begin{proposition}
\label{prop:triangularconditionpositive}
The point $\mathbf{r}\in(\mathbb{S}_{1}^{2})_{>0}$ with distance coordinates $(d_{1},d_{2})\in W_k$ is a triangular RE for $\kappa>0$ if and only if
\begin{equation}
\label{eq:partialdistancescoordspositive}
\begin{gathered}
	\sin^{3}(d_{1}) \sin(\!\sqrt{\kappa}q_{2}) \big(\sin(\!\sqrt{\kappa}q_{2})\cos(d_{1})+\sin(\!\sqrt{\kappa}q_{1})\cos(d_{2})\big) = \Gamma_{\kappa,\mu}\sin^{2}(\sqrt{\kappa}), \\
	\sin^{3}(d_{2}) \sin(\!\sqrt{\kappa}q_{1}) \big(\sin(\!\sqrt{\kappa}q_{2})\cos(d_{1})+\sin(\!\sqrt{\kappa}q_{1})\cos(d_{2})\big) = \mu \, \Gamma_{\kappa,\mu}\sin^{2}(\sqrt{\kappa}).
\end{gathered}
\end{equation}
In particular, all triangular RE with $\kappa>0$ form a triangular-balanced configuration.
\end{proposition}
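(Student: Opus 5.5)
Since $(d_1,d_2)$ is a global chart on $(\mathbb{S}^2_1)_{>0}$ and the potential is smooth there, the plan is to write $\mathcal{V}_{\kappa,\mu}$ explicitly in these coordinates and differentiate.

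\textbf{Gravitational part.} By \eqref{eq:formuladistancepositive}, $\langle\mathbf r,\mathbf p_i\rangle_+=\cos d_i$ with $d_i\in(0,\pi)$. Hence $\phi(\cos d_i)=\cot d_i$, and the gravitational part becomes $-\Gamma[\cot d_1+\mu\cot d_2]$.

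\textbf{Centrifugal part.} On the unit sphere, $\Vert\mathbf e_3\times\mathbf r\Vert_+^2=r_1^2+r_2^2=1-r_3^2$. So we need $r_3$ as a function of $(d_1,d_2)$. We solve the linear system \eqref{eq:kappatrigpositive} for $(r_1,r_3)$. Its determinant is
\[
-\sin(\sqrt\kappa q_1)\cos(\sqrt\kappa q_2)\cdot(-1)\cdots=\pm\sin(\sqrt\kappa(q_1+q_2))=\pm\sin\sqrt\kappa,
\]
using $q_1+q_2=1$. Cramer's rule then gives
\[
r_3=-\frac{\sin(\sqrt\kappa q_2)\cos d_1+\sin(\sqrt\kappa q_1)\cos d_2}{\sin\sqrt\kappa}.
\]
Denote by $S:=\sin(\sqrt\kappa q_2)\cos d_1+\sin(\sqrt\kappa q_1)\cos d_2$ the combination in the numerator. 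Up to an additive constant,
\[
\mathcal V=\frac{S^2}{2\sin^2\sqrt\kappa}-\Gamma\big[\cot d_1+\mu\cot d_2\big].
\]

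\textbf{Critical point equations.} We differentiate in $d_1$ and $d_2$. The first gives
\[
-\frac{S\,\sin(\sqrt\kappa q_2)\sin d_1}{\sin^2\sqrt\kappa}+\frac{\Gamma}{\sin^2 d_1}=0.
\]
Multiplying by $\sin^2 d_1\sin^2\sqrt\kappa$ yields the first equation of \eqref{eq:partialdistancescoordspositive}. The $d_2$ derivative yields the second in the same way.

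\textbf{Triangular balance.} The right-hand sides are positive, so $S\neq0$ at a solution. Dividing the two equations gives
\[
\frac{\sin^3 d_1\sin(\sqrt\kappa q_2)}{\sin^3 d_2\sin(\sqrt\kappa q_1)}=\frac1\mu .
\]
This says $\sin^3 d_1=\Lambda^3\sin^3 d_2$. Since $\sin d_i>0$ on $W_\kappa$, taking real cube roots gives \eqref{eq:mubalancecondition}.

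\textbf{Main obstacle.} The key difficulty is getting the sign and orientation of the Cramer solution for $r_3$ right, and checking that $\sin d_i>0$. The latter holds because $d_i\in(0,\pi)$ on $W_\kappa$, which follows from the explicit description \eqref{eq:domaindistancecoordpos}.
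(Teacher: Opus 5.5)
Your proposal is correct and follows the paper's proof: you derive the paper's expression \eqref{eq:potentialdistancespos} for $\mathcal{V}_{\kappa,\mu}$ in distance coordinates, set $\partial\mathcal{V}_{\kappa,\mu}/\partial d_i=0$ for $i=1,2$, and divide the two resulting equations to obtain $\sin^3 d_1=\Lambda^3\sin^3 d_2$. The sign concern you raise about the Cramer solution does not matter, because only $r_3^2$ enters the potential; in any case the determinant is exactly $\sin\sqrt{\kappa}$, so your formula for $r_3$ is correct as written.
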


\begin{proof}
The amended potential $\mathcal{V}_{\kappa,\mu}$ is expressed in distance coordinates as
    \begin{equation}
    \label{eq:potentialdistancespos}
         \mathcal{V}_{\kappa,\mu}(d_{1},d_{2})=-\frac{1}{2}\left[1-\frac{\big(\sin(\!\sqrt{\kappa}q_{2})\cos(d_{1})+\sin(\!\sqrt{\kappa}q_{1})\cos(d_{2})\big)^{2}}{\sin^{2}(\sqrt{\kappa})} \right]-\Gamma_{\kappa,\mu}\big[ \cot(d_{1})+\mu\cot(d_{2}) \big].
     \end{equation}
    Equations \eqref{eq:partialdistancescoordspositive} are rearrangements
    of the conditions \eqref{eq:derdistance}.  For the claim about triangular-balanced configurations, we note that
    we can deduce relation \eqref{eq:mubalancecondition} for $\kappa>0$  through simple manipulations of equations \eqref{eq:partialdistancescoordspositive}.
\end{proof}

The curve of triangular-balanced configurations for $\kappa>0$ has two connected components, both of which are depicted in $\kappa>0$, both in the chart  $W_\kappa$ and in the  hemisphere $(\mathbb{S}_{1}^{2})_{>0}$. As will be shown in Lemma~\ref{lemma:one_component}, all triangular RE satisfy $d_1<\pi/2$, so the 
the triangular-balanced configurations that give rise to triangular RE  lie on the left component in both
diagrams of Fig~\ref{fig:graphpos}.

\begin{remark}
\label{rmk:defX}
    The intersection of the curve of triangular balanced configurations with 
     the boundary of the domain, marked by the square labeled $\tilde \Xx$ in Fig.~\ref{fig:graphpos}(a) (and by $\Xx$ in Fig.~\ref{fig:graphpos}(b)), will play an important role, since at this point collinear RE bifurcate into triangular RE (see
     Sec.~\ref{sec:muT}).
\end{remark}

\paragraph{Case $\kappa<0$.}
We now put $\mathbf{r}=\big(r_{1}, \sqrt{r_{3}^{2}-1-r_{1}^{2}} ,r_{3}\big)\in(\mathbb{L}_{1}^{2})_{>0}$ and note that, in view of relations \eqref{eq:formuladistancenegative} and \eqref{eq:primariesMpm}, the distances $d_{1},d_{2}$ to the primaries satisfy:
    \begin{equation}
    \label{eq:kappatrignegative}
    \begin{gathered}
        \cosh(d_{1})=\sinh(\sqrt{-\kappa}q_{1})r_{1}+\cosh(\sqrt{-\kappa}q_{1})r_{3}, \\
        \cosh(d_{2})=-\sinh(\sqrt{-\kappa}q_{2})r_{1}+\cosh(\sqrt{-\kappa}q_{2})r_{3}.
    \end{gathered}
    \end{equation}
    This time, Eq.~\eqref{eq:kappatrignegative} defines a smooth correspondence between the region $r_{3}^{2}-r_{1}^{2}>1$, $r_{3}>0$ and the unbounded domain
    \begin{equation}
    \label{eq:domaindistancecoordneg}
        W_{\kappa}=\left\lbrace (d_{1},d_{2})\in\R^{2} : \vert d_{1}-d_{2} \vert < \sqrt{\kappa}, \ d_{1}+d_{2} > \sqrt{\kappa} \right\rbrace,
    \end{equation}
    which is 
     illustrated in Figure \ref{fig:graphsneg}(a). Thus, the distance coordinates $(d_{1},d_{2})$ are defined  in  the set $W_{\kappa}$
     given above. The potential in these coordinates becomes
    \begin{equation}
    \label{eq:potentialdistancesneg}
    \begin{aligned}
        \mathcal{V}_{\kappa,\mu}(d_{1},d_{2})=-\frac{1}{2} & \left[\frac{\big(\sinh(\sqrt{-\kappa}q_{2})\cosh(d_{1})+\sinh(\sqrt{-\kappa}q_{1})\cosh(d_{2})\big)^{2}}{\sinh^{2}(\sqrt{-\kappa})}-1\right] \\
        & \qquad\qquad\qquad\qquad\qquad\qquad\qquad\qquad\qquad\qquad-\Gamma_{\kappa,\mu}\big[ \coth(d_{1})+\mu\coth(d_{2}) \big].
    \end{aligned}
    \end{equation}

    In analogy with Proposition \ref{prop:triangularconditionpositive}, by taking the derivatives of the potential \eqref{eq:potentialdistancesneg} and after simple manipulations, we obtain the following.

    \begin{proposition}
    \label{prop:triangularconditionnegative}
    Let $\kappa<0$. If $(d_1,d_2)\in W_\kappa$ are the coordinates
    of a triangular RE, then they define a triangular-balanced configuration.
    \end{proposition}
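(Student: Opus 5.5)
Differentiating the expression \eqref{eq:potentialdistancesneg} directly in the distance coordinates $(d_1,d_2)$ is the natural route, exactly mirroring the positive-curvature case of Proposition~\ref{prop:triangularconditionpositive}. To lighten notation, write $s_j=\sinh(\sqrt{-\kappa}q_j)$, $S=\sinh(\sqrt{-\kappa})$ and
\[
F(d_1,d_2)=s_2\cosh(d_1)+s_1\cosh(d_2).
\]
Then $\mathcal{V}_{\kappa,\mu}=-\tfrac12\bigl(F^2/S^2-1\bigr)-\Gamma_{\kappa,\mu}\bigl(\coth d_1+\mu\coth d_2\bigr)$. Using $\frac{d}{dd}\coth d=-1/\sinh^2 d$, the critical point conditions \eqref{eq:derdistance} read
\[
-\frac{F\,s_2\sinh d_1}{S^2}+\frac{\Gamma_{\kappa,\mu}}{\sinh^2 d_1}=0,\qquad
-\frac{F\,s_1\sinh d_2}{S^2}+\frac{\mu\,\Gamma_{\kappa,\mu}}{\sinh^2 d_2}=0.
\]
Rearranged, these become $\sinh^3(d_1)\,s_2F=\Gamma_{\kappa,\mu}S^2$ and $\sinh^3(d_2)\,s_1F=\mu\Gamma_{\kappa,\mu}S^2$, the hyperbolic analogues of \eqref{eq:partialdistancescoordspositive}.

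Next, I divide the first equation by the second. For this I need the denominators to be nonzero. Since $\kappa<0$ and $0<q_j$, we have $s_1,s_2>0$. On $W_\kappa$ we have $d_j>0$, so $\cosh d_j>1$ and hence $F>0$. Also $\Gamma_{\kappa,\mu}>0$ from \eqref{eq:constantgamma}, and $S\ne 0$. The division therefore gives
\[
\frac{\sinh^3 d_1\, s_2}{\sinh^3 d_2\, s_1}=\frac1\mu,
\qquad\text{that is,}\qquad
\sinh^3 d_1=\frac{s_1}{\mu s_2}\,\sinh^3 d_2 .
\]
Taking real cube roots, which is legitimate since all quantities are positive, yields $\sinh d_1=\Lambda^-_{\kappa,\mu}\sinh d_2$ with $\Lambda^-$ as in \eqref{eq:constantLambda}. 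This is precisely the triangular-balanced condition \eqref{eq:mubalancecondition}.

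The main obstacle is not the algebra but making sure that $\partial/\partial d_i$ is meaningful. One must check that $(d_1,d_2)$ is a genuine chart on $(\mathbb{L}^2_1)_{>0}$, which the paper asserts via \eqref{eq:kappatrignegative}. One must also confirm that the first term of \eqref{eq:potentialdistancesneg} equals $-\tfrac12\Vert\mathbf e_3\times\mathbf r\Vert_-^2$. For this I would solve \eqref{eq:kappatrignegative} linearly for $r_3$, giving $r_3=F/S$ via the addition formula $s_2\cosh(\sqrt{-\kappa}q_1)+s_1\cosh(\sqrt{-\kappa}q_2)=\sinh(\sqrt{-\kappa})$ (recall $q_1+q_2=1$), and then use $r_1^2+r_2^2=r_3^2-1$. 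Once this is in place, every triangular RE in $(\mathbb{L}_1^2)_{>0}$ satisfies the balanced relation. By the symmetry \eqref{eq:sym}, the same conclusion extends to the reflected half.
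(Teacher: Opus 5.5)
Your proof is correct and is essentially the paper's argument: differentiating \eqref{eq:potentialdistancesneg} in the distance coordinates gives the hyperbolic analogues of \eqref{eq:partialdistancescoordspositive}, and dividing them (all factors being nonzero) and taking cube roots yields $\sinh d_1=\Lambda^-_{\kappa,\mu}\sinh d_2$. Your extra check that $r_3=F/S$ (via $q_1+q_2=1$) is a nice consistency verification, and the ratio step only needs $\Gamma_{\kappa,\mu}\neq 0$, so it does not depend on the sign in front of the gravitational term.
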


The curve of triangular-balanced configurations determined by Eq.~\eqref{eq:mubalancecondition} in the case $\kappa<0$
is illustrated in Fig.~\ref{fig:graphsneg}, both in the domain $W_\kappa$, and in $(\mathbb{L}_{1}^{2})_{>0}$.
   
   \begin{figure}[ht]
    \centering
    \captionsetup{width=.7\linewidth}
    \begin{tikzpicture}
        \node (A) at (0,0) {
            \subfigure[$W_{\kappa}$]{%
                \begin{overpic}[width=0.25\textwidth]{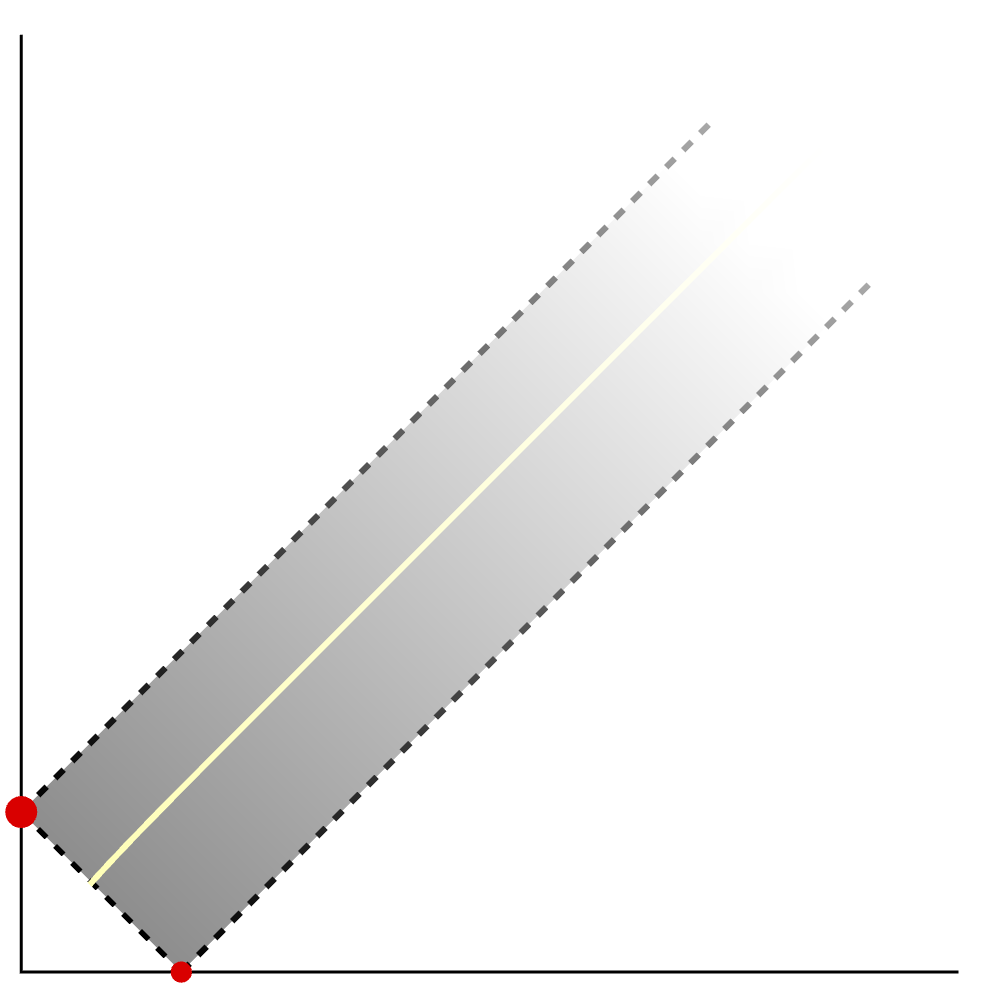}
                \put(-7,18){\scriptsize $\tilde{\mathbf{p}}_{1}$}
                \put(15,-4){\scriptsize $\tilde{\mathbf{p}}_{2}$}
                \put(100,0){\scriptsize $d_{1}$}
                \put(-3,102){\scriptsize $d_{2}$}
                \put(30,30){\tiny $\tilde{\gamma}_{\kappa,\mu}$}
                \end{overpic}
            }
        };
        \node (B) at (6,0) {
             \subfigure[$(\mathbb{L}_{1}^{2})_{>0}$]{%
                \begin{overpic}[width=0.35\textwidth]{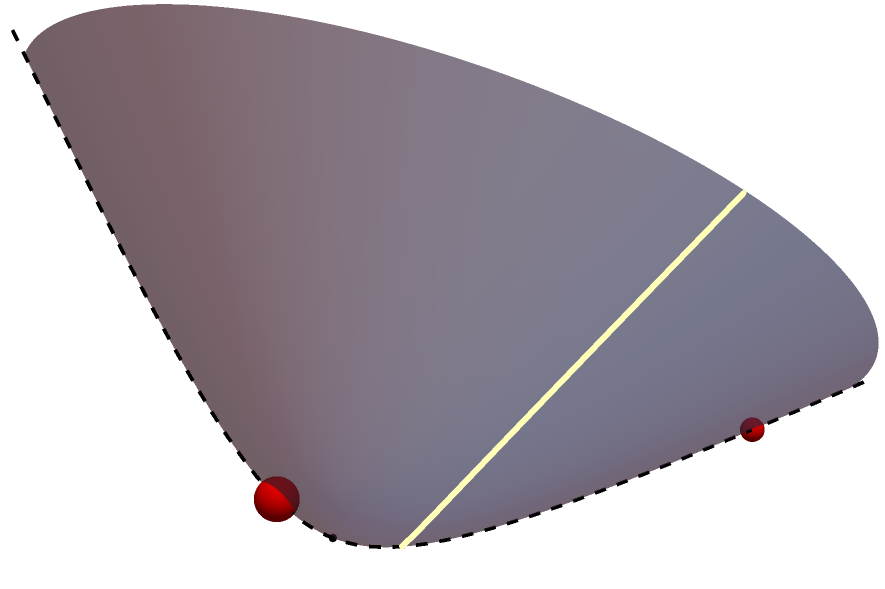}
                    \put(86,14){$\mathbf{p}_{2}$}
                    \put(25,5){$\mathbf{p}_{1}$}
                    \put(34,0){$C$}
                    \put(51,25){$\gamma_{\kappa,\mu}$}
                    
                \end{overpic}
            }
        };
        \draw[->, thick, bend left=25] (A.east) to node[above]{} (B.west);
    \end{tikzpicture}
    \caption{(a) Unbounded domain $W_{\kappa}$ for the distance coordinates, and (b) Half-space $(\mathbb{L}_{1}^{2})_{>0}$.
    The points $\mathbf{p}_{j}$, $j=1,2$ are the primaries. Their avatars in the domain $W_{\kappa}$ are indicated as
    $\tilde{\mathbf{p}}_{j}$.
    The curve of  triangle-balanced configurations is respectively 
    indicated by ${\gamma}_{\kappa,\mu}$ and $\tilde{\gamma}_{\kappa,\mu}$.}
    \label{fig:graphsneg}
    \end{figure}

\subsection{Conditions for stability of relative equilibria}
\label{subsec:stability}

We now consider the stability of RE. 
So far, we have established that  $\mathbf{r}_0\in \mathcal{M}^\pm\setminus \Delta$ is a RE of the curved  R3BP  if and only if it is  a critical point of the potential $\mathcal{V}_{\kappa,\mu}$ defined in Eq.~\eqref{eq:scaledpotential}. This 
 means that $\mathbf{u}_0:=(\mathbf{r}_0, \mathbf{0})\in T(\mathcal{M}^\pm\setminus \Delta)$ is an equilibrium of the Euler--Lagrange equations associated with the Lagrangian 
\begin{equation*}
    \mathcal{L}_{\kappa,\mu}(\mathbf{r}, \mathbf{r}') = \frac{1}{2} \Vert \mathbf{r}' \Vert_{\pm}^{2} + \langle \mathbf{e}_3 \times \mathbf{r}, \mathbf{r}' \rangle_{\pm} - \mathcal{V}_{\kappa,\mu}(\mathbf{r}).
\end{equation*}
This is a (2 degrees of freedom) Lagrangian of mechanical type, containing a gyroscopic (linear) term in the velocities. In Appendix~\ref{ap:linear} we review the
stability  properties of the equilibrium points of this type of Lagrangians. 

All the stability criteria
needed for our analysis are given in Theorem~\ref{thm:criticalpointsV}, which is formulated  in terms of the following  
standard terminology, based on the type of eigenvalues of the
linearization matrix: we say that an equilibrium is \textbf{\textit{elliptic}} 
if all eigenvalues are purely imaginary,  \textbf{\textit{center--saddle}} 
if it has one nonzero-real pair and  
one imaginary pair of eigenvalues,  
\textbf{\textit{complex--saddle}}  if its eigenvalues form a quadruplet $\pm \alpha+\pm i\beta$ with $\alpha, \beta \neq 0$,
and  \textbf{\textit{saddle--saddle}}  when  all eigenvalues are non-zero real.

 In particular,  Theorem~\ref{thm:criticalpointsV} states that
$\mathbf{u}_0$ is Lyapunov stable if $\mathbf{r}_0$ is a minimum of $\mathcal{V}_{\kappa,\mu}$, and is 
a center--saddle if  and only if $\mathbf{r}_0$ is a saddle-point of $\mathcal{V}_{\kappa,\mu}$. Determining the 
stability properties  of local maxima of $\mathcal{V}_{\kappa,\mu}$ is more delicate since they
may be stabilized by the presence of the gyroscopic term.

In our discussion, we will often make a slight abuse of notation by referring to $\mathbf{r}_0$ as an equilibrium point instead of the corresponding equilibrium state $\mathbf{u}_0=(\mathbf{r}_0,\mathbf{0})$.

\subsubsection{Conditions for stability of collinear RE}
\label{sss:stability-conds-collinear}

\paragraph{Case $\kappa>0$.} The stability of a collinear RE in the case of positive curvature can be analyzed using the following result. Item (v) about the non-existence of saddle--saddle points 
is due to Mart\'inez and Sim\'o \cite{MS17}.

\begin{proposition}
\label{prop:stabilitycollpos}
Let $\mathbf{r}_{0}=(\sin\theta_{0},0,-\cos\theta_{0})\in\mathcal{G}^+$ be a collinear RE for $\kappa>0$. Define the quantities
\begin{equation*}
\begin{split}
    \lambda_{1} &= \sin^{2}\!\theta_{0} - \cos^{2}\!\theta_{0} - 2\Gamma\left( \frac{\cos(\theta_{0}+\sqrt{\kappa}q_{1})}{\lvert \sin(\theta_{0}+\sqrt{\kappa}q_{1}) \rvert^{3}} +\mu\,\frac{\cos(\theta_{0}-\sqrt{\kappa}q_{2})}{\lvert \sin(\theta_{0}-\sqrt{\kappa}q_{2}) \rvert^{3}} \right) ,\\
    \lambda_{2} &= -\cos^{2}\!\theta_{0}+\Gamma\left( \frac{\cos(\theta_{0}+\sqrt{\kappa}q_{1})}{\lvert \sin(\theta_{0}+\sqrt{\kappa}q_{1}) \rvert^{3}} +\mu\,\frac{\cos(\theta_{0}-\sqrt{\kappa}q_{2})}{\lvert \sin(\theta_{0}-\sqrt{\kappa}q_{2}) \rvert^{3}}  \right),\\
    b &= 1 + \cos^2\!\theta_{0} - \Gamma\left( \frac{\cos(\theta_{0}+\sqrt{\kappa}q_{1})}{\vert \sin(\theta_{0}+\sqrt{\kappa}q_{1})\vert^{3}}  + \mu \frac{\cos(\theta_{0}-\sqrt{\kappa}q_{2})}{\vert \sin(\theta_{0}-\sqrt{\kappa}q_{2})\vert^{3}} \right).
\end{split}
\end{equation*}
Then, the following statements hold:
\begin{enumerate}
    \item[(i)] If $\lambda_{1} > 0$ and $\lambda_{2} > 0$, then $\mathbf{r}_{0}$ is Lyapunov stable.
    \item[(ii)]  $\mathbf{r}_{0}$ is elliptic  if and only if  $\lambda_{1} \lambda_{2} > 0$, $b>0$ and 
    $b^{2}\geq 4\lambda_{1}\lambda_{2}$.
    \item[(iii)] $\mathbf{r}_{0}$  is a center--saddle if and only if  $\lambda_{1}\lambda_{2} < 0$.
     \item[(iv)] $\mathbf{r}_{0}$  is a complex--saddle if and only if $b^{2} < 4\lambda_{1}\lambda_{2}$.
      \item[(v)]  \cite[Proposition 6.1]{MS17} $\mathbf{r}_{0}$  is not a saddle--saddle.
\end{enumerate}
Furthermore, we can rewrite
\begin{equation}
\label{eq:lambda2alt}
	\lambda_{2} = -\frac{ \Gamma}{\sin\theta_{0}}\left( \frac{\sin(\sqrt{\kappa}q_{1})}{\vert\sin(\theta_{0}+\sqrt{\kappa}q_{1})\vert^{3}} - \mu\frac{\sin(\sqrt{\kappa}q_{2})}{\vert\sin(\theta_{0}-\sqrt{\kappa}q_{2})\vert^{3}} \right).
\end{equation}
\end{proposition}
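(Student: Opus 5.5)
The plan is to linearize the Euler--Lagrange equations of $\mathcal{L}_{\kappa,\mu}$ at $\mathbf{u}_0=(\mathbf{r}_0,\mathbf{0})$ in the spherical coordinates $(\phi,\theta)$ of Eq.~\eqref{eq:positionM+}, and then read off the spectral type from the general two-degree-of-freedom gyroscopic theory in Appendix~\ref{ap:linear} (Theorem~\ref{thm:criticalpointsV}).

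\textbf{Step 1: the kinetic and gyroscopic terms at $\mathbf{r}_0$.} At $\phi=0$ the metric of $\mathbb{S}^2_1$ in these coordinates is $\cos^2\phi\, d\theta^2+d\phi^2$, which equals the identity at $\mathbf{r}_0$. The gyroscopic term $\langle \mathbf{e}_3\times\mathbf{r},\mathbf{r}'\rangle_+$ has $r_1r_2'-r_2r_1'$ as its only relevant part. Its exterior derivative, evaluated at $\phi=0$, gives the antisymmetric coefficient $\pm 2\cos\theta_0$ coupling $\theta'$ and $\phi'$. Indeed, $r_1r_2'-r_2r_1' = \sin\theta\cos^2\phi\,\phi' - \sin\phi\cos\phi\,(\cdots)$, and the curl at $\phi=0$ is $\cos\theta_0$ from each side, so the total is $2g$ with $g=\cos\theta_0$.

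\textbf{Step 2: the Hessian of $\mathcal{V}_{\kappa,\mu}$.} Differentiate Eq.~\eqref{eq:potentialpolarpos}. By the evenness of Eq.~\eqref{eq:sym}, the mixed derivative $\partial_\theta\partial_\phi\mathcal{V}$ vanishes at $\phi=0$, so the Hessian is diagonal. The second $\theta$-derivative is $-\lambda_1$ or $\lambda_1$; I will check which sign so that $\lambda_1>0$ means the potential energy contributes a stable direction. The second $\phi$-derivative gives $\lambda_2$. This uses $\partial_\phi^2$ of $\cos\phi\cos\alpha/\sqrt{1-\cos^2\phi\cos^2\alpha}$ at $\phi=0$, which is $-\cos\alpha/|\sin\alpha|^3$, together with $\partial_\phi^2$ of $\tfrac12\cos^2\theta\cos^2\phi$, which is $-\cos^2\theta$.

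\textbf{Step 3: characteristic polynomial and classification.} The linear system is $\ddot x + G\dot x + Hx=0$, where $H=\mathrm{diag}(\lambda_1,\lambda_2)$ (up to a sign convention fixed in Step 2) and $G$ is antisymmetric with entry $2\cos\theta_0$. Its characteristic polynomial is
\begin{equation*}
s^4+(\lambda_1+\lambda_2+4\cos^2\theta_0)s^2+\lambda_1\lambda_2 .
\end{equation*}
A short computation, using the collinear condition of Proposition~\ref{prop:necessarycondpos} to simplify, should show that $\lambda_1+\lambda_2+4\cos^2\theta_0=b$; checking that $\lambda_1+\lambda_2$ leaves $-\Gamma(\cdots)+\sin^2\theta_0-2\cos^2\theta_0$ is the bookkeeping to verify here. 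With $z=s^2$ solving $z^2+bz+\lambda_1\lambda_2=0$, the cases follow:
\begin{itemize}
\item If $\lambda_1\lambda_2<0$, the roots in $z$ are real with opposite signs, so the equilibrium is center--saddle.
\item If $b^2<4\lambda_1\lambda_2$, the roots are complex, so the equilibrium is complex--saddle.
\item The equilibrium is elliptic exactly when both $z$-roots are real and nonpositive, i.e.\ $\lambda_1\lambda_2> 0$, $b>0$ and $b^2\ge 4\lambda_1\lambda_2$.
\item If $\lambda_1,\lambda_2>0$, the equilibrium is a nondegenerate minimum of $\mathcal{V}_{\kappa,\mu}$, so (i) follows by Lagrange--Dirichlet via the Jacobi integral.
\end{itemize}
Item (v) is quoted from \cite{MS17}; alternatively it follows because saddle--saddle requires $\lambda_1\lambda_2>0$ with $b<0$, which I would exclude using the sign relations. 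I will cite it rather than reprove it.

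\textbf{Step 4: the alternative formula for $\lambda_2$.} Using Eq.~\eqref{eq:collinearpositive}, substitute $\tfrac12\sin 2\theta_0=\sin\theta_0\cos\theta_0$, i.e.\ express $\cos^2\theta_0=\cos\theta_0\cdot\Gamma[\cdots]/\sin\theta_0$. Then combine $\cos(\theta_0+a)\sin\theta_0-\sin(\theta_0+a)\cos\theta_0=-\sin a$, which yields Eq.~\eqref{eq:lambda2alt}.

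I expect the main obstacle to be the sign and factor bookkeeping: matching the paper's conventions so that the trace term is exactly $b$, and pinning down the orientation of the gyroscopic coefficient.
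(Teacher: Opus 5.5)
Your proposal is correct and follows the paper's proof essentially step for step: diagonal Hessian $\mathrm{diag}(\lambda_1,\lambda_2)$ in $(\theta,\phi)$ by evenness, $\mathbb{M}(\mathbf{r}_0)=\mathbb{I}$ and $\sigma(\mathbf{r}_0)=2\cos\theta_0$, hence $p(x)=x^4+bx^2+\lambda_1\lambda_2$ by Lemma~\ref{l:charpolmaxima} and Theorem~\ref{thm:criticalpointsV}, and the same trigonometric elimination for \eqref{eq:lambda2alt}; note that $b=\lambda_1+\lambda_2+4\cos^2\theta_0$ is pure algebra ($\sin^2\theta_0+2\cos^2\theta_0=1+\cos^2\theta_0$), so the collinear condition is not needed there. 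Two small completions the paper supplies: dividing by $\sin\theta_0$ needs $\theta_0\neq 0,\pi$, which holds since $f(0;\kappa,\mu)=-f(\pi;\kappa,\mu)=-\Gamma\bigl(1/\sin^{2}(\sqrt{\kappa}q_1)-\mu/\sin^{2}(\sqrt{\kappa}q_2)\bigr)<0$; and item (v) can be proved directly rather than cited, because $\lambda_2+b=1$ and $\lambda_1-2b=-1-4\cos^2\theta_0$ show that $b<0$ forces $\lambda_1<0<\lambda_2$.
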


\begin{proof}
Recall that the potential $\mathcal{V}_{\kappa,\mu}$ in polar coordinates is given by Eq.~\eqref{eq:potentialpolarpos}. Since $\mathcal{V}_{\kappa,\mu}(\phi,\theta)$ is an even function of $\phi$, its Hessian matrix at the collinear critical point $(0,\theta_0) \in \mathcal{G}^\pm$, denoted by $\mathcal{V}''_{\kappa,\mu}(0,\theta_0)$, is diagonal. Furthermore, by differentiating the expression \eqref{eq:potentialpolarpos}, one obtains
\begin{equation*}
    \left. \frac{\partial^2 \mathcal{V}_{\kappa,\mu}}{\partial \theta^2} \right|_{\phi=0, \, \theta=\theta_0} = \lambda_1, \qquad
    \left. \frac{\partial^2 \mathcal{V}_{\kappa,\mu}}{\partial \phi^2} \right|_{\phi=0, \, \theta=\theta_0} = \lambda_2,
\end{equation*}
with $\lambda_1$ and $\lambda_2$ given in the statement of the proposition. In other words, $\lambda_1$ and $\lambda_2$ are the eigenvalues of $\mathcal{V}''_{\kappa,\mu}(0,\theta_0)$.

If $\lambda_1$ and $\lambda_2$ are positive, then $\mathbf{r}_{0}$ is a local minimum of $\mathcal{V}_{\kappa,\mu}$  and hence is Lyapunov stable by Theorem~\ref{thm:criticalpointsV}. 

 Expressing the Lagrangian  \eqref{eq:Lagmukappa} in terms of 
generalized coordinates and velocities $\phi, \theta, \dot \phi, \dot \theta$ (using  the parametrization
  \eqref{eq:positionM+}), one finds that the kinetic energy matrix $\mathbb{M}(\mathbf{r}_{0})$, and the term 
  $\sigma(\mathbf{r}_{0})$ in Eq.~\eqref{eq:formskew}, are given in these coordinates by
\begin{equation*}
    \mathbb{M}(\mathbf{r}_{0})=\begin{pmatrix} 1 & 0 \\ 0 & 1 \end{pmatrix}, \qquad \sigma(\mathbf{r}_{0}) = 2\cos\theta_{0}.
\end{equation*}
Therefore, applying  Lemma \ref{l:charpolmaxima}, one concludes, after some algebraic manipulations,
 that the characteristic polynomial of the linearization matrix at $\mathbf{r}_{0}$ is 
\begin{equation*}
    p(x) = x^{4} + b x^{2} + \lambda_{1}\lambda_{2},
\end{equation*}
with $b=\lambda_{1}+\lambda_{2}+\sigma_{0}^2$ given as in the statement of the proposition. The conclusions $(ii)$-$(iv)$ follow immediately from Theorem \ref{thm:criticalpointsV}. 

Furthermore, 
 from the expressions for $\lambda_{1}$, $\lambda_{2}$, and $b$, one obtains
\begin{equation*}
\lambda_{2}+b=1, \qquad \lambda_{1}-2b=-1-4\cos^{2}\!\theta_{0}.
\end{equation*}
Therefore, if $b<0$, then $\lambda_{2}>0$ and $\lambda_{1}<0$. Hence, $\lambda_{1}\lambda_{2} < 0$, and item $(iv)$ of Theorem \ref{thm:criticalpointsV} implies that 
$\mathbf{r}_{0}$ cannot be a saddle-saddle.

Finally, we prove that the alternative expression \eqref{eq:lambda2alt} for $\lambda_2$ holds. Abbreviating 
\begin{equation*}
A= \frac{\Gamma}{\lvert \sin(\theta_{0}+\sqrt{\kappa}q_{1}) \rvert^{3}}, \qquad B= \frac{\mu\,\Gamma}{\lvert \sin(\theta_{0}-\sqrt{\kappa}q_{2}) \rvert^{3}}, 
\end{equation*}
we may write,
\begin{equation}
\label{eq:auxlambda2.1}
\lambda_2=-\cos^2\theta_0 +A \cos(\theta_{0}+\sqrt{\kappa}q_{1}) + B\cos (\theta_{0}-\sqrt{\kappa}q_{2}).
\end{equation}
On the other hand, by Eq.~\eqref{eq:collinearpositive} we have
\begin{equation}
\label{eq:auxlambda2.2}
0=\sin\theta_0\cos\theta_0 -A \sin(\theta_{0}+\sqrt{\kappa}q_{1}) - B\sin(\theta_{0}-\sqrt{\kappa}q_{2}).
\end{equation}
Multiplying Eq.~\eqref{eq:auxlambda2.1} by $\sin \theta_0$, Eq.~\eqref{eq:auxlambda2.2} by $\cos \theta_0$, and adding them yields,
\begin{equation*}
\sin\theta_0 \lambda_2= -A \sin(\sqrt{\kappa}q_{1})  +B \sin(\sqrt{\kappa}q_{2}).
\end{equation*}
Finally, it is easy to verify that Eq.~\eqref{eq:collinearpositive} is not satisfied for $\theta_0=0,\pi$ for any $\mu\in(0,1)$. Therefore, the above equation may be divided by $\sin\theta_0\neq 0$, yielding an expression for $\lambda_2$ equivalent to Eq.~\eqref{eq:lambda2alt}.
\end{proof}

\paragraph{Case $\kappa<0$.} 
Proceeding in analogy with the positive curvature case, it is possible to formulate a criterion 
similar to Proposition~\ref{prop:stabilitycollpos} valid for negative curvature.
We do not give it explicitly since the work of Mart\'inez and ~Sim\'o~\cite{MS17} establishes
the stability properties of all collinear RE in this case. Their results are recalled in Proposition~\ref{thm:stabilitycollinearneg} in Section~\ref{s:stability-collinear}.

\subsubsection{Conditions for stability of triangular RE}
\label{sss:Stability-conds-triangular}

In contrast with the collinear case, the Hessian matrix $\mathcal{V}_{\kappa,\mu}''(\mathbf{r}_{0})$ evaluated at a triangular RE $\mathbf{r}_{0}$ is not diagonal in distance coordinates,
so the stability analysis is a bit more involved. 

\paragraph{Case $\kappa>0$.} The main result to be used in the stability analysis of Sec.~\ref{sec:stabilitytriangular}  is the following.
\begin{proposition}
\label{prop:stabilitytripos}
Let $\mathbf{r}_{0}=\mathbf{r}_{0}(d_{1},d_{2})$ be a triangular RE for $\kappa>0$ written in distance coordinates \eqref{eq:kappatrigpositive}. Define the quantities
\begin{equation*}
\begin{split}
    \Delta &= \frac{3\sin(\!\sqrt{\kappa}q_{1})\sin(\!\sqrt{\kappa}q_{2})}{\sin^{4}(\sqrt{\kappa})}\Big[ \sin(\!\sqrt{\kappa}q_{2})\cos(d_{1}) + \sin(\!\sqrt{\kappa}q_{1})\cos(d_{2}) \Big] \\ 
    & \qquad\times\Big[ \big(1+2\cos(2d_{2})\big)\sin(\!\sqrt{\kappa}q_{1})\cos(d_{1}) + \big(1+2\cos(2d_{1})\big)\sin(\!\sqrt{\kappa}q_{2})\cos(d_{2}) \Big],\\
    T &= \frac{1}{\sin^{2}(\sqrt{\kappa})} \Big[ (1 - 4\cos^{2}(d_{2})) \sin^{2}(\sqrt{\kappa}q_{1}) - 6 \cos(d_{1}) \cos(d_{2}) \sin(\!\sqrt{\kappa}q_{1}) \sin(\!\sqrt{\kappa}q_{2})\\
    &\qquad + (1-4\cos^{2}(d_{1}))\sin^{2}(\sqrt{\kappa}q_{2})\Big], \\
    M &= \frac{\sin^{2}(d_{1})\sin^{2}(d_{2})}{\big[ \cos(\!\sqrt{\kappa})-\cos(d_{1}+d_{2})\big]\big[ \cos(d_{1}-d_{2}) - \cos(\!\sqrt{\kappa})\big]}.
\end{split}
\end{equation*}  
Then, the following statements hold:
\begin{enumerate}
    \item[(i)] If $\Delta > 0$ and $T>0$, then $\mathbf{r}_{0}$ is Lyapunov stable.
    \item[(ii)] $\mathbf{r}_{0}$ is elliptic if and only if  $\Delta > 0$ and $M \geq 4\Delta$.
     \item[(iii)] $\mathbf{r}_{0}$ is a center--saddle if and only if  $\Delta < 0$.
     \item[(iv)]  $\mathbf{r}_{0}$ is a complex--saddle if and only if  $M < 4\Delta$.
        \item[(v)] $\mathbf{r}_{0}$ is not a saddle--saddle.
\end{enumerate}
\end{proposition}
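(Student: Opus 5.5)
We follow the same scheme as in the proof of Proposition~\ref{prop:stabilitycollpos}: express the Lagrangian \eqref{eq:Lagmukappa} in distance coordinates $(d_1,d_2)$ and compute three ingredients at the equilibrium $\mathbf{r}_0$. These are the kinetic energy matrix $\mathbb{M}(\mathbf{r}_0)$, the Hessian $\mathcal{V}''_{\kappa,\mu}(\mathbf{r}_0)$ of the potential \eqref{eq:potentialdistancespos}, and the gyroscopic coefficient $\sigma(\mathbf{r}_0)$ of Eq.~\eqref{eq:formskew}. We then apply Lemma~\ref{l:charpolmaxima} to get a characteristic polynomial of the form $p(x)=x^4+\tilde b x^2+\tilde c$, and read off items (i)--(v) from Theorem~\ref{thm:criticalpointsV}. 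The claim to be verified is that, after normalisation, $\tilde c=\det(\mathbb{M}^{-1}\mathcal{V}'')$ has the sign of $\Delta$, and that the discriminant condition reduces to comparing $M$ with $4\Delta$.

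\textbf{Metric and gyroscopic term.} Differentiate \eqref{eq:kappatrigpositive} to write $\partial\mathbf{r}/\partial d_i$ in terms of the vectors $\mathbf{p}_1,\mathbf{p}_2$ and $\mathbf{r}$. The metric $\mathbb{M}$ on $(\mathbb{S}^2_1)_{>0}$ in these coordinates is then the inverse of the Gram matrix of the gradients $\nabla\cos d_i$. That Gram matrix has entries $\sin^2 d_1$, $\sin^2 d_2$ and off-diagonal $\cos\sqrt{\kappa}-\cos d_1\cos d_2$, where we use $\langle\mathbf{p}_1,\mathbf{p}_2\rangle=\cos\sqrt{\kappa}$. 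Its determinant factors as
\[
\sin^2d_1\sin^2d_2-(\cos\sqrt\kappa-\cos d_1\cos d_2)^2=\big[\cos\sqrt\kappa-\cos(d_1+d_2)\big]\big[\cos(d_1-d_2)-\cos\sqrt\kappa\big],
\]
which is precisely the denominator in $M$. The coefficient of the magnetic term $\langle\mathbf{e}_3\times\mathbf{r},\mathbf{r}'\rangle_+$ is a $2$-form on a surface, so in these coordinates $\sigma$ equals $2r_3$ times the Jacobian factor $\sqrt{\det\mathbb{M}}$. Since the critical point lies on $\mathcal{G}^\perp$-off region, we compute $r_3$ from \eqref{eq:kappatrigpositive}: $r_3$ is proportional to $-(\sin(\sqrt\kappa q_2)\cos d_1+\sin(\sqrt\kappa q_1)\cos d_2)/\sin\sqrt\kappa$.

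\textbf{Hessian and invariants.} Compute the second derivatives of \eqref{eq:potentialdistancespos}. The critical point equations \eqref{eq:partialdistancescoordspositive}, together with the balance condition \eqref{eq:mubalancecondition}, let us eliminate $\Gamma$ and the terms $\cot d_i$. Concretely, $\Gamma\cdot 2\cos d_i/\sin^3 d_i$ from $\partial^2\cot$ is replaced using the first-order conditions, which express $\Gamma$ through $\sin^3 d_i$ and the bracket $S:=\sin(\sqrt\kappa q_2)\cos d_1+\sin(\sqrt\kappa q_1)\cos d_2$. This should give $\det\mathcal{V}''\cdot\det\mathbb{M}^{-1}$ equal to $\Delta$ up to a positive factor, and $\operatorname{tr}(\mathbb{M}^{-1}\mathcal{V}'')$ equal to $T$. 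Then $\tilde b=\operatorname{tr}+\sigma^2/\det\mathbb{M}$, and the elliptic criterion $\tilde b>0$, $\tilde b^2\ge4\tilde c$ should reduce, after using the sphere identity $r_1^2+r_2^2+r_3^2=1$, to $M\ge4\Delta$. Presumably $\tilde b^2-4\tilde c$ simplifies to something like $M(M-4\Delta)$ with $\tilde b$ expressed via $M$, in analogy with the collinear relation $\lambda_2+b=1$.

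\textbf{Conclusions and main obstacle.} Item (i) is the local-minimum case of Theorem~\ref{thm:criticalpointsV}, since $\Delta>0$ and $T>0$ give a positive definite $\mathbb{M}^{-1}\mathcal{V}''$. Items (ii)--(iv) follow directly from the quartic once $\tilde b,\tilde c$ have been identified. For (v), we aim to show $\tilde b>0$ always, for instance by showing that $\tilde b$ is a positive multiple of $M$ plus nonnegative terms; then a saddle--saddle, which needs $\tilde b<0$ together with $\tilde c>0$, is excluded. The main obstacle is the algebra: reducing $\tilde b$ and the discriminant to the compact forms involving $M$ and $\Delta$. The first thing to try is to work with ambient vectors on $\mathbb{S}^2_1$, using the orthonormal frame $\{\mathbf{r},\mathbf{e},\mathbf{r}\times\mathbf{e}\}$, and only translate to $(d_1,d_2)$ at the end.
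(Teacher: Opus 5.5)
Your route is the paper's route: metric, gyroscopic coefficient and Hessian in distance coordinates, then Lemma~\ref{l:charpolmaxima} and Theorem~\ref{thm:criticalpointsV}. However, the step that actually delivers (ii), (iv) and (v) is left open, and your guesses about it point the wrong way. The missing fact, which is the content of Lemma~\ref{lemma:characteristic}, is that the quadratic coefficient is identically $\tilde b=1$ at every triangular RE. Hence $p(\lambda)=\lambda^4+\lambda^2+\Delta/M$, with $\Delta=\det\mathcal V''$ and $M=\det\mathbb M$ in $(d_1,d_2)$. Then $\tilde b^2-4\tilde c=(M-4\Delta)/M$ gives (ii) and (iv) exactly as stated, and $\tilde b=1>0$ gives (v). So $\tilde b$ is not ``a positive multiple of $M$ plus nonnegative terms''. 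In fact the statement itself forces $\tilde b^2=1$: with $\tilde c=\Delta/M$, the sign of $\tilde b^2-4\tilde c$ must agree with that of $M-4\Delta$. The verification is short. Write $s_i=\sin(\sqrt{\kappa}q_i)$, $S=s_2\cos d_1+s_1\cos d_2$ and $H=\mathcal V''$. The RE equations \eqref{eq:partialdistancescoordspositive} eliminate $\Gamma$ and give $\sin^{2}\sqrt{\kappa}\,H_{11}=s_2^2(1-4\cos^2 d_1)-3s_1s_2\cos d_1\cos d_2$, the symmetric formula for $H_{22}$, and $\sin^{2}\sqrt{\kappa}\,H_{12}=s_1s_2\sin d_1\sin d_2$. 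With $\mathbb M^{-1}=\begin{pmatrix}1&k\\k&1\end{pmatrix}$ you get $\operatorname{tr}(\mathbb M^{-1}H)=T+2kH_{12}$. Your own formula $\sigma=2r_3\sqrt{\det\mathbb M}$ holds with $r_3=-S/\sin\sqrt{\kappa}$ exactly, not just proportionally, and gives $\sigma^2/\det\mathbb M=4S^2/\sin^{2}\sqrt{\kappa}$. Adding these, all $\cos d_1\cos d_2$ terms cancel, and $\tilde b=(s_1^2+s_2^2+2s_1s_2\cos\sqrt{\kappa})/\sin^{2}\sqrt{\kappa}=1$ by the addition formula, since $\sqrt{\kappa}q_1+\sqrt{\kappa}q_2=\sqrt{\kappa}$.

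Two identifications in your setup also need correcting. First, the inverse of the Gram matrix of the $\nabla\cos d_i$ is the metric in the coordinates $(\cos d_1,\cos d_2)$, not $(d_1,d_2)$. In $(d_1,d_2)$ you must invert the Gram matrix of the unit gradients $\nabla d_i$, which is $\begin{pmatrix}1&k\\k&1\end{pmatrix}$ with $k=(\cos\sqrt{\kappa}-\cos d_1\cos d_2)/(\sin d_1\sin d_2)$. Thus $\mathbb M=\frac{1}{1-k^2}\begin{pmatrix}1&-k\\-k&1\end{pmatrix}$ and $\det\mathbb M=M$ exactly. Your metric has determinant off by the factor $\sin^2 d_1\sin^2 d_2$, which would corrupt the comparison of $M$ with $4\Delta$ once paired with the Hessian computed in $(d_1,d_2)$. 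Second, $T$ is $\operatorname{tr}\mathcal V''$, not $\operatorname{tr}(\mathbb M^{-1}\mathcal V'')$; the two differ by $2kH_{12}\neq 0$. Item (i) survives, because when $\Delta>0$ the Hessian is definite and both traces share its sign, but the identity you wrote is false.
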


Our proof will rely on the following auxiliary result.

\begin{lemma}
\label{lemma:characteristic}
Let $\mathbf{r}_{0}$ be a triangular RE in the case $\kappa>0$.
The characteristic polynomial of the linearization matrix at $\mathbf{r}_{0}$ is
\begin{equation}
\label{eq:charpoltriag}
    p(\lambda) = \lambda^{4} + \lambda^{2} + \det\big[\mathbb{M}(\mathbf{r}_{0})^{-1}\mathcal{V}''_{\kappa,\mu}(\mathbf{r}_{0}) \big],
\end{equation}
where $\mathbb{M}(\mathbf{r}_{0})$ and $\mathcal{V}''_{\kappa,\mu}(\mathbf{r}_{0})$ respectively denote the 
kinetic energy matrix and the Hessian matrix of the potential $\mathcal{V}_{\kappa,\mu}$ evaluated at $\mathbf{r}_{0}$
in some set of coordinates around $\mathbf{r}_{0}$.
\end{lemma}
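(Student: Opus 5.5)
The plan is to start from the general form of the characteristic polynomial for a two-degree-of-freedom mechanical Lagrangian with a gyroscopic term, Lemma~\ref{l:charpolmaxima} of Appendix~\ref{ap:linear}, already used in the proof of Proposition~\ref{prop:stabilitycollpos}. In any coordinates around $\mathbf{r}_0$ it gives
\begin{equation*}
p(\lambda)=\lambda^4+b\,\lambda^2+\det\big[\mathbb{M}(\mathbf{r}_0)^{-1}\mathcal{V}''_{\kappa,\mu}(\mathbf{r}_0)\big],
\qquad b=\operatorname{tr}\big[\mathbb{M}(\mathbf{r}_0)^{-1}\mathcal{V}''_{\kappa,\mu}(\mathbf{r}_0)\big]+\frac{\sigma(\mathbf{r}_0)^2}{\det\mathbb{M}(\mathbf{r}_0)} .
\end{equation*}
In the collinear case, with $\mathbb{M}=I$, this reduces to $b=\lambda_1+\lambda_2+\sigma^2$. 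So the whole content of the lemma is the identity $b=1$ at every triangular RE.

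Both terms of $b$ are coordinate independent, so I will compute them invariantly on $\mathbb{S}^2_1$. The trace $\operatorname{tr}(\mathbb{M}^{-1}\mathcal{V}'')$ at a critical point is the Laplace--Beltrami operator $\Delta_{\mathbb{S}^2}\mathcal{V}$ evaluated at $\mathbf{r}_0$. The quantity $\sigma^2/\det\mathbb{M}$ is the square of the density of $d\alpha$ with respect to the Riemannian area form, where $\alpha=\langle \mathbf{e}_3\times\mathbf{r},d\mathbf{r}\rangle_+$. Since $\mathbf{e}_3\times\mathbf{r}$ is the rotation field about the $r_3$-axis, $d\alpha=\pm 2r_3\,dA$, so $\sigma^2/\det\mathbb{M}=4r_3^2$. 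As a check, in the collinear chart we have $r_3=-\cos\theta_0$ and $\sigma=2\cos\theta_0$, which agrees.

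Next I compute the Laplacian term by term. For the centrifugal part, $-\tfrac12\|\mathbf{e}_3\times\mathbf{r}\|^2=-\tfrac12(1-r_3^2)$. Using $\Delta r_3=-2r_3$ and $|\nabla r_3|^2=1-r_3^2$, I get $\Delta(r_3^2)=2-6r_3^2$, so this part contributes $1-3r_3^2$. For the gravitational part, note that $\phi(\langle\mathbf{r},\mathbf{p}_i\rangle_+)=\cot d_i$. The geodesic polar Laplacian of a radial function is $f''+\cot(d)f'$, which gives $\Delta\cot d=\cot d/\sin^2 d$. Hence
\begin{equation*}
b=1+r_3^2-\Gamma_{\kappa,\mu}\Big[\frac{\cos d_1}{\sin^3 d_1}+\mu\frac{\cos d_2}{\sin^3 d_2}\Big].
\end{equation*}

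The remaining step is to show that the bracket term equals $r_3^2$ at a triangular RE; this is the main point, though I expect it to be short. Set $S=\sin(\sqrt\kappa q_2)\cos d_1+\sin(\sqrt\kappa q_1)\cos d_2$. Comparing \eqref{eq:potentialdistancespos} with \eqref{eq:scaledpotential} shows $r_3^2=S^2/\sin^2\sqrt\kappa$. Proposition~\ref{prop:triangularconditionpositive} gives
\begin{equation*}
\frac{\Gamma}{\sin^3 d_1}=\frac{\sin(\sqrt\kappa q_2)\,S}{\sin^2\sqrt\kappa},\qquad \frac{\mu\Gamma}{\sin^3 d_2}=\frac{\sin(\sqrt\kappa q_1)\,S}{\sin^2\sqrt\kappa}.
\end{equation*}
Multiplying these by $\cos d_1$ and $\cos d_2$ respectively and adding yields exactly $S^2/\sin^2\sqrt\kappa=r_3^2$. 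Therefore $b=1$.

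The only places needing care are the sign and normalisation conventions for $\sigma$ in Lemma~\ref{l:charpolmaxima}. I will fix these by the collinear consistency check above.
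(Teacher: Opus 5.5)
Your proof is correct, but it establishes the key identity by a different route than the paper. Both arguments use Lemma~\ref{l:charpolmaxima} to reduce the statement to $b=1$ at a triangular RE.

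The paper works in the distance coordinates $(d_1,d_2)$. It writes down $\mathbb{M}(\mathbf{r}_0)=\frac{1}{1-k^2}\begin{pmatrix}1&-k\\-k&1\end{pmatrix}$ and the corresponding $\sigma(\mathbf{r}_0)$ explicitly, then asserts the identity follows by a direct computation using \eqref{eq:partialdistancescoordspositive}; that computation is not displayed.

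You instead evaluate the two coordinate-invariant pieces of $b$ intrinsically on $\mathbb{S}^2_1$:
\begin{itemize}
\item the trace term is the Laplace--Beltrami operator applied to $\mathcal{V}_{\kappa,\mu}$ at $\mathbf{r}_0$, which is legitimate because $\mathbf{r}_0$ is a critical point;
\item the gyroscopic term is the squared density $4r_3^2$ of $d\alpha$.
\end{itemize}
The paper's explicit formulas give the same gyroscopic value, since $\sigma^2/\det\mathbb{M}=4S^2/\sin^2\sqrt{\kappa}$ and $S=-r_3\sin\sqrt{\kappa}$ by \eqref{eq:kappatrigpositive}.

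What your route buys is transparency and generality. It yields $b=1+r_3^2-\Gamma\big[\cos d_1/\sin^3 d_1+\mu\cos d_2/\sin^3 d_2\big]$ at every critical point of $\mathcal{V}_{\kappa,\mu}$, and this reproduces the collinear $b$ of Proposition~\ref{prop:stabilitycollpos}. It also isolates exactly what the triangular equations contribute. The centrifugal part gives $1-3r_3^2$ and the Coriolis part gives $4r_3^2$, leaving a surplus of $r_3^2$. The two equations of \eqref{eq:partialdistancescoordspositive} say precisely that the gravitational bracket equals $r_3^2$, which cancels that surplus.

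What the paper's route buys is the explicit $\mathbb{M}$ in distance coordinates. It needs this anyway for $M=\det\mathbb{M}(\mathbf{r}_0)$ in Proposition~\ref{prop:stabilitytripos} and for the CAPs.

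Finally, the sign convention for $\sigma$ that you flag is immaterial, since only $\sigma^2$ enters $b$. Your collinear check already confirms the normalisation.
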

\begin{proof} The independence of the choice of coordinates is a consequence of Lemma~\ref{l:charpolmaxima},
which also implies that we  only need to prove that the relation
\begin{equation}
\label{eq:prooflemma}
    \operatorname{tr}\left(\mathbb{M}(\mathbf{r}_{0})^{-1}\mathcal{V}''_{\kappa,\mu}(\mathbf{r}_{0})\right)
      +\sigma( \mathbf{r}_{0})^{2}\, \det\left(\mathbb{M}(\mathbf{r}_{0})^{-1}\right) = 1,
\end{equation}
holds in  a particular set of coordinates around $\mathbf{r}_{0}$. Here $\sigma( \mathbf{r}_{0})$ is determined by Eq.~\eqref{eq:formskew}.

Expressing the Lagrangian \eqref{eq:Lagmukappa} in the distance coordinates $(d_{1},d_{2},\dot{d}_{1},\dot{d}_{2})$ determined by formula \eqref{eq:kappatrigpositive}, one may derive the following expressions:
\begin{equation*}
	\mathbb{M}(\mathbf{r}_{0}) = \frac{1}{1 - k^{2}} \begin{pmatrix} 1 & -k \\ -k & 1 \end{pmatrix}, \qquad k = \frac{\cos(\sqrt{\kappa}) - \cos d_{1}\cos d_{2}}{\sin d_{1}\sin d_{2}},
\end{equation*}
and
\begin{equation*}
	\sigma( \mathbf{r}_{0}) = \frac{2\,\csc(\sqrt{\kappa})\big(\sin(\sqrt{\kappa} q_{2})\cos d_{1} + \sin(\sqrt{\kappa} q_{1})\cos d_{2}\big)}{\sqrt{1 - k^{2}}}. 
\end{equation*}
Identity \eqref{eq:prooflemma} can be verified by a direct computation, using
the above expressions for $\mathbb{M}( \mathbf{r}_{0})$ and $\sigma( \mathbf{r}_{0})$ and noting that the coordinates $d_{1}, d_{2}$ satisfy relations \eqref{eq:partialdistancescoordspositive}
by our assumption that $\mathbf{r}_{0}$ is a triangular RE.
\end{proof}

\begin{remark}  An equivalent formulation of this lemma, is given in \cite{MS17} (see Lemma 6.1).
 The slight disagreement
in the formulas for the characteristic polynomial 
(in \cite[Lemma 6.1]{MS17} the quadratic term has a factor of  has  $\alpha^2$, which is $\omega^2$  in our notation) is due to 
our time normalization $\tau=\omega t$,  introduced in Section \ref{ss:time-independent}.
\end{remark}

We are now ready to present the proof
of Proposition \ref{prop:stabilitytripos}.

\begin{proof}[Proof of Proposition \ref{prop:stabilitytripos}.]
Using formulas \eqref{eq:partialdistancescoordspositive}, which hold since $\mathbf{r}_{0}(d_{1},d_{2})$ is a triangular RE, a direct computation of the second derivatives of $\mathcal{V}_{\kappa,\mu}$ expressed  in distance coordinates (Eq.~\eqref{eq:potentialdistancespos})
  shows that the quantities $\Delta$, $T$ and $M$
in the statement of the proposition satisfy
\begin{equation*}
    \Delta = \det\big( \mathcal{V}_{\kappa,\mu}''(\mathbf{r}_{0}(d_{1},d_{2})) \big), 
    \qquad
    T = \mbox{Tr}\big( \mathcal{V}_{\kappa,\mu}''(\mathbf{r}_{0}(d_{1},d_{2})) \big), \qquad \mbox{and} \qquad M = \det(\mathbb{M}(\mathbf{r}_{0}(d_{1},d_{2}))).
\end{equation*}

Therefore, if $\Delta>0$ and $T>0$ then $\mathbf{r}_{0}$ is a local minimum of $\mathcal{V}_{\kappa,\mu}$ and hence is Lyapunov stable by Theorem~\ref{thm:criticalpointsV}. 

On the other hand, by Lemma \ref{lemma:characteristic}, the characteristic polynomial
of the linearization at  $\mathbf{r}_{0}$ is
\begin{equation*}
p(\lambda)=\lambda^4+\lambda^2+\frac{\Delta}{M},
\end{equation*}
and items $(ii)$-$(v)$ follow from Theorem~\ref{thm:criticalpointsV}.
\end{proof}

\paragraph{Case $\kappa<0$.} The analysis
for negative curvature is analogous. The corresponding expressions are obtained by replacing the trigonometric functions with their hyperbolic counterparts and considering distance coordinates as in relations \eqref{eq:kappatrignegative}. 
In the statement of the following proposition, we have incorporated some results from~\cite{MS17}
as item $(i)$.

\begin{proposition}
\label{prop:stabilitytrineg}
Let $\mathbf{r}_{0}=\mathbf{r}_{0}(d_{1},d_{2})$ be a triangular RE for $\kappa<0$ written in distance coordinates \eqref{eq:kappatrignegative}. 
\begin{enumerate}
\item[(i)]\cite[Proposition 6.2]{MS17} $\mathbf{r}_{0}$ is not a center--saddle nor a  saddle--saddle.
\end{enumerate}
    Define the quantities
\begin{equation*}
\begin{split}
    \Delta &= \frac{3\sinh(\sqrt{-\kappa}q_{1})\sinh(\sqrt{-\kappa}q_{2})}{\sinh^{4}(\sqrt{-\kappa})}\Big[ \sinh(\sqrt{-\kappa}q_{2})\cosh(d_{1}) + \sinh(\sqrt{-\kappa}q_{1})\cosh(d_{2}) \Big] \\ 
    & \qquad\times\Big[ \big(1+2\cosh(2d_{2})\big)\sinh(\sqrt{-\kappa}q_{1})\cosh(d_{1}) + \big(1+2\cosh(2d_{1})\big)\sinh(\sqrt{-\kappa}q_{2})\cosh(d_{2}) \Big],\\
    T &= \frac{1}{\sinh^{2}(\sqrt{-\kappa})} \Big[ (1 - 4\cosh^{2}(d_{2})) \sinh^{2}(\sqrt{-\kappa}q_{1}) - 6 \cosh(d_{1}) \cosh(d_{2}) \sinh(\sqrt{-\kappa}q_{1}) \sinh(\sqrt{-\kappa}q_{2})\\
    &\qquad + (1-4\cosh^{2}(d_{1}))\sinh^{2}(\sqrt{-\kappa}q_{2})\Big], \\
    M &= -\frac{\sinh^{2}(d_{1})\sinh^{2}(d_{2})}{\big[ \cosh(\sqrt{-\kappa})-\cosh(d_{1}+d_{2})\big]\big[ \cosh(\sqrt{-\kappa})-\cosh(d_{1}-d_{2})\big]}.
\end{split}
\end{equation*}
Then, the following statements hold:
\begin{enumerate}
    \item[(ii)] If $\Delta > 0$ and $T>0$, then $\mathbf{r}_{0}$ is Lyapunov stable.
    \item[(iii)] $\mathbf{r}_{0}$ is elliptic if and only if  $\Delta > 0$ and $M - 4\Delta\geq 0$.
        \item[(iv)]  $\mathbf{r}_{0}$ is a complex--saddle if and only if  $M - 4\Delta<0$.
     \end{enumerate}
\end{proposition}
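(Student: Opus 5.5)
The plan mirrors the proof of Proposition~\ref{prop:stabilitytripos}, using hyperbolic distance coordinates \eqref{eq:kappatrignegative} on $(\mathbb{L}^2_1)_{>0}$ and the potential \eqref{eq:potentialdistancesneg}. Item (i) is quoted from \cite{MS17}, so the work is items (ii)--(iv).

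\textbf{Step 1 (Hessian quantities).} First write the two critical point equations $\partial\mathcal{V}_{\kappa,\mu}/\partial d_i=0$ explicitly. Differentiating \eqref{eq:potentialdistancesneg} gives the hyperbolic analogue of \eqref{eq:partialdistancescoordspositive}:
\[
\sinh^{3}(d_1)\sinh(\sqrt{-\kappa}q_2)\,S=\Gamma\sinh^2(\sqrt{-\kappa}),\qquad
\sinh^{3}(d_2)\sinh(\sqrt{-\kappa}q_1)\,S=\mu\Gamma\sinh^2(\sqrt{-\kappa}),
\]
up to sign conventions, where $S=\sinh(\sqrt{-\kappa}q_2)\cosh d_1+\sinh(\sqrt{-\kappa}q_1)\cosh d_2$. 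Next compute the second derivatives of \eqref{eq:potentialdistancesneg}. Each term of the form $\Gamma\,\mathrm{csch}^2(d_i)\coth(d_i)$ should be eliminated using these relations. A direct computation should then show that $\Delta=\det\mathcal{V}''$ and $T=\operatorname{tr}\mathcal{V}''$ in distance coordinates.

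\textbf{Step 2 (kinetic matrix and gyroscopic term).} Express the metric $\|\mathbf{r}'\|_-^2$ in the coordinates $(d_1,d_2)$. By the hyperbolic law of cosines, the angle between the geodesics from $\mathbf{r}$ to the two primaries has cosine
\[
k=\frac{\cosh d_1\cosh d_2-\cosh\sqrt{-\kappa}}{\sinh d_1\sinh d_2}.
\]
This gives $\mathbb{M}=\frac{1}{1-k^2}\begin{pmatrix}1&-k\\-k&1\end{pmatrix}$, hence $\det\mathbb{M}=1/(1-k^2)$. Expanding $1-k^2$ with product-to-sum identities should reproduce the stated $M$, sign included. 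Also compute $\sigma(\mathbf{r}_0)$ from the term $\langle\mathbf{e}_3\times\mathbf{r},\mathbf{r}'\rangle_-$.

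\textbf{Step 3 (characteristic polynomial).} The main obstacle is proving the analogue of Lemma~\ref{lemma:characteristic}:
\[
\operatorname{tr}(\mathbb{M}^{-1}\mathcal{V}'')+\sigma^2\det\mathbb{M}^{-1}=c
\]
at a triangular RE, for a constant $c$ (expected to be $1$, possibly with a sign change due to the Minkowski metric). By Lemma~\ref{l:charpolmaxima} this identity is coordinate independent. So I would not verify it in distance coordinates, which is heavy. Instead I would verify it in coordinates adapted to the rotation, for example $\mathbf{r}=(\sinh\theta\cosh\phi,\sinh\phi,\cosh\theta\cosh\phi)$ or geodesic polar coordinates about $C$. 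There the gyroscopic term is simple. The quantity $\operatorname{tr}(\mathbb{M}^{-1}\mathcal{V}'')$ is the Laplacian of $\mathcal{V}$, and the Laplacian of each $\coth d_i$ term vanishes away from its primary, since it is the harmonic Green-type function in curvature $-1$. Checking this is the first thing I would try. What remains is the Laplacian of the centrifugal term $\tfrac12(1-\cosh^2\theta\cosh^2\phi)$ plus $\sigma^2/\det\mathbb{M}$, which should reduce to a constant. Granting this, $p(\lambda)=\lambda^4+c\lambda^2+\Delta/M$.

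\textbf{Step 4 (conclusions).} With this characteristic polynomial, items (ii)--(iv) follow from Theorem~\ref{thm:criticalpointsV} exactly as in the positive case. $\Delta>0$ and $T>0$ mean $\mathcal{V}''$ is positive definite, so $\mathbf{r}_0$ is a minimum and hence Lyapunov stable. The discriminant $c^2-4\Delta/M$ separates elliptic from complex--saddle. Since $M>0$ (because $\det\mathbb{M}>0$), this condition becomes $M-4\Delta\ge0$ together with $\Delta>0$. The case $\Delta<0$ would be a center--saddle, which item (i) excludes.
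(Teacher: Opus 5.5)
Your Steps 1, 2 and 4 are correct, and they are what the paper's ``analogous to $\kappa>0$'' argument amounts to:

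\begin{enumerate}
\item[(a)] Using the critical-point equations, the coordinate Hessian of \eqref{eq:potentialdistancesneg} has determinant $\Delta$ and trace $T$.
\item[(b)] The hyperbolic law of cosines gives $\det\mathbb{M}=1/(1-k^2)=M>0$.
\item[(c)] Once $p(\lambda)=\lambda^4+\lambda^2+\Delta/M$ is known, Theorem~\ref{thm:criticalpointsV} yields (ii)--(iv).
\end{enumerate}

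The gap is Step 3. It is not merely left to check: both facts your Laplacian argument rests on are false.

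First, $\coth d$ is not harmonic on the hyperbolic \emph{plane}. It is harmonic on $\mathbb{H}^3$, but in dimension two the radial Laplacian is $f''+\coth d\,f'$. This gives $\Delta\coth d=\coth d\,\operatorname{csch}^2 d\neq 0$; the two-dimensional Green function is $\log\tanh(d/2)$.

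Second, the centrifugal-plus-gyroscopic part is not constant. Write $\rho$ for the distance to $C$ and $(\rho,\alpha)$ for geodesic polar coordinates about $C$, so that $z:=\cosh\rho=S/\sinh\sqrt{-\kappa}$.
\begin{enumerate}
\item[(a)] The centrifugal term is $-\tfrac12\sinh^2\rho$, and $\Delta(-\tfrac12\sinh^2\rho)=1-3\cosh^2\rho$.
\item[(b)] The gyroscopic form is $\sinh^2\rho\,d\alpha$, with $d\Sigma=2\cosh\rho\,\mathrm{vol}$. Hence $\sigma^2/\det\mathbb{M}=4\cosh^2\rho$.
\end{enumerate}
The sum is $1+\cosh^2\rho$.

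A structural check exposes the problem. Your argument never uses the triangular equations, so it would give the same $b$ at every critical point. But for $\kappa>0$, the collinear coefficient $b=1+\cos^2\theta_0-\Gamma(\cdots)$ of Proposition~\ref{prop:stabilitycollpos} is exactly ``$1+\cos^2\rho$ plus a nonzero gravitational Laplacian'', and it is not constant. Already at $\kappa=0$ your recipe would give $b=2$ at $\Ll_4$ instead of $1$.

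What is true is that at any critical point
\[
b=1+z^2-\Gamma\bigl(\coth d_1\operatorname{csch}^2 d_1+\mu\coth d_2\operatorname{csch}^2 d_2\bigr).
\]
The triangular equations of your Step 1 are exactly what removes the extra term. Written as
\[
\Gamma\operatorname{csch}^2 d_1=z\,\partial_{d_1}z,\qquad \mu\Gamma\operatorname{csch}^2 d_2=z\,\partial_{d_2}z,
\]
and combined with $\partial_{d_i}^2 z=\coth d_i\,\partial_{d_i}z$ and $\partial_{d_1}^2z+\partial_{d_2}^2z=z$, they show the bracket equals $z^2$. Hence $b=1$.

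With this repair, your route is a valid alternative to the paper's. The paper verifies \eqref{eq:prooflemma} by direct computation in distance coordinates, with explicit $\mathbb{M}$ and $\sigma$. Your version is shorter and makes clear why $b=1$ holds only at triangular RE.

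As a bonus, the displayed formula for $\Delta$ shows $\Delta>0$ identically when $\kappa<0$, since every factor is positive. So item (i) follows from $b=1$ and $c=\Delta/M>0$ without citing \cite{MS17}.
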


\subsection{\texorpdfstring{The mass ratio $\mu_{\Tt}$}{The mass ratio mu_Tt}}
\label{subsec:massratio}

The qualitative behavior of the RE as a function of the curvature $\kappa\in (0,\pi^2/4)$ remains unchanged
as long as the mass ratio $\mu$ is smaller than a particular value, $\mu_{\Tt}\approx 0.089$.
At $\mu=\muT$, there exists a value of the curvature, $\kappa_{\Tt}$, at which  a qualitative change in the 
 bifurcation structure, described in Sec.~\ref{sec:muT}, takes
place. As a consequence, any  description of the behavior of the RE as a function of $\kappa$ in
 a parameter range  $\Ps{\mu_0}$ with $\mu_0>\mu_{\Tt}$  becomes considerably more intricate.  For this reason, the CAPs,  bifurcation diagrams, numerical results and classification
 conjectures
 presented in  the sections below are restricted to  parameter values $(\kappa,\mu)\in \Ps{\mu_{\Tt}}$.

As explained in Sec.~\ref{sec:muT}, the parameter values $(\kappa_{\Tt},\mu_{\Tt})$ are determined  by the  solution, described below, of the 
following nonlinear system for the
unknowns $(\theta, \kappa, \mu)$: 
\begin{equation}
\label{eq:mubalancedsystem}
    \left\lbrace\begin{matrix}
    \sin(2\theta)\sin^{2}(\theta+\sqrt{\kappa}q_{1})+2\Gamma(1+\mu\Lambda^{2})=0,\\
    \cos(2\theta)\sin^{3}(\theta+\sqrt{\kappa}q_{1})\sin(\!\sqrt{\kappa}q_{2})-2\Gamma\cos(\theta)\sin(\!\sqrt{\kappa})=0,\\
     \sin(\theta+\sqrt{\kappa}q_{1})-\Lambda \sin(\theta-\sqrt{\kappa}q_{2})=0.
\end{matrix}\right.
\end{equation}
Numerically, we obtain the approximation
\begin{equation}
\label{eq:numapprox}
       \tilde{\theta}=-1.0471975511965979, \qquad \tilde{\kappa}=0.7444610429129391, \qquad \tilde{\mu}=0.08867648509140362.
\end{equation}
Using a CAP (see \cite{github_codes}),  we rigorously validated the existence of a unique solution of Eqs.~\eqref{eq:mubalancedsystem} in a neighborhood of the approximation \eqref{eq:numapprox}, which defines $(\kappa_{\Tt},\mu_{\Tt})$. Moreover, the validation
 proves that  $\kappa_{\Tt}$ and $\muT$ respectively agree with the numerical approximations  $\tilde \kappa$ and $\tilde \mu$
to at least 14 decimal digits.

\section{Classification of Collinear RE}
\label{sec:collinear}

We begin by recalling the classification obtained by Martínez and Simó \cite{MS17} for negative curvature.

\begin{proposition}\cite[Proposition 4.1]{MS17}[Classification of Collinear Relative Equilibria for Negative Curvature]
\label{prop:ClassCollinearNeg}
For any mass ratio $\mu \in (0,1)$ and curvature $\kappa<0$, the curved R3BP with negative curvature admits exactly three collinear relative equilibria. We denote them by $\Ll_{1}$, $\Ll_{2}$, and $\Ll_{3}$ according to the following convention:
\begin{itemize}
    \item $\Ll_{1}$ corresponds to the unique solution of Eq.~\eqref{eq:collinearnegative} for $\theta_0\in (-\sqrt{-\kappa}\,q_{1},\,\sqrt{-\kappa}\,q_{2})$;
    \item $\Ll_{2}$ corresponds to the unique solution of Eq.~\eqref{eq:collinearnegative}  for $\theta_0\in (\sqrt{-\kappa}\,q_{2},\,\infty)$;
    \item $\Ll_{3}$ corresponds to the unique solution of Eq.~\eqref{eq:collinearnegative}  for $\theta_0\in (-\infty,\,-\sqrt{-\kappa}\,q_{1})$.
\end{itemize}
\end{proposition}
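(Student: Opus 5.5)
Let $F(\theta)$ denote the left-hand side of Eq.~\eqref{eq:collinearnegative}, and write $a=\sqrt{-\kappa}q_1$, $c=\sqrt{-\kappa}q_2$. By Proposition~\ref{prop:necessarycondneg}, collinear RE are exactly the zeros of $F$ on $\R\setminus\{-a,c\}$. I will therefore show that $F$ has exactly one zero on each of the three intervals $(-\infty,-a)$, $(-a,c)$ and $(c,\infty)$. The first step is the boundary behaviour. Since $\Gamma_{\kappa,\mu}>0$, we have
\[
F(\theta)=\tfrac12\sinh 2\theta+\Gamma\,\frac{\operatorname{sgn}(\theta+a)}{\sinh^2(\theta+a)}+\mu\Gamma\,\frac{\operatorname{sgn}(\theta-c)}{\sinh^2(\theta-c)}.
\]
Hence $F\to+\infty$ as $\theta\to -a^+$ and $F\to-\infty$ as $\theta\to -a^-$. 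Likewise $F\to+\infty$ as $\theta\to c^+$ and $F\to-\infty$ as $\theta\to c^-$. Finally, $F\to\pm\infty$ as $\theta\to\pm\infty$, because the sinh term dominates. On each interval $F$ therefore takes both signs, and the Intermediate Value Theorem gives at least one zero in each.

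Uniqueness on the outer intervals comes from monotonicity. On $(c,\infty)$ the two gravitational terms are $\Gamma\sinh^{-2}(\theta+a)$ and $\mu\Gamma\sinh^{-2}(\theta-c)$. Both are decreasing in $\theta$, so this alone does not settle the sign of $F'$. Instead I differentiate $F$ and use the relation $F=0$ at a zero to control the sign of $F'$ there. The aim is to show $F'>0$ at every zero, which forces uniqueness, since a continuous function crossing zero only upward can cross at most once.

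The more robust alternative is the second-derivative data already computed in the paper. The negative-curvature analogue of Proposition~\ref{prop:stabilitycollpos} shows that at a collinear critical point $\partial^2_\theta\mathcal V=F'(\theta_0)$. This equals $\lambda_1$, which for $\kappa<0$ takes the form $-\cosh 2\theta_0-2(\text{positive combination})$. If $\lambda_1$ has a definite sign at every zero, every crossing is transversal in the same direction, and uniqueness follows.

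Concretely, on $(-a,c)$ I will compute
\[
F'(\theta)=\cosh 2\theta-2\Gamma\frac{\cosh(\theta+a)}{\sinh^3(\theta+a)}+2\mu\Gamma\frac{\cosh(\theta-c)}{|\sinh(\theta-c)|^3}.
\]
This is not obviously signed, so I pass to $F''$. Up to the $\cosh 2\theta$ contribution, $\Gamma/\sinh^2$ is convex in the relevant region, so the two pole terms have the right convexity. Using $(\sinh^{-2})''=(4\cosh^2+2)/\sinh^4>0$, I will try to show that $F$ is strictly concave-then-convex appropriately, or more simply that $F$ is strictly decreasing on $(-a,c)$.

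The main obstacle is the competing term $\cosh 2\theta$ in $F'$ on the middle interval, and the fact that the outer pole term has the wrong monotonicity on the outer intervals. To handle these I will use a zero-substitution trick in the spirit of the proof of Eq.~\eqref{eq:lambda2alt}. Writing $A,B$ for the pole coefficients, I combine $F(\theta_0)=0$ with $F'(\theta_0)$ multiplied by suitable factors $\sinh\theta_0$ and $\cosh\theta_0$, so as to eliminate $\cosh 2\theta_0$. The goal is the identity
\[
\sinh\theta_0\,(\cdots)=-3A\cosh(\cdot)\ldots,
\]
whose right-hand side has a definite sign on each interval. This yields $F'(\theta_0)\ne0$ with a fixed sign at every zero. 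Together with the sign changes at the endpoints, that gives exactly one zero per interval, and hence exactly three collinear RE $\Ll_1,\Ll_2,\Ll_3$ as labelled.
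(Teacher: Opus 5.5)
There is a genuine gap, and it starts with the existence step. Your boundary limits are correct for the equation as printed, but they contradict the conclusion you draw from them. On $(-\infty,-a)$ you found $F\to-\infty$ at both ends, and on $(c,\infty)$ you found $F\to+\infty$ at both ends, so the Intermediate Value Theorem gives nothing on the outer intervals. In fact, for Eq.~\eqref{eq:collinearnegative} taken literally, every term of $F$ is positive on $(c,\infty)$ (there $\sinh 2\theta>0$ and both pole terms are positive) and every term is negative on $(-\infty,-a)$, so that equation has only the middle root. The printed $+\Gamma$ is the repulsive sign. It is inherited from Eq.~\eqref{eq:potentialpolarneg}, whose gravitational part equals $+\Gamma(\coth d_1+\mu\coth d_2)$, whereas rescaling the attractive potential \eqref{eq:augpotential} gives $-\Gamma(\coth d_1+\mu\coth d_2)$, as in Eq.~\eqref{eq:potentialdistancesneg}. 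Only the attractive sign is compatible with three RE and with the planar limits of Theorem~\ref{prop:asymLpositive}. You must therefore first replace the condition by the hyperbolic analogue of Eq.~\eqref{eq:collinearpositive},
\[
F(\theta)=\tfrac12\sinh 2\theta-\Gamma\Big[\frac{\sinh(\theta+a)}{|\sinh(\theta+a)|^{3}}+\mu\,\frac{\sinh(\theta-c)}{|\sinh(\theta-c)|^{3}}\Big]=0 .
\]
For this $F$ one has $F\to-\infty$ at the left end and $F\to+\infty$ at the right end of each of $(-\infty,-a)$, $(-a,c)$ and $(c,\infty)$.

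Your uniqueness part is not yet an argument. It lists strategies (the sign of $F'$ at zeros, the sign of $\lambda_1$, concavity, a zero-substitution identity left as $(\cdots)$) without completing any of them. Your displayed $F'$ on $(-a,c)$ also has the $\mu$-term with the wrong sign, even for your own $F$. With the corrected sign none of this is needed. Since $\frac{d}{dx}\frac{\sinh x}{|\sinh x|^{3}}=-\frac{2\cosh x}{|\sinh x|^{3}}$ for $x\neq 0$, we get
\[
F'(\theta)=\cosh 2\theta+2\Gamma\,\frac{\cosh(\theta+a)}{|\sinh(\theta+a)|^{3}}+2\mu\Gamma\,\frac{\cosh(\theta-c)}{|\sinh(\theta-c)|^{3}}>0
\]
on all of $\R\setminus\{-a,c\}$. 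So $F$ increases strictly from $-\infty$ to $+\infty$ on each interval and has exactly one zero there. This is just your own remark that $\lambda_1=-\cosh2\theta_0-2(\text{positive})$ (note that $\partial_\theta^2\mathcal{V}=-F'$, not $F'$). The point is that it holds at every $\theta$, not just at zeros, and the ``wrong monotonicity'' you feared on the outer intervals was an artifact of the sign. The paper gives no proof of this statement and simply quotes \cite{MS17}. The argument above is the planar one, with the positivity of $\cosh$ removing the sign issues that force restrictions on $\kappa$ and $\mu$ when $\kappa>0$.
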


Below we consider the case $\kappa>0$, sometimes under additional assumptions on the mass ratio $\mu\in(0,1)$. To the best of our knowledge, no rigorous classification results are available in this case. The numerical studies in \cite{Ki99,MS17} suggest that the behavior of collinear RE as functions of the parameters $\kappa$ and $\mu$ is intricate, and several bifurcations take place.

We begin by describing the strategy for our 
analysis in Sec.~\ref{sss:defIj}. We will
then present some analytical results on the number of collinear RE for small values of $\kappa>0$ in Sec.~\ref{ss:AnalyticalExistenceCollinear}. These analytical results are complemented by CAPs covering a wider range of curvature values
in Sec.~\ref{sss:CAP-existence-collinear}. Finally,  some  numerical investigations and conjectures are presented in
Sec.~\ref{sss:collinearexistencenumerics}.

\subsection{Analysis setup}
\label{sss:defIj}

According to Proposition \ref{prop:necessarycondpos}, the
collinear RE for $\kappa>0$
 correspond to solutions of Eq.~\eqref{eq:collinearpositive} for the angular variable 
 $\theta\in \R/2\pi\Z$ that parametrizes $\mathcal{G}^+$. This equation may
 be written as $f(\theta;\kappa,\mu)=0$ with
 \begin{equation*}
    f (\theta;\kappa,\mu):= \frac{1}{2}\sin(2\theta) -\Gamma\left[\frac{\sin(\theta+\sqrt{\kappa}q_{1})}{\left\vert\sin(\theta+\sqrt{\kappa}q_{1})\right\vert^{3}} + \mu\frac{\sin(\theta-\sqrt{\kappa}q_{2})}{\left\vert\sin(\theta-\sqrt{\kappa}q_{2})\right\vert^{3}}\right],
 \end{equation*}
 where we have omitted the dependence
 of $q_1, q_2$ and $\Gamma>0$ on $\kappa, \mu$. We recall that 
 $$
 0<q_1,q_2<1, \qquad \mbox{and} \qquad q_1+q_2=1,
 $$ and that $\Gamma$ is given by formula \eqref{eq:constantgamma},  which we rewrite
 here for future reference as:
\begin{equation}
\label{eq:auxGammaq}
    \Gamma = \frac{\sin^{3}(\sqrt{\kappa})\cos(\sqrt{\kappa})}{\sqrt{\mu^{2} + 2\mu\cos(2\sqrt{\kappa}) + 1}}. 
\end{equation}

To perform the analysis, we  remove the absolute values in the above expression for $f(\theta;\kappa,\mu)$ according to the value of $\theta$. 
We do this by dividing 
the great circle $\mathcal{G}^+$ into four disjoint arc segments, $\mathcal{I}_{1}$, $\mathcal{I}_{2}$, $\mathcal{I}_{3}$, $\mathcal{I}_{4}$. These
segments are separated
by the singularities of the function $\theta\mapsto f(
\theta;\kappa,\mu)$ which correspond to collisions of the satellite with the primaries $\mathbf{p}_{j}$  and their
antipodal points $\mathbf{a}_{j}=-\mathbf{p}_{j}$, $j=1,2$. The precise definition of these segments, identified with  a  possible range of the angular variable $\theta$ (which is consistent 
with the parametrization \eqref{eq:positionM+}), is as follows 
(see Fig.~\ref{fig:circlecollinear} for an illustration):
\begin{itemize}
    \item $\mathcal{I}_{1}=(-\sqrt{\kappa}q_1,\sqrt{\kappa}q_2)$ is the segment between the primaries.
      \item $\mathcal{I}_{2}=(\sqrt{\kappa}q_2,\pi-\sqrt{\kappa}q_1)$ is the segment between $\mathbf{p}_{2}$ and 
      $\mathbf{a}_{1}$.
    \item $\mathcal{I}_{3}=(-\pi + \sqrt{\kappa} q_{2},\, - \sqrt{\kappa} q_{1})$ is the segment between $\mathbf{a}_{2}$ and 
      $\mathbf{p}_{1}$.
\item $\mathcal{I}_{4}=(-\pi - \sqrt{\kappa} q_{1}, -\pi + \sqrt{\kappa} q_{2})$ is the segment between $\mathbf{a}_{1}$ and
$\mathbf{a}_{2}$.
\end{itemize}

For $j=1,\dots, 4$, the restriction of  $f(\theta;\kappa,\mu)$ to the
interval $\mathcal{I}_{j}$ will be denoted as $f_j(\theta;\kappa,\mu)$.
Our approach to the classification 
of the collinear RE is to analyze the equations
\begin{equation}
\label{eq:CollinearRECondbyIntervals}
    f_j(\theta;\kappa,\mu)=0, \qquad \theta \in \mathcal{I}_{j}, \qquad j=1,\dots, 4,
\end{equation}
as functions of the parameters $\kappa, \mu$. One of the difficulties of
this task is that the endpoints of the intervals $\mathcal{I}_{j}$ depend
on $\kappa$ and $\mu$. Another one is that $f_j(\theta;\kappa,\mu)\to \pm \infty$ at these endpoints. On the other hand, 
we will exploit the smoothness of the functions $\theta \mapsto f_j(\theta;\mu, \kappa)$, $\theta\in \mathcal{I}_{j}$.

\begin{figure}[ht]
\centering
\captionsetup{width=.8\linewidth}
\begin{overpic}[width=.35\textwidth]{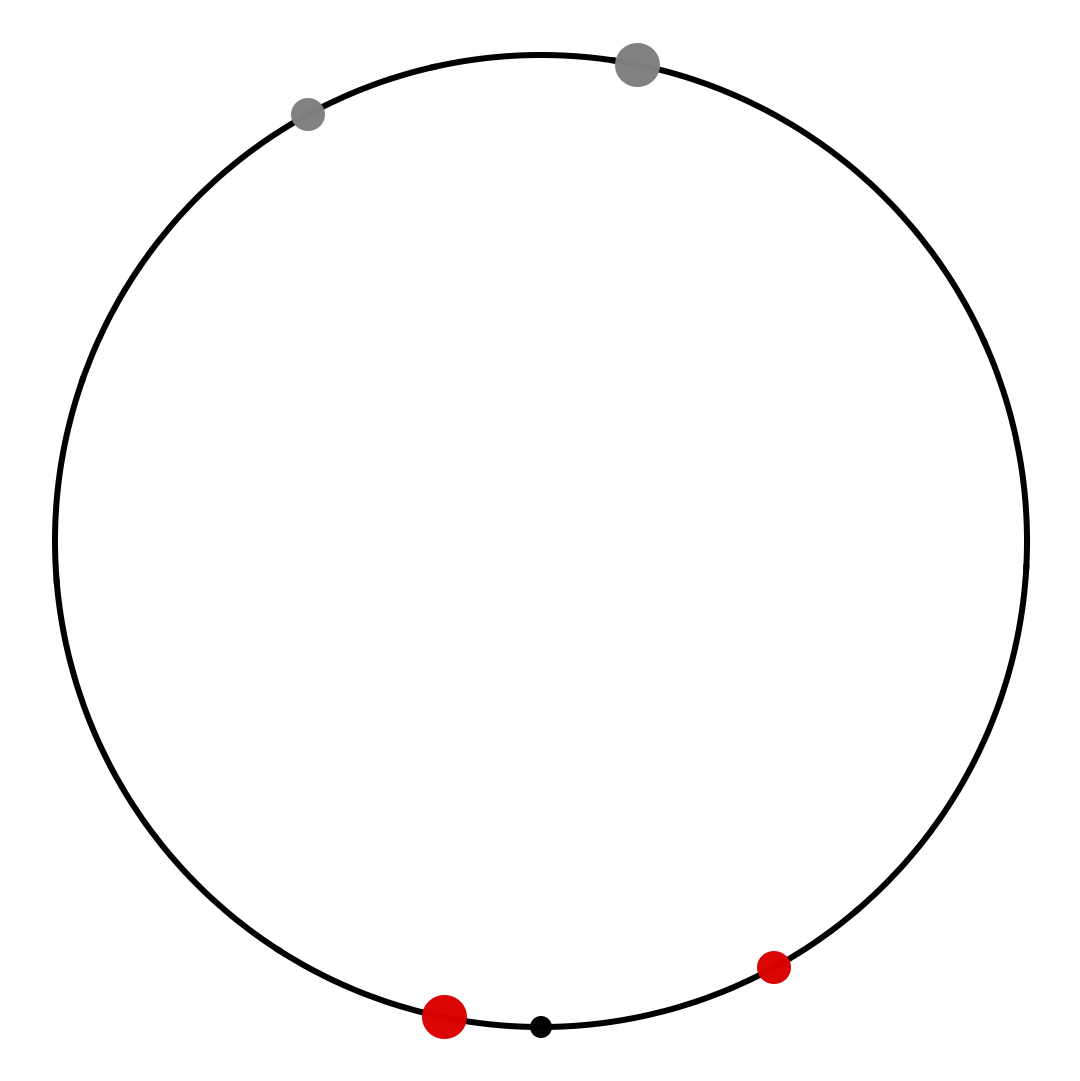}
   \put(39,10){\footnotesize $\mathbf{p}_{1}$}
   \put(69,14){\footnotesize $\mathbf{p}_{2}$}
   \put(57,89){\footnotesize $\mathbf{a}_{1}$}
   \put(27,85){\footnotesize $\mathbf{a}_{2}$}
    \put(85,80){ $\mathcal{G}^+$}
   \put(59,1){$\mathcal{I}_{1}$}
    \put(46,-1){$C$}
   \put(96,60){$\mathcal{I}_{2}$}
   \put(-2,48){$\mathcal{I}_{3}$}
   \put(39,97){$\mathcal{I}_{4}$}
\end{overpic}
\caption{The geodesic $\mathcal{G}^+$ containing the rotation center $C$, the primaries $\mathbf{p}_{i}$ (red points),  and their antipodal points $\mathbf{a}_{i}$ (gray points), $i=1,2$. These latter four points partition $\mathcal{G}^+$ into the four segments  
$\mathcal{I}_{1}$, $\mathcal{I}_{2}$, $\mathcal{I}_{3}$, $\mathcal{I}_{4}$.
According to our conventions  $C\in \mathcal{I}_{1}$ and corresponds to $\theta=0$.}
\label{fig:circlecollinear}
\end{figure}

We conclude this section by computing  the coordinate $\theta$ of the point marked by $\Xx$ in Fig.~\ref{fig:graphpos} (see 
Remark~\ref{rmk:defX}), which lies at the intersection of $\mathcal{G}^+$ and $\gamma_{\kappa,\mu}$. Although this is not
strictly necessary for the existence analysis that follows,  it will prove useful in 
Sec.~\ref{sec:muT}.  

Since $\Xx$ lies in the segment between $\mathbf{a}_{2}$ and 
      $\mathbf{p}_{1}$, we have $\theta \in \mathcal{I}_3$. Since  $\theta$ is negative on this interval, the distance coordinates $d_1$ and $d_2$ of $\Xx$ satisfy:
      \begin{equation*}
d_1=-\theta -\sqrt{\kappa}q_1, \qquad d_2=-\theta +\sqrt{\kappa}q_2.
\end{equation*}
Therefore, in view of condition \eqref{eq:mubalancecondition}, the condition that $\Xx$ is a triangular-balanced configuration implies
that $\theta$ satisfies
\begin{equation}
\label{eq:Xcoord}
\sin \left ( \theta + \sqrt{\kappa}	q_1\right )- \Lambda \sin \left ( \theta -  \sqrt{\kappa}	q_2 \right ) =0.
\end{equation}

\subsection{Analytical results}
\label{ss:AnalyticalExistenceCollinear}

Our main  analytic result on the existence of collinear RE is the following.

{\begin{theorem}[\textit{Classification of collinear RE}]
\label{thm:numbercollinearRE}
The following
statements hold about the collinear RE of the R3BP on a space of curvature
$\kappa>0$:
\begin{enumerate}
    \item For fixed $0 < \kappa < \pi^2/16$ and $\mu\in (0,1)$, there exists exactly one collinear RE on the segment $\mathcal{I}_1$. It will be denoted $\Ll_1$.
    \item For fixed $0 < \kappa < \pi^2/36$ and $\mu\in \left (0,\frac{1}{10} \right )$, there exist exactly two collinear RE on the segment $\mathcal{I}_2$. These  will 
    be denoted $\Ll_{2}$ and $\Ee_{2}$,
    with the convention that $\Ll_{2}$ is closer to the primary $\mathbf{p}_{2}$.
    \item For fixed $0 < \kappa < \pi^2/64$ and $\mu\in \left (0,\frac{1}{10} \right )$, there exist exactly two collinear RE on the segment $\mathcal{I}_3$. These will be denoted $\Ll_{3}$ and $\Ee_{3}$,
    with the convention that $\Ll_{3}$ is closer to the primary $\mathbf{p}_{1}$.
    \item For fixed $0 < \kappa < \pi^2/16$ and $\mu\in \left (0,\frac{1}{10} \right )$, there exists exactly one collinear RE on  the segment   $\mathcal{I}_4$. It will be denoted $\Aa_{1}$.
\end{enumerate}
\end{theorem}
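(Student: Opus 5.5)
On each arc $\mathcal{I}_j$ I would work with the smooth function $f_j(\theta;\kappa,\mu)$, where the absolute values have been removed. On $\mathcal{I}_1$ we have $\sin(\theta+\sqrt{\kappa}q_1)>0$ and $\sin(\theta-\sqrt{\kappa}q_2)<0$, so
$$f_1=\tfrac12\sin 2\theta-\Gamma\bigl[\sin^{-2}(\theta+\sqrt{\kappa}q_1)-\mu\sin^{-2}(\theta-\sqrt{\kappa}q_2)\bigr],$$
and the other arcs give analogous expressions with signs fixed by the arc. Existence would come from the limits at the endpoints, which correspond to the primaries and antipodes. As $\theta\to\pm\mathbf{p}_j$ the singular term dominates, and its sign follows from \eqref{eq:auxtopological}: $\mathcal{V}\to-\infty$ at primaries and $+\infty$ at antipodes. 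Thus $f_1$ goes from $+\infty$ to $-\infty$ across $\mathcal{I}_1$, and $f_4$ goes from one infinity to the opposite one across $\mathcal{I}_4$. These sign changes give at least one zero each by the intermediate value theorem. On $\mathcal{I}_2$ and $\mathcal{I}_3$ the two endpoint limits have the \emph{same} sign, since each arc joins a primary to an antipode. So zeros there come in even number, and to get at least two I would exhibit an interior point $\theta^*$ where $f_j$ has the opposite sign. For small $\kappa$ the natural choices are a point near the rescaled classical $\Ll_2$ (resp. $\Ll_3$) distance, or the midpoint region near $\theta=\pi/2$ (resp. $-\pi/2$). There $\tfrac12\sin 2\theta$ is small while the $\Gamma$-terms are of definite sign and, with $\Gamma=O(\kappa^{3/2})$ and $\mu<1/10$, controlled. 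I would instead choose $\theta^*$ where $\sin2\theta$ has magnitude of order one and dominates $\Gamma\cdot(1+\mu)/\sin^2(\cdot)$, the denominators being bounded below on a subinterval away from the endpoints.

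Uniqueness, or exactly two, is the substantive part, and I would obtain it from monotonicity or convexity. For $\mathcal{I}_1$ I would compute
$$f_1'=\cos2\theta+2\Gamma\bigl[\cos(\theta+\sqrt{\kappa}q_1)\sin^{-3}(\theta+\sqrt{\kappa}q_1)-\mu\cos(\theta-\sqrt{\kappa}q_2)\,|\sin(\theta-\sqrt{\kappa}q_2)|^{-3}\bigr].$$
On $\mathcal{I}_1$ we have $|\theta|<\sqrt{\kappa}<\pi/4$, so $\cos2\theta>0$. Since $|\theta\pm\cdot|<\sqrt{\kappa}<\pi/4$, the cosines in the bracket are positive, and the bracket has sign $+$ for the first term and $+$ for the second once the sign of $\sin$ is accounted for. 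Hence $f_1'>0$, or $<0$ with the correct orientation, throughout, which gives uniqueness for $\kappa<\pi^2/16$. For $\mathcal{I}_4$, $\theta$ is near $-\pi$, so again $|2\theta+2\pi|<2\sqrt{\kappa}<\pi/2$ and the same positivity argument applies, because the antipodal terms have $\cos$ of the shifted argument of fixed sign. Here $\mu<1/10$ may be needed to control the relative weights.

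For $\mathcal{I}_2$ and $\mathcal{I}_3$ I would show that $f_j$ has at most two zeros by proving that $f_j''$ has a fixed sign (strict convexity or concavity), or that $f_j'$ has a single zero, on the relevant part of the arc. The singular terms $\Gamma\sin^{-2}$ are convex near the endpoints, so the obstruction comes from the $\tfrac12\sin2\theta$ term in the middle of the arc. There I would split the arc into a region near the endpoints, where the singular second derivative $6\Gamma\cos^2/\sin^4$ dominates, and a middle region, where $f_j$ has a definite sign so that no zeros lie there. The split point is chosen so that $\sin^4(\cdot)\lesssim\Gamma$, balanced against the bound $\mu<1/10$. This region-splitting estimate is where the constraints $\kappa<\pi^2/36$ and $\kappa<\pi^2/64$ would be used, and it is the main obstacle: the bookkeeping of explicit inequalities for $\Gamma$ (via \eqref{eq:auxGammaq}) and for $q_1,q_2$ (via \eqref{eq:positionprimaries}, e.g. $q_1\le\mu/(1+\mu)$-type bounds) must hold uniformly over the stated parameter ranges. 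I would first try to bound $\Gamma\le\kappa^{3/2}/(1-\mu)$ and $\sqrt{\kappa}q_1\le\sqrt{\kappa}\mu$, and then verify the sign conditions with elementary trigonometric inequalities.
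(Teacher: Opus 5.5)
Your overall architecture is the paper's. On $\mathcal{I}_1$ and $\mathcal{I}_4$ it is endpoint limits plus monotonicity. On $\mathcal{I}_2$ and $\mathcal{I}_3$ it is equal-sign endpoint limits, one interior test point of the opposite sign, and a concavity bound. But two key steps fail as you describe them.

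First, uniqueness on $\mathcal{I}_4$ is not ``the same positivity argument.'' Differentiating $f_4=\tfrac12\sin2\theta+\Gamma[\sin^{-2}(\theta+\sqrt{\kappa}q_1)-\mu\sin^{-2}(\theta-\sqrt{\kappa}q_2)]$ gives
\[
f_4'(\theta)=\cos 2\theta-2\Gamma\Bigl[\frac{\cos(\theta+\sqrt{\kappa}q_1)}{\sin^{3}(\theta+\sqrt{\kappa}q_1)}-\mu\,\frac{\cos(\theta-\sqrt{\kappa}q_2)}{\sin^{3}(\theta-\sqrt{\kappa}q_2)}\Bigr].
\]
On $\mathcal{I}_4$ both summands in the bracket are positive: the first argument lies in $(-\pi,-\pi+\sqrt{\kappa})$, the second in $(-\pi-\sqrt{\kappa},-\pi)$. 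So the fact you use, $\cos2\theta>0$, works \emph{against} $f_4'<0$. Unlike on $\mathcal{I}_1$, the terms of $f_4'$ have opposite signs. The two $\Gamma$-terms reinforce each other, so this is not a matter of $\mu$ balancing their weights.

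What is needed is a quantitative domination. Since $y\mapsto\cos y/\sin^3y$ is decreasing on $(0,\pi/2)$, the bracket exceeds $(1+\mu)\cos\sqrt{\kappa}/\sin^3\sqrt{\kappa}$. Also \eqref{eq:auxGammaq} gives $\Gamma\ge\sin^3\sqrt{\kappa}\cos\sqrt{\kappa}/(1+\mu)$. Hence $2\Gamma[\cdots]>2\cos^2\sqrt{\kappa}>1\ge\cos2\theta$. This bound $2\cos^2\sqrt{\kappa}\ge1$, which the paper's proof also reaches, is where $\kappa<\pi^2/16$ genuinely enters item 4.

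Second, your mechanism for ``at most two'' on $\mathcal{I}_2$ and $\mathcal{I}_3$ misplaces the difficulty and would not work as stated. $\Ee_2$ and $\Ee_3$ lie near $\theta=\pm\pi/2$, at distance of order one from both endpoints, where $\sin^4\gg\Gamma$. So your ``middle region'' contains a zero and cannot have a definite sign. Moreover, concavity on two separate end regions would at best bound the count by four.

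The missing observation (Proposition~\ref{prop:concavity}) is a parameter-free split at $\pm\pi/2$, dictated by the sign of $\sin2\theta$. On $[\pi/2,\pi-\sqrt{\kappa}q_1)$ one has $\sin2\theta\le0$, so $f_2<0$ and there are no zeros. On $(\sqrt{\kappa}q_2,\pi/2)$ both $\tfrac12\sin2\theta$ and $-\Gamma\sin^{-2}(\cdot)$ are concave, so $f_2''<0$ with no competition at all. Symmetrically, $f_3>0$ on $(-\pi+\sqrt{\kappa}q_2,-\pi/2]$ and $f_3''>0$ on $(-\pi/2,-\sqrt{\kappa}q_1)$. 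Together with $f_2\to-\infty$ at $\sqrt{\kappa}q_2^{+}$ and $f_2(\pi/2)<0$, a single test point with $f_2(\hat\theta)>0$ then gives exactly two zeros.

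Consequently, the hypotheses $\kappa<\pi^2/36$, $\kappa<\pi^2/64$ and $\mu<1/10$ enter only through the test-point estimates. The paper does these explicitly at $\hat\theta=\pi/4$ and $\hat\theta=-\pi/4$, using $\Gamma\le\sin^3(\pi/6)=1/8$ and $\Gamma\le\sin^3(\pi/8)$ respectively. Your choice of a point where $|\sin2\theta|$ is of order one is right, but these explicit bounds must still be checked.

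There are also two minor slips in item 1. $f_1$ increases from $-\infty$ to $+\infty$ across $\mathcal{I}_1$, not the reverse. And when written with $|\sin(\theta-\sqrt{\kappa}q_2)|^{-3}$, the $\mu$-term of $f_1'$ carries a plus sign.
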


The collinear RE in the theorem are illustrated in Fig. \ref{fig:circlecollinearRE}. The motivation for our  terminology comes from the analysis in Sec.
 \ref{sec:asymp} where we will consider their behavior as $\kappa\to 0^+$. We will 
 prove that $\Ll_{1}$, $\Ll_{2}$, $\Ll_{3}$ approach the classical collinear Lagrange points. On
 the other hand, due our assumption that the center of rotation $C$ is at the south pole, we will show that $\Ee_{2}$, and $\Ee_{3}$ approach the equator and thus our choice
 of the letter $\Ee$. The point $\Aa_1$ lies
 between the antipodal points of the primaries
 and hence our choice of the letter $\Aa$.
 In particular, our analysis shows that
 $\Aa_{1}$, $\Ee_{2}$, and $\Ee_{3}$ tend to infinity 
in the limit $\kappa\to 0^+$ explaining why
the planar R3BP only has 3 collinear RE.

Other classification results, valid in a broader region of the parameter space 
 $\kappa$, $\mu$ are established in 
Sec.~\ref{sss:CAP-existence-collinear} below using CAPs.

\begin{figure}[ht]
\centering
\captionsetup{width=.75\linewidth}
\begin{overpic}[width=.35\textwidth]{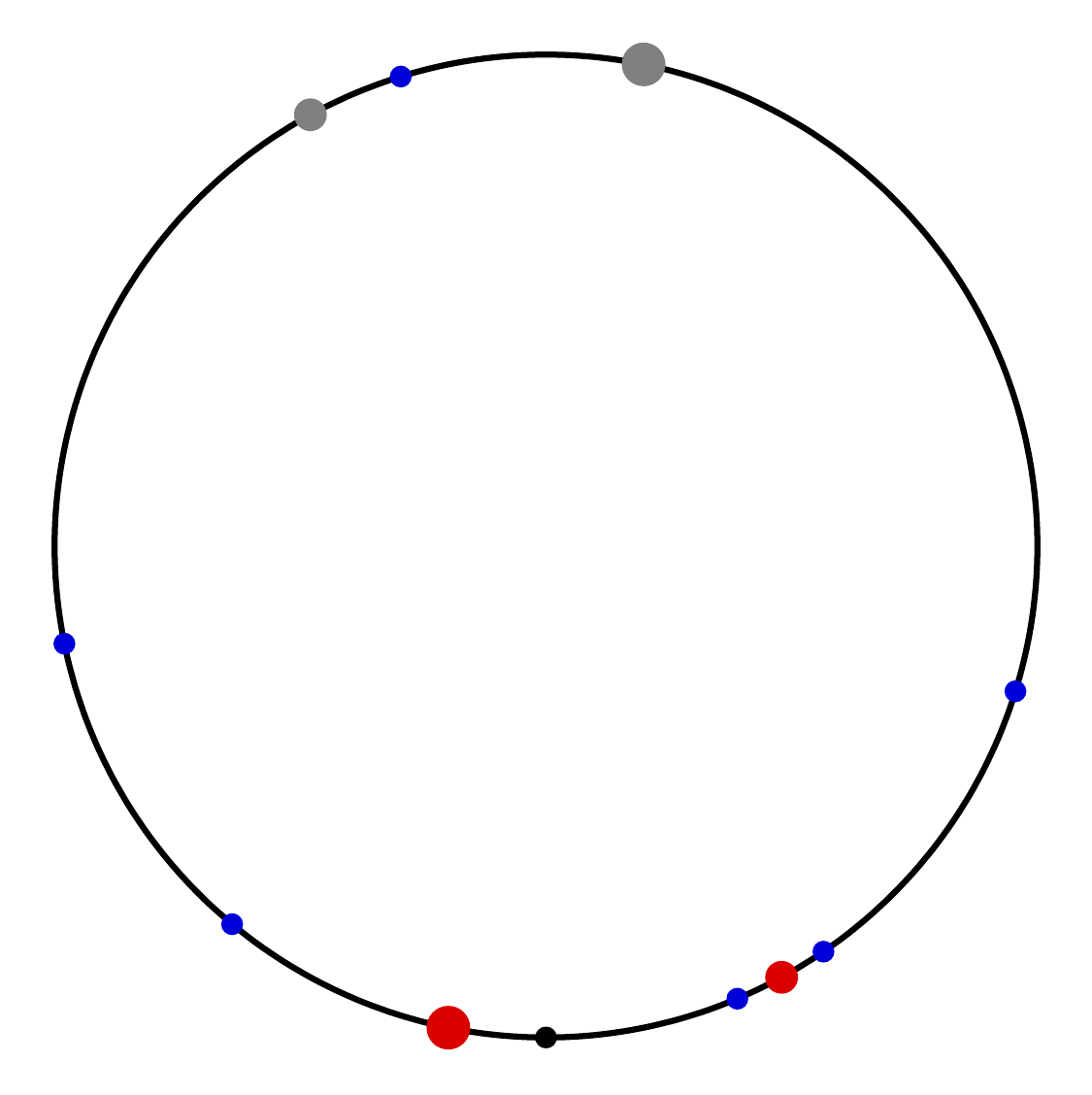}
    \put(39,10){\footnotesize $\mathbf{p}_{1}$}
    \put(68,14){\footnotesize $\mathbf{p}_{2}$}
    \put(57,89){\footnotesize $\mathbf{a}_{1}$}
    \put(27,85){\footnotesize $\mathbf{a}_{2}$}
    \put(57,1){$\mathcal{I}_{1}$}
    \put(46,-1){$C$}
    \put(96,60){$\mathcal{I}_{2}$}
    \put(-2,48){$\mathcal{I}_{3}$}
    \put(39,97){$\mathcal{I}_{4}$}
     \put(85,80){ $\mathcal{G}^+$}
    \put(66,3){\footnotesize $\Ll_{1}$}
    \put(76,8){\footnotesize $\Ll_{2}$}
    \put(95,34){\footnotesize $\Ee_{2}$}
    \put(16,12){\footnotesize $\Ll_{3}$}
    \put(-2,40){\footnotesize $\Ee_{3}$}
    \put(31,95){\footnotesize $\Aa_{1}$}
    
\end{overpic}
\caption{The collinear RE in Theorem \ref{thm:numbercollinearRE}.
}
\label{fig:circlecollinearRE}
\end{figure}

The proof of Theorem~\ref{thm:numbercollinearRE} consists of estimating the 
number of roots of the functions $\theta\mapsto f_j(\theta;\kappa,\mu)$, $j=1,\dots, 4$,  as functions 
of the parameters $(\kappa, \mu)$, using techniques from elementary calculus. The details are  provided in
Sec.~\ref{app:ProofExistenceCollinear} of Appendix~\ref{app:analytic-proofs}. A useful auxiliary result is the following
proposition, which imposes a strong restriction  the number of collinear RE that may exist in  the segments $\mathcal{I}_{2}$ and $\mathcal{I}_{3}$  for all parameter values $(\kappa,\mu)\in \Ps{1}$. This result will also be used in the implementation of CAPs 
 in Sec.~\ref{sss:CAP-existence-collinear}.

\begin{proposition}
\label{prop:concavity}
    The number of collinear RE in each of the segments $\mathcal{I}_{2}$ and $\mathcal{I}_{3}$ is either  zero, one or two for any $(\kappa,\mu)\in\Ps{1}$. Furthermore, the following statements hold for  fixed $(\kappa,\mu)\in \Ps{1}$:
    \begin{enumerate}
     \item Suppose there exists $\hat \theta\in \mathcal{I}_{2} $ such that  $f_2(\hat \theta;\kappa,\mu)>0$. Then, there exist precisely two collinear RE on $\mathcal{I}_2$ for the parameter values $(\kappa,\mu)$. 
    \item 
    Suppose there exists $\hat \theta\in \mathcal{I}_{3}$ such that  $f_3(\hat \theta;\kappa,\mu)<0$. Then, there exist precisely two collinear RE on $\mathcal{I}_3$ for the parameter values of $(\kappa,\mu)$. 
\end{enumerate}
\end{proposition}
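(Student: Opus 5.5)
The plan is to show that on each of $\mathcal{I}_2$ and $\mathcal{I}_3$ the function $f_j(\cdot;\kappa,\mu)$ satisfies a one-sided differential inequality of Sturm type. Combined with its boundary behaviour, this should cap the number of zeros at two. Items 1 and 2 then follow from the Intermediate Value Theorem.

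\emph{Sign removal and boundary behaviour.} Write $a=\sqrt{\kappa}q_1$ and $b=\sqrt{\kappa}q_2$, so that $a+b=\sqrt{\kappa}<\pi/2$. For $\theta\in\mathcal{I}_2=(b,\pi-a)$ we have $\theta+a\in(\sqrt{\kappa},\pi)$ and $\theta-b\in(0,\pi-\sqrt{\kappa})$, so both sines are positive and
\begin{equation*}
f_2(\theta)=\tfrac12\sin 2\theta-\Gamma\left[\csc^2(\theta+a)+\mu\csc^2(\theta-b)\right].
\end{equation*}
Hence $f_2\to-\infty$ at both endpoints of $\mathcal{I}_2$. Next observe that $\mathcal{I}_3=\mathcal{I}_2-\pi$. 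On $\mathcal{I}_3$ both sines are negative, so
\begin{equation*}
f_3(\theta)=\tfrac12\sin 2\theta+\Gamma\left[\csc^2(\theta+a)+\mu\csc^2(\theta-b)\right],
\end{equation*}
and $f_3\to+\infty$ at both endpoints. Since $\sin 2\theta$ and $\csc^2$ are $\pi$-periodic, we get $f_3(\theta-\pi)=-f_2(\theta)+\Gamma$-terms. More precisely, $f_3(\theta-\pi)$ is obtained from $f_2(\theta)$ by flipping the sign of the $\Gamma$-bracket only. The whole argument for $\mathcal{I}_3$ is therefore the mirror of the one for $\mathcal{I}_2$, with all inequalities reversed, and I only describe $\mathcal{I}_2$.

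\emph{Key identity.} Using $(\csc^2x)''=6\csc^4x-4\csc^2x$, one gets
\begin{equation*}
f_2''+4f_2=-6\Gamma\left[\csc^4(\theta+a)+\mu\csc^4(\theta-b)\right]<0\quad\text{on }\mathcal{I}_2 .
\end{equation*}
Similarly, $f_3''+4f_3>0$ on $\mathcal{I}_3$. Two consequences follow immediately. First, at any critical point of $f_2$ with $f_2\ge 0$ we have $f_2''<0$, so every critical point in $\{f_2\ge0\}$ is a strict local maximum. In particular every zero of $f_2$ with $f_2'=0$ is isolated and is a touching maximum. Second, Sturm comparison with $u=\sin(2(\theta-z))$, via the Wronskian identity $(f_2'u-f_2u')'=u\,(f_2''+4f_2)$, shows two things. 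Each open arc on which $f_2>0$ has length $<\pi/2$. Each arc between consecutive zeros on which $f_2<0$ has length $>\pi/2$.

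\emph{Counting zeros (main obstacle).} Because $f_2\to-\infty$ at both ends, the sign pattern is $-,+,-,\dots,+,-$. If $f_2$ had three or more zeros, there would be two positive arcs separated by an interior negative arc of length $>\pi/2$. The total length of $\mathcal{I}_2$ is only $\pi-\sqrt{\kappa}<\pi$, but the crude length count alone does not give a contradiction, because the outer negative arcs near the singularities are not controlled by the comparison. This is the step I expect to be hardest. I would close it with a Wronskian estimate taken from a singular endpoint. On the left outer arc $(b,z_1)$, the blow-up $f_2\sim-\Gamma\mu(\theta-b)^{-2}$ makes $f_2'u-f_2u'$ computable as $\theta\to b^+$ for $u=\sin(2(\theta-b))$. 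Integrating the identity over $(b,z_1)$ should show that the first positive arc must start beyond $b+\pi/2$, or that it forces a bound of the form $z_1-b>\pi/4$. The same holds at the right end. Combined with the $>\pi/2$ interior arc, this would exceed $\pi-\sqrt{\kappa}$. If this estimate proves too weak, the fallback is the monotonicity of a Prüfer angle $\psi$ defined by $f_2=\rho\sin\psi$, $f_2'=2\rho\cos\psi$. Here $\psi'=2-\sin\psi\,(f_2''+4f_2)/(2\rho)$, and the singular endpoints fix $\psi$ modulo $\pi$, so the total variation of $\psi$ over an interval of length $<\pi$ bounds the number of sign changes by two.

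\emph{Conclusion.} Once it is known that $f_2$ has at most two zeros in $\mathcal{I}_2$ (counted as above), item 1 follows. If $f_2(\hat\theta)>0$, the Intermediate Value Theorem on $(b,\hat\theta)$ and on $(\hat\theta,\pi-a)$ gives at least one zero in each, hence exactly two. Item 2 is identical, using $f_3\to+\infty$ at the endpoints of $\mathcal{I}_3$ and $f_3(\hat\theta)<0$.
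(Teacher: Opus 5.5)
Your sign removal, the identity $f_2''+4f_2=-6\Gamma[\csc^4(\theta+a)+\mu\csc^4(\theta-b)]$, and the final intermediate-value step are all correct. The problem is that the heart of the proposition, namely that $f_2$ has \emph{at most two} zeros on $\mathcal{I}_2$, is never proved. Moreover, it cannot be proved from the data you feed into the Sturm/Prüfer machinery. That data is: $f''+4f<0$, $f\to-\infty$ at both ends with inverse-square blow-up, and an interval of length $L=\pi-\sqrt{\kappa}<\pi$.

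To see this, take $F(\theta)=-\cos(2(\theta-L/2))-\delta-\epsilon\bigl(\theta^{-2}+(L-\theta)^{-2}\bigr)$ on $(0,L)$, with $\delta,\epsilon>0$ small. It satisfies $F''+4F=-4\delta-\epsilon\bigl(6\theta^{-4}+4\theta^{-2}+6(L-\theta)^{-4}+4(L-\theta)^{-2}\bigr)<0$. It has the same kind of endpoint singularities as $f_2$. Yet it has four zeros, with sign pattern $-,+,-,+,-$, because $-\cos L=\cos\sqrt{\kappa}>0$.

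The two specific tools you propose fail for the same reason:
\begin{itemize}
\item \emph{Endpoint Wronskian.} With $u=\sin(2(\theta-b))$ one finds $W=f_2'u-f_2u'\sim 6\Gamma\mu(\theta-b)^{-2}\to+\infty$ as $\theta\to b^+$. A decreasing $W$ that ends at the nonnegative value $W(z_1)=f_2'(z_1)u(z_1)$ is no contradiction, so no lower bound on $z_1-b$ follows.
\item \emph{Prüfer angle.} On positive arcs $\psi'=2+q\sin\psi/(2\rho)$ is unbounded, so the total increase of $\psi$ is not controlled by the interval length. In addition, $\rho=0$ at a tangential zero, which can occur because the equation is inhomogeneous.
\end{itemize}
Separately, your claim that three zeros force two positive arcs overlooks the case of one positive arc plus a tangential zero.

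The missing idea is to use the explicit form of $f_2$, not just the differential inequality, as the paper does:
\begin{itemize}
\item Since $\Gamma>0$ and $\sin 2\theta\le 0$ on $[\pi/2,\pi-a)$, one has $f_2<0$ there. So every zero lies in $(b,\pi/2)$.
\item On $(b,\pi/2)$ you have $f_2''=-2\sin 2\theta-\Gamma\bigl[g(\theta+a)+\mu\, g(\theta-b)\bigr]$, where $g=(\csc^2)''=2\csc^4(1+2\cos^2)>0$. Hence $f_2$ is strictly concave there and has at most two zeros.
\item On $\mathcal{I}_3$ the argument is symmetric: $\sin2\theta\ge0$ on $(-\pi+b,-\pi/2]$ gives $f_3>0$ there, and $f_3$ is strictly convex on $(-\pi/2,-a)$.
\end{itemize}
With this in place, your IVT argument for items 1 and 2 goes through unchanged.
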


\begin{proof}
Recall that collinear RE on $\mathcal{I}_{2}$ are in one-to-one correspondence with zeros of  the function $\theta\mapsto f_2(\theta;\kappa,\mu)$, whose explicit form is
\begin{equation}
\label{eq:f2}
    f_{2}(\theta;\kappa,\mu) = \frac{1}{2}\sin(2\theta) - \Gamma \left[\frac{1}{\sin^{2}(\theta+\sqrt{\kappa}q_{1})} + \frac{\mu}{\sin^{2}(\theta-\sqrt{\kappa}q_{2})} \right], \qquad \theta\in\mathcal{I}_{2}=(\sqrt{\kappa} q_{2}, \pi - \sqrt{\kappa} q_{1}).
\end{equation}
Considering that $\Gamma>0$ and that $\sin(2\theta)\leq 0$ for $\theta\in [\pi/2,\pi]$, it is clear that $f_2$ is strictly negative on   $(\pi/2, \pi - \sqrt{\kappa} q_{1})\subset \mathcal{I}_{2}$. In particular,  the roots
of $f_2$ can only occur on the complementary interval $(\sqrt{\kappa} q_{2}, \pi/2 ]\subset \mathcal{I}_{2}$. The second derivative of $f_{2}$ is given by
\begin{equation*}
    f_{2}''(\theta;\kappa,\mu) = -2\sin(2\theta) - 2\Gamma \left[ \frac{1+2\cos^{2}(\theta+\sqrt{\kappa}q_{1})}{\sin^{4}(\theta+\sqrt{\kappa}q_{1})} + \mu\,\frac{1+2\cos^{2}(\theta-\sqrt{\kappa}q_{2})}{\sin^{4}(\theta-\sqrt{\kappa}q_{2})}\right],
\end{equation*}
and it is clear that  $f_{2}''(\theta;\kappa,\mu)<0$ for all $\theta\in(\sqrt{\kappa} q_{2}, \pi/2)$ so $f_2$ is convex down in this
interval. Therefore, given that a continuous function which does not change concavity can have at most $2$ roots, we conclude that there
exist at most two collinear RE on $\mathcal{I}_2$.  Finally, considering that  $f_{2}(\theta;\kappa,\mu) \to -\infty$ as $\theta \to \sqrt{\kappa} q_{2}^{+}$ and $f_2(\theta;\kappa,\mu)$ is strictly
negative for all $\theta\in(\sqrt{\kappa} q_{2}, \pi/2)$, the existence
of $\hat \theta\in \mathcal{I}_2$ such that $f_2(\hat\theta;\kappa,\mu)>0$ implies that $\theta\mapsto f_2(\theta;\kappa,\mu)$ has exactly two roots by the intermediate value
theorem.

The conclusions for $\mathcal{I}_{3}$ are obtained analogously and
follow from the expressions
\begin{equation}
\label{eq:f3}
    f_{3}(\theta;\kappa,\mu) = \frac{1}{2}\sin(2\theta) + \Gamma \left[ \frac{1}{\sin^{2}(\theta+\sqrt{\kappa}q_{1})} + \frac{\mu}{\sin^{2}(\theta-\sqrt{\kappa}q_{2})} \right], \qquad \theta\in  \mathcal{I}_{3} = (-\pi + \sqrt{\kappa} q_{2},\, - \sqrt{\kappa} q_{1}),
\end{equation}
and
\begin{equation*}
    f_{3}''(\theta;\kappa,\mu) = -2\sin(2\theta) + 2\Gamma \left[ \frac{1+2\cos^{2}(\theta+\sqrt{\kappa}q_{1})}{\sin^{4}(\theta+\sqrt{\kappa}q_{1})} + \mu\,\frac{1+2\cos^{2}(\theta-\sqrt{\kappa}q_{2})}{\sin^{4}(\theta-\sqrt{\kappa}q_{2})}\right],
\end{equation*}
that allow us to conclude that $f_3$ is strictly positive  for $\theta\in (-\pi+\sqrt{\kappa} q_2,-\pi/2)\subset \mathcal{I}_3$ and is concave up in the complementary interval  $\theta\in[-\pi/2, -\sqrt{\kappa}q_{1})$. The arguments are the same as above, using this time  that $f_{3}(\theta;\kappa,\mu)\to \infty$ as $\theta \to -\sqrt{\kappa}q_1^-$.
\end{proof}

\subsection{Computer-Assisted Proofs}
\label{sss:CAP-existence-collinear}

We now extend the classification results of  Theorem~\ref{thm:numbercollinearRE} to a larger parameter region by proving, using CAPs, that equations  \eqref{eq:CollinearRECondbyIntervals} admit a precise number
 of roots using the techniques described in Appendix~\ref{App:CAPs}. For the intervals $\mathcal{I}_2$ and
 $\mathcal{I}_3$, we also take advantage of Proposition~\ref{prop:concavity}.
  For example, proving
 using interval arithmetic that $f_2(\hat \theta;\hat \kappa,\hat \mu)>0$ for
 some $\hat \theta\in (\sqrt{\hat \kappa}q_1(\hat\kappa,\hat \mu),\pi/2)$ implies that there
 exist exactly 2 roots of $f_2$ on an enclosure of the parameter values 
$(\hat\kappa,\hat \mu)$.

The results of our CAPs are illustrated in Fig.~\ref{fig:CAPexistencecollinear} in the parameter range 
$\Ps{\mu_{\Tt}}$. The different panels indicate the  collinear RE that  exist on each of the segments $\mathcal{I}_j$
for given parameter values. 
The vertical strip shaded in white on the left of each panel corresponds
to the parameter region for which Theorem \ref{thm:numbercollinearRE} applies. This information
is essential to describe the continuation of the RE as functions of $\kappa$ 
since the  CAPs are inconclusive near $\kappa=0$.

\renewcommand{\arraystretch}{1.2}
\setlength{\tabcolsep}{6pt}

The CAPs illustrated in Fig. \ref{fig:CAPexistencecollinear}
allow us to draw the following
conclusions about the RE in Theorem~\ref{thm:numbercollinearRE}:
\begin{itemize}
    \item The points $\Ll_1\in \mathcal{I}_1$ and $\Aa_1\in \mathcal{I}_4$ continue from the parameter region
    specified in Theorem~\ref{thm:numbercollinearRE} to the one
    in blue in panels (a) and (d) respectively.
    \item The pairs of points $\Ll_2, \Ee_2\in \mathcal{I}_2$ and  
    $\Ll_3, \Ee_3\in \mathcal{I}_3$ continue from the parameter
    region specified in Theorem~\ref{thm:numbercollinearRE} into the 
    the orange region in panel (b) and the left-hand orange region  in panel (c), respectively.
    As $\kappa$ increases, each pair undergoes a saddle-node bifurcation and disappears
    upon reaching  the
    parabola-like curve in the $(\kappa,\mu)$ plane,
    that separates the orange and gray regions.  
    \item Interestingly, two RE emerge
    again on $\mathcal{I}_3$ for larger values of $\kappa$, corresponding
    to the orange region on the right-hand side of panel (c). We denote 
    these RE as  $\widetilde{\Ll}_3, \widetilde{\Ee}_3\in \mathcal{I}_3$ with the 
    analogous convention that $\widetilde{\Ll}_3$ is closer to the primaries.
\end{itemize}

\begin{figure}[H]
\centering
\makebox[\textwidth][c]{

\begin{minipage}[c]{0.28\textwidth}
\centering
\begin{tabular}{@{}l@{\hspace{0.6em}}l@{}}
\legenditem{0173B2}{One RE} \\
\legenditem{D35C05}{Two RE} \\
\legenditem{BDBDBD}{Zero RE} \\
\legenditem{1A1A1A}{Inconclusive CAP} \\
\end{tabular}
\end{minipage}
\hspace{0.05\textwidth}

\begin{minipage}[c]{0.68\textwidth}
\centering
\subfigure[$\mathcal{I}_{1}$]{
\begin{overpic}[width=0.47\textwidth]{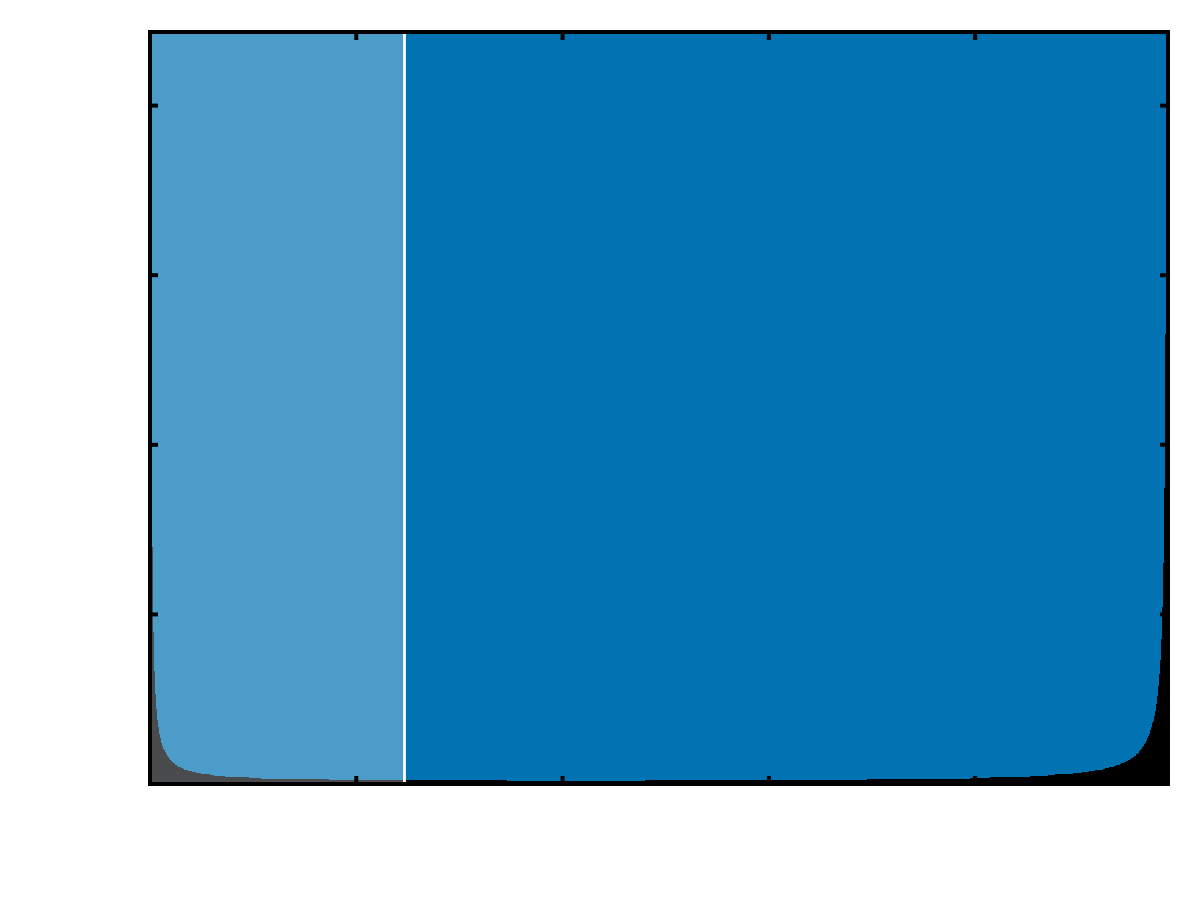}
    \put(9.4,6.5){\tiny 0.0}
    \put(26.7,6.5){\tiny 0.5}
    \put(44,6.5){\tiny 1.0}
    \put(61.2,6.5){\tiny 1.5}
    \put(78.5,6.5){\tiny 2.0}
    \put(54,3){\tiny $\kappa$}
    \put(3.5,9){\tiny 0.00}
    \put(3.5,23){\tiny 0.02}
    \put(3.5,37){\tiny 0.04}
    \put(3.5,51){\tiny 0.06}
    \put(3.5,65){\tiny 0.08}
    \put(-2,37){\tiny $\mu$}
     \put(50,40){\small $\Ll_{1}$}
\end{overpic}}\hspace{0.1cm}
\subfigure[$\mathcal{I}_{2}$]{
\begin{overpic}[width=0.47\textwidth]{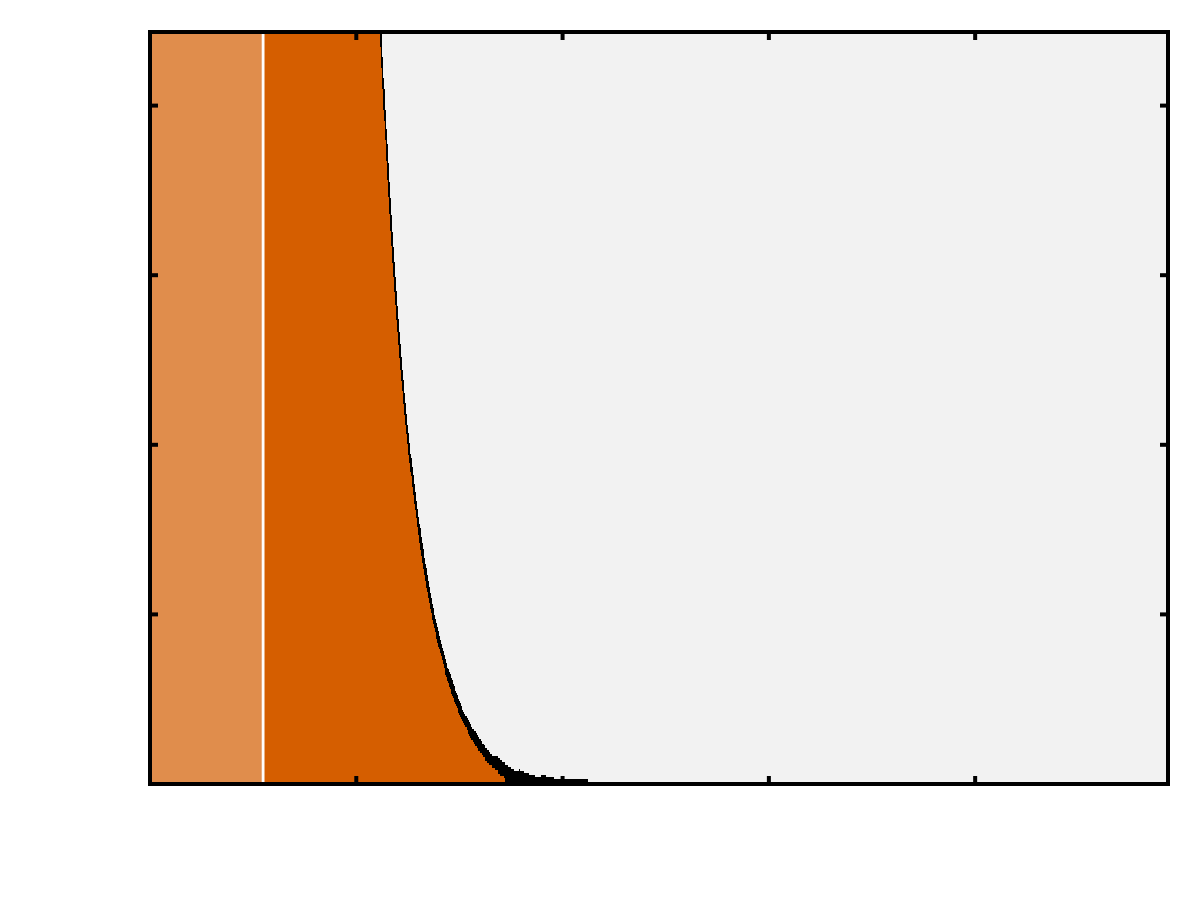}
    \put(9.4,6.5){\tiny 0.0}
    \put(26.7,6.5){\tiny 0.5}
    \put(44,6.5){\tiny 1.0}
    \put(61.2,6.5){\tiny 1.5}
    \put(78.5,6.5){\tiny 2.0}
     \put(54,3){\tiny $\kappa$}
    \put(3.5,9){\tiny 0.00}
    \put(3.5,23){\tiny 0.02}
    \put(3.5,37){\tiny 0.04}
    \put(3.5,51){\tiny 0.06}
    \put(3.5,65){\tiny 0.08}
    \put(-2,37){\tiny $\mu$}
    \put(14,33){\small $\Ll_{2} \& \Ee_{2}$}
\end{overpic}}
\\[0.5em]
\subfigure[$\mathcal{I}_{3}$]{
\begin{overpic}[width=0.47\textwidth]{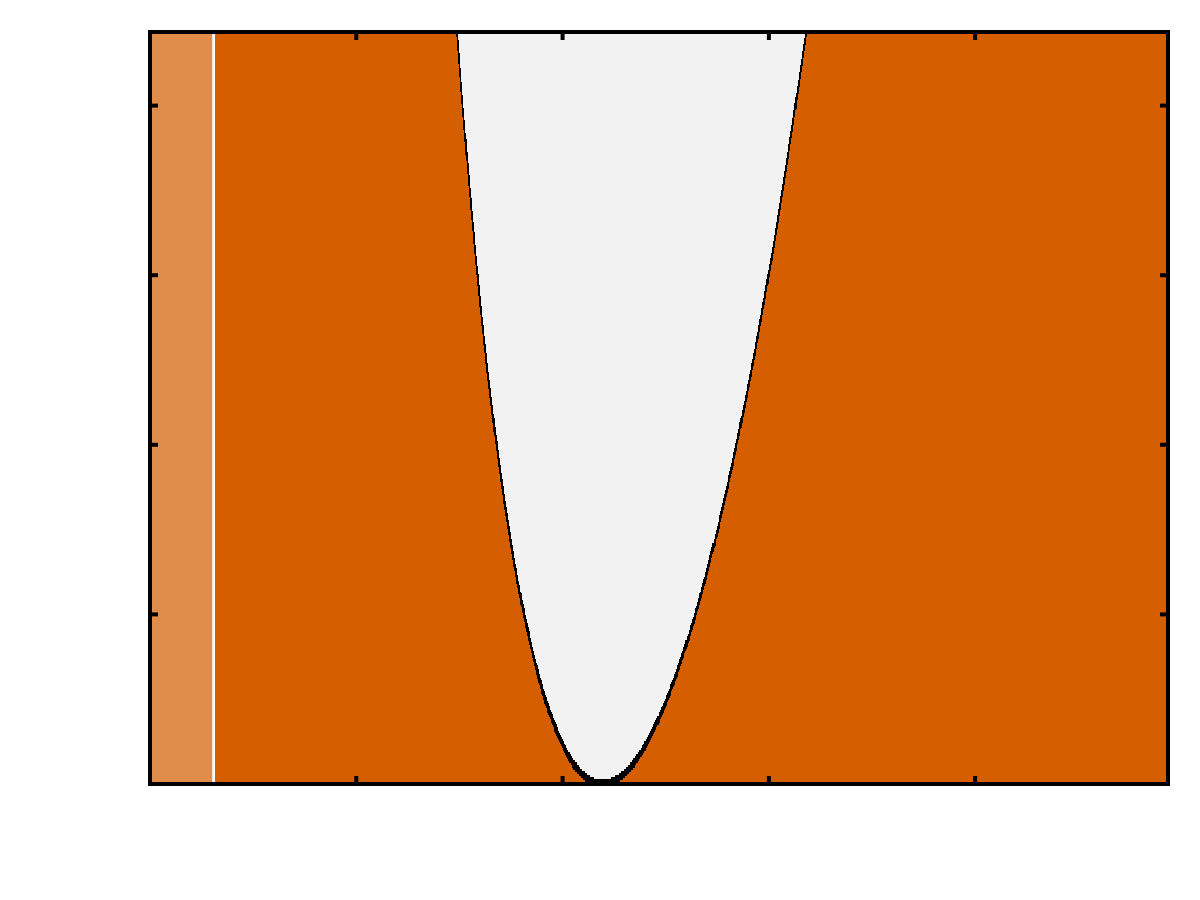}
    \put(9.4,6.5){\tiny 0.0}
    \put(26.7,6.5){\tiny 0.5}
    \put(44,6.5){\tiny 1.0}
    \put(61.2,6.5){\tiny 1.5}
    \put(78.5,6.5){\tiny 2.0}
     \put(54,3){\tiny $\kappa$}
    \put(3.5,9){\tiny 0.00}
    \put(3.5,23){\tiny 0.02}
    \put(3.5,37){\tiny 0.04}
    \put(3.5,51){\tiny 0.06}
    \put(3.5,65){\tiny 0.08}
    \put(-2,37){\tiny $\mu$}
    \put(18,35){\small $\Ll_{3}  \& \Ee_{3}$}
    \put(72.5,35){\small $\tilde{\Ll}_{3}  \& \tilde{\Ee}_{3}$}
\end{overpic}}\hspace{0.1cm}
\subfigure[$\mathcal{I}_{4}$]{
\begin{overpic}[width=0.47\textwidth]{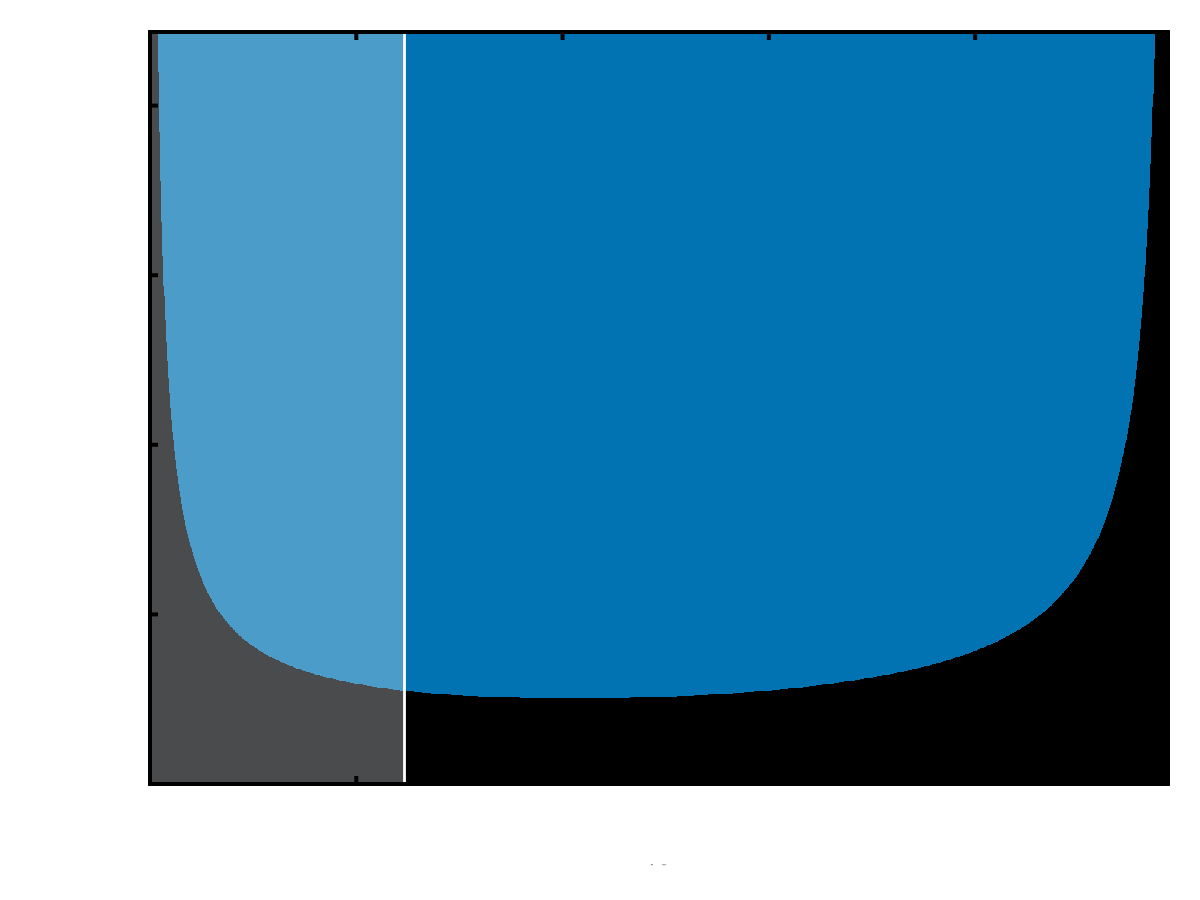}
    \put(9.4,6.5){\tiny 0.0}
    \put(26.7,6.5){\tiny 0.5}
    \put(44,6.5){\tiny 1.0}
    \put(61.2,6.5){\tiny 1.5}
    \put(78.5,6.5){\tiny 2.0}
     \put(54,3){\tiny $\kappa$}
    \put(3.5,9){\tiny 0.00}
    \put(3.5,23){\tiny 0.02}
    \put(3.5,37){\tiny 0.04}
    \put(3.5,51){\tiny 0.06}
    \put(3.5,65){\tiny 0.08}
    \put(-2,37){\tiny $\mu$}
    \put(50,40){\small $\Aa_{1}$}
\end{overpic}}
\end{minipage}
}
\captionsetup{width=0.75\textwidth}
\caption{CAPs for the existence of collinear RE in the parameter region $\Ps{\mu_{\Tt}}$. The different panels prove the existence of the collinear RE discussed in the text for each interval $\mathcal{I}_j$, $j=1,\dots,4$. The labelling of the corresponding RE is indicated in
the body of the figure. The shaded vertical stripe on the left side of each graph indicates the range of parameter values for which the existence results were proved analytically in Theorem~\ref{thm:numbercollinearRE}.}
\label{fig:CAPexistencecollinear}
\end{figure}

\subsection{Numerical investigations}
\label{sss:collinearexistencenumerics}

The different panels in Fig.~\ref{fig:CAPexistencecollinear}  show that the CAPs we implemented  
are inconclusive for $(\kappa,\mu)$ close to the boundaries of the parameter region $\Ps{\mu_{\Tt}}$, and  close to bifurcations. In principle, such problematic regions can be diminished by taking a finer grid. We did not pursue the most  optimal refinement. For completeness, the left panel in Fig.~\ref{fig:numexistencecollinear} presents numerical results for the existence of collinear relative equilibria in which we also illustrate the results for the negative curvature case (proved analytically  in \cite{MS17}). This reproduces 
some results of Kilin for $\kappa>0$ (see \cite[Figure 6]{Ki99}).  The right panel 
in Fig.~\ref{fig:numexistencecollinear} shows, in white, the region of  parameter region  $\Ps{\mu_{\Tt}}$ where the numerical results have been rigorously
proved above, either analytically or with a CAP.

\begin{figure}[H]
\centering
\makebox[\textwidth][c]{%
\begin{minipage}[c]{0.2\textwidth}
\centering
\begin{tabular}{@{}l@{\hspace{0.8em}}l@{}}
\multicolumn{2}{c}{}\\
\legenditem{E8DF34}{Six RE} \\
\legenditem{50B7E9}{Four RE} \\
\legenditem{80CEB8}{Three RE} \\
\legenditem{D35C05}{Two RE} \\
\end{tabular}
\end{minipage}
\hspace{0.01\textwidth}
\begin{minipage}[c]{0.35\textwidth}
\centering
\begin{overpic}[width=\textwidth]{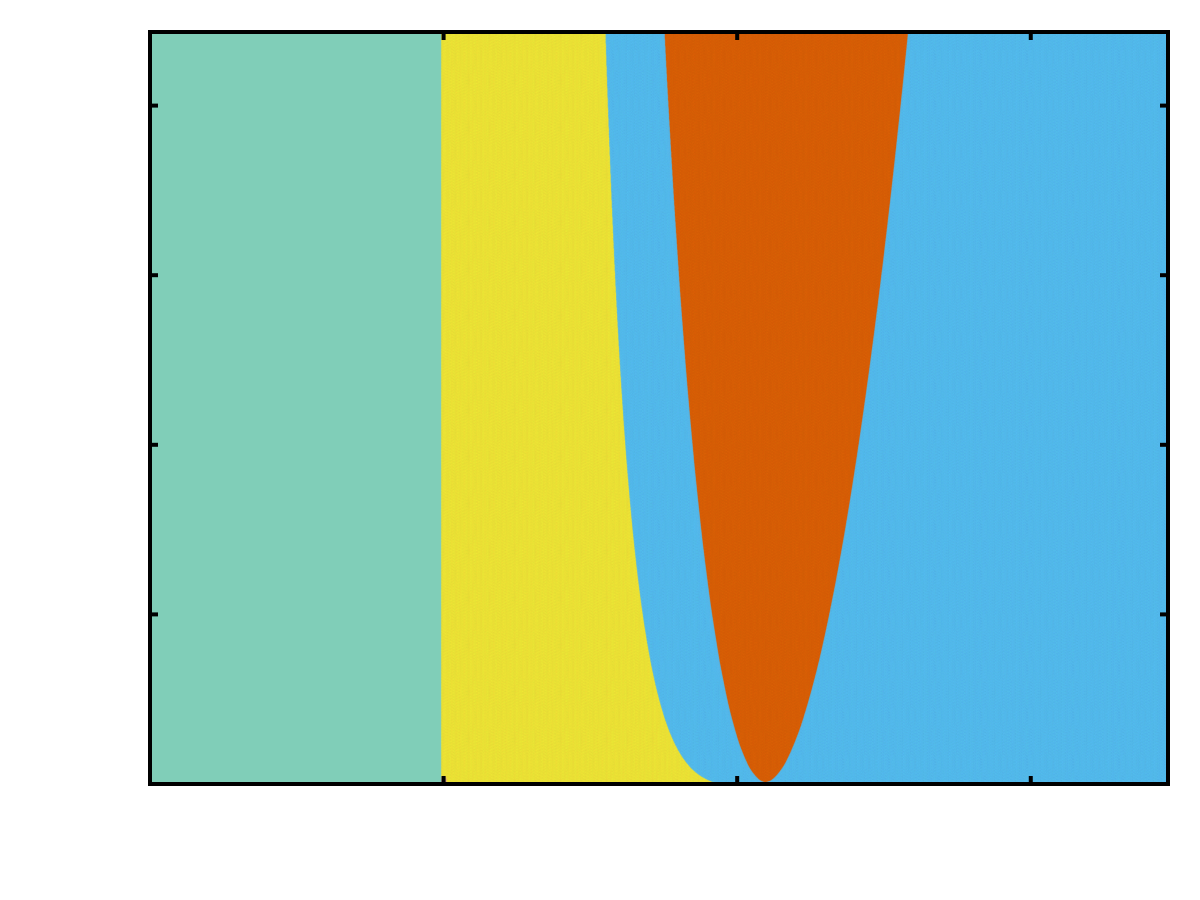}
    \put(49,40){\footnotesize{\makebox(0,0){$b_{1}$}}}
    \put(61,40){\footnotesize{\makebox(0,0){$b_{2}$}}}
    \put(75,40){\footnotesize{\makebox(0,0){$b_{3}$}}}
    \put(11,6.5){\tiny -1}
    \put(36,6.5){\tiny 0}
    \put(60.5,6.5){\tiny 1}
    \put(85,6.5){\tiny 2}
     \put(54,3){\tiny $\kappa$}
    \put(3.5,9){\tiny 0.00}
    \put(3.5,23){\tiny 0.02}
    \put(3.5,37){\tiny 0.04}
    \put(3.5,51){\tiny 0.06}
    \put(3.5,65){\tiny 0.08}
    \put(-2,37){\tiny $\mu$}
\end{overpic}
\end{minipage}
\hspace{0.25cm}
\begin{minipage}[c]{0.35\textwidth}
\centering
\begin{overpic}[width=\textwidth]{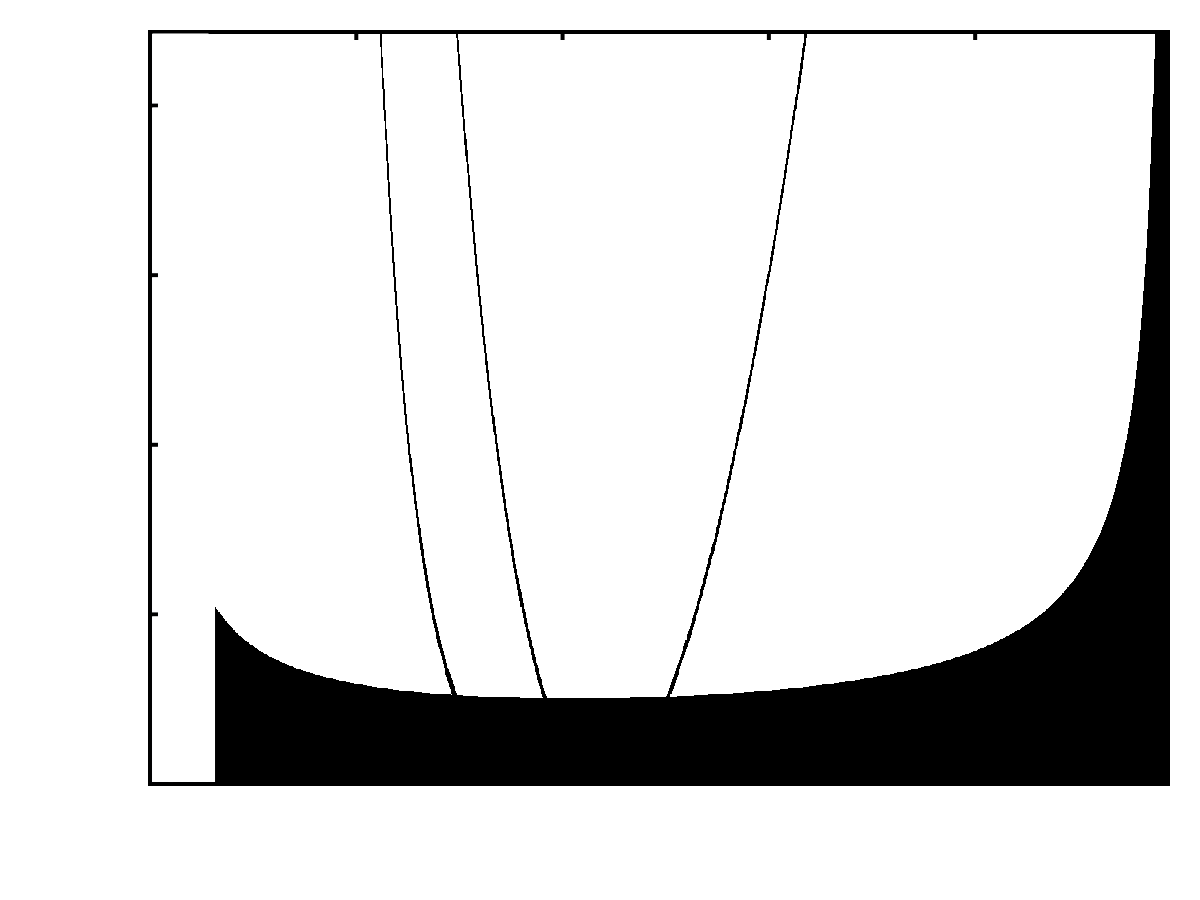}
    \put(11,6.5){\tiny -1}
    \put(36,6.5){\tiny 0}
    \put(60.5,6.5){\tiny 1}
    \put(85,6.5){\tiny 2}
     \put(54,3){\tiny $\kappa$}
    \put(3.5,9){\tiny 0.00}
    \put(3.5,23){\tiny 0.02}
    \put(3.5,37){\tiny 0.04}
    \put(3.5,51){\tiny 0.06}
    \put(3.5,65){\tiny 0.08}
    \put(-2,37){\tiny $\mu$}
\end{overpic}
\end{minipage}%
}
\captionsetup{width=0.75\textwidth}
\caption{Left panel: Numerical results for the existence of collinear relative equilibria in the $(\kappa,\mu)$ parameter space. 
Right panel: values of $(\kappa,\mu)\in \Ps{\mu_{\Tt}}$ (shown in white) for which the numerical results in the left panel have been rigorously proved, either analytically or using a CAP.
}
\label{fig:numexistencecollinear}
\end{figure}

Fig.~\ref{fig:numexistencecollinear} suggests the existence of three bifurcation curves, that will be denoted
 $b_j$, $j=1,2,3$, which may be parametrized as
 $$\mu\mapsto b_j(\mu)= (\kappa_j(\mu),\mu), \qquad  j=1,2,3,$$
separating the   different colored regions for $\kappa>0$ shown in the left panel. These curves are  clearly
defined in the right panel for values of $\mu$ which are not too close to zero. We label the curves from left to right, so that 
 $$\kappa_1(\mu)<\kappa_2(\mu)<\kappa_3(\mu).$$ Accordingly, and as indicated in the left panel of Fig.~\ref{fig:numexistencecollinear}, the curve $b_1$
 separates the yellow and blue regions; the curve
  $b_2$ separates the blue and orange regions, with the blue region lying to its left;  
   and the curve $b_3$ separates the orange and blue regions,  with the orange region lying to its left.

Fig.~\ref{fig:asympcollinear} shows a bifurcation diagram, obtained numerically,
 of the collinear RE
for a fixed value of $\mu\in (0,\mu_{\Tt})$ as a function of $\kappa$.
 In the figure, the vertical axis represents the signed  Riemannian distance $D$ to the center of rotation $C$ measured in trigonometric
 sense according to Fig.~\ref{fig:circlecollinearRE}. 
 In particular, the diagram  illustrates how the Riemannian 
distance from each of $\Ee_2$,  $\Ee_3$, and $\Aa_1$ to the center of 
rotation tends to infinity as $\kappa\to 0$, while  $\Ll_1$,  $\Ll_2$ and $\Ll_3$
converge to finite values which agree with the collinear Lagrange points of the planar problem. This behavior will be proved 
by asymptotic expansions in Sec.~\ref{sec:asymp}.
The diagram also describes stability properties which are considered in detail in Sec.~\ref{s:stability-collinear}.

\begin{figure}[H]
\centering
\makebox[\textwidth][c]{
\begin{minipage}[c]{0.28\textwidth}
\centering
\begin{tabular}{@{}l@{\hspace{0.6em}}l@{}}
\legenditem{0000FF}{Lyapunov stable} \\
\legenditem{008000}{Elliptic} \\
\legenditem{FFA500}{Center-saddle} \\
\end{tabular}
\end{minipage}
\hspace{0.05\textwidth}
\begin{minipage}[c]{0.5\textwidth}
\centering
\begin{overpic}[width=\textwidth]{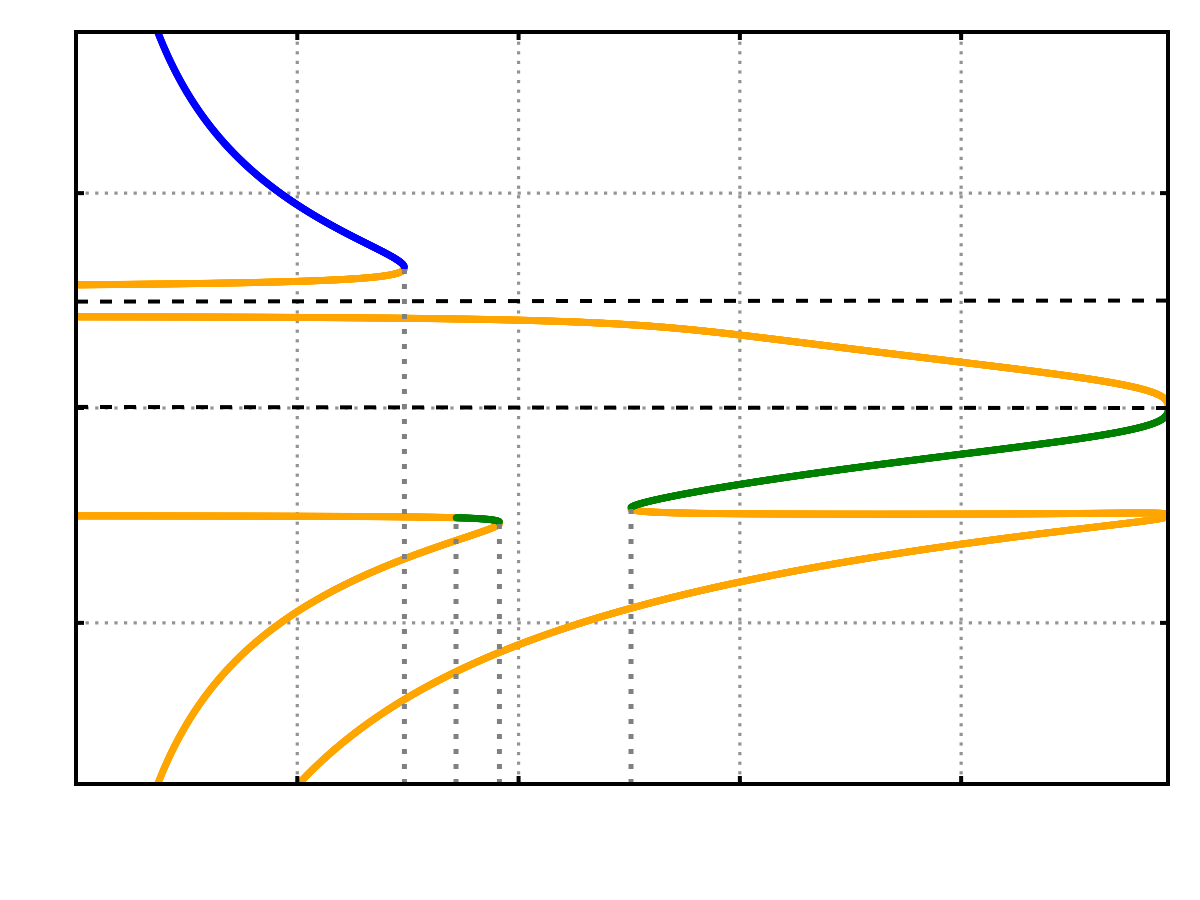}
    \put(32,8){\tiny $\kappa_{1}$}
    \put(36,8){\tiny $\kappa_{4}$}
    \put(39.75,8){\tiny $\kappa_{2}$}
    \put(50.5,8){\tiny $\kappa_{3}$}
     \put(4.5,7.5){\tiny 0.0}
    \put(23,7.5){\tiny 0.5}
    \put(59.8,7.5){\tiny 1.5}
    \put(78.2,7.5){\tiny 2.0}
     \put(52,4){\small $\kappa$}  
    \put(3,22.5){\tiny -2}
    \put(3,40){\tiny 0}
    \put(3,58){\tiny 2}
    \put(-3,40){\small $D$}
    \put(46,45){\footnotesize $\Ll_{1}$}
    \put(16,53){\footnotesize $\Ll_{2}$}
    \put(23,60){\footnotesize $\Ee_{2}$}
    \put(25,33.5){\footnotesize $\Ll_{3}$}
    \put(23,20){\footnotesize $\Ee_{3}$}
    \put(70,37.25){\footnotesize $\tilde{\Ll}_{3}$}
    \put(60,28){\footnotesize $\tilde{\Ee}_{3}$}
    \put(55,22){\footnotesize $\Aa_{1}$}
\end{overpic}
\end{minipage}
}
\captionsetup{width=0.75\textwidth}
\caption{Bifurcation diagram of collinear RE for $\mu\approx 0.01$. 
The black dotted lines indicate the signed distances from the primaries to the center of rotation.}
\label{fig:asympcollinear}
\end{figure}

\begin{remark}
The bifurcation 
curves $b_1$, $b_2$, $b_3$,   in the left panel of Fig.~\ref{fig:numexistencecollinear}, meet at the limiting point $\kappa =(\pi/3 )^2, \; \mu=0$, where an intricate bifurcation takes place. That this is actually the  point where the three curves meet can be shown as follows.  The bifurcation curve $b_1$ is determined by the condition that the function  $\theta\mapsto f_2(\theta; \kappa_1(\mu),\mu)$ admits a double root, which we denote $\theta=\theta_1^*(\mu)$. This yields the system
\begin{equation}
\label{eq:bif-f2}
    f_{2}\big(\theta_1^*(\mu);\kappa_1(\mu),\mu\big)=0, 
    \qquad 
    f_{2}'\big(\theta_1^*(\mu);\kappa_1(\mu),\mu\big)=0.
\end{equation}
In the limit $\mu\to 0$, this system can be solved explicitly, yielding $\kappa_1(0)=(\pi/3)^2$.  The same limiting value may be obtained working with the bifurcation curves  $b_2$ and $b_3$, which involve $f_3$. In the standard
spherical model for the space of constant positive curvature, this corresponds to  the primaries subtending an angle of $\pi/3$. This is consistent with figures 6 and 13 in \cite{Ki99}.
\end{remark}

\begin{remark}
\label{rmk:isoscelesRE}
Fujiwara and P\'erez-Chavela \cite{FuPC24} recently discovered a remarkable family of collinear isosceles relative equilibria in the full
 three--body problem on the sphere, for which the angle between the central body and each of the other two bodies in the spherical model is independent of the masses
and equals
\begin{equation}
\label{eq:tau0}
\tau_{0}=\frac{1}{2}\arccos\!\left(\frac{\sqrt{2}-1}{2}\right).
\end{equation}
  Our numerical investigations indicate that, when $\kappa=\tau_{0}^{2}$, the distance between $\Ll_{3}$ and $\mathbf{p}_{1}$ is $\tau_{0}$, independently of $\mu$. This suggests that the family of relative equilibria found in \cite{FuPC24} converges to $\Ll_{3}$ in the limit case where one of the masses tends to zero. 
 \end{remark}

\subsection{\texorpdfstring
  {Existence of collinear RE as a function of $\kappa>0$ for fixed $\mu\in(0,\mu_{\Tt})$.}
  {Existence of collinear RE as a function of kappa > 0 for fixed mu in (0, mu_Tt).}}

Below we summarize the rigorous results 
and numerical observations given in the previous sections,  in a single statement describing  
  the existence classification  of collinear relative equilibria as a function of $\kappa>0$ for fixed $\mu\in (0,\mu_{\Tt})$.
We formulate it as conjecture, since we do not have a rigorous proof that accounts for all the details (see Remark~\ref{rmk:collineardetails}). It may 
be useful to refer to Fig.~\ref{fig:circlecollinearRE} to recall the position of the RE with respect to the primaries,
their antipodal points, and to each other.

\begin{conjecture}[Existence classification of collinear RE as a function of $\kappa>0$  for  $0<\mu<\mu_{\Tt}$.]
\label{thm:classcollinear}
Let $\mu \in (0,\mu_{\Tt})$. Then, there exist three values $\kappa_{i}= \kappa_{i}(\mu) \in (0,\pi^{2}/4)$, with $\kappa_{1} < \kappa_{2} < \kappa_{3}$, such that the R3BP with mass ratio $\mu$ admits exactly the following number of collinear relative equilibria depending on the value of $\kappa>0$.
\begin{description}
    \item[For $\bm{0 < \kappa < \kappa_{1}}$:] six collinear RE, $\Ll_1$, $\Ll_2$, $\Ll_3$, $\Ee_2$, $\Ee_3$, $\Aa_1$;
    \item[For $\bm{\kappa_{1} < \kappa < \kappa_{2}}$] four collinear RE,  $\Ll_1$,  $\Ll_3$,  $\Ee_3$, $\Aa_1$;
    \item[For $\bm{\kappa_{2} < \kappa < \kappa_{3}}$:] two collinear RE,  $\Ll_1$, $\Aa_1$;
     \item[For $\bm{\kappa_{3} < \kappa < \pi^{2}/4}$:] four collinear RE, $\Ll_1$,  $\widetilde{\Ll}_3$,  $\widetilde{\Ee}_3$, $\Aa_1$;
\end{description}
Each value of $\kappa_i$, corresponds to a saddle-node bifurcation 
which leads to disappearance or appearance of a pair of RE. More precisely, 
\begin{description}
    \item[$\bullet$] $\Ll_2$ and  $\Ee_2$ coalesce and disappear at $\kappa=\kappa_1$;
    \item[$\bullet$] $\Ll_3$ and  $\Ee_3$ coalesce and disappear at $\kappa=\kappa_2$;
    \item[$\bullet$] $\widetilde{\Ll}_3$ and  $\widetilde{\Ee}_3$ emerge at $\kappa=\kappa_3$.     
\end{description}
Furthermore, the distance from each of $\Ee_2$, $\Ee_3$ and $\Aa_1$ to the center of rotation tends to infinity as $\kappa\to 0^+$,
whereas $\Ll_1$, $\Ll_2$, $\Ll_3$ converge to the standard Lagrange points of the planar problem in this limit.
\end{conjecture}

\begin{remark}
\label{rmk:collineardetails}
The details that remain to be proved in Conjecture~\ref{thm:classcollinear} are: 
\begin{enumerate}
\item[1.] Complete the existence proofs
for  parameter values $(\kappa,\mu)\in \Ps{\mu_{\Tt}}$ to which Theorem~\ref{thm:numbercollinearRE} does not apply and for which the CAPs in Fig.~\ref{fig:CAPexistencecollinear} are inconclusive;
\item[2.] Prove the existence of the 
bifurcation curves $b_1, b_2, b_3$.
\end{enumerate}
To address point 1, it is natural to increase the precision of the CAPs and formulate an extension of Theorem~\ref{thm:numbercollinearRE} that applies to the
the bottom and right  boundaries of  $\Ps{\mu_{\Tt}}$. Point 2 could be approached by applying CAPs to validate branches of solutions of
bifurcation systems such as system \eqref{eq:bif-f2}. We did not pursue either of these directions, as doing so 
would have substantially increased the length of the paper.
\end{remark}

\section{Classification of Triangular RE}
\label{sec:triangular}

As in the collinear case, the classification of triangular  RE for negative curvature was established by Mart\'inez and Sim\'o~\cite{MS17}.

\begin{proposition}\cite[Proposition 4.5]{MS17}[Classification of Triangular RE for Negative Curvature.]
\label{th:existence-neg-triangular}
For any mass ratio $\mu \in  (0,1)$ and curvature $\kappa < 0$, the R3BP with negative curvature admits exactly two triangular relative equilibria, that will be denoted by $\Ll_{4}$ and $\Ll_{5}$.
\end{proposition}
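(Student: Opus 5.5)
We prove that for $\kappa<0$ and $\mu\in(0,1)$ there are exactly two triangular RE, one in $(\mathbb{L}_1^2)_{>0}$ and its mirror image under the symmetry \eqref{eq:sym}. By that symmetry it suffices to show there is exactly one critical point of $\mathcal{V}_{\kappa,\mu}$ in $(\mathbb{L}_1^2)_{>0}$, and we work in the distance coordinates $(d_1,d_2)\in W_\kappa$.

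The first step is to write down the critical point equations. Differentiating \eqref{eq:potentialdistancesneg} gives, with $s_j=\sinh(\sqrt{-\kappa}q_j)$, $S=\sinh(\sqrt{-\kappa})$ and $P(d_1,d_2)=s_2\cosh d_1+s_1\cosh d_2$, the system
\begin{equation*}
\sinh^{3}(d_1)\, s_2\, P = \Gamma_{\kappa,\mu} S^2, \qquad \sinh^{3}(d_2)\, s_1\, P = \mu\,\Gamma_{\kappa,\mu} S^2 .
\end{equation*}
This is the hyperbolic analogue of \eqref{eq:partialdistancescoordspositive}, and the signs should be checked carefully. Dividing the two equations gives the balance condition $\sinh d_1=\Lambda^-\sinh d_2$, which is Proposition~\ref{prop:triangularconditionnegative}. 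The system is therefore equivalent to the balance condition together with the single equation $\sinh^3(d_1)\,s_2\,P(d_1,d_2)=\Gamma S^2$.

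The second step parametrizes the balance curve. Since $\sinh$ is an increasing bijection of $(0,\infty)$, the condition $\sinh d_1=\Lambda^-\sinh d_2$ defines a smooth curve $d_1=h(d_2)$ with $h$ strictly increasing, $h(0)=0$ and $h(d_2)\to\infty$. Along this curve consider
\begin{equation*}
g(d_2)=\sinh^3(h(d_2))\,\bigl(s_2\cosh h(d_2)+s_1\cosh d_2\bigr).
\end{equation*}
Every factor is positive and strictly increasing in $d_2$, so $g$ is strictly increasing. Moreover $g\to 0$ as $d_2\to 0^+$ and $g\to\infty$ as $d_2\to\infty$. Hence the equation $s_2\, g(d_2)=\Gamma S^2$ has exactly one solution $d_2^*$ on the part of the curve lying in $[0,\infty)^2$.

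The main obstacle is the third step: we must show that the intersection of the balance curve with the actual domain $W_\kappa$ is a single connected arc containing $d_2^*$. This domain is the triangle-inequality region $|d_1-d_2|<\sqrt{-\kappa}$, $d_1+d_2>\sqrt{-\kappa}$; the expression \eqref{eq:domaindistancecoordneg} should read $\sqrt{-\kappa}$ in place of $\sqrt{\kappa}$.
\begin{itemize}
\item Since $\Lambda^->1$, we have $d_1-d_2>0$ along the curve. Using the convexity of $\sinh$, we expect $d_1-d_2$ to increase from $0$ to the limit $\log\Lambda^-$. We then need $\log\Lambda^-<\sqrt{-\kappa}$. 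This should follow from \eqref{eq:alternativeLambda} together with $\cosh(\sqrt{-\kappa}q_2)/\cosh(\sqrt{-\kappa}q_1)<e^{\sqrt{-\kappa}(q_2-q_1)}<e^{3\sqrt{-\kappa}}$.
\item The constraint $d_1+d_2>\sqrt{-\kappa}$ fails on an initial segment near $d_2=0$, whose endpoint corresponds to the collinear point on $\mathcal{G}^-$ between the primaries. Beyond that endpoint the constraint holds, because $d_1+d_2$ is increasing along the curve.
\end{itemize}
So the admissible part of the curve is $d_2\in(d_2^0,\infty)$, and it remains to check that $d_2^*>d_2^0$. For this I would compare $s_2\, g(d_2^0)$ with $\Gamma S^2$. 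At $d_2^0$ the point lies on $\mathcal{G}^-$ between the primaries, and there the balance relation combined with the $\phi$-derivative of \eqref{eq:potentialpolarneg} should give a definite sign for the difference. Failing that, one can argue by contradiction: if $d_2^*\le d_2^0$, then the $\mathcal{I}_1$-type collinear equation \eqref{eq:collinearnegative} would acquire a solution incompatible with the uniqueness in Proposition~\ref{prop:ClassCollinearNeg}.

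This leaves exactly one critical point in $(\mathbb{L}_1^2)_{>0}$, hence exactly two triangular RE, $\Ll_4$ and $\Ll_5$.
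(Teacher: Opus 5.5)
Your reduction is sound as far as it goes. The system you obtain from \eqref{eq:potentialdistancesneg} is right, it is equivalent to the balance condition plus one equation, and $g$ is strictly increasing. Along the curve $d_1-d_2$ increases to $\log\Lambda^-<\sqrt{-\kappa}$, and the admissible part of the balance curve is exactly $d_2\in(d_2^0,\infty)$. But this only shows that $(\mathbb{L}_1^2)_{>0}$ contains \emph{at most} one triangular RE. Existence is exactly the inequality $s_2\,g(d_2^0)<\Gamma S^2$, and you leave it open. The paper gives no argument of its own here, only the citation to \cite{MS17}, so this step has to be supplied.

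Your fallback does not supply it. Along the balance curve, $\partial_{d_1}\mathcal{V}$ and $\partial_{d_2}\mathcal{V}$ both have the sign of $\Gamma S^2-s_2\sinh^3(d_1)P$. At the endpoint, the derivative along $\mathcal{G}^-$ is $\partial_{d_1}\mathcal{V}-\partial_{d_2}\mathcal{V}$, which has the same sign, because the extra factor $\bigl((\Lambda^-)^{-2}-\mu\bigr)/\sinh^2 d_2$ is positive. Hence $d_2^*\le d_2^0$ merely says that the unique maximum $\Ll_1$ of $\mathcal{V}$ on the segment between the primaries lies at, or on the $\mathbf{p}_1$-side of, the balance point. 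No second root of the collinear equation appears, so the uniqueness in Proposition~\ref{prop:ClassCollinearNeg} gives no contradiction. Compare the positive-curvature analogue $f_1(\vartheta_1)<0$, which needed genuine estimates. Note also that \eqref{eq:potentialpolarneg} and \eqref{eq:collinearnegative} carry the opposite sign on the $\Gamma$-term from \eqref{eq:potentialdistancesneg}; the latter is the attractive one, which you correctly used.

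The step closes with one identity. Write $a_j=\sqrt{-\kappa}\,q_j$ and $t=\sqrt{-\kappa}=a_1+a_2$. From $\sinh 2a_1=\mu\sinh 2a_2$ one checks that $\mu^2+2\mu\cosh 2t+1=\sinh^2(2t)/\sinh^2(2a_2)$, so $\Gamma=\sinh^2 t\,\sinh a_2\cosh a_2$. Since $P=S\,r_3$ by \eqref{eq:kappatrignegative}, your first equation becomes
\[ r_3\sinh^3 d_1=\cosh(a_2)\,\sinh^3 t, \]
and the right-hand side is the value of the left-hand side at $\mathbf{p}_2$. On the open segment between the primaries, $r_3=\cosh\theta<\cosh a_2$ (because $|\theta|<a_2$) and $d_1=\theta+a_1<t$. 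So the left-hand side is strictly smaller there. This is exactly $s_2\,g(d_2^0)<\Gamma S^2$, and hence $d_2^*>d_2^0$.

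Alternatively, the hyperbolic analogue of \eqref{eq:lambda2alt} shows that the missing inequality is equivalent to $\lambda_2(\Ll_1)>0$. That would follow from the center--saddle statement of Proposition~\ref{thm:stabilitycollinearneg}, but it imports a further result of \cite{MS17}.
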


This section  focuses on the positive curvature case 
$\kappa>0$. The numerical investigations of \cite{Ki99,MS17} indicate that
the number and type of RE depends in a subtle manner on the parameters $\kappa$, $\mu$.
To the best of our knowledge, no rigorous results are known away
from the limit cases $\mu=1$ and $\mu=0$. We shall establish Theorem \ref{thm:numbertriangularRE}, valid for all $\mu\in (0,1)$, which guarantees that,
for sufficiently small  $\kappa>0$, the only triangular RE are continuations of $\Ll_4$ and
$\Ll_5$. This result is complemented with CAPs in Sec.~\ref{ss:CAPExistenceTriangular}
implemented for $(\kappa,\mu)\in \Ps{\mu_{\Tt}}$.
Our analysis is based on the discussion in Sec.~\ref{sec:triangularbalanced} 
in which necessary and sufficient conditions for existence of triangular RE
were given in Proposition~\ref{prop:triangularconditionpositive} in terms of the distance coordinates $(d_1,d_2)$.

\subsection{Analytical Results}
\label{ss:AnalyticalExistenceTriangular}

By their non-degeneracy, the classical Lagrange points $\Ll_4$, $\Ll_5$ for $\kappa=0$, persist
 as RE of the curved R3BP for small $\kappa>0$. The following result asserts
that, independently of the mass ratio $\mu$, for sufficiently small $\kappa>0$, there do not exist any other triangular RE
apart from $\Ll_4$, $\Ll_5$. 
 \begin{theorem}
\label{thm:numbertriangularRE}
Let $\mu\in(0,1)$ be fixed. There exists  $\kappa^{*}>0$ such that the curved R3BP admits exactly two triangular RE for $0<\kappa<\kappa^{*}$, which will be denoted by $\Ll_4$, $\Ll_5$.
\end{theorem}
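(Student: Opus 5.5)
The plan is to work with the necessary and sufficient conditions \eqref{eq:partialdistancescoordspositive} of Proposition~\ref{prop:triangularconditionpositive} on the hemisphere $(\mathbb{S}^2_1)_{>0}$. By the symmetry \eqref{eq:sym} it suffices to show that, for $0<\kappa<\kappa^*$, there is exactly one critical point of $\mathcal{V}_{\kappa,\mu}$ there. That critical point is then the continuation of $\Ll_4$, which persists because it is non-degenerate. Write $s_i=\sin(\sqrt{\kappa}q_i)=\sqrt{\kappa}\,q_i+O(\kappa^{3/2})$ and $\Gamma=\kappa^{3/2}/(1+\mu)+O(\kappa^{5/2})$. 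The first equation in \eqref{eq:partialdistancescoordspositive} then reads
\[
\sin^3(d_1)\,\big(q_2\cos d_1+q_1\cos d_2\big)=\frac{\kappa^{3/2}}{1+\mu}\,\big(1+O(\kappa)\big),
\]
and the second equation is replaced by the balance condition $\sin d_1=\Lambda\sin d_2$ with $\Lambda=1+O(\kappa)$. Since $|d_1-d_2|<\sqrt{\kappa}$ on $W_\kappa$, the left-hand side can be small for three reasons only: $\sin d_1$ is small with $d_1$ near $0$, $\sin d_1$ is small with $d_1$ near $\pi$, or the factor $q_2\cos d_1+q_1\cos d_2$ is small. Up to $O(\kappa)$ this factor is the height $-r_3$, i.e.\ the point is near the equator. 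I will split $W_\kappa$ into these three regions, together with a compact remainder on which the left-hand side is bounded below by a positive constant and so has no solutions once $\kappa$ is small.

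\emph{Near region ($d_1\lesssim C\sqrt{\kappa}$).} Pass to the true distances $\delta_i=d_i/\sqrt{\kappa}$ (Remark~\ref{rmk:distances}). After dividing by the appropriate power of $\kappa$, the system converges in $C^1$, uniformly on compact sets of the rescaled domain, to the planar system in distance coordinates. The planar system has the unique solution $\delta_1=\delta_2=1$ in the upper half plane, and it is non-degenerate. One must also rule out solutions escaping to the boundary of this rescaled region. Near collisions ($\delta_i\to0$) this follows because the gravitational term dominates. For large $\delta_i$, up to the overlap scale with the other regions, I will use the planar asymptotics to show that the equation cannot hold. The implicit function theorem then gives exactly one solution in this region.

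\emph{Antipodal region ($d_1$ near $\pi$).} Here $\cos d_1\approx\cos d_2\approx-1$, so the factor $q_2\cos d_1+q_1\cos d_2$ is negative while the right-hand side is positive. This region contains no solutions.

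\emph{Equatorial region.} This is the main obstacle. At leading order the two primaries act as a single mass $(1+\mu)$ at $C$. The potential then depends only on $r_3$, and its critical set on the hemisphere is a whole latitude circle, which is degenerate. So the leading-order analysis cannot decide the question, and I will use the next-order expansion. Write $\mathbf{r}$ in the coordinates \eqref{eq:positionM+} and expand $\phi(\langle\mathbf r,\mathbf p_1\rangle)+\mu\phi(\langle\mathbf r,\mathbf p_2\rangle)$ in powers of $\sqrt{\kappa}$ around the monopole at $C$. The dipole term vanishes because $C$ is the centre of rotation: this is exactly \eqref{eq:rotationRE} at first order. The first non-symmetric term is therefore a quadrupole correction proportional to $\kappa\,\Gamma\, q_1q_2\,(\ldots)\cos^2\theta$-type angular dependence, which breaks the rotational symmetry. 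Critical points of $\mathcal{V}$ must satisfy $\partial_{\text{lat}}\mathcal V=0$, which pins the point to an $O(\kappa)$-neighbourhood of the degenerate circle. They must also satisfy $\partial_{\text{long}}\mathcal V=0$, which is governed by the quadrupole. I will show that the longitudinal derivative of the quadrupole term is a nonzero multiple of $\sin(2\cdot\text{longitude})$, with a remainder of strictly smaller order. Its zeros on the circle occur only at longitudes $0,\pi$ and at $\pm\pi/2$. The longitudes $0,\pi$ lie on $\mathcal{G}^+$ and are the collinear points $\Ee_2$, $\Ee_3$. For the longitudes $\pm\pi/2$ I expect the higher-order terms or the exact balance condition $\sin d_1=\Lambda\sin d_2$ with $\Lambda<1$ (which forces $d_1<d_2$ at the correct order) to exclude them. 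If this exclusion fails I would revisit the expansion. Once it holds, there are no triangular RE near the equator for small $\kappa$.

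Finally, I will make the overlaps between the three regions uniform in $\kappa$ for the fixed $\mu$, and choose $\kappa^*$ accordingly.
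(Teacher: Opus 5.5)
Your near-region, antipodal-region and compact-remainder arguments are essentially fine. The equatorial region, which you rightly call the crux, is not settled. You leave the exclusion of the longitudes $\pm\pi/2$ as an expectation, and the expansion behind it is wrong. (Also a small normalization slip: the right-hand side of your displayed equation should be $\kappa^{3/2}(1+O(\kappa))$, not $\kappa^{3/2}/(1+\mu)$. Otherwise the rescaled limit would not give $\delta_1=1$.)

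\textbf{Why the multipole step fails.} Since $\phi$ is odd, $\phi''(0)=0$. Hence the second-order (quadrupole) term of $\phi(\langle\mathbf r,\mathbf p_1\rangle)+\mu\phi(\langle\mathbf r,\mathbf p_2\rangle)$ carries a factor $r_3$. On the degenerate circle $r_3=O(\kappa^{3/2})$, so the quadrupole contributes only $O(\Gamma\kappa^{5/2})$ to the longitudinal derivative of $\mathcal V_{\kappa,\mu}$. The dipole, by contrast, does not vanish exactly: by \eqref{eq:constantLambda}, $\mu\sin(\sqrt\kappa q_2)-\sin(\sqrt\kappa q_1)=\sin(\sqrt\kappa q_1)(\Lambda^{-3}-1)=O(\kappa^{3/2})$. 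Combined with the cubic term $\tfrac12\lambda^3$ of $\phi$, it gives at $\mathbf r=(\cos\psi,\sin\psi,0)$ the longitude dependence $\frac{\mu(1-\mu)}{2(1+\mu)^2}\kappa^{3/2}(\cos\psi+\cos^3\psi)$. So the term you treat as a remainder dominates the quadrupole by a factor $\kappa^{-1}$. The leading longitudinal derivative is proportional to $\sin\psi\,(1+3\cos^2\psi)$, not $\sin 2\psi$, and the candidates at $\psi=\pm\pi/2$ are artifacts. A corrected perturbative argument might work, but it would need uniform control on a fixed neighbourhood of the equator, and it is not what you propose.

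\textbf{How to close the gap.} Apply the balance condition, which you mention only in passing, to the whole equatorial region. By \eqref{eq:kappatrigpositive} one has exactly $\sin(\sqrt\kappa q_2)\cos d_1+\sin(\sqrt\kappa q_1)\cos d_2=-r_3\sin\sqrt\kappa$. Note that $-r_3$ is not just an approximation up to $O(\kappa)$; an additive $O(\kappa)$ error would swamp everything near the equator. The first equation in \eqref{eq:partialdistancescoordspositive} therefore reads $-r_3\sin^3 d_1=\Gamma\sin\sqrt\kappa/\sin(\sqrt\kappa q_2)=\kappa^{3/2}(1+o(1))$. Near the equator this forces $|r_3|=O(\kappa^{3/2})$.

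Now write balance as $\cos^2 d_1-\Lambda^2\cos^2 d_2=1-\Lambda^2=\frac{1-\mu}{3(1+\mu)}\kappa+O(\kappa^2)$ and expand with $|r_3|\le\delta\sqrt\kappa$. This gives $2r_1r_3=\sqrt\kappa\,(q_2-q_1)(r_1^2+\tfrac13)+O(\kappa)$, hence $|r_3|\ge\frac{1-\mu}{\sqrt3(1+\mu)}\sqrt\kappa\,(1+o(1))$. For fixed $\mu\in(0,1)$ the two bounds are incompatible once $\kappa$ is small, and no multipole analysis is needed.

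\textbf{Comparison with the paper.} This is essentially what the paper does, only globally. It restricts to the balanced curve from the outset, using Lemma~\ref{lemma:one_component} to select the component with $d_1<\pi/2$. It parametrizes that component by $d_2=x\in[a_{\kappa,\mu},b_{\kappa,\mu}]$ and shows that the single function $g_{\kappa,\mu}$ has exactly one root. The ingredients are expansions at $a_{\kappa,\mu}$, $\sqrt\kappa$ and $2\sqrt\kappa$, monotonicity on $[a_{\kappa,\mu},2\sqrt\kappa]$, and positivity on $[2\sqrt\kappa,b_{\kappa,\mu}]$. The degenerate latitude circle never enters.
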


 A limitation of the theorem is that it does not give an estimate on the value of $\kappa^*$
nor an indication on how it may depend on $\mu$. 

Below we describe the strategy of the proof of Theorem~\ref{thm:numbertriangularRE}, as it is also central to the 
 implementation of CAPs in Sec.~\ref{ss:CAPExistenceTriangular}. The details of the proof are presented in 
Sec.~\ref{app:existencetriangular} of Appendix~\ref{app:analytic-proofs}.

Recall, from Proposition \ref{prop:triangularconditionpositive}, that triangular relative equilibria correspond to solutions $(d_{1},d_{2})\in W_{\kappa}$ of Eqs.~\eqref{eq:partialdistancescoordspositive} which we rewrite here as
\begin{equation}
\label{eq:partialdistancescoordspositive2}
\begin{gathered}
    \sin^{3}(d_{1}) \sin(\!\sqrt{\kappa}q_{2}) \big(\sin(\!\sqrt{\kappa}q_{2})\cos(d_{1})+\sin(\!\sqrt{\kappa}q_{1})\cos(d_{2})\big) = \Gamma\sin^{2}(\sqrt{\kappa}), \\
    \sin^{3}(d_{2}) \sin(\!\sqrt{\kappa}q_{1}) \big(\sin(\!\sqrt{\kappa}q_{2})\cos(d_{1})+\sin(\!\sqrt{\kappa}q_{1})\cos(d_{2})\big) = \mu \, \Gamma\sin^{2}(\sqrt{\kappa}),
\end{gathered}
\end{equation}
where the dependence of $\Gamma$, $q_1$, $q_2$ on $\kappa$ and $\mu$ is omitted.
As mentioned in Sec. \ref{ss:existenceCollinear}, each solution $(d_1,d_2)$ of this system,
actually corresponds to two distinct triangular RE
since these arise in symmetric pairs by
the reflectional symmetry~\eqref{eq:sym} of the amended potential $\mathcal{V}_{\kappa,\mu}$.

We also recall from Proposition \ref{prop:triangularconditionpositive},
that every solution  $(d_{1},d_{2})\in W_{\kappa}$ of
Eq.~\eqref{eq:partialdistancescoordspositive2} determines a triangular-balanced-configuration. Namely, by 
Eq.~\eqref{eq:mubalancecondition}, any such  $(d_{1},d_{2})\in W_{\kappa}$  satisfies
\begin{equation}
\label{eq:mubalanced2}
    \sin(d_{1})=\Lambda\sin(d_{2}),
\end{equation}
where we have  abbreviated $\Lambda=\Lambda_{\kappa,\mu}^{+}$.

Condition \eqref{eq:mubalanced2} alone defines a curve,   denoted by $\tilde{\gamma}_{\kappa,\mu}\subset W_k$, 
consisting of  two connected components. These components can be distinguished by the conditions $d_{1}<\pi/2$ and $d_{1}>\pi/2$ (see Fig. \ref{fig:graphpos}). 
The following lemma shows that any solution of Eqs.~\eqref{eq:partialdistancescoordspositive2} necessarily 
lies on the component satisfying $d_{1}<\pi/2$.

\begin{lemma}
\label{lemma:one_component}
If $(d_1,d_2)\in W_\kappa$ is a solution of Eqs.\eqref{eq:partialdistancescoordspositive2}, then 
 $d_1<\frac{\pi}{2}$. 
In particular, all triangular RE belong to the connected component of the curve 
$\tilde{\gamma}_{\kappa,\mu}\subset W_k$ of triangular-balanced-configurations satisfying  $d_{1}<\pi/2$.
\end{lemma}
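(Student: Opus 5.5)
The plan is to argue by signs directly from the first equation of \eqref{eq:partialdistancescoordspositive2}. Write
\[
S(d_1,d_2):=\sin(\sqrt{\kappa}q_{2})\cos(d_{1})+\sin(\sqrt{\kappa}q_{1})\cos(d_{2}).
\]
The right-hand side $\Gamma\sin^2(\sqrt\kappa)$ is strictly positive for $0<\kappa<\pi^2/4$, because $\Gamma>0$ by \eqref{eq:auxGammaq}. The first step is to observe that $\sin(d_1)>0$ and $\sin(d_2)>0$ on $W_\kappa$: the definition \eqref{eq:domaindistancecoordpos} forces $0<d_1,d_2<\pi$. Likewise $\sin(\sqrt\kappa q_j)>0$, since $0<\sqrt\kappa q_j<\pi/2$. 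It follows that any solution satisfies $S(d_1,d_2)>0$.

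The second step is to show that $S>0$ is incompatible with $d_1\ge\pi/2$. The cleanest route is to use the balance relation \eqref{eq:mubalanced2}, $\sin d_1=\Lambda\sin d_2$, together with $0<\Lambda<1$ from \eqref{eq:inequalityLambda}. This gives $\sin d_2>\sin d_1$.

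Now suppose $d_1\ge\pi/2$. Then $\cos d_1\le 0$. For $S>0$ we would need $\cos d_2>0$, that is $d_2<\pi/2$. From $\sin d_2>\sin d_1=\sin(\pi-d_1)$, with both $d_2$ and $\pi-d_1$ in $(0,\pi/2]$, we get $d_2>\pi-d_1$, and hence $\cos d_2<\cos(\pi-d_1)=-\cos d_1$. Substituting,
\[
S<-\cos d_1\,\bigl(\sin(\sqrt\kappa q_2)-\sin(\sqrt\kappa q_1)\bigr).
\]
Since $q_1<q_2$ and both angles lie in $(0,\pi/2)$, the bracket is positive, so this bound is $\ge 0$ and gives no contradiction. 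The naive sign argument therefore fails, and a sharper estimate is required.

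To sharpen it, I will use the exact value of $\Lambda$ rather than only $\Lambda<1$. Set $a=\sin(\sqrt\kappa q_1)$ and $b=\sin(\sqrt\kappa q_2)$, so $S=b\cos d_1+a\cos d_2$ with $a<b$. Write $c=-\cos d_1\ge0$ and $s=\sin d_1=\sqrt{1-c^2}$. Then $\sin d_2=s/\Lambda$ and $\cos d_2=\sqrt{1-s^2/\Lambda^2}$. The condition $S>0$ becomes $a\sqrt{1-s^2/\Lambda^2}>bc$, which after squaring reads
\[
a^2\bigl(1-(1-c^2)/\Lambda^2\bigr)>b^2c^2 .
\]
Rearranging,
\[
c^2\Bigl(\frac{a^2}{\Lambda^2}-b^2\Bigr)>a^2\Bigl(\frac1{\Lambda^2}-1\Bigr)>0 .
\]
A contradiction would follow from $a^2/\Lambda^2\le b^2$, i.e.\ $a\le\Lambda b$, i.e.\ $a^3\le\Lambda^3b^3=a b^3/(\mu b)$ by \eqref{eq:constantLambda}. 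This is equivalent to $\mu a^2\le b^2$, which holds because $\mu<1$ and $a<b$. So $d_1\ge\pi/2$ is impossible.

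The step I expect to be the main obstacle is precisely this last one. The crude bound $\Lambda<1$ is insufficient, and one must extract the inequality $a\le\Lambda b$ from the explicit formula for $\Lambda$; this is exactly where $\mu<1$ and $q_1<q_2$ enter. I will also double-check the boundary case $d_1=\pi/2$ separately (there $c=0$, and the displayed inequality fails directly), and confirm that squaring was legitimate since both sides were nonnegative. The "in particular" part then follows immediately from Proposition~\ref{prop:triangularconditionpositive}.
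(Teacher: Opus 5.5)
Your final argument is correct, but your claim that the ``naive'' sign argument fails comes from a sign slip. From $\cos d_2<-\cos d_1$ and $a>0$ you get $a\cos d_2<-a\cos d_1$, hence
\[
S=b\cos d_1+a\cos d_2<(b-a)\cos d_1\le 0 ,
\]
and not $S<-\cos d_1\,(b-a)$. Since $b>a$ and $\cos d_1\le 0$, this contradicts $S>0$ immediately.

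That corrected argument is essentially the paper's proof. There, $\sin d_1=\Lambda\sin d_2$ with $0<\Lambda<1$ gives $\cos^2 d_1=1-\Lambda^2\sin^2 d_2>\cos^2 d_2$, so $\cos d_2<|\cos d_1|$. Combined with $\sin(\sqrt{\kappa}q_1)<\sin(\sqrt{\kappa}q_2)$, this forces $S<0$ whenever $\cos d_1\le 0$. So the crude bound $\Lambda<1$ is in fact sufficient.

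Your squared inequality, using the explicit formula $\Lambda^3=a/(\mu b)$ and $\mu<1$, is valid: $a\le\Lambda b$ is indeed equivalent to $\mu a^2\le b^2$. But it is an unnecessary detour. Even within that route, $c^2\le 1$ and $a<b$ already give the contradiction without $a\le\Lambda b$: if $a^2/\Lambda^2-b^2>0$, the left side is at most $a^2/\Lambda^2-b^2<a^2/\Lambda^2-a^2$, which is the right side.

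The positivity step ($S>0$ from the first equation on $W_\kappa$) and the ``in particular'' part are fine as written.
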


\begin{proof}
Considering that $0<\Lambda<1$ (see Eq.~\eqref{eq:inequalityLambda}) and $0<q_1<q_2<\frac{\pi}{2}$ we have
\begin{equation*}
    \frac{\cos(d_{2})}{\sqrt{\cos^{2}(d_{2})+(1-\Lambda^{2})\sin^{2}(d_{2})}}<1<\frac{\sin(\!\sqrt{\kappa}q_{2})}{\sin(\!\sqrt{\kappa}q_{1})}.
\end{equation*}

Suppose now that $(d_{1},d_{2})\in W_\kappa$ satisfies Eqs.~\eqref{eq:partialdistancescoordspositive2} 
so in particular Eq.~\eqref{eq:mubalanced2} holds. Then, 
a simple manipulation of the denominator of the left hand side of the above inequality, using Eq.~\eqref{eq:mubalanced2}, yields
\begin{equation*}
    \frac{\cos(d_{2})}{|\cos(d_{1})|}=\frac{\cos(d_{2})}{\sqrt{1-\sin^2(d_1)}}<\frac{\sin(\!\sqrt{\kappa}q_{2})}{\sin(\!\sqrt{\kappa}q_{1})}.
\end{equation*}

Suppose now that   $d_1\geq\frac{\pi}{2}$. Then  $\cos(d_{1})\leq0$, and the above inequality implies
\begin{equation*}
    \sin(\!\sqrt{\kappa}q_{2})\cos(d_{1})+\sin(\!\sqrt{\kappa}q_{1})\cos(d_{2})\leq 0.
\end{equation*}
This quantity appears as a factor in the left-hand side of each of the equations in the system~\eqref{eq:partialdistancescoordspositive2}. All other terms in these equations, both on the left- and right-hand sides, are positive. Therefore, the system  cannot admit a solution. 

This shows that  $d_1<\frac{\pi}{2}$ is a necessary condition for $(d_1,d_2)\in W_\kappa$ to be a solution of Eqs.~\eqref{eq:partialdistancescoordspositive2}.
\end{proof}

Lemma~\ref{lemma:one_component} allows us to 
solve Eq.~\eqref{eq:mubalanced2} for $d_1$ using the appropriate branch of $\arcsin$. This yields
a parametrization  of  the connected component of the curve of triangular-balanced-configurations satisfying $d_1<\frac{\pi}{2}$. Namely, these configurations may be parametrized by 
\begin{equation} 
\label{eq:auxiliardistancecoords}
    d_{1}=\arcsin\left(\Lambda\sin(x)\right), \quad d_{2}=x, 
\end{equation} 
where the real parameter $x$ belongs to the interval $\mathcal{I}_{\kappa,\mu}=  [a_{\kappa,\mu}, b_{\kappa,\mu}]$, with
\begin{equation}
\label{eq:auxintervaltriangular}
      a_{\kappa,\mu}= \arctan\!\left( \frac{\sin(\!\sqrt{\kappa})}{\Lambda + \cos(\!\sqrt{\kappa})} \right), \qquad b_{\kappa,\mu}= \pi-\arctan\!\left(\frac{\sin(\!\sqrt{\kappa})}{\Lambda -\cos(\!\sqrt{\kappa})}\right),
\end{equation}
to guarantee that $(d_1,d_2)\in W_k$.
Using this
parametrization, it is an exercise to show that 
the solutions $(d_1,d_2)\in W_k$ of Eqs.~\eqref{eq:partialdistancescoordspositive2} are in one-to-one correspondence with the roots of the real valued function $g_{\kappa,\mu} : \mathcal{I}_{\kappa,\mu} \to \mathbb{R}$, depending
parametrically on $(\kappa,\mu)$, and defined by
\begin{equation}
\label{eq:auxfunctiontriangular}
    g_{\kappa,\mu}(x)
    = \Lambda^{3}\sin(\sqrt{\kappa}q_{2})\sin^{3}(x)
    \left[\sin(\sqrt{\kappa}q_{1})\cos(x)
        + \sin(\sqrt{\kappa}q_{2})\sqrt{1-\Lambda^{2}\sin^{2}(x)}\right]
    - \Gamma\sin(\sqrt{\kappa}).
\end{equation}

The proof of Theorem~\ref{thm:numbertriangularRE}   consists of showing that 
 $g_{\kappa,\mu}$ admits exactly one root in $\mathcal{I}_{\kappa,\mu}$ when $\kappa$ is sufficiently small
 independently of the value of $\mu\in (0,1)$.
 The details are presented in  Sec.~\ref{app:existencetriangular} of Appendix~\ref{app:analytic-proofs}.

\subsection{Computer-Assisted Proofs}
\label{ss:CAPExistenceTriangular}

We complement the results of Theorem \ref{thm:numbertriangularRE} with  the CAPs
illustrated in the left panel of Fig.~\ref{fig:CAPexistencetriangular}. These  give 
the precise number of triangular RE for some specific
parameter values $(\kappa, \mu)\in  \Ps{\muT}$. These
CAPs were implemented with the approach described in Appendix~\ref{App:CAPs} to determine the 
exact number of roots of the function $g_{\kappa,\mu}:\mathcal{I}_{\kappa,\mu}\to \R$ by formula \eqref{eq:auxfunctiontriangular}.

The  left panel in Fig.~\ref{fig:CAPexistencetriangular} contains a 
a broad region of parameter values $(\kappa,\mu)\in \Ps{\muT}$, colored in orange, for which we prove the existence
of exactly  two triangular RE. In view of Theorem~\ref{thm:numbertriangularRE}, we conjecture
that these are continuations of $\Ll_4$ and  $\Ll_5$.  We do not have a rigorous proof of this because 
 the threshold value $\kappa^{*}$ in Theorem~\ref{thm:numbertriangularRE} is not explicitly known
and our CAPs have  only  been validated   for values   $\kappa\geq 5.075\times 10^{-5}$  (they are generally  inconclusive below
this value for $\mu\in (0,\mu_{\Tt})$).

We also  observe a 
thin blue stripe   in which 
we prove existence of exactly four triangular RE.
We will denote the additional two triangular RE 
$\Tt_1$ and $\Tt_2$. The figure  also shows a large
region of parameter values for which no triangular RE
exist.

The CAPs in the left panel of Fig.~\ref{fig:CAPexistencetriangular} are inconclusive for parameter values $(\kappa,\mu)$ near the boundary of $\Ps{\mu_{\Tt}}$ and near bifurcation curves where the number of triangular RE changes. The numerical investigations shown in the right panel of Fig.~\ref{fig:CAPexistencetriangular} suggest that the CAPs nevertheless capture the main qualitative features of the
classification. The numerical analysis suggests the existence of two bifurcation curves, which we denote
$b_4$ and $b_5$, and may be parametrized as
 $$\mu\mapsto b_j(\mu)= (\kappa_j(\mu),\mu), \qquad  j=4,5.$$
Our convention is that $\kappa_4(\mu)<\kappa_5(\mu)$, so $b_4$ is the curve separating the orange and blue regions,
while $b_5$ separates the orange and gray regions, as indicated in the figure.

\begin{figure}[H]
\centering
\makebox[\textwidth][c]{%
\begin{minipage}[c]{0.2\textwidth}
\centering
\begin{tabular}{@{}l@{\hspace{0.7em}}l@{}}
\multicolumn{2}{c}{}\\
\swatch{50B7E9} & \raisebox{-0.75ex}{\texttt{Four RE}} \\
\swatch{D35C05} & \raisebox{-0.75ex}{\texttt{Two RE}} \\
\swatch{BDBDBD} & \raisebox{-0.75ex}{\texttt{Zero RE}} \\
\legenditem{1A1A1A}{Inconclusive CAP}
\end{tabular}
\end{minipage}
\hspace{0.03\textwidth}
\begin{minipage}[c]{0.33\textwidth}
\centering
\begin{overpic}[width=\textwidth]{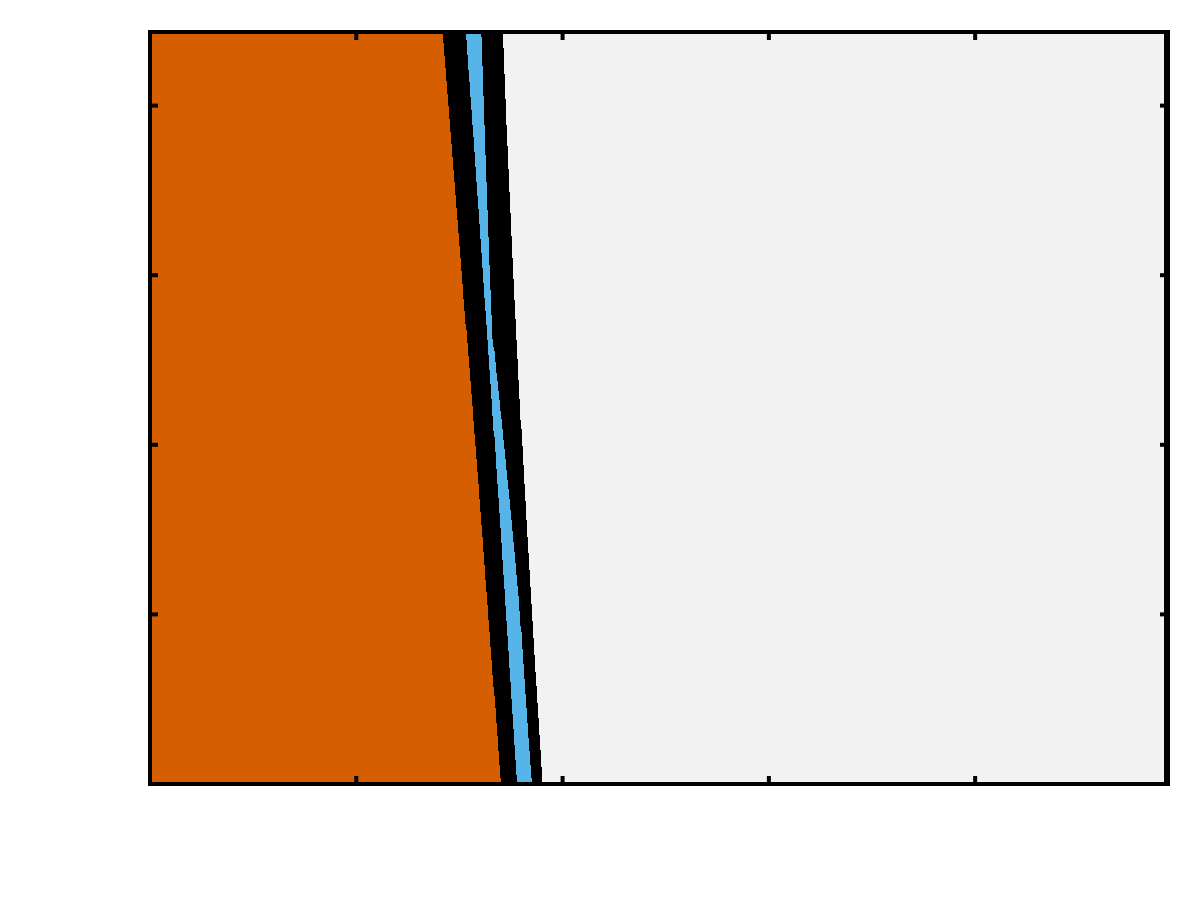}
    \put(9.4,6.5){\tiny 0.0}
    \put(26.7,6.5){\tiny 0.5}
    \put(44,6.5){\tiny 1.0}
    \put(61.2,6.5){\tiny 1.5}
    \put(78.5,6.5){\tiny 2.0}
     \put(54,3){\tiny $\kappa$}
    \put(3.5,9){\tiny 0.00}
    \put(3.5,23){\tiny 0.02}
    \put(3.5,37){\tiny 0.04}
    \put(3.5,51){\tiny 0.06}
    \put(3.5,65){\tiny 0.08}
    \put(-2,37){\tiny $\mu$}
\end{overpic}
\end{minipage}
\hspace{0.25cm}
\begin{minipage}[c]{0.33\textwidth}
\centering
\begin{overpic}[width=\textwidth]{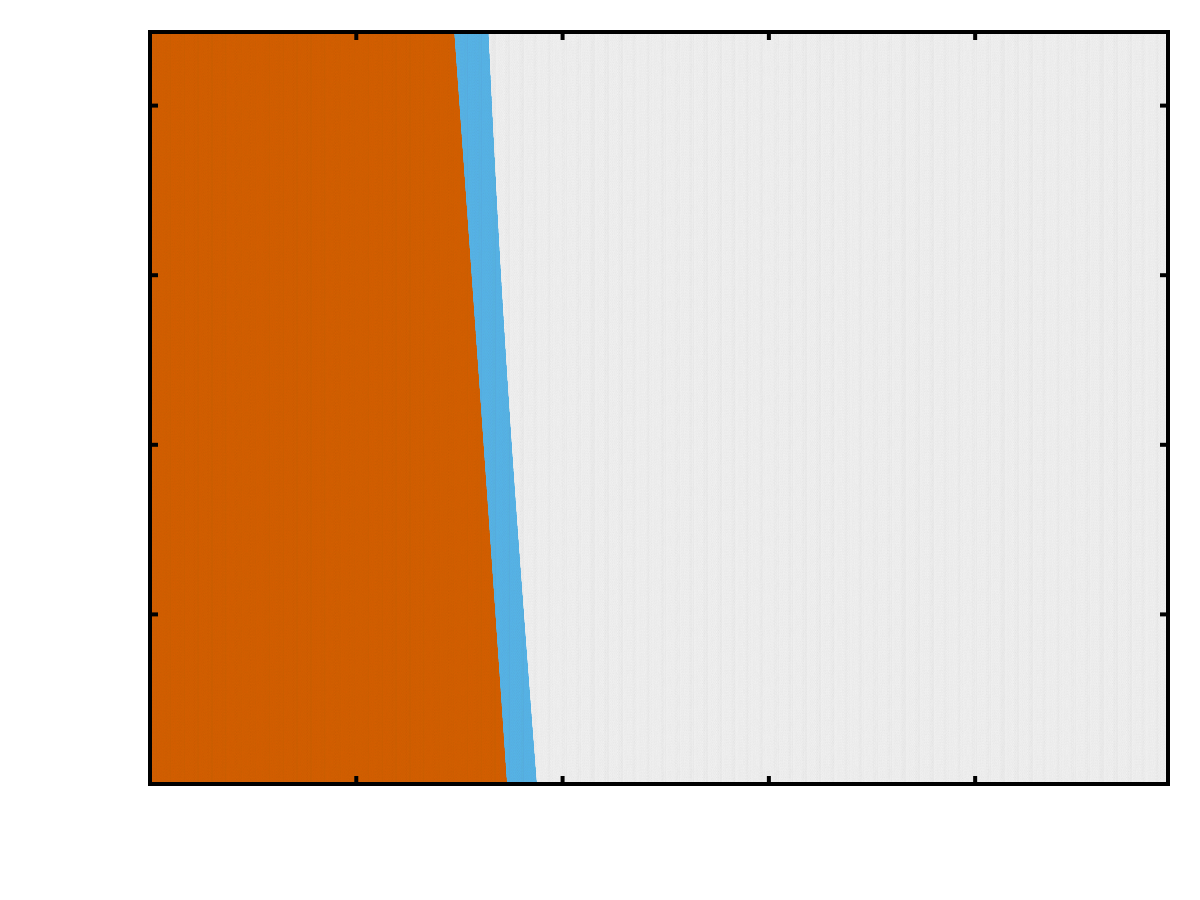}
    \put(34,35){\footnotesize{$b_4$}}
    \put(44,35){\footnotesize{$b_5$}}
    \put(9.4,6.5){\tiny 0.0}
    \put(26.7,6.5){\tiny 0.5}
    \put(44,6.5){\tiny 1.0}
    \put(61.2,6.5){\tiny 1.5}
    \put(78.5,6.5){\tiny 2.0}
     \put(54,3){\tiny $\kappa$}
    \put(3.5,9){\tiny 0.00}
    \put(3.5,23){\tiny 0.02}
    \put(3.5,37){\tiny 0.04}
    \put(3.5,51){\tiny 0.06}
    \put(3.5,65){\tiny 0.08}
    \put(-2,37){\tiny $\mu$}
\end{overpic}
\end{minipage}
}
\captionsetup{width=0.75\textwidth}
\caption{Number of triangular RE as a function of the parameters $(\kappa,\mu)\in \Ps{\muT}$. Left panel: computer-assisted proof. Right panel: numerical estimates.}
\label{fig:CAPexistencetriangular}
\end{figure}

Fig.~\ref{fig:asymptriangular} shows a bifurcation diagram, obtained numerically,
 of triangular RE
for fixed values of $\mu\in (0,\mu_{\Tt})$ as a function of $\kappa$.
 In the figure, the vertical axis represents Riemannian distance, $D^{(2)}$, to the second primary located at $\mathbf{p}_{2}$.\footnote{In Sec.~\ref{sec:asymp}, the same quantity is denoted by $D_{\Ll_{4},{\bf p}_2}$.}
 The diagram  illustrates how the 
distance from each of $\Ll_4$ and $\Ll_5$ to $\mathbf{p}_{2}$ is convergent as  $\kappa\to 0$. Moreover, the limit value appears
to be $1$ suggesting convergence to the classical  equilateral Lagrange points of the planar problem. This behavior will be proved in
 Sec.~\ref{sec:asymp} using asymptotic analysis.
The stability properties indicated in the diagram will be discussed in Sec.~\ref{sec:stabilitytriangular}. The range of $\kappa$ in the diagram
extends to negative values of $\kappa$ to illustrate the destabilizing effect of negative curvature; see the discussion in Section~\ref{ss:stabilityL4L5} and Remark~\ref{rmk:kappasnegative}.

\begin{figure}[H]
\centering
\makebox[\textwidth][c]{%
\begin{minipage}[c]{0.22\textwidth}
\centering
\begin{tabular}{@{}l@{\hspace{0.8em}}l@{}}
\legenditem{008000}{Elliptic} \\
\legenditem{FFA500}{Center-saddle} \\
\legenditem{FF0000}{Complex saddle} \\
\end{tabular}
\end{minipage}
\hspace{0.03\textwidth}
\begin{minipage}[c]{0.33\textwidth}
\centering
\subfigure[$\mu\approx 0.01$]{
\begin{overpic}[width=\textwidth]{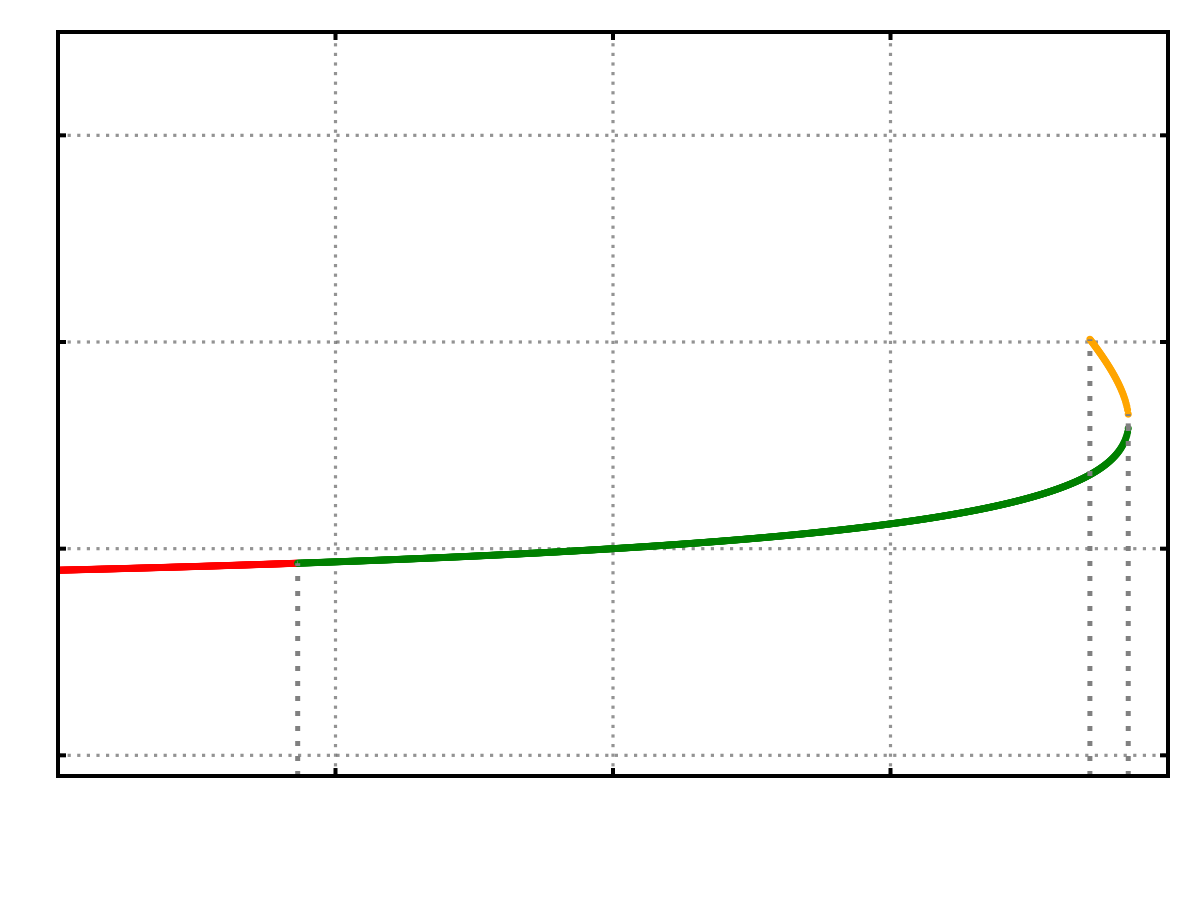}
    \put(22,8){\tiny $\kappa_{s}$}
    \put(88,8){\tiny $\kappa_{4}$}
    \put(93,8){\tiny $\kappa_{5}$}
    \put(2,11){\tiny 0}
    \put(2,28.2){\tiny 1}
    \put(2,45.5){\tiny 2}
    \put(2,62.7){\tiny 3}
    \put(-7,40){\tiny $D^{(2)}$}
    \put(48.5,7){\tiny 0.0}
     \put(71.5,7){\tiny 0.5}
     \put(50,3){\tiny $\kappa$}
     \put(48,24){\footnotesize $\Ll_{4,5}$}
     \put(87,48){\footnotesize $\Tt_{1,2}$}
\end{overpic}
}
\end{minipage}
\hspace{0.5cm}
\begin{minipage}[c]{0.33\textwidth}
\centering
\subfigure[$\mu\approx 0.06$]{
\begin{overpic}[width=\textwidth]{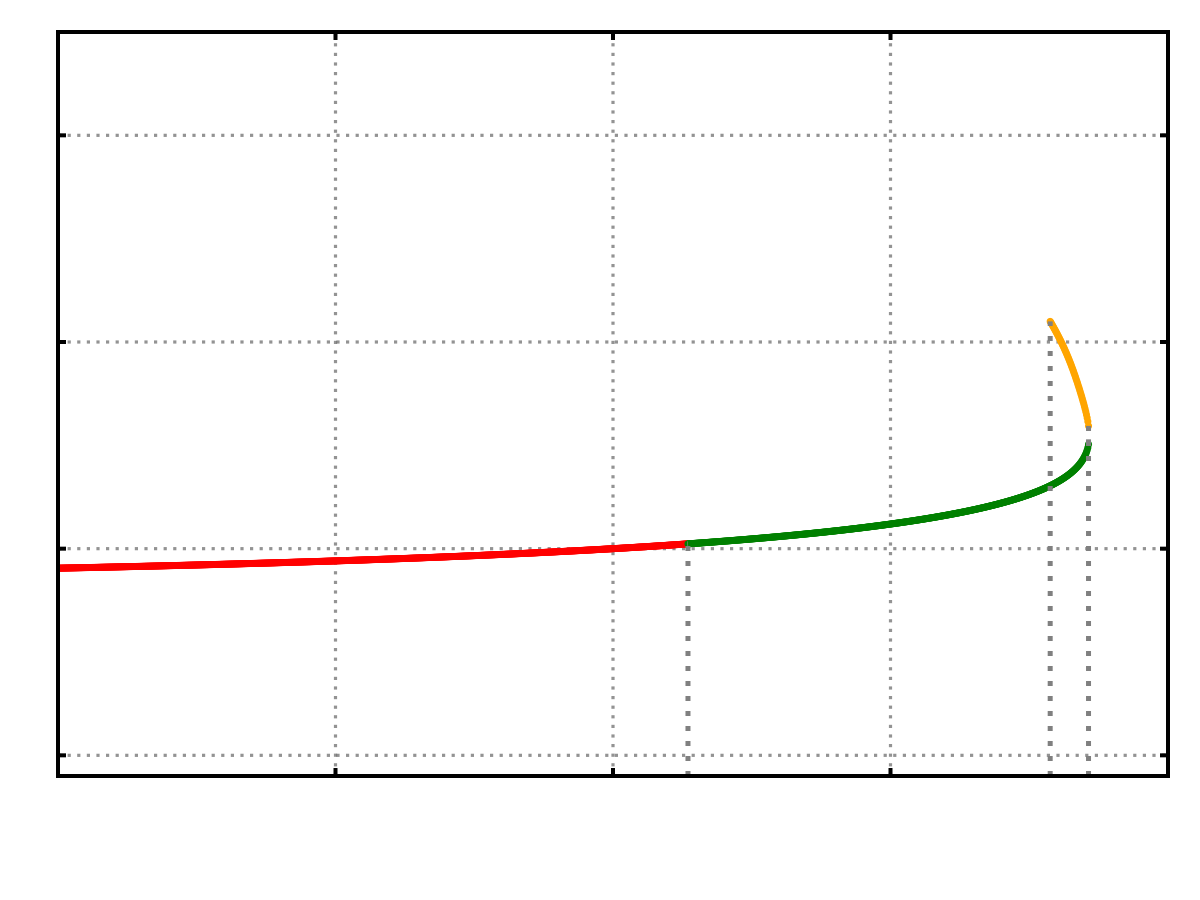}
    \put(54.5,8){\tiny $\kappa_{s}$}
    \put(84,8){\tiny $\kappa_{4}$}
    \put(89,8){\tiny $\kappa_{5}$}
     \put(2,11){\tiny 0}
    \put(2,28.2){\tiny 1}
    \put(2,45.5){\tiny 2}
    \put(2,62.7){\tiny 3}
    \put(-7,40){\tiny $D^{(2)}$}
     \put(24,7){\tiny -0.5}
    \put(48.5,7){\tiny 0.0}
     \put(71.5,7){\tiny 0.5}
     \put(50,3){\tiny $\kappa$}
     \put(48,24){\footnotesize $\Ll_{4,5}$}
     \put(83,50){\footnotesize $\Tt_{1,2}$}
\end{overpic}
}
\end{minipage}%
}
\captionsetup{width=0.75\textwidth}
\caption{Triangular relative equilibria. The value $D^{(2)}$ is the Riemannian distance to the primary $q_{2}$. In case (a), one observes how negative curvature has a destabilizing effect, whereas in case (b), positive curvature contributes to stabilization.}
\label{fig:asymptriangular}
\end{figure}

\subsection{\texorpdfstring
  {Existence of triangular RE as a function of $\kappa>0$ for fixed $\mu\in(0,\mu_{\Tt})$.}
  {Existence of triangular RE as a function of kappa > 0 for fixed mu in (0, mu_Tt).}}

The following statement summarizes our observations in the previous section regarding the classification of triangular RE for parameter values $(\kappa,\mu)\in \Ps{\muT}$.

\begin{conjecture}[Existence classification of triangular  RE as a function of $\kappa>0$  for  $0<\mu<\mu_{\Tt}$]
\label{thm:classtriangular}
Let $\mu \in (0,\mu_{\Tt})$. Then, there exist two   values of $\kappa$, denoted $\kappa_{4}(\mu)< \kappa_{5}(\mu) \in (0,\pi^{2}/4)$, such that the curved R3BP with mass ratio $\mu$ admits exactly the following number of triangular relative equilibria depending on the value of $\kappa>0$.
\begin{description}
    \item[For $\bm{0 < \kappa < \kappa_{4}}$:] two triangular RE, denoted $\Ll_4$ and $\Ll_5$, which are continuations of the classical Lagrange points. 
    \item[For $\bm{\kappa_{4} < \kappa < \kappa_{5}}$:] four triangular RE. Two of them are continuations of $\Ll_4$ and $\Ll_5$,  and the other two are  
    $\Tt_1$ and $\Tt_2$.
    \item[For $\bm{\kappa_{5} < \kappa < \pi^{2}/4}$:] zero triangular RE.
\end{description}
Moreover, $\Tt_1$ and $\Tt_2$ are created when $\Ll_3$ undergoes a subcritical pitchfork bifurcation at $\kappa=\kappa_4$. On the other hand, a simultaneous saddle-node bifurcation occurs at $\kappa=\kappa_5$, at which the pairs $(\Ll_4,\Tt_1)$ and $(\Ll_5,\Tt_2)$ collide and disappear.
\end{conjecture}

We formulated the classification above as a conjecture because we do not have a rigorous proof of the existence of the bifurcation curves 
$b_4$ and $b_5$, and because the CAPs in the left panel of Fig.~\ref{fig:CAPexistencetriangular} are inconclusive for some parameter values $(\kappa,\mu)\in \Ps{\muT}$. Our conjecture that $\Tt_1$ and $\Tt_2$ emerge from a pitchfork bifurcation of $\Ll_3$ is supported by the stability analysis of $\Ll_3$ presented in Section~\ref{s:stability-collinear}. We did not attempt to establish these details rigorously, in order to keep the paper at 
a reasonable length.

\section{Stability of collinear RE}
\label{s:stability-collinear}

We now turn to the study of the stability of  collinear RE.  We begin by recalling the results
for negative curvature established in \cite{MS17}.

\begin{proposition}[Stability of collinear RE, negative curvature {\cite[Proposition 6.1]{MS17}}]
\label{thm:stabilitycollinearneg}
For any $\kappa<0$ and any mass ratio $\mu\in (0,1)$, the collinear equilibrium points $\Ll_{1}$, $\Ll_{2}$ and $\Ll_{3}$ are center--saddles.
\end{proposition}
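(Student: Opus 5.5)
The claim is that for $\kappa<0$ and every $\mu\in(0,1)$, each of $\Ll_1,\Ll_2,\Ll_3$ (located as in Proposition~\ref{prop:ClassCollinearNeg}) is a center--saddle. I would prove it by working in the hyperbolic analogue of Proposition~\ref{prop:stabilitycollpos}. The characterization of center--saddles in Theorem~\ref{thm:criticalpointsV} reduces the claim to showing that each of these points is a saddle of $\mathcal{V}_{\kappa,\mu}$. Since $\mathcal{V}_{\kappa,\mu}$ is even in $\phi$ in the coordinates \eqref{eq:positionM-}, the Hessian at $(0,\theta_0)$ is diagonal. So it suffices to show that $\lambda_1:=\partial^2_\theta\mathcal{V}|_{\phi=0}$ and $\lambda_2:=\partial^2_\phi\mathcal{V}|_{\phi=0}$ have opposite signs.

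\textbf{Computing $\lambda_1$ and $\lambda_2$.} I would differentiate \eqref{eq:potentialpolarneg}. Set $s_1=\sinh(\theta_0+\sqrt{-\kappa}q_1)$, $s_2=\sinh(\theta_0-\sqrt{-\kappa}q_2)$, and write $c_j$ for the corresponding $\cosh$. The expected expressions are
\begin{equation*}
\lambda_1=-\cosh(2\theta_0)+2\Gamma\Big(\frac{c_1}{|s_1|^3}+\mu\frac{c_2}{|s_2|^3}\Big),\qquad
\lambda_2=-\cosh^2\theta_0-\Gamma\Big(\frac{c_1}{|s_1|^3}+\mu\frac{c_2}{|s_2|^3}\Big).
\end{equation*}
The sign conventions should be checked against the computation. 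For $\lambda_2$ I would use the same trick as in the proof of \eqref{eq:lambda2alt}: combine $\lambda_2$ with the equilibrium condition \eqref{eq:collinearnegative}, multiplying by $\sinh\theta_0$ and $\cosh\theta_0$ respectively and subtracting. This gives
\begin{equation*}
\sinh\theta_0\,\lambda_2=\Gamma\Big(\frac{\sinh(\sqrt{-\kappa}q_1)}{|s_1|^3}\,\epsilon_1-\mu\frac{\sinh(\sqrt{-\kappa}q_2)}{|s_2|^3}\,\epsilon_2\Big),
\end{equation*}
up to signs $\epsilon_j$ that depend on the interval. At $\theta_0=0$ this is impossible, since \eqref{eq:collinearnegative} fails there; so $\sinh\theta_0\neq0$ and we may divide by it.

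\textbf{Sign of $\lambda_1$.} The reduced function $h(\theta)=\partial_\theta\mathcal{V}(0,\theta)$ tends to $\pm\infty$ at the singularities and at $\pm\infty$, and it has exactly one zero in each interval by Proposition~\ref{prop:ClassCollinearNeg}. Its crossing direction at that zero therefore determines the sign of $\lambda_1=h'(\theta_0)$, provided the zero is simple. I would establish simplicity from the term $2\Gamma c_j/|s_j|^3$ together with the equilibrium relation. Heuristically, $\lambda_1>0$: the potential is a minimum along the geodesic near $\Ll_1$, and the $\theta$-direction is stable at $\Ll_2$ and $\Ll_3$ as in the planar case. I would confirm this from the boundary limits of $h$ on each of the three intervals: $h\to-\infty$ at the left end and $h\to+\infty$ at the right end, so a unique zero is a crossing from negative to positive. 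The limits follow from the singular terms $\pm\Gamma/s_j^2$ and from the $\frac12\sinh 2\theta$ term at infinity.

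\textbf{Sign of $\lambda_2$ (main obstacle).} I then need $\lambda_2<0$ at all three points. For $\Ll_1$ this is the same as the planar argument. Using the equilibrium condition, the gravitational pulls of the two primaries balance the centrifugal term, and the $\phi$-curvature comes out as $-\cosh^2\theta_0$ minus positive attraction terms. On the middle interval $c_j>0$ and both terms in the bracket are positive, so the displayed formula for $\lambda_2$ already gives $\lambda_2<0$ once the signs are fixed.

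For $\Ll_2$ ($\theta_0>\sqrt{-\kappa}q_2$) and $\Ll_3$, I would use the rewritten form. For $\Ll_2$, the bracket is positive if
\begin{equation*}
\frac{\sinh(\sqrt{-\kappa}q_1)}{s_1^3}<\mu\frac{\sinh(\sqrt{-\kappa}q_2)}{s_2^3},
\end{equation*}
which holds because $s_1>s_2>0$ and, by \eqref{eq:rotationRE} in hyperbolic form, $\sinh(\sqrt{-\kappa}q_1)/(\mu\sinh(\sqrt{-\kappa}q_2))=(\Lambda^-)^3>1$ only mildly. The needed inequality compares $(\Lambda^-)^3$ with $(s_1/s_2)^3$. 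I expect this comparison to be the delicate step. I would attack it with the equilibrium equation: $s_1$ must be large relative to $s_2$ because $\Ll_2$ sits close to $\mathbf{p}_2$. Concretely, \eqref{eq:collinearnegative} gives $\Gamma\mu/s_2^2<\Gamma/s_1^2+\frac12\sinh 2\theta_0$ as a bound, to be combined with $\sinh$ addition formulas. $\Ll_3$ is handled symmetrically with the roles of the primaries reversed; there the inequality $\Lambda^->1$ already points in the right direction.
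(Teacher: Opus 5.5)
The paper gives no proof of its own here; it quotes \cite[Proposition 6.1]{MS17}, so your proposal has to stand alone. Its architecture is right: the Hessian is diagonal by $\phi$-parity, you show $\lambda_1,\lambda_2$ have opposite signs, and you rewrite $\lambda_2$ via the equilibrium relation as in \eqref{eq:lambda2alt}. However, your signs are reversed, and this breaks the argument exactly where the work lies.

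Write $a=\sqrt{-\kappa}\,q_1$ and $b=\sqrt{-\kappa}\,q_2$. The $\Gamma$-term in \eqref{eq:potentialpolarneg} and \eqref{eq:collinearnegative}, which you differentiated, has the wrong sign for an attractive potential. Taken literally it makes the primaries repulsive, and then \eqref{eq:collinearnegative} has no root for $\theta_0>b$ or $\theta_0<-a$, since all its terms have the same sign there. That contradicts Proposition~\ref{prop:ClassCollinearNeg}, so your claimed limits of $h$ on the outer intervals are false. With the attractive potential (compare \eqref{eq:potentialdistancesneg} and \eqref{eq:collinearpositive}) one gets
\begin{equation*}
\lambda_1=-\cosh(2\theta_0)-2\Gamma\Big(\frac{c_1}{|s_1|^3}+\mu\frac{c_2}{|s_2|^3}\Big)<0,\qquad
\lambda_2=-\cosh^2\theta_0+\Gamma\Big(\frac{c_1}{|s_1|^3}+\mu\frac{c_2}{|s_2|^3}\Big).
\end{equation*}
So $h$ runs from $+\infty$ to $-\infty$ on every interval. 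Your heuristic is backwards: the collinear points are maxima along the geodesic, as in the planar case, because $\mathcal{V}\to-\infty$ at the primaries. The whole content of the proposition is therefore $\lambda_2>0$. Your treatment of $\Ll_1$ (``$-\cosh^2\theta_0$ minus positive attraction terms'') then collapses, because the two contributions compete and a genuine estimate is needed. Your own formula already signalled the problem: $c_j=\cosh(\cdot)>0$ on every interval, so your $\lambda_2$ would be negative everywhere and your ``main obstacle'' would be vacuous.

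The missing estimates are as follows. With correct signs, $\sinh\theta_0\,\lambda_2=\Gamma\big(\mu\sinh b/|s_2|^3-\sinh a/|s_1|^3\big)$.
\begin{itemize}
\item For $\Ll_3$ your remark is right: $\sinh\theta_0<0$ and $|s_1|<|s_2|<\Lambda^-|s_2|$ give $\lambda_2>0$.
\item For $\Ll_2$, the comparison $s_1/s_2>\Lambda^-$ that you leave open is elementary and needs nothing from the equilibrium equation. For $\theta>b$ we have $\sinh(\theta+a)-e^{a+b}\sinh(\theta-b)=\tfrac12e^{-\theta}(e^{a+2b}-e^{-a})>0$. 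By \eqref{eq:alternativeLambda}, $(\Lambda^-)^3=\cosh b/\cosh a\le e^{b-a}<e^{3(a+b)}$.
\item For $\Ll_1$, the hyperbolic form $\sinh 2a=\mu\sinh 2b$ of \eqref{eq:rotationRE} gives $\Gamma=\sinh^2(a+b)\sinh b\cosh b$ and $\mu\Gamma=\sinh^2(a+b)\sinh a\cosh a$. Since $x\mapsto\cosh x/\sinh^3x$ is decreasing and both $\theta_0+a$ and $b-\theta_0$ lie in $(0,a+b)$, this yields $\lambda_2>\cosh(a+b)\cosh(b-a)-\cosh^2\theta_0=\cosh^2b+\sinh^2a-\cosh^2\theta_0>0$, because $|\theta_0|<b$.
\end{itemize}
Together with $\lambda_1<0$, Theorem~\ref{thm:criticalpointsV}(vi) then finishes the proof.
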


In this section, we focus on the case of positive curvature, where the behavior depends subtly on the parameters $\kappa$ and $\mu$. 
 To our knowledge, no rigorous results have been established for $\mu\in(0,1)$. Our analysis relies on the application of the stability criteria in 
Proposition~\ref{prop:stabilitycollpos} and allow us to determine the stability of the collinear RE whose existence was shown in Theorem~\ref{thm:numbercollinearRE}
and via the CAPs illustrated in Fig.~\ref{fig:CAPexistencecollinear}. We first present Theorem~\ref{thm:stabilitycollinear}, which is proved
analytically in Sec.~\ref{app:stability-collinear} of Appendix~\ref{app:analytic-proofs}, 
that establishes the stability properties of all collinear RE for small values of $\kappa$ and $\mu$.  These results are extended
to larger parameter values using CAPs in Subsection \ref{ss:CAPsCollinear}. Our conjectured stability classification is presented  in Sec.~\ref{conj:stability-collinear}.

\subsection{Analytical results}
\label{ss:analytical-collinear}

The following theorem  establishes
the stability properties of the collinear RE for small positive $\kappa$, $\mu$.

\begin{theorem}[Stability of collinear RE, positive curvature]
\label{thm:stabilitycollinear}
Let $\mu\in \left (0,\frac{1}{10} \right )$ be a fixed mass ratio. The following stability statements hold 
for the RE whose existence was established in Theorem~\ref{thm:numbercollinearRE}.
\begin{enumerate}
\item For $0<\kappa<\pi^{2}/16$,  $\Ll_{1}$ and $\Aa_{1}$ are center--saddles.
\item For $0<\kappa<\pi^{2}/36$,  $\Ll_{2}$ is a center--saddle and
 $\Ee_{2}$ is Lyapunov stable.
\item For $0<\kappa<\pi^{2}/64$,  $\Ll_{3}$ and $\Ee_{3}$ center--saddles. 
\end{enumerate}
In fact, interpreted as critical points of $\mathcal{V}_{\kappa,\mu}$, and in the parameter ranges specified above, 
the points $\Ll_{1}$, $\Ll_{2}$, $\Ll_{3}$, $\Ee_{3}$, and $\Aa_{1}$ are saddle-points, whereas
 $\Ee_{2}$ is a minimum.
\end{theorem}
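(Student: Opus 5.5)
The plan is to apply Proposition~\ref{prop:stabilitycollpos} at each collinear RE $\theta_0$ from Theorem~\ref{thm:numbercollinearRE}. By item (iii), a center--saddle is the same as $\lambda_1\lambda_2<0$. By item (i), $\lambda_1,\lambda_2>0$ gives a minimum, hence Lyapunov stability. So everything reduces to determining the signs of $\lambda_1=\partial_\theta^2\mathcal{V}_{\kappa,\mu}=f_j'(\theta_0)$ and of $\lambda_2$ on each segment $\mathcal{I}_j$. For $\lambda_2$ I would use the alternative expression \eqref{eq:lambda2alt}, and write $s_1=\sin(\theta_0+\sqrt\kappa q_1)$, $s_2=\sin(\theta_0-\sqrt\kappa q_2)$. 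The sign of $\lambda_2$ is then the sign of
\[
-\frac{1}{\sin\theta_0}\Bigl(\frac{\sin(\sqrt\kappa q_1)}{|s_1|^3}-\mu\frac{\sin(\sqrt\kappa q_2)}{|s_2|^3}\Bigr).
\]
Using $\sin(\sqrt\kappa q_1)=\mu\Lambda^3\sin(\sqrt\kappa q_2)$ from \eqref{eq:constantLambda}, this becomes
\[
-\operatorname{sign}(\sin\theta_0)\,\operatorname{sign}\bigl(\Lambda^3|s_2|^3-|s_1|^3\bigr),
\]
that is, a comparison between $\Lambda|s_2|$ and $|s_1|$. This is the same quantity that appears in \eqref{eq:Xcoord}. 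On each segment, the zero set of $|s_1|-\Lambda|s_2|$ is explicitly located (it is the point $\Xx$ on $\mathcal{I}_3$ and its analogues elsewhere), so the sign of $\lambda_2$ is decided by which side of that point $\theta_0$ lies on.

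The sign of $\lambda_1$ comes from the root-counting structure in the proof of Theorem~\ref{thm:numbercollinearRE} and Proposition~\ref{prop:concavity}. On $\mathcal{I}_2$, $f_2\to-\infty$ at the left end and is concave on $(\sqrt\kappa q_2,\pi/2)$. Hence at $\Ll_2$ (the first root) $f_2'>0$, and at $\Ee_2$ (the second root) $f_2'<0$. The situation on $\mathcal{I}_3$ is symmetric: $f_3$ is convex with $f_3\to+\infty$ at the right end. On $\mathcal{I}_1$, $f_1$ goes from $-\infty$ to $+\infty$, and the unique root must be a simple crossing. I would verify $f_1'>0$ there directly: both singular terms have positive derivative, since $\Gamma/\sin^2$-type terms are increasing, and $\cos 2\theta>0$ for $|\theta|<\pi/4$, which holds when $\kappa<\pi^2/16$. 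On $\mathcal{I}_4$ I would use the analogous endpoint behaviour together with the monotonicity proved in the appendix for uniqueness. In this way $\lambda_1$ is obtained essentially for free from the existence proofs, and in each case one only needs $\lambda_1\neq0$ strictly, which the transversality (simple-root) arguments already give.

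The main obstacle is the sign of $\lambda_2$ at $\Ee_2$, $\Ee_3$ and $\Aa_1$. These points lie far from the primaries, near the equator or the antipodes, and there both $|s_1|$ and $|s_2|$ are of order one. For $\Ll_1,\Ll_2,\Ll_3$ the comparison is easy, since $|s_2|$ or $|s_1|$ is small near the respective primary. For $\Ee_2$ (target: $\lambda_1<0$ would give a saddle, so the claim that it is a minimum requires rechecking the orientation) I would compute $\lambda_1$ honestly from $f_2'$. I expect the real work to be explicit localization of the roots. Using the smallness hypotheses $\mu<1/10$ and $\kappa<\pi^2/36$, I would confine $\Ee_2$ to a subinterval such as $(\pi/4,\pi/2)$, where the dominant term $\tfrac12\sin 2\theta$ balances $\Gamma/\sin^2$, and similarly confine $\Ee_3$ and $\Aa_1$. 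I would then bound $|s_1|/|s_2|$ against $\Lambda$ there, using $\Lambda=(\cos\sqrt\kappa q_2/\cos\sqrt\kappa q_1)^{1/3}\in(0,1)$ and $\Lambda\to1$ with explicit error $O(\kappa)$. These elementary but delicate trigonometric estimates, valid uniformly for $\mu\in(0,1/10)$ and the stated ranges of $\kappa$, are where I expect most of the effort. The final statement then follows: one positive and one negative eigenvalue at $\Ll_1,\Ll_2,\Ll_3,\Ee_3,\Aa_1$, giving saddle points and center--saddles, and both eigenvalues positive at $\Ee_2$, giving a minimum and Lyapunov stability. The count of one minimum and five saddles, together with the local maxima $\Ll_4,\Ll_5$, is consistent with Theorem~\ref{prop:eqptsSpos}, which serves as a sanity check.
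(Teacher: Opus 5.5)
Your architecture matches the paper's. You read $\lambda_1$ off the shape of $f_j$ (monotone on $\mathcal{I}_1,\mathcal{I}_4$, concave/convex on $\mathcal{I}_2,\mathcal{I}_3$). You reduce $\lambda_2$, via \eqref{eq:lambda2alt}, to comparing $|s_1|$ with $\Lambda|s_2|$; the paper encodes this in functions $h_j$ whose zero $\vartheta_j$ satisfies the balance relation (on $\mathcal{I}_3$ it is the point $\Xx$). There is, however, a sign error. Since $\partial_\theta\mathcal{V}_{\kappa,\mu}(0,\theta)=-f(\theta)$ (the $\theta$-derivative of $\tfrac12\cos^2\theta$ is $-\tfrac12\sin 2\theta$), one has $\lambda_1=-f_j'(\theta_0)$, not $f_j'(\theta_0)$. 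With your sign every $\lambda_1$ flips. $\Ee_2$ would then be a saddle, which is the inconsistency you flagged but left open, and $\Ll_1,\Ll_2,\Ll_3,\Ee_3,\Aa_1$ would have $\lambda_1\lambda_2>0$, so they would not be center--saddles. With the correct sign, $\lambda_1<0$ at $\Ll_1,\Ll_2,\Ll_3$ and $\lambda_1>0$ at $\Ee_2,\Ee_3,\Aa_1$, and the $\Ee_2$ problem disappears.

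The more substantive gap is that the decisive step, namely on which side of $\vartheta_j$ the root $\theta_0$ lies, is only announced. Your map of where the difficulty sits is also off. The paper's device is to substitute the balance relation into $f_j(\vartheta_j)$, which collapses it to a signed expression, e.g.
\[
f_3(\vartheta_3)=\frac{\sin\vartheta_3\cos\vartheta_3\,\sin^2(\vartheta_3+\sqrt\kappa q_1)+\Gamma(1+\Lambda^2\mu)}{\sin^2(\vartheta_3+\sqrt\kappa q_1)},
\]
and then to use monotonicity or convexity of $f_j$ to place $\theta_0$. With this, the cases you call hard or easy sort out as follows.
\begin{itemize}
\item $\Ee_2$ needs no localization at all. $h_2$ is increasing on $(\sqrt\kappa q_2,\pi/2)$ with $h_2(\pi/2)<0$, so $\lambda_2>0$ at both $\Ll_2$ and $\Ee_2$.
\item $\Aa_1$ is immediate: $f_4(\vartheta_4)>0$ because $\sin\vartheta_4,\cos\vartheta_4<0$ on $(-\pi,-\pi+\sqrt\kappa q_2)$, where $\vartheta_4$ lies.
\item $\Ll_1$ needs only a moderate bound: the maximum of $\sin\theta\cos\theta\sin^2(\theta-\sqrt\kappa q_2)$ against $\Gamma(\Lambda^{-2}-\mu)$.
\item The hard case is $\mathcal{I}_3$, and it includes $\Ll_3$. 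Since $f_3$ is strictly convex on $[-\pi/2,-\sqrt\kappa q_1)$ with exactly two roots, the required configuration $\theta_{\Ee_3}<\vartheta_3<\theta_{\Ll_3}$ is equivalent to $f_3(\vartheta_3)<0$. So $\Ll_3$ is not separately easy.
\end{itemize}
A coarse localization such as $\theta_{\Ee_3}\in(-\pi/2,-\pi/4)$, analogous to your interval for $\Ee_2$, cannot settle the sign on $\mathcal{I}_3$. The reason is that $\vartheta_3\approx-\pi/2+\tfrac{2(1-\mu)}{3(1+\mu)}\sqrt\kappa$ lies inside that interval, while $\theta_{\Ee_3}\approx-\pi/2+\kappa^{3/2}$. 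The paper proves $f_3(\vartheta_3)<0$ uniformly for $\kappa<\pi^2/64$, $\mu<1/10$. It uses explicit formulas for $\sin(-\vartheta_3)$, $\cos(-\vartheta_3)$ and $\sin^2(\vartheta_3+\sqrt\kappa q_1)$ in terms of $\Lambda$, the bound $1-2\Lambda\cos\sqrt\kappa+\Lambda^2\le\sin^2\sqrt\kappa/\cos^4(\pi/16)$, and Lemma~\ref{lemma:estimatesparameters}. That estimate is the missing core of your argument.
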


The proof consists of estimating the quantities $\lambda_1$ and $\lambda_2$ in Proposition~\ref{prop:stabilitycollpos} at each RE. 
We show that, in the appropriate parameter range, they always have opposite signs, except in the case of $\Ee_2$, for which   both are positive. The conclusion
regarding their stability  then
follows directly from  Proposition~\ref{prop:stabilitycollpos}. Furthermore, in the proof of that proposition, it is shown that $\lambda_1$ and $\lambda_2$
are precisely 
the eigenvalues of the Hessian matrix of $\mathcal{V}_{\kappa,\mu}$ evaluated at the corresponding critical point. Hence, the conclusions
about their nature as critical points of $\mathcal{V}_{\kappa,\mu}$ also follow from  the estimates on $\lambda_1$ and $\lambda_2$.
These estimates  are presented in Sec.~\ref{app:stability-collinear} of Appendix~\ref{app:analytic-proofs}.

\subsection{Computer-Assisted Proofs}
\label{ss:CAPsCollinear}

We extended the  stability results of Theorem~\ref{thm:stabilitycollinear}  to a larger region of the parameter set  $\Ps{\mu_{\Tt}}$
using the CAPs illustrated in Fig.~\ref{fig:CAPstabilitycollinear}. The vertical stripe on the left of each diagram corresponds to  the 
region of the parameter space covered by Theorem~\ref{thm:stabilitycollinear}.  This information is
essential, since the CAPs become inconclusive near the boundary of $\Ps{\mu_{\Tt}}$, and in particular
near $\kappa=0$.

For each of the collinear RE, these CAPs were implemented as follows. First, we determined an enclosure of the corresponding root $(\theta_0, \kappa_0, \mu_0)$ of the equations \eqref{eq:CollinearRECondbyIntervals}, whose existence is guaranteed by the CAPs illustrated in Fig.~\ref{fig:CAPexistencecollinear}. We then estimated, using interval arithmetic, the signs of the quantities $\lambda_1$, $\lambda_2$, and $b^2 - 4\lambda_1\lambda_2$ in Proposition~\ref{prop:stabilitycollpos} to derive the corresponding stability conclusion from the proposition.

Notice that these stability CAPs cannot be implemented for parameter values $(\kappa,\mu)$ 
for which the existence CAPs of Section~\ref{sss:CAP-existence-collinear} are inconclusive. 
Furthermore, even when the existence CAPs are successful, the stability CAPs may fail if the enclosures of the quantities $\lambda_1$, $\lambda_2$, and $b^2 - 4\lambda_1\lambda_2$ are not sufficiently small. These limitations typically  lead to the failure of the stability CAPs near the boundaries of the parameter space $\Ps{\muT}$ and close to bifurcation curves where the stability changes. Nevertheless, the numerical investigations illustrated in Fig.~\ref{fig:numstabilitycollinear} suggest that the CAPs shown in Fig.~\ref{fig:CAPexistencecollinear} capture the essential stability behavior of all collinear RE.

It is particularly interesting to observe the change in the stability of $\Ll_3$ as the curvature increases, 
transitioning from a center--saddle to an elliptic relative equilibrium (corresponding to the change from the yellow to the green region). Our numerical investigations suggest that this transition precisely 
occurs at the bifurcation curve $b_4$, where the triangular equilibria $\Tt_1$ and $\Tt_2$ emerge. This observation is supported by
the following topological consideration: the change of stability occurs as  $\Ll_3$ transitions from a saddle-point to 
a local maximum of $\mathcal{V}_{\kappa,\mu}$. Therefore, in order to balance Eq.~\ref{eq:formulaindex}, two additional saddle-points of $\mathcal{V}_{\kappa,\mu}$ 
should emerge after the transition takes place. And in
fact, as will be shown in Sec.~\ref{sec:stabilitytriangular},  $\Tt_1$ and $\Tt_2$ are center--saddles. These considerations 
motivate our assertion in Conjecture~\ref{thm:classtriangular} that $\Tt_1$ and $\Tt_2$ are created through a pitchfork bifurcation of $\Ll_3$. The bifurcation is subcritical, since both $\Tt_1$ and $\Tt_2$ are unstable.

 \begin{figure}[H]
 \centering

\begin{minipage}{0.6\textwidth}
\centering
\renewcommand{\arraystretch}{1.2}
\setlength{\tabcolsep}{6pt}
\begin{tabular}{@{}l@{\hspace{0.6em}}l@{\hspace{2em}}l@{\hspace{0.6em}}l@{}}
\legenditem{0000FF}{Lyapunov stable} &
\legenditem{FFA500}{Center-saddle} \\
\legenditem{008000}{Elliptic} &
\legenditem{1A1A1A}{Inconclusive CAP} \\
\end{tabular}
\end{minipage}

\vspace{1em}

\begin{minipage}{\textwidth}
\centering
 
\subfigure[$\Ll_{1}$]{
\begin{overpic}[width=0.31\textwidth]{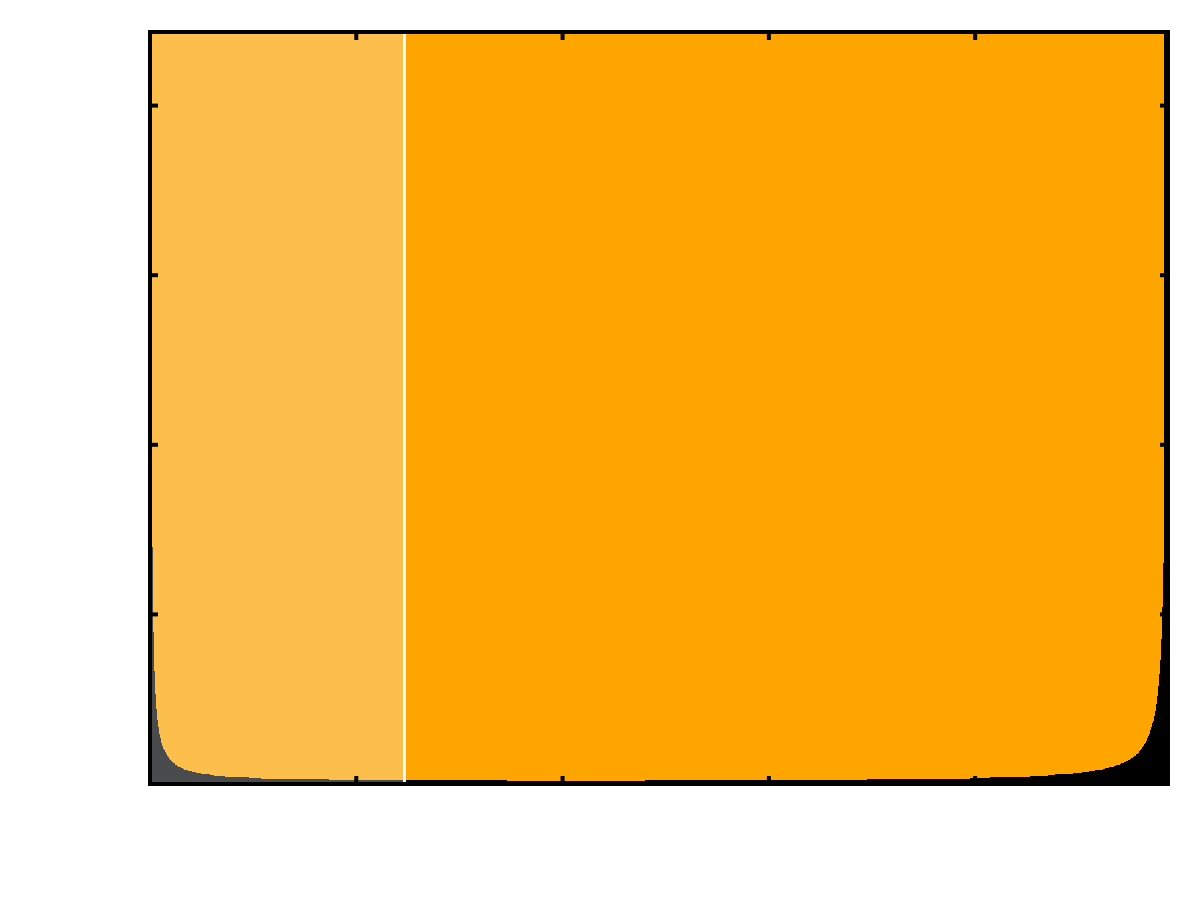}
    \put(9.4,6.5){\tiny 0.0}
    \put(26.7,6.5){\tiny 0.5}
    \put(44,6.5){\tiny 1.0}
    \put(61.2,6.5){\tiny 1.5}
    \put(78.5,6.5){\tiny 2.0}
     \put(54,3){\tiny $\kappa$}
    \put(3.5,9){\tiny 0.00}
    \put(3.5,23){\tiny 0.02}
    \put(3.5,37){\tiny 0.04}
    \put(3.5,51){\tiny 0.06}
    \put(3.5,65){\tiny 0.08}
    \put(-2,37){\tiny $\mu$}
\end{overpic}}
\subfigure[$\Ll_{2}$]{
\begin{overpic}[width=0.31\textwidth]{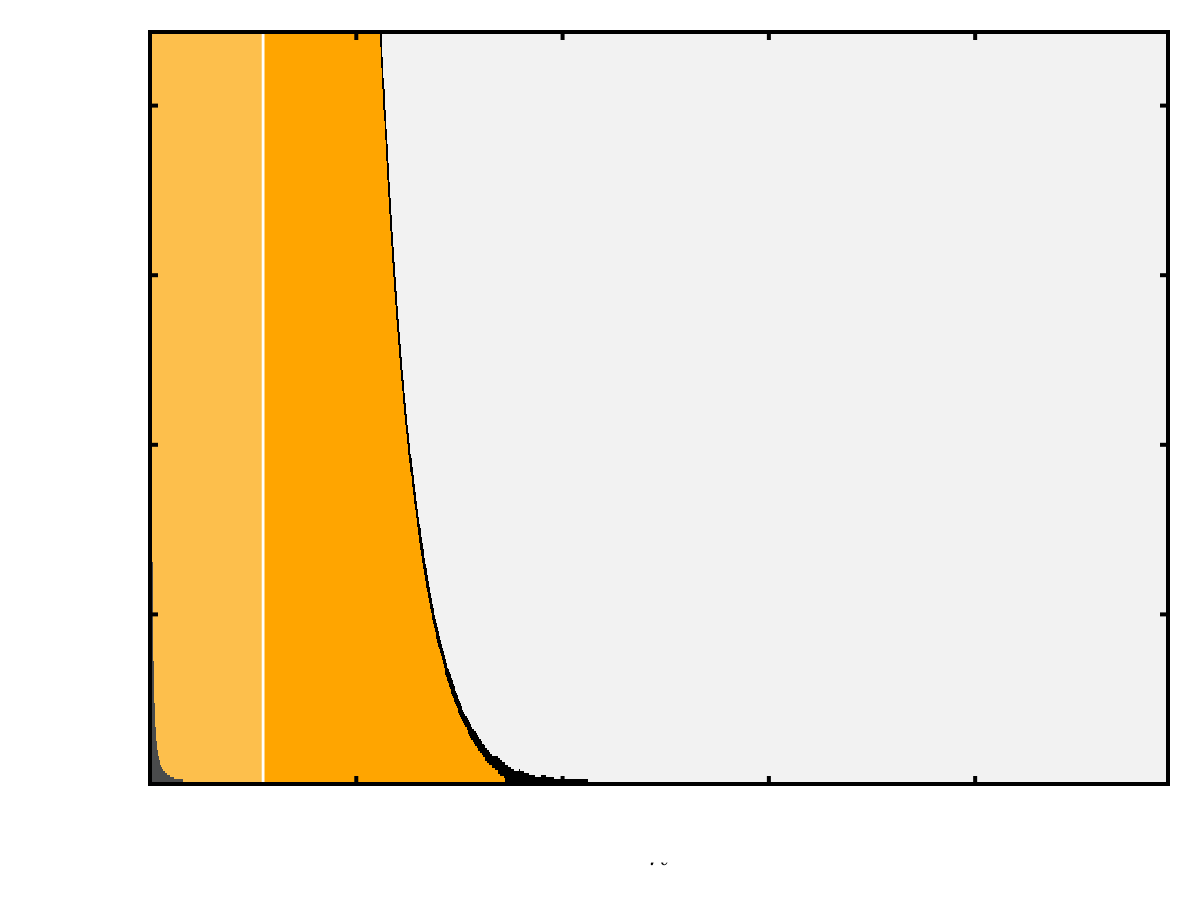}
    \put(9.4,6.5){\tiny 0.0}
    \put(26.7,6.5){\tiny 0.5}
    \put(44,6.5){\tiny 1.0}
    \put(61.2,6.5){\tiny 1.5}
    \put(78.5,6.5){\tiny 2.0}
     \put(54,3){\tiny $\kappa$}
    \put(3.5,9){\tiny 0.00}
    \put(3.5,23){\tiny 0.02}
    \put(3.5,37){\tiny 0.04}
    \put(3.5,51){\tiny 0.06}
    \put(3.5,65){\tiny 0.08}
    \put(-2,37){\tiny $\mu$}
\end{overpic}}
 \subfigure[$\Ee_{2}$]{
\begin{overpic}[width=0.31\textwidth]{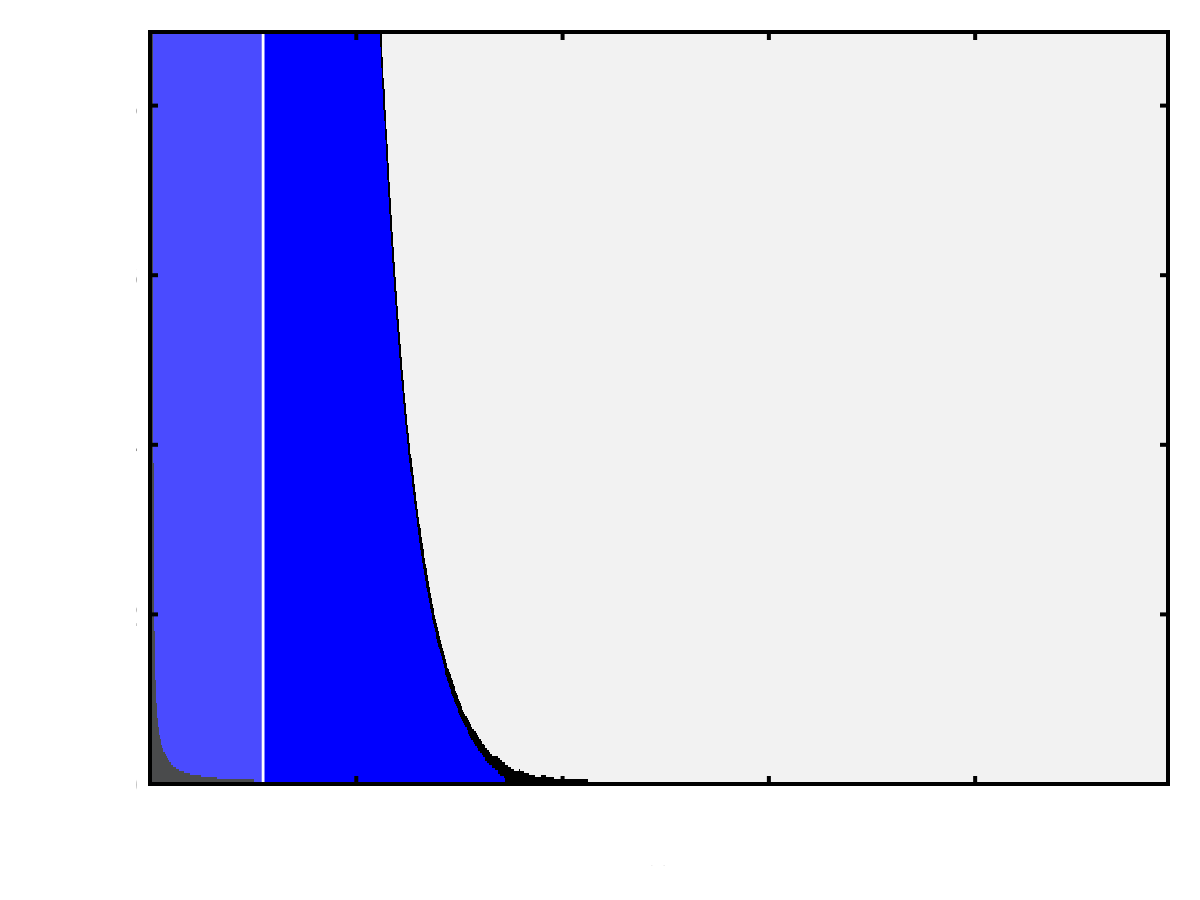}
    \put(9.4,6.5){\tiny 0.0}
    \put(26.7,6.5){\tiny 0.5}
    \put(44,6.5){\tiny 1.0}
    \put(61.2,6.5){\tiny 1.5}
    \put(78.5,6.5){\tiny 2.0}
     \put(54,3){\tiny $\kappa$}
    \put(3.5,9){\tiny 0.00}
    \put(3.5,23){\tiny 0.02}
    \put(3.5,37){\tiny 0.04}
    \put(3.5,51){\tiny 0.06}
    \put(3.5,65){\tiny 0.08}
    \put(-2,37){\tiny $\mu$}
\end{overpic}}\\[0.5em]
 \subfigure[$\Ll_{3}$ and $\widetilde{\Ll}_{3}$]{
\begin{overpic}[width=0.31\textwidth]{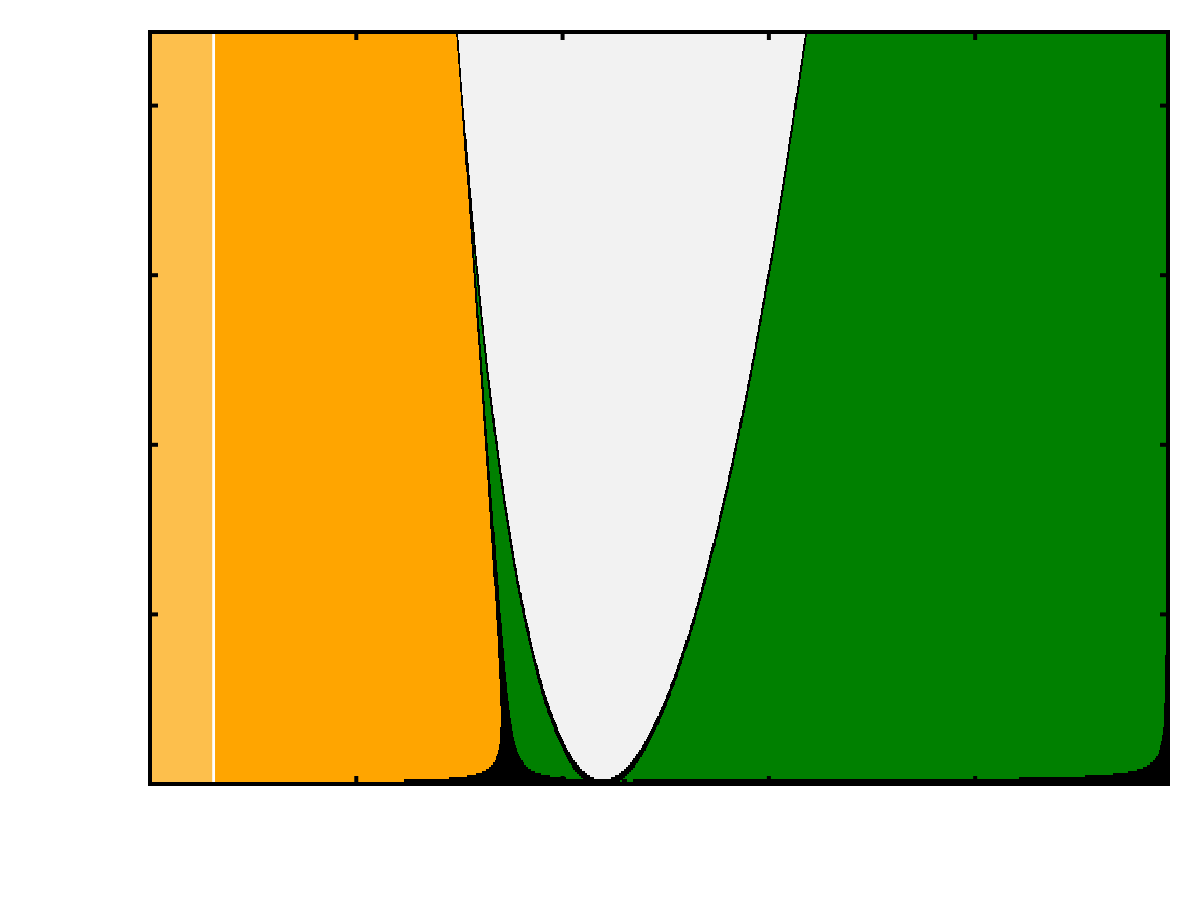}
    \put(9.4,6.5){\tiny 0.0}
    \put(26.7,6.5){\tiny 0.5}
    \put(44,6.5){\tiny 1.0}
    \put(61.2,6.5){\tiny 1.5}
    \put(78.5,6.5){\tiny 2.0}
     \put(54,3){\tiny $\kappa$}
    \put(3.5,9){\tiny 0.00}
    \put(3.5,23){\tiny 0.02}
    \put(3.5,37){\tiny 0.04}
    \put(3.5,51){\tiny 0.06}
    \put(3.5,65){\tiny 0.08}
    \put(-2,37){\tiny $\mu$}
\end{overpic}}
\subfigure[$\Ee_{3}$ and $\widetilde{\Ee}_{3}$]{
\begin{overpic}[width=0.31\textwidth]{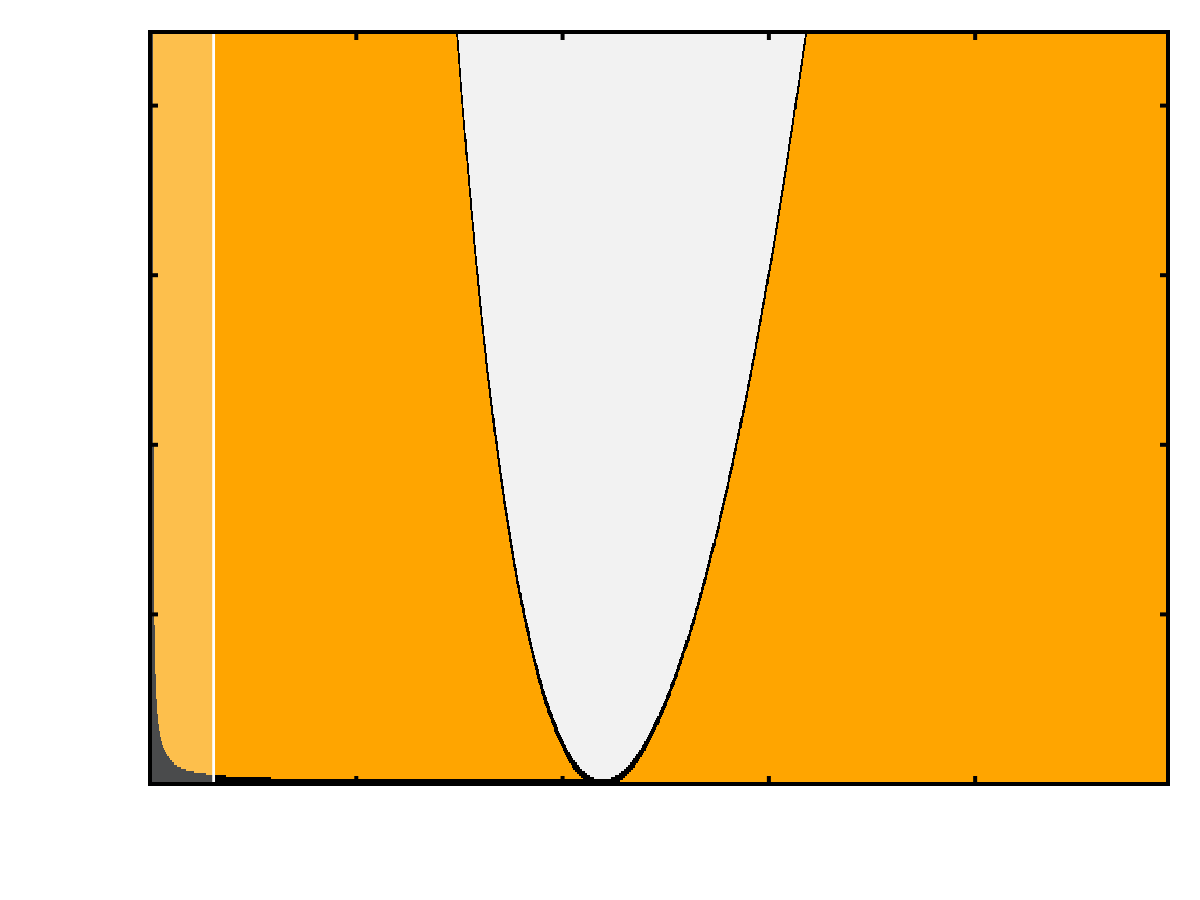}
    \put(9.4,6.5){\tiny 0.0}
    \put(26.7,6.5){\tiny 0.5}
    \put(44,6.5){\tiny 1.0}
    \put(61.2,6.5){\tiny 1.5}
    \put(78.5,6.5){\tiny 2.0}
     \put(54,3){\tiny $\kappa$}
    \put(3.5,9){\tiny 0.00}
    \put(3.5,23){\tiny 0.02}
    \put(3.5,37){\tiny 0.04}
    \put(3.5,51){\tiny 0.06}
    \put(3.5,65){\tiny 0.08}
    \put(-2,37){\tiny $\mu$}
\end{overpic}}
\subfigure[$\Aa_{1}$]{
\begin{overpic}[width=0.31\textwidth]{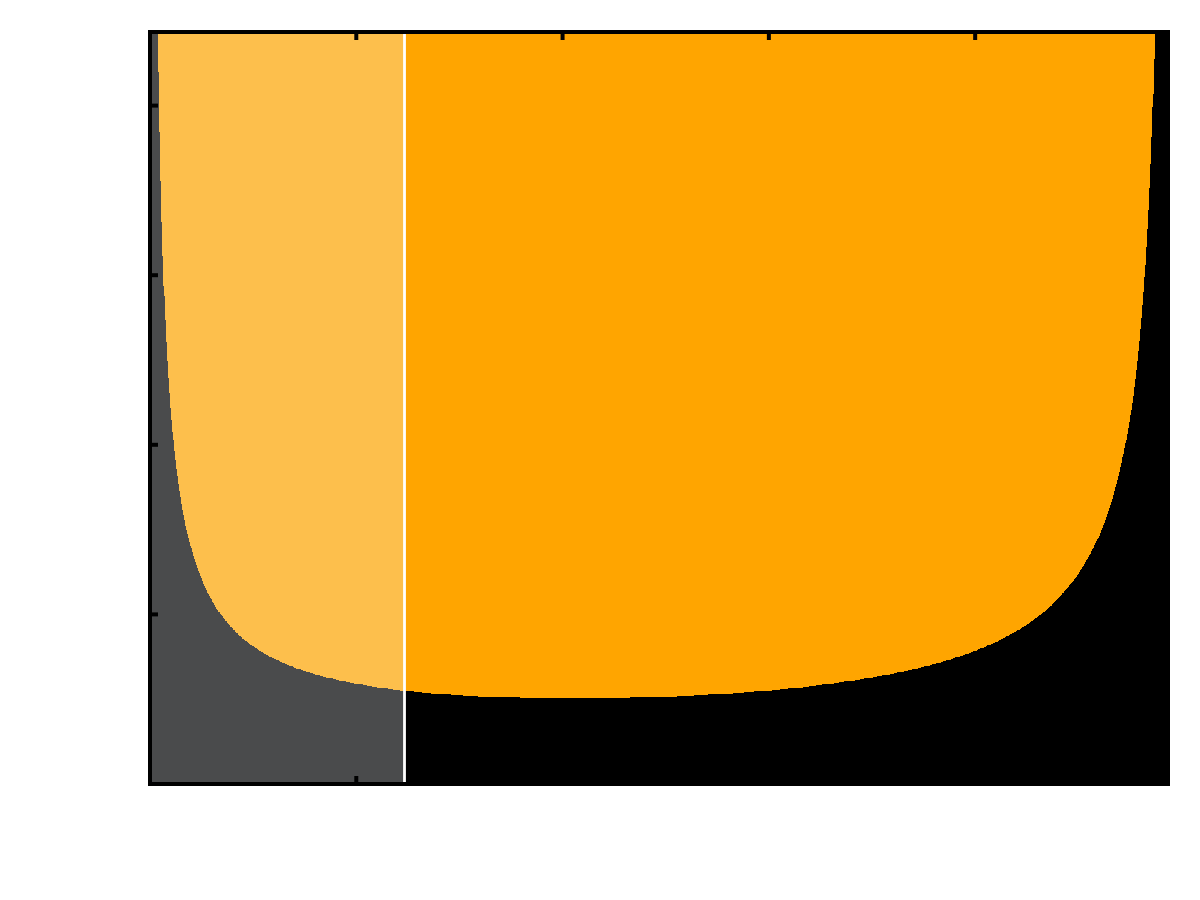}
    \put(9.4,6.5){\tiny 0.0}
    \put(26.7,6.5){\tiny 0.5}
    \put(44,6.5){\tiny 1.0}
    \put(61.2,6.5){\tiny 1.5}
    \put(78.5,6.5){\tiny 2.0}
     \put(54,3){\tiny $\kappa$}
    \put(3.5,9){\tiny 0.00}
    \put(3.5,23){\tiny 0.02}
    \put(3.5,37){\tiny 0.04}
    \put(3.5,51){\tiny 0.06}
    \put(3.5,65){\tiny 0.08}
    \put(-2,37){\tiny $\mu$}
\end{overpic}}
 \end{minipage}
 \captionsetup{width=0.75\textwidth}
 \caption{CAPs of stability of collinear relative equilibria in terms of the parameters  $(\kappa,\mu)\in \Ps{\mu_{\Tt}}$. 
 The vertical stripe on the left of each diagram covers 
values of $(\kappa,\mu)$ where the stability results were established analytically in Theorem~\ref{thm:stabilitycollinear}.}
 \label{fig:CAPstabilitycollinear}
 \end{figure}

\begin{figure}[H]
\centering
\begin{minipage}{\textwidth}
\centering
\subfigure[$\Ll_{1}$]{
\begin{overpic}[width=0.31\textwidth]{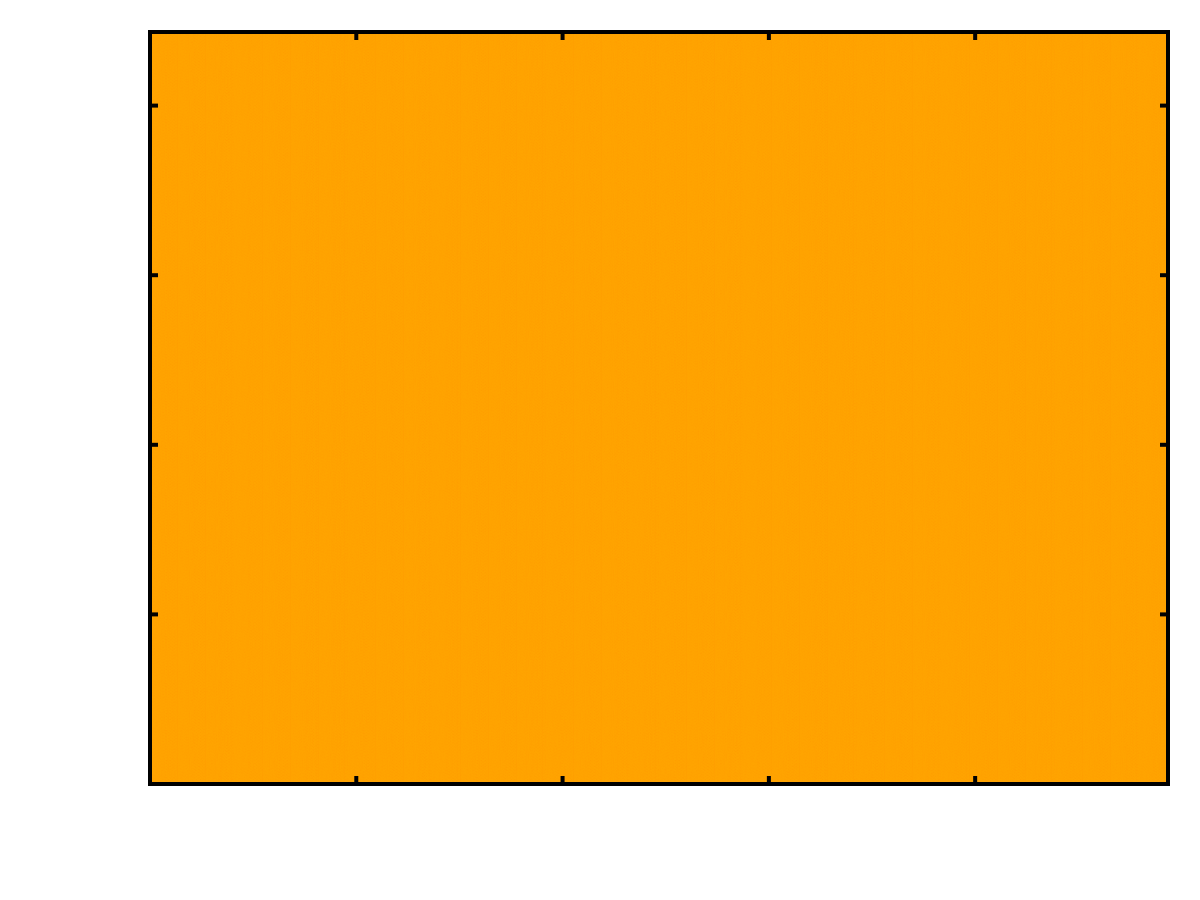}
    \put(9.4,6.5){\tiny 0.0}
    \put(26.7,6.5){\tiny 0.5}
    \put(44,6.5){\tiny 1.0}
    \put(61.2,6.5){\tiny 1.5}
    \put(78.5,6.5){\tiny 2.0}
     \put(54,3){\tiny $\kappa$}
    \put(3.5,9){\tiny 0.00}
    \put(3.5,23){\tiny 0.02}
    \put(3.5,37){\tiny 0.04}
    \put(3.5,51){\tiny 0.06}
    \put(3.5,65){\tiny 0.08}
    \put(-2,37){\tiny $\mu$}
\end{overpic}}
\subfigure[$\Ll_{2}$]{
\begin{overpic}[width=0.31\textwidth]{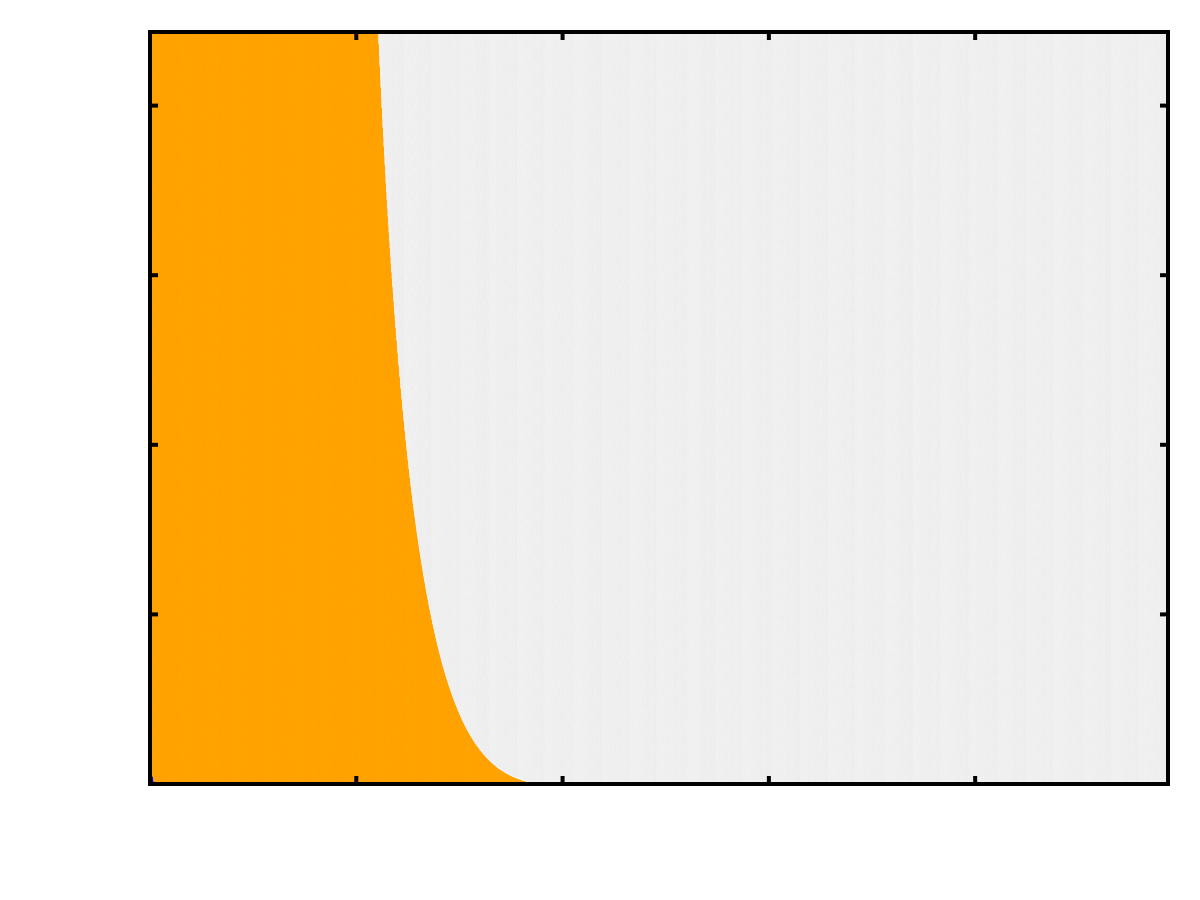}
    \put(9.4,6.5){\tiny 0.0}
    \put(26.7,6.5){\tiny 0.5}
    \put(44,6.5){\tiny 1.0}
    \put(61.2,6.5){\tiny 1.5}
    \put(78.5,6.5){\tiny 2.0}
     \put(54,3){\tiny $\kappa$}
    \put(3.5,9){\tiny 0.00}
    \put(3.5,23){\tiny 0.02}
    \put(3.5,37){\tiny 0.04}
    \put(3.5,51){\tiny 0.06}
    \put(3.5,65){\tiny 0.08}
    \put(-2,37){\tiny $\mu$}
\end{overpic}}
 \subfigure[$\Ee_{2}$]{
\begin{overpic}[width=0.31\textwidth]{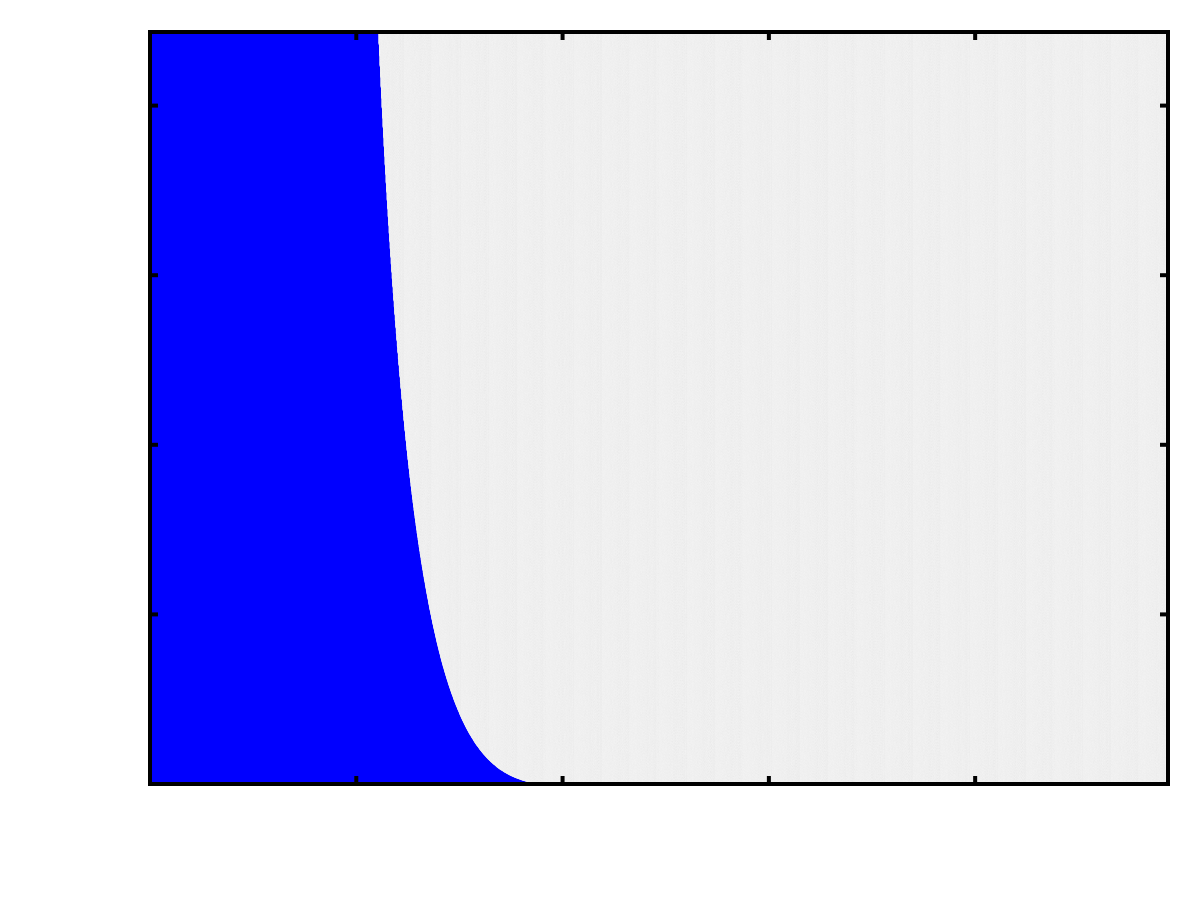}
    \put(9.4,6.5){\tiny 0.0}
    \put(26.7,6.5){\tiny 0.5}
    \put(44,6.5){\tiny 1.0}
    \put(61.2,6.5){\tiny 1.5}
    \put(78.5,6.5){\tiny 2.0}
     \put(54,3){\tiny $\kappa$}
    \put(3.5,9){\tiny 0.00}
    \put(3.5,23){\tiny 0.02}
    \put(3.5,37){\tiny 0.04}
    \put(3.5,51){\tiny 0.06}
    \put(3.5,65){\tiny 0.08}
    \put(-2,37){\tiny $\mu$}
\end{overpic}}\\[0.5em]
 \subfigure[$\Ll_{3}$ and $\widetilde{\Ll}_{3}$]{
\begin{overpic}[width=0.31\textwidth]{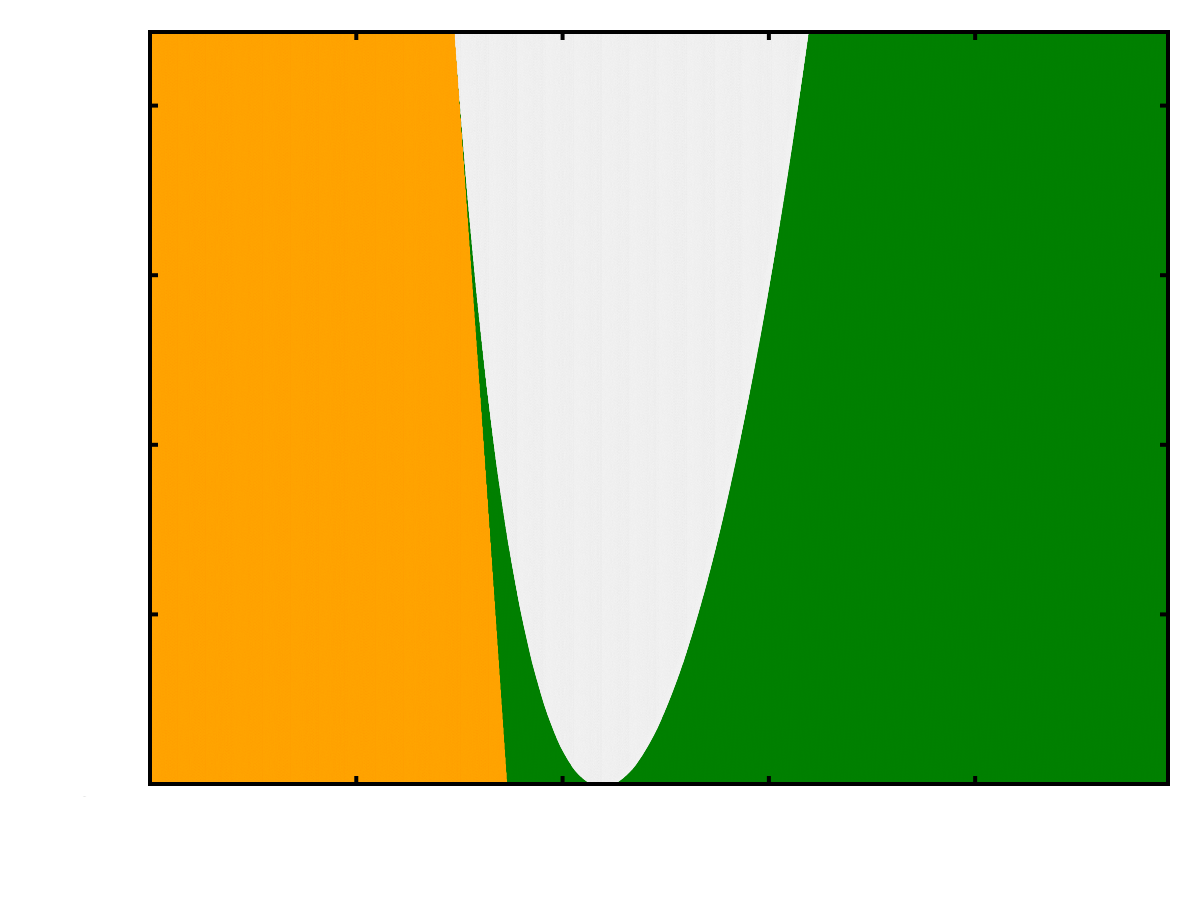}
    \put(9.4,6.5){\tiny 0.0}
    \put(26.7,6.5){\tiny 0.5}
    \put(44,6.5){\tiny 1.0}
    \put(61.2,6.5){\tiny 1.5}
    \put(78.5,6.5){\tiny 2.0}
     \put(54,3){\tiny $\kappa$}
    \put(3.5,9){\tiny 0.00}
    \put(3.5,23){\tiny 0.02}
    \put(3.5,37){\tiny 0.04}
    \put(3.5,51){\tiny 0.06}
    \put(3.5,65){\tiny 0.08}
    \put(-2,37){\tiny $\mu$}
\end{overpic}}
\subfigure[$\Ee_{3}$ and $\widetilde{\Ee}_{3}$]{
\begin{overpic}[width=0.31\textwidth]{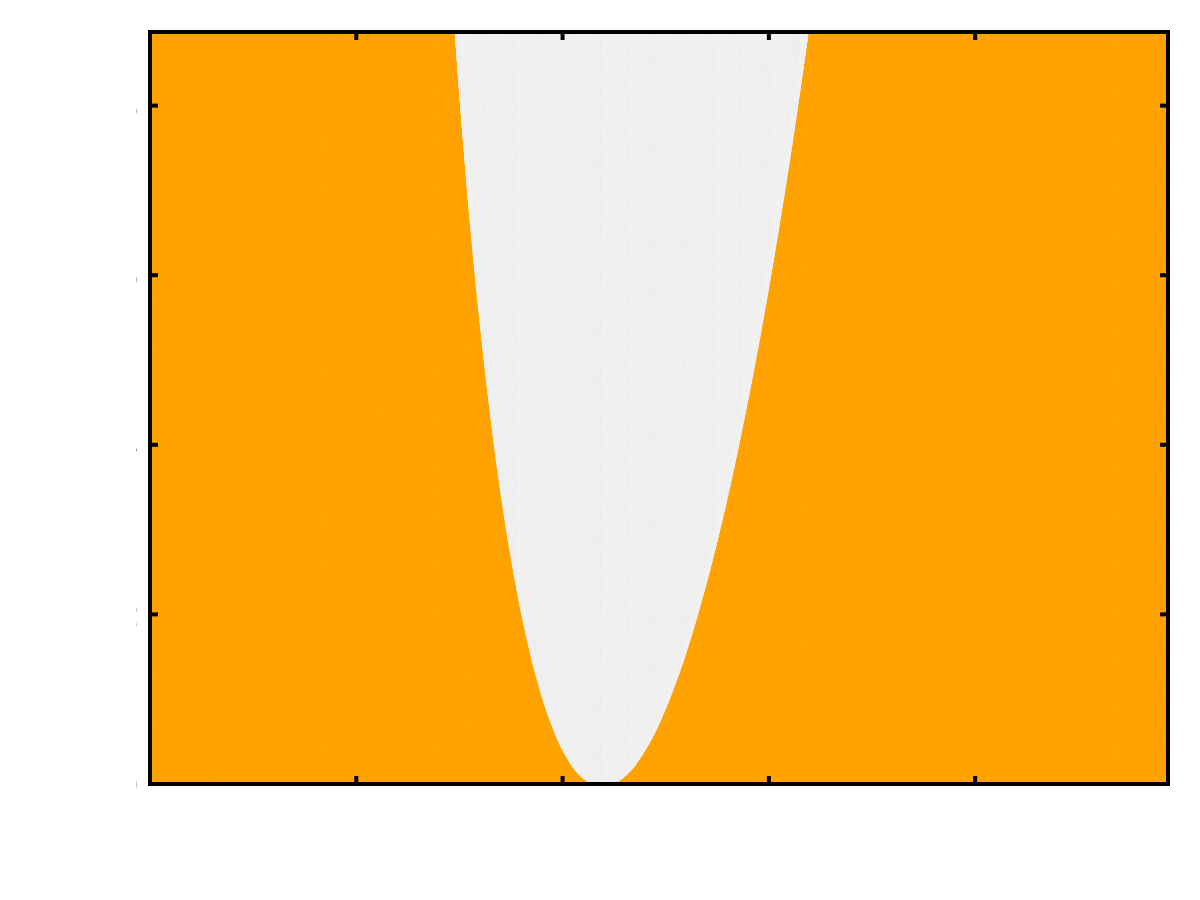}
    \put(9.4,6.5){\tiny 0.0}
    \put(26.7,6.5){\tiny 0.5}
    \put(44,6.5){\tiny 1.0}
    \put(61.2,6.5){\tiny 1.5}
    \put(78.5,6.5){\tiny 2.0}
     \put(54,3){\tiny $\kappa$}
    \put(3.5,9){\tiny 0.00}
    \put(3.5,23){\tiny 0.02}
    \put(3.5,37){\tiny 0.04}
    \put(3.5,51){\tiny 0.06}
    \put(3.5,65){\tiny 0.08}
    \put(-2,37){\tiny $\mu$}
\end{overpic}}
\subfigure[$\Aa_{1}$]{
\begin{overpic}[width=0.31\textwidth]{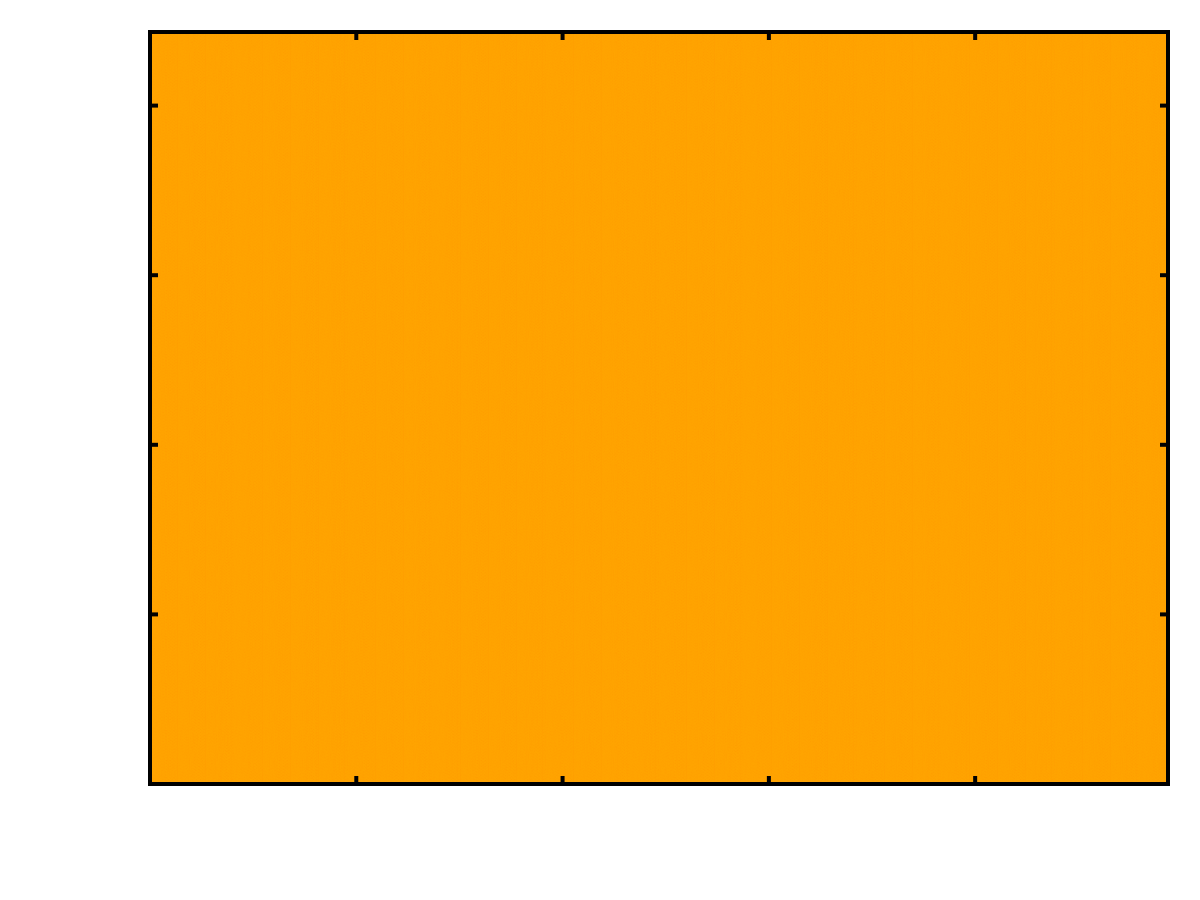}
    \put(9.4,6.5){\tiny 0.0}
    \put(26.7,6.5){\tiny 0.5}
    \put(44,6.5){\tiny 1.0}
    \put(61.2,6.5){\tiny 1.5}
    \put(78.5,6.5){\tiny 2.0}
     \put(54,3){\tiny $\kappa$}
    \put(3.5,9){\tiny 0.00}
    \put(3.5,23){\tiny 0.02}
    \put(3.5,37){\tiny 0.04}
    \put(3.5,51){\tiny 0.06}
    \put(3.5,65){\tiny 0.08}
    \put(-2,37){\tiny $\mu$}
\end{overpic}}
\end{minipage}
\captionsetup{width=0.75\textwidth}
\caption{Numerical evidence for the stability and instability regions in parameter region  $\Ps{\mu_{\Tt}}$.
See Fig.~\ref{fig:CAPstabilitycollinear} for the color code.}
\label{fig:numstabilitycollinear}
\end{figure}

\begin{remark}\label{rmk:isoscelesREstability}
Panel (d) of Fig.~\ref{fig:CAPstabilitycollinear} proves that  $\Ll_{3}$ is unstable when
$\kappa=\tau_{0}^{2}\approx 0.4639$, where $\tau_0$ is given by formula \eqref{eq:tau0}, for values of $\mu\in (0,\mu_{\Tt})$ 
which are not too close to zero
(the boundary). In view of Remark~\ref{rmk:isoscelesRE},
it is natural to expect that, at least in certain ranges of the masses, the isosceles RE 
of the positively curved 3-body problem discovered in  \cite{FuPC24} are unstable.
\end{remark}

\subsection{\texorpdfstring
  {Stability of collinear RE as a function of $\kappa>0$ for fixed $\mu\in(0,\mu_{\Tt})$.}
  {Stability of collinear RE as a function of kappa > 0 for fixed mu in (0, mu_Tt).}}
\label{conj:stability-collinear}

The following conjecture  summarizes the discussion in the previous sections. The value of $\kappa_4(\mu)$ in the statement 
is taken from Conjecture~\ref{thm:classtriangular}.

\begin{conjecture}[Stability classification of collinear RE as a function of $\kappa>0$  for  $0<\mu<\mu_{\Tt}$.]
\label{thm:class-stability-collinear}
Let $\mu \in (0,\mu_{\Tt})$. The stability of collinear RE is as follows. 
\begin{description}
    \item[$\bullet$]  $\Ll_1$, $\Ll_2$,  $\Ee_3$, and  $\Aa_1$ are center--saddles 
     for all $\kappa>0$ for which each of them exists.
    \item[$\bullet$]  $\Ee_2$ is Lyapunov stable, and  $\tilde{\Ll}_3$ is elliptic for all $\kappa>0$ for which each of them exists.
    \item[$\bullet$] $\Ll_3$ is a center--saddle  for $0<\kappa<\kappa_4(\mu)$. It undergoes a subcritical pitchfork bifurcation at $\kappa=\kappa_4(\mu)$
    and is elliptic for all values $\kappa>\kappa_4(\mu)$ for which it exists. 
    \end{description}
\end{conjecture}

Note that, despite our strong evidence, 
 we do  not present a rigorous  proof that the change of stability of $\Ll_3$ occurs at the value of $\kappa_4$ in 
 Conjecture~\ref{thm:classtriangular}. We also do not have rigorous proofs of stability for those parameter values $(\kappa,\mu)\in \Ps{\muT}$ for
which Theorem~\ref{thm:stabilitycollinear} does not apply and the CAPs in Fig.~\ref{fig:CAPstabilitycollinear} are inconclusive. For this reason, we have presented the 
above stability classification as a conjecture. 

\subsection{\texorpdfstring
  {Qualitative change of the bifurcation structure at $(\kappa_{\Tt},\mu_{\Tt})$}
  {Qualitative change of the bifurcation structure at (kappa_Tt, mu_T)}}
\label{sec:muT}

As mentioned above, our conjecture that, for $\mu\in (0,\muT)$, the triangular relative equilibria 
 $\Tt_{1}$ and $\Tt_{2}$ bifurcate from $\Ll_{3}$ when $\kappa=\kappa_4(\mu)$ 
is based on topological considerations supported by numerical investigations. Our conjecture would imply 
 that, at the parameter values $(\kappa_4(\mu),\mu)$, 
the equilibrium $\Ll_{3}$ lies at the intersection of the curve $\gamma_{\kappa,\mu}$ of 
triangular balanced configurations and the geodesic $\mathcal{G}^+$, which contains all collinear RE.
 Such intersection point belongs to the segment $\mathcal{I}_3$ and 
  is  labeled $\tilde \Xx$ in Fig.~\ref{fig:graphpos}(a) and $\Xx$ in Fig.~\ref{fig:graphpos}(b).

Our numerical investigations show that as $\mu$ increases and approaches $\muT$, the value of $\kappa_4(\mu)$ 
 approaches the bifurcation value $\kappa_2(\mu)$ at which $\Ll_3$ coalesces with $\Ee_3$ 
and disappears in a saddle-node bifurcation. This behavior corresponds to the shrinking of the green portion of the parameter space between the yellow region and the parabolic-shaped curve near the top of the graph, as can be observed in Figs.~\ref{fig:CAPstabilitycollinear}(d) and \ref{fig:numstabilitycollinear}(d).

At $\mu=\muT$, we have $\kappa_2(\muT)=\kappa_4(\muT)=:\kappa_{\Tt}$. This corresponds to the collision of the saddle-node bifurcation involving $\Ll_3$ and $\Ee_3$ with the pitchfork bifurcation at which 
the triangular equilibria $\Tt_1$ and $\Tt_2$ are created.
The parameter values $(\kappa_{\Tt}, \muT)$ belong to the intersection, $b_2\cap b_4$, of the bifurcation curves $b_2$ and $b_4$, and
can be determined  imposing the condition that $\theta\mapsto f_3(\theta;\kappa,
\mu)$ has a double root (saddle-node bifurcation) at the point $\Xx$. Since the position
of $\Xx$ is determined by Eq.~\eqref{eq:Xcoord}, this yields the
equations 
\begin{equation*}
f_{3}(\theta;\kappa,\mu)=0, \qquad 
    f_{3}'(\theta;\kappa,\mu)=0, \qquad
    \sin \left ( \theta + \sqrt{\kappa}	q_1\right )- \Lambda \sin \left ( \theta -  \sqrt{\kappa}	q_2 \right ) =0,
\end{equation*}
which, upon rearrangement,  reduce to system \eqref{eq:mubalancedsystem}. 
Numerical approximations of  $(\kappa_{\Tt}, \muT)$ are given in Eq.~\eqref{eq:numapprox}.

The bifurcation structure of the RE for $\mu>\muT$ is qualitatively different from that described in the conjectures above, and a more detailed study  of the bifurcation curves in the parameter space $(\kappa,\mu)$ is required to properly understand the behavior  at $(\kappa_{\Tt},\muT)$. The complexity of the bifurcation structure 
near $(\kappa_{\Tt},\muT)$ is reflected in the multiple crossings of the bifurcation curves computed by Kilin~\cite[Fig.~13]{Ki99} near the point $(\alpha,m_2/m_1)\approx(\pi/4,0.1)$.

\section{Stability of triangular RE}
\label{sec:stabilitytriangular}

We now consider the stability properties of triangular RE for both positive and negative
curvature. Because of  the intricate nature of the formulas given in Propositions~\ref{prop:stabilitytripos} and \ref{prop:stabilitytrineg}, we were unable to obtain analytical  results so our discussion
only relies on numerical investigations and CAPs.

We begin by recalling the classical stability result about the
equilateral equilibria $\Ll_{4}$ and $\Ll_{5}$  in the planar R3BP. As is well-known,  
 they are  elliptic\footnote{The nonlinear stability of $\Ll_{4}$ and $\Ll_{5}$ has been considered in e.g. \cite{Leon62, DeDe67, Mar69}, and also
 in the spatial version of the R3BP in \cite{BenFaGu98}.} if $0<\mu<\mu_R$ and unstable (complex--saddles) otherwise  (see e.g. \cite[Chapter 5]{Sz67}),
where $\mu_R$ is Routh's critical mass ratio
\begin{equation}
\label{eq:routh}
    \mu_{R} = \frac{9 - \sqrt{69}}{9 + \sqrt{69}} \approx 0.04006.\footnote{The threshold value 
     for stability is  
    more commonly written in terms of the reduced mass as $\frac{\mu_2}{\mu_1 + \mu_2}=\frac{9 - \sqrt{69}}{18} \approx 0.03852$.} 
\end{equation}
It is common to say that a gyroscopic stabilization of  $\Ll_{4}$ and $\Ll_{5}$
takes place for $0<\mu<\mu_R$, and  we will use this same terminology for non-zero curvature below.

We are particularly interested in the  threshold for gyroscopic stabilization of 
$\Ll_{4}$ and $\Ll_{5}$ as a function of the curvature $\kappa$, as it transitions from negative, through $0$,  to positive values. Our analysis shows that increasing the curvature enlarges the range of mass ratios $\mu$ for which gyroscopic stabilization takes place. In particular, 
positive curvature has a stabilizing effect, whereas negative curvature has a destabilizing effect.

\subsection{Negative curvature}

As recalled in Proposition~\ref{th:existence-neg-triangular},  the only triangular RE for $\kappa<0$ are $\Ll_{4}$ and $\Ll_{5}$, which have identical  stability properties due to symmetry.
Our numerical investigations, illustrated
in Fig.~\ref{fig:numstabilityL45triangularneg}, indicate their stability depending on the parameter values
$(\kappa,\mu)$ with $0<\mu<\muT$. These investigations were performed by applying the stability criteria given
in Proposition~\ref{prop:stabilitytrineg}. The figure shows that the linear stability region, colored  in 
green, becomes small for large negative $\kappa$. Numerical evidence for this phenomenon 
was already given in \cite{MS17} (see bottom right panel of Fig.~21), but our choice of parameters
$(\kappa,\mu)$ indicates this in a more transparent manner.

\begin{figure}[H]
\centering
\makebox[\textwidth][c]{
\begin{minipage}[c]{0.28\textwidth}
\centering
\begin{tabular}{@{}l@{\hspace{0.6em}}l@{}}
\legenditem{008000}{Elliptic} \\
\legenditem{FF0000}{Complex saddle} \\
\end{tabular}
\end{minipage}
\hspace{0.05\textwidth}
\begin{minipage}[c]{0.4\textwidth}
\centering
\begin{overpic}[width=\textwidth]{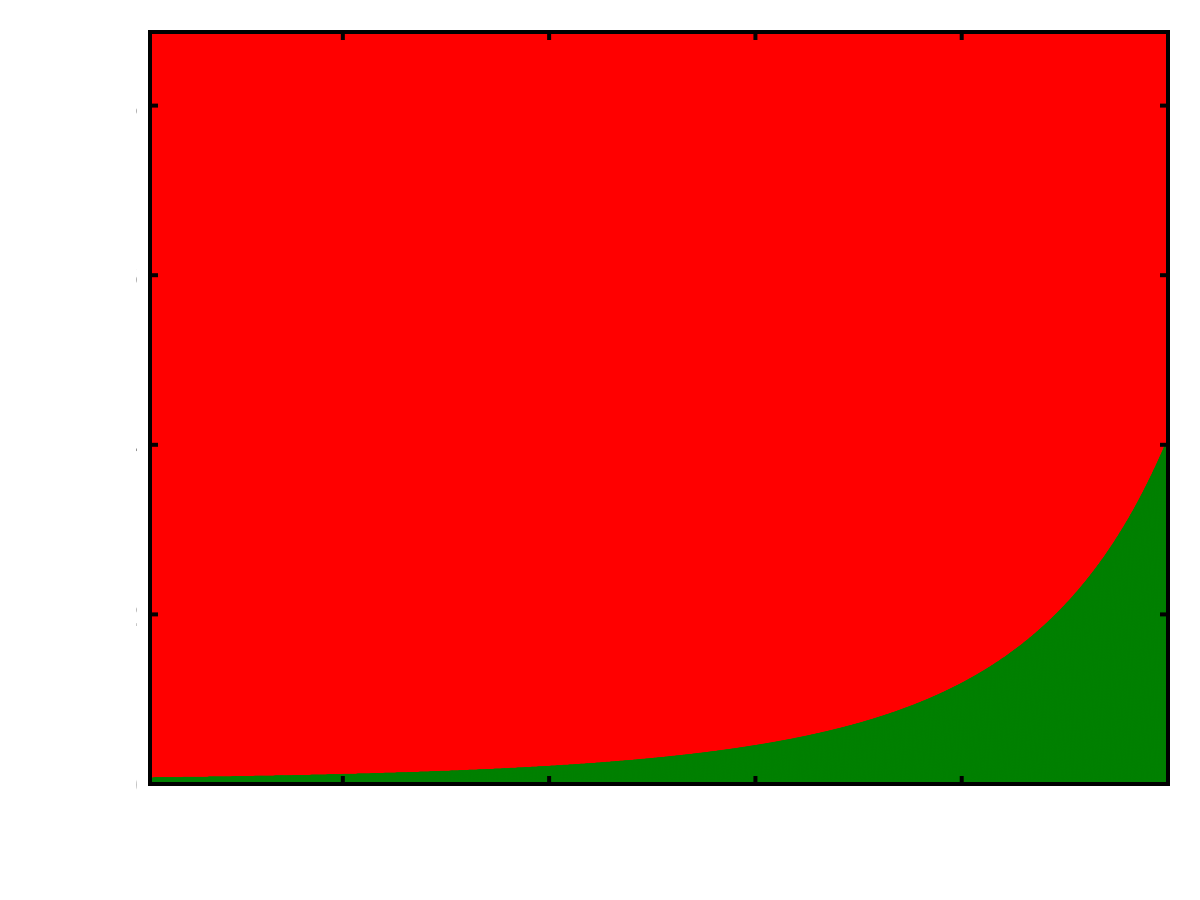}
	\put(25,6.5){\tiny -2.0}
	\put(42.5,6.5){\tiny -1.5}
	\put(59.5,6.5){\tiny -1.0}
	\put(76.5,6.5){\tiny -0.5}
	\put(95,6.5){\tiny 0.0}
	\put(54,3){\tiny $\kappa$}
	\put(4.8,8.5){\tiny 0.00}
	\put(4.8,23){\tiny 0.02}
	\put(4.8,37){\tiny 0.04}
	\put(4.8,51){\tiny 0.06}
	\put(4.8,65){\tiny 0.08}
	\put(-0.5,37){\tiny $\mu$}
\end{overpic}
\end{minipage}
}
\captionsetup{width=0.75\textwidth}
\caption{Numerical investigation of stability of the triangular RE $\Ll_{4}$ and $\Ll_{5}$ for negative curvature as a function of $(\kappa,\mu)$.}
\label{fig:numstabilityL45triangularneg}
\end{figure}

\subsection{Positive curvature}
\label{ss:stability-triangular-CAPs}

Our investigations of stability of triangular RE for $\kappa>0$ are shown in Fig.~\ref{fig:CAPstabilitytriangular}
which illustrates  both CAPs and numerical investigations in the parameter region $\Ps{\muT}$. These
refer to the triangular points $\Ll_{4,5}$ and $\Tt_{1,2}$ described in the classification of triangular RE given
in Conjecture \ref{thm:classtriangular}.

The stability CAPs in panels (a) and (c) of Fig.~\ref{fig:CAPstabilitytriangular}
 were implemented in analogy with our analysis of collinear RE in 
Sec.~\ref{ss:CAPsCollinear}: we first computed an enclosure of the  roots $(x_0, \kappa_0, \mu_0)$ of the function
$g_{\kappa,\mu}:\mathcal{I}_{\kappa,\mu}\to \R$ defined in Eq.~\eqref{eq:auxfunctiontriangular}, 
whose existence was proved using a CAP illustrated in Fig.~\ref{fig:CAPexistencetriangular}. We then 
set $d_2=x_0$, $d_1=\arcsin\left(\Lambda\sin(x_{0})\right)$, and 
estimated, using interval arithmetic, the signs of the quantities $\Delta$ and $M-4\Delta$ 
 in Proposition~\ref{prop:stabilitytripos} in order 
 to derive the corresponding stability conclusion from the proposition.

This procedure has the same 
limitations as the stability CAPs for collinear RE in Sec.~\ref{ss:CAPsCollinear}. In particular, it 
cannot be implemented for parameter values $(\kappa,\mu)$ 
for which the existence CAPs in Fig.~\ref{fig:CAPexistencetriangular} fail, or when the
signs of $\Delta$ and $M-4\Delta$ cannot be determined via interval arithmetic.

The numerical investigations in panels (b) and (d) of Fig.~\ref{fig:CAPstabilitytriangular} were also
performed using the stability criteria of Proposition~\ref{prop:stabilitytripos}. These panels suggest that, although the CAPs are inconclusive near the bifurcation curves where the stability or the number of triangular RE changes, they nevertheless capture the essential features of the behavior.

The CAPs in panel (a) of Fig.~\ref{fig:CAPstabilitytriangular} provide a rigorous proof
that the range of mass ratios $\mu$ for which $\Ll_{4}$ and $\Ll_{5}$  are  gyroscopically  stabilized 
 increases as the curvature increases, as was mentioned before. On the other hand, 
panels (c) and (d) provide very strong evidence that the triangular points $\Tt_{1}$ and $\Tt_{2}$
are unstable for all parameter values $(\kappa,\mu)\in \Ps{\muT}$ for which they exist.

\begin{figure}[H]
\centering
\makebox[\textwidth][c]{%
\begin{minipage}[c]{0.20\textwidth}
\centering
\begin{tabular}{@{}l@{\hspace{0.8em}}l@{}}
\legenditem{008000}{Elliptic} \\
\legenditem{FF0000}{Complex saddle} \\
\legenditem{FFA500}{Center-saddle} \\
\legenditem{1A1A1A}{Inconclusive CAP}
\end{tabular}
\end{minipage}
\hspace{0.03\textwidth}
\begin{minipage}[c]{0.72\textwidth}
\centering
\subfigure[$\Ll_{4,5}$]{
\begin{overpic}[width=0.45\textwidth]{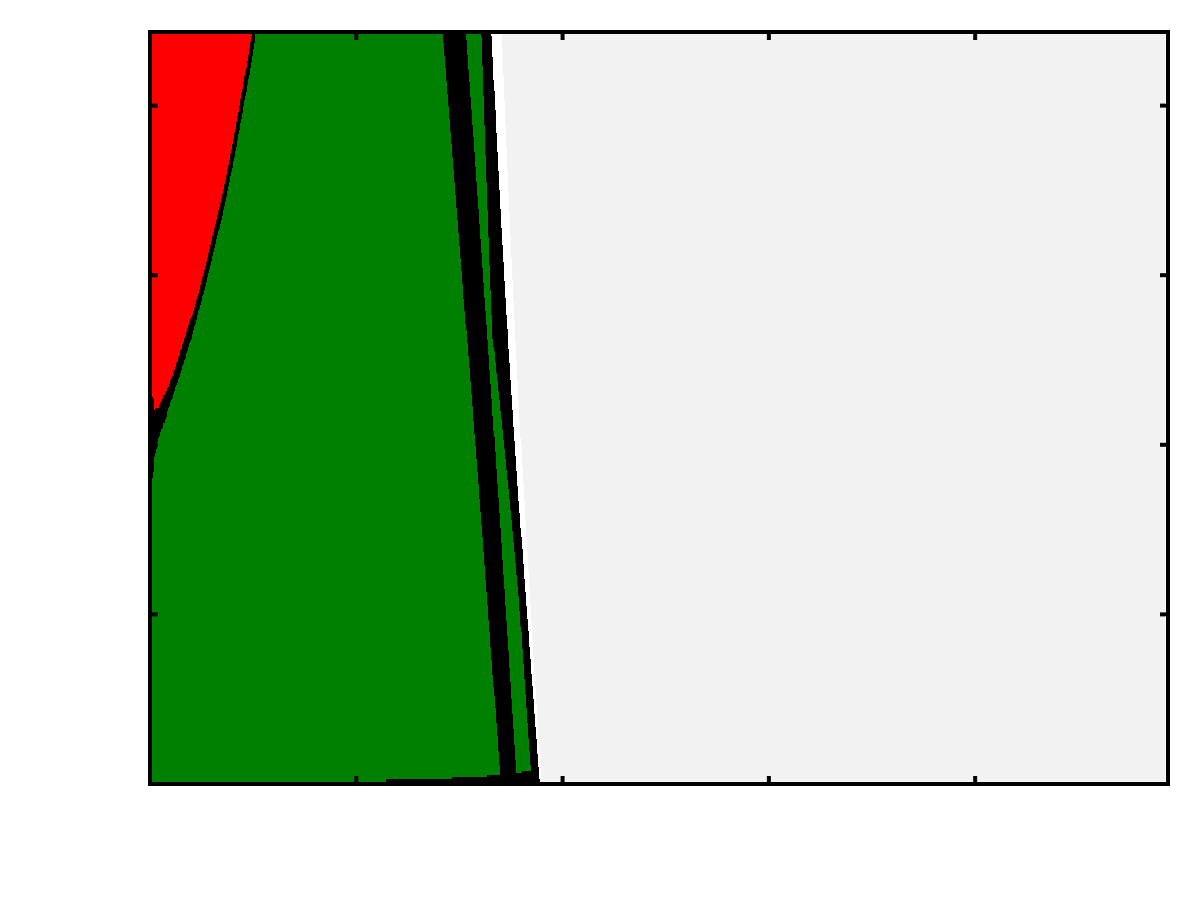}
    \put(9.4,6.5){\tiny 0.0}
    \put(26.7,6.5){\tiny 0.5}
    \put(44,6.5){\tiny 1.0}
    \put(61.2,6.5){\tiny 1.5}
    \put(78.5,6.5){\tiny 2.0}
     \put(54,3){\tiny $\kappa$}
    \put(3.5,9){\tiny 0.00}
    \put(3.5,23){\tiny 0.02}
    \put(3.5,37){\tiny 0.04}
    \put(3.5,51){\tiny 0.06}
    \put(3.5,65){\tiny 0.08}
    \put(-2,37){\tiny $\mu$}
\end{overpic}}
\subfigure[$\Ll_{4,5}$]{
\begin{overpic}[width=0.45\textwidth]{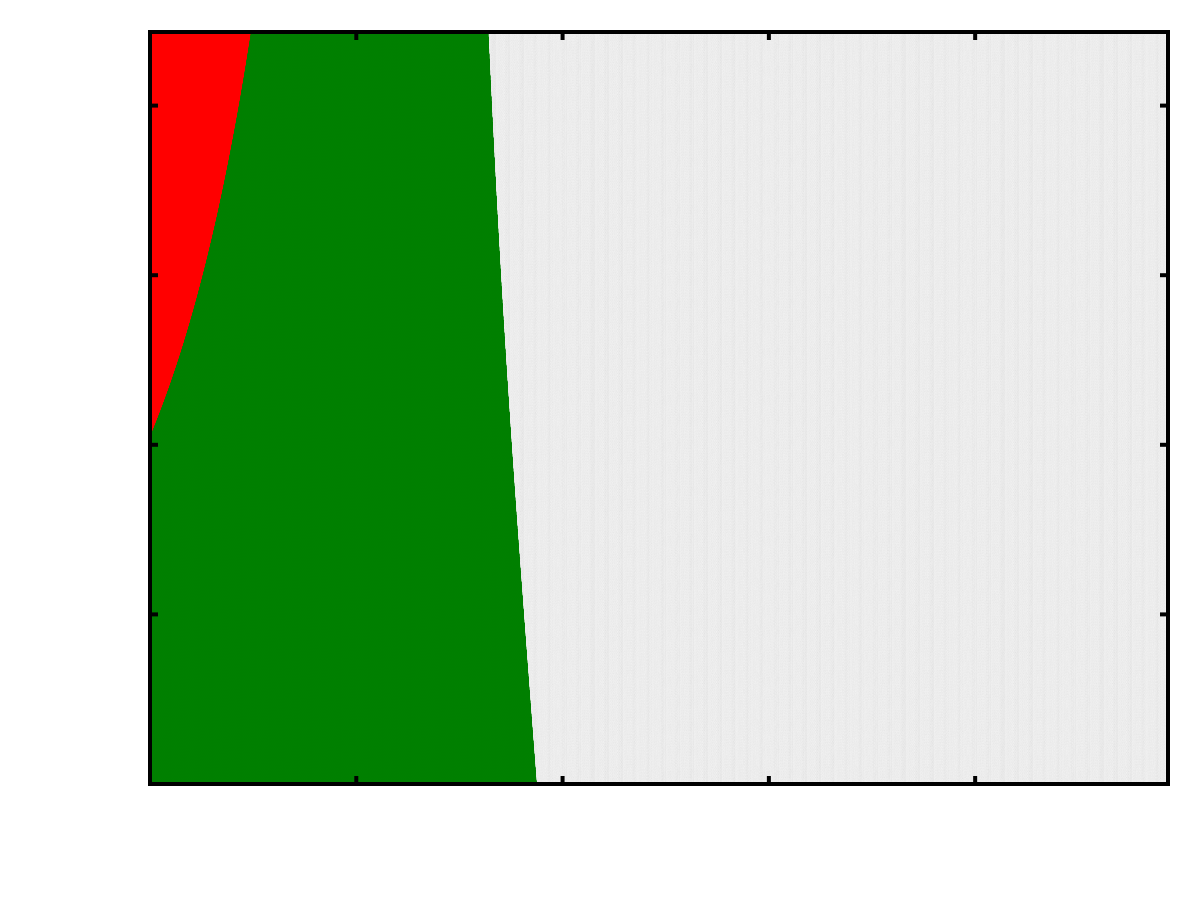}
    \put(9.4,6.5){\tiny 0.0}
    \put(26.7,6.5){\tiny 0.5}
    \put(44,6.5){\tiny 1.0}
    \put(61.2,6.5){\tiny 1.5}
    \put(78.5,6.5){\tiny 2.0}
     \put(54,3){\tiny $\kappa$}
    \put(3.5,9){\tiny 0.00}
    \put(3.5,23){\tiny 0.02}
    \put(3.5,37){\tiny 0.04}
    \put(3.5,51){\tiny 0.06}
    \put(3.5,65){\tiny 0.08}
    \put(-2,37){\tiny $\mu$}
\end{overpic}}
\\[0.5em]
\subfigure[$\Tt_{1,2}$]{
\begin{overpic}[width=0.45\textwidth]{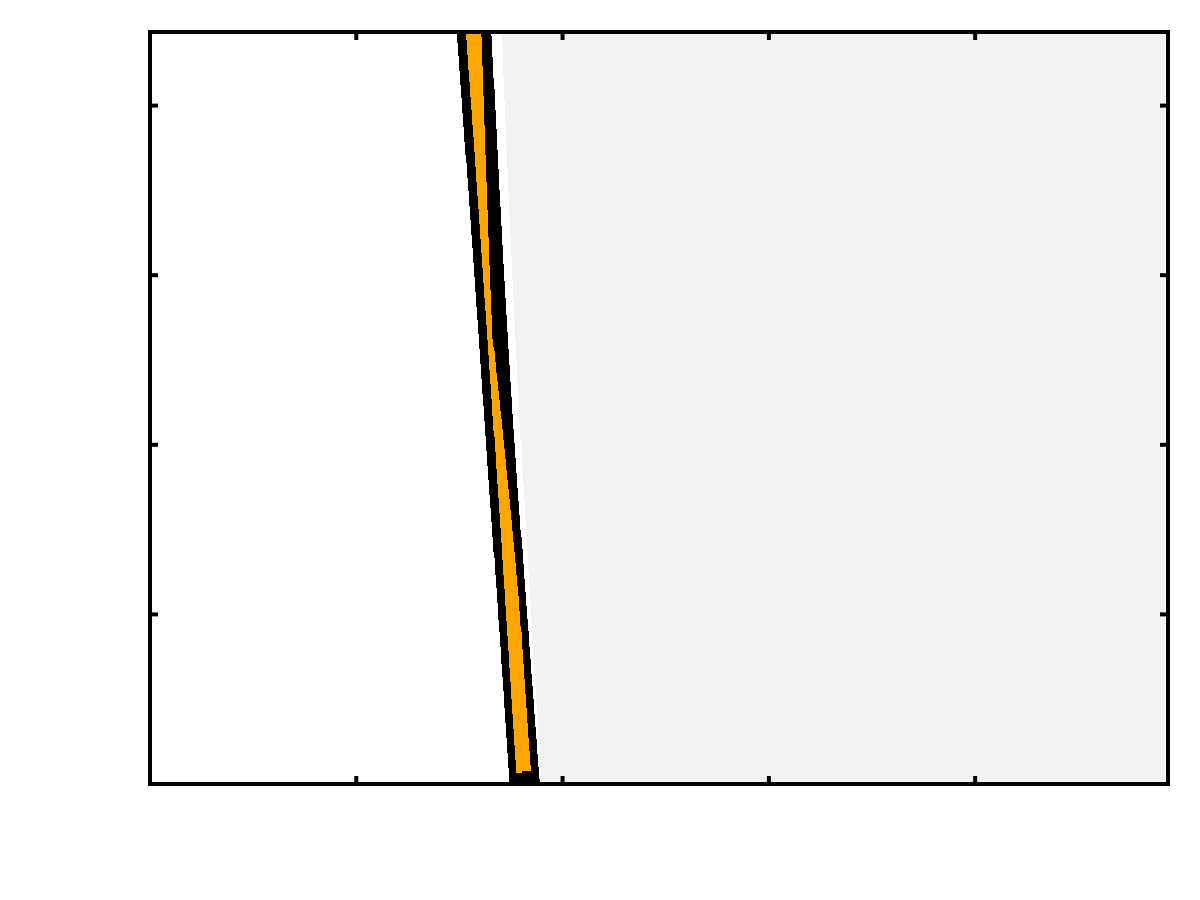}
    \put(9.4,6.5){\tiny 0.0}
    \put(26.7,6.5){\tiny 0.5}
    \put(44,6.5){\tiny 1.0}
    \put(61.2,6.5){\tiny 1.5}
    \put(78.5,6.5){\tiny 2.0}
     \put(54,3){\tiny $\kappa$}
    \put(3.5,9){\tiny 0.00}
    \put(3.5,23){\tiny 0.02}
    \put(3.5,37){\tiny 0.04}
    \put(3.5,51){\tiny 0.06}
    \put(3.5,65){\tiny 0.08}
    \put(-2,37){\tiny $\mu$}
\end{overpic}}
\subfigure[$\Tt_{1,2}$]{
\begin{overpic}[width=0.45\textwidth]{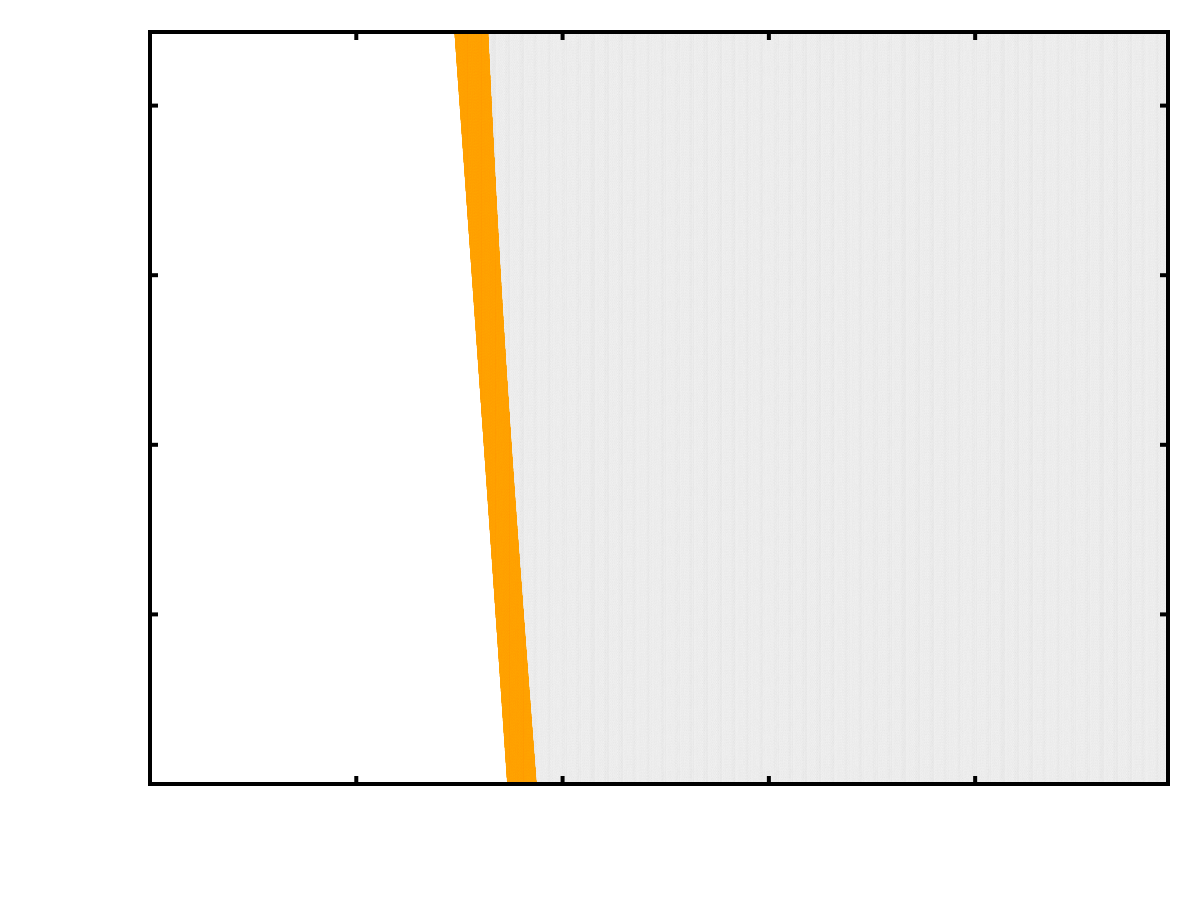}
    \put(9.4,6.5){\tiny 0.0}
    \put(26.7,6.5){\tiny 0.5}
    \put(44,6.5){\tiny 1.0}
    \put(61.2,6.5){\tiny 1.5}
    \put(78.5,6.5){\tiny 2.0}
     \put(54,3){\tiny $\kappa$}
    \put(3.5,9){\tiny 0.00}
    \put(3.5,23){\tiny 0.02}
    \put(3.5,37){\tiny 0.04}
    \put(3.5,51){\tiny 0.06}
    \put(3.5,65){\tiny 0.08}
    \put(-2,37){\tiny $\mu$}
\end{overpic}}
\end{minipage}%
}
\captionsetup{width=0.75\textwidth}
\caption{Stability CAPs (panels (a), (c)) and numerical investigations (panels (b), (d)) of triangular RE in the case of positive curvature, as a function of the parameters $(\kappa,\mu)\in \Ps{\muT}$.}
\label{fig:CAPstabilitytriangular}
\end{figure}

We summarize  the results in Fig.~\ref{fig:CAPstabilitytriangular} in the following. 

\begin{conjecture}[Stability classification of triangular RE as a function of $\kappa>0$  for  $0<\mu<\mu_{\Tt}$]
\label{thm:class-stability-triangular}
Let $\mu \in (0,\mu_{\Tt})$. The stability of triangular RE is as follows. 
\begin{description}
    \item[$\bullet$] If $0<\mu < \mu_R$, then    $\Ll_4$ and $\Ll_5$ 
    are elliptic for all values $\kappa>0$ for which they exist.
     \item[$\bullet$] If $\mu_R<\mu< \muT$, there exists $\kappa_s(\mu)>0$ such that
       $\Ll_4$ and $\Ll_5$ 
    are complex--saddles if $0<\kappa<  \kappa_s(\mu)$ and instead elliptic 
    for all other $\kappa>  \kappa_s(\mu)$ for which they exist. Furthermore, $\kappa_s(\mu)$ is
    an increasing function of $\mu$ that converges to $0$ as $\mu\to \mu_R$.
 \item[$\bullet$]      $\Tt_1$ and $\Tt_2$ are center--saddles for all $\kappa>0$ for which they exist.
     \end{description}
\end{conjecture}

We stress that we do not have a rigorous  proof of the existence of the bifurcation curve $\mu\mapsto (\kappa_s(\mu),\mu)$. We also do not have rigorous proofs of stability 
for the parameter values $(\kappa,\mu)\in \Ps{\muT}$ for
which the CAPs in panels (a) and (c) of Fig.~\ref{fig:CAPstabilitytriangular} are inconclusive. 

\begin{remark}
\label{rmk:kappasnegative}
The threshold value $\kappa_{s}(\mu)$ appearing in the second item of Conjecture~\ref{thm:class-stability-triangular} is shown explicitly in Fig.~\ref{fig:asymptriangular}(b) for $\mu\approx0.06$. For $0<\mu<\mu_{R}$, numerical evidence suggests the existence of a negative value of the curvature, which we also denote by $\kappa_{s}$, at which $\Ll_{4}$ and $\Ll_{5}$ transition from ellipticity to instability as $\kappa$ decreases; Figure~\ref{fig:asymptriangular}(a) illustrates this transition for $\mu\approx0.01$. The curve $\mu\mapsto (\mu,\kappa_{s}(\mu))$
in the parameter space separates the red and green regions in Fig.~\ref{fig:stabilityL4}.
\end{remark}

\subsection{Gyroscopic stabilization of \texorpdfstring{$\Ll_{4}$}{L4} and \texorpdfstring{$\Ll_{5}$}{L5}}
\label{ss:stabilityL4L5}

Finally, we emphasize  the effect of curvature in the gyroscopic stabilization of the
triangular points $\Ll_{4}$ and $\Ll_{5}$ by presenting Fig.~\ref{fig:stabilityL4}, obtained
numerically, in which the stability properties of these points is illustrated as a function 
of parameter values $\kappa, \mu$, with $\kappa$ passing from negative to positive. Since the critical mass ratio $\mu_{\Tt}$ is unrelated to the triangular points $\Ll_{4,5}$, we extend the range of $\mu$ including values greater than $\mu_{\Tt}$ in this figure.} The vertical black line in the figure corresponds to $\kappa=0$, and the horizontal one to
Routh's critical mass
ratio $\mu_R$ in Eq.~\eqref{eq:routh}. These lines intersect at the boundary between the red and green regions,
since $\mu_R$ is the threshold value for linear stability of the planar problem.

The figure clearly illustrates that positive curvature promotes gyroscopic stabilization, whereas negative curvature diminishes the parameter region corresponding to stability.

\begin{figure}[H]
\centering
\makebox[\textwidth][c]{
\begin{minipage}[c]{0.28\textwidth}
\centering
\begin{tabular}{@{}l@{\hspace{0.6em}}l@{}}
\legenditem{008000}{Elliptic} \\
\legenditem{FF0000}{Complex saddle} \\
\end{tabular}
\end{minipage}
\hspace{0.05\textwidth}
\begin{minipage}[c]{0.4\textwidth}
\centering
\begin{overpic}[width=\textwidth]{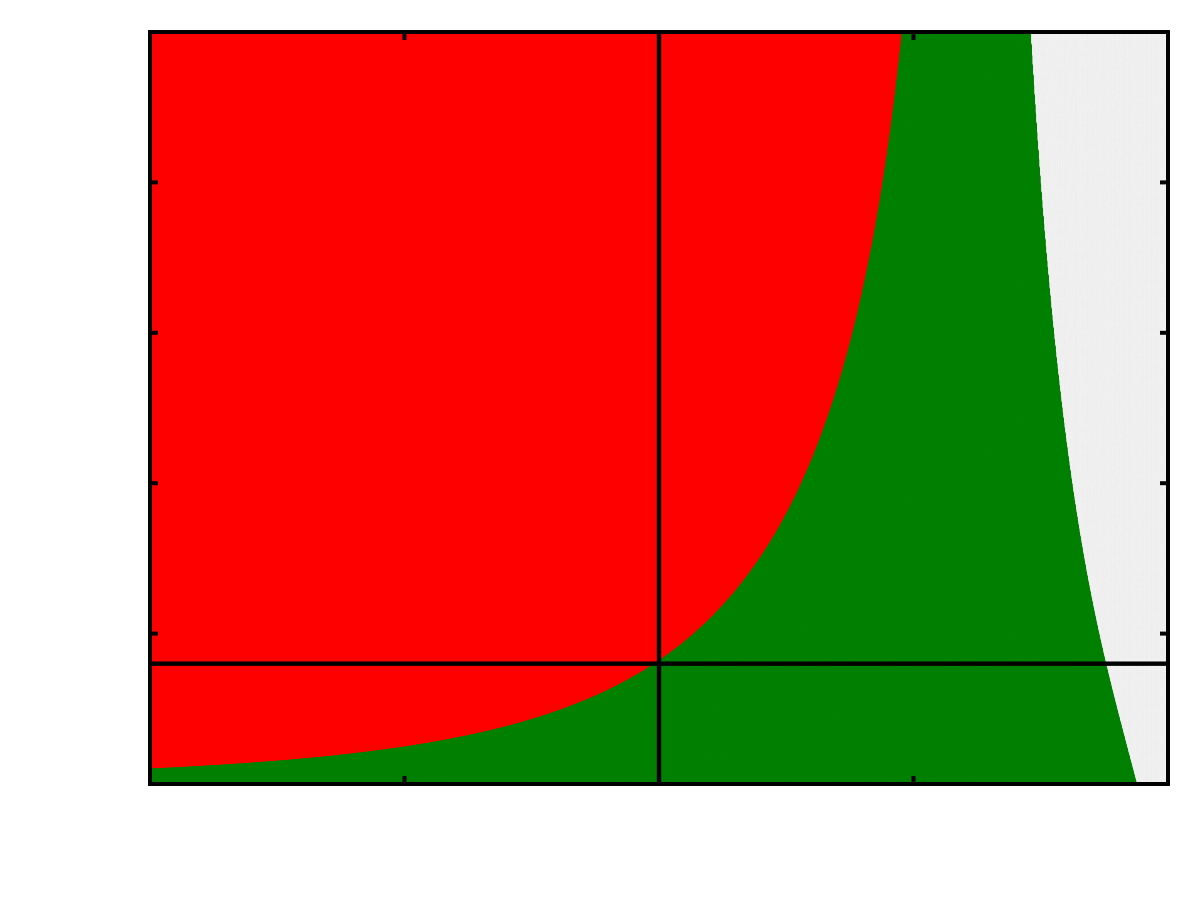}
	\put(9,6){\tiny -1.0}
	\put(30,6){\tiny -0.5}
	\put(52.5,6){\tiny 0.0}
	\put(73.75,6){\tiny 0.5}
	\put(95,6){\tiny 1.0}
	\put(54,3){\tiny $\kappa$}
	\put(5,8.5){\tiny 0.00}
	\put(5,21.5){\tiny 0.05}
	\put(8,19){\scalebox{0.5}{$\mu_{R}$}}
	\put(5,34){\tiny 0.10}
	\put(5,46.5){\tiny 0.15}
	\put(5,59){\tiny 0.20}
	\put(-0.2,37){\tiny $\mu$}
\end{overpic}
\end{minipage}
}
\captionsetup{width=0.75\textwidth}
\caption{Linear stability of $\Ll_{4}$ and $\Ll_{5}$ as a function of the parameters
$\kappa$ and $\mu$, for positive and negative curvature.}
\label{fig:stabilityL4}
\end{figure}

\section{\texorpdfstring
  {Asymptotic behavior of relative equilibria for small $\kappa$ and $\mu$.}
  {Asymptotic behavior of relative equilibria for small kappa and mu.}}
\label{sec:asymp}

We now provide asymptotic expansions that yield information about the locations of the RE
of the curved R3BP in the limit  $\kappa, \mu \to 0$. 
In our discussion we allow for both positive and negative curvature. 
The details behind the derivation of these asymptotic expansions are given  Sec.~\ref{app:appendix}
 of Appendix~\ref{app:analytic-proofs}.

In the statement of the expansions in this section, for a function $f(\kappa,\mu)$, which need not be defined at $\kappa=0$,  and
$n\in \mathbb{N}$,    the notation
\begin{equation*}
f(\kappa,\mu)=\mathcal{O}_{\mu}\, ( |\kappa|^{n/2}), 
\end{equation*}
means that there exists a continuous function  $C:(0,1)\to \R$ such that 
\begin{equation*}
|f(\kappa,\mu)|\leq  C(\mu)|\kappa|^{n/2},
\end{equation*}
for sufficiently small non-zero $|\kappa|$. In some cases, the same notation will be 
used for values of $\kappa$ that are either strictly positive, with obvious adaptations of our convention.

\subsection{Collinear RE}

Let  $(\kappa,\mu)$ be parameter values for which the collinear point $\Zz$ exists with $\kappa\neq 0$. Here,   $\Zz$ denotes any of the points 
for $\Ll_1$, $\Ll_2$, or $\Ll_3$ in Proposition~\ref{prop:ClassCollinearNeg} if $\kappa< 0$,  and any of the points  $\Ll_1$, $\Ll_2$, $\Ll_3$, $\Ee_2$, $\Ee_3$, or $\Aa_1$  in Theorem~\ref{thm:numbercollinearRE} if $\kappa>0$. We  denote by 
\begin{equation*}
D_{\Zz,C}(\kappa,\mu),
\end{equation*}
the corresponding signed Riemannian distance from  $\Zz$ to the center of rotation $C$. The sign is determined  by the position of 
$\Zz$ relative to  $C$ along the geodesic $\mathcal{G}^{\pm}$, according to the following convention: the distance is positive if
 $\Zz$ lies in the segment containing the light primary ${\bf p}_2$ and negative if $\Zz$ lies in the segment containing the heavy primary ${\bf p}_1$. In the case of positive $\kappa$, where the geodesic is a circle, the appropriate segment is determined by choosing the
 smallest arc connecting  $\Zz$ and $C$. 
 
\begin{theorem}
\label{prop:asymLpositive}
The following 
asymptotic expansions hold as $\kappa\to 0$ and  $\mu \to 0^+$:
\begin{equation}
\label{eq:asymp-L1L2L3}
\begin{split}
    D_{\Ll_{1},C}(\kappa,\mu) &= 1 - \left(\frac{1}{3}\right)^{1/3}\mu^{1/3} + \mathcal{O}(\mu^{2/3}) + \mathcal{O}_\mu(|\kappa|^{1/2}), \\
    D_{\Ll_{2},C}(\kappa,\mu) &= 1 + \left(\frac{1}{3}\right)^{1/3}\mu^{1/3} + \mathcal{O}(\mu^{2/3}) + \mathcal{O}_\mu(|\kappa|^{1/2}), \\
    D_{\Ll_{3},C}(\kappa,\mu) &= -1 - \frac{5}{12}\mu + \mathcal{O}(\mu^2) + \mathcal{O}_\mu(|\kappa|^{1/2}).
\end{split}
\end{equation}
Additionally, restricting $\kappa>0$, the following asymptotic expansions hold as $\kappa\to 0^+$ and  $\mu \to 0^+$
\begin{equation}
\label{eq:asymp-E2E3A1}
\begin{split}
    D_{\Aa_{1},C}(\kappa,\mu) &= -\frac{\pi}{\sqrt{\kappa}} + \left(1 - \frac{1}{\sqrt{2}}\mu^{1/2} - \frac{7}{8}\mu + \mathcal{O}(\mu^{3/2}) \right) + \mathcal{O}_{\mu}(\kappa),\\
    D_{\Ee_{2},C}(\kappa,\mu) &= \frac{\pi}{2\sqrt{\kappa}} - \kappa + \mathcal{O}_{\mu}(\kappa^{3/2}),\\
    D_{\Ee_{3},C}(\kappa,\mu) &= -\frac{\pi}{2\sqrt{\kappa}} + \kappa + \mathcal{O}_{\mu}(\kappa^{3/2}).
\end{split}
\end{equation}
\end{theorem}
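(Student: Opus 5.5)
The plan is to reduce each statement to an asymptotic analysis of a root of $f_j(\theta;\kappa,\mu)$ from Sec.~\ref{sss:defIj} (or of Eq.~\eqref{eq:collinearnegative} when $\kappa<0$). By Proposition~\ref{prop:scaling}, a collinear RE with angular coordinate $\theta_0$ on $\mathcal{G}^\pm$ lies at signed Riemannian distance $D=\theta_0/\sqrt{|\kappa|}$ from $C$ in $\calS_\kappa$. For $\Aa_1$, the shortest arc must be used instead, so $D=(\theta_0+\pi)/\sqrt{\kappa}-\pi/\sqrt{\kappa}$ with the appropriate sign. I will also use the expansions, uniform for $\mu$ in compact subsets of $(0,1)$:
\[
\Gamma=\frac{|\kappa|^{3/2}}{1+\mu}\bigl(1+\mathcal{O}_\mu(\kappa)\bigr),\qquad
q_1=\frac{\mu}{1+\mu}+\mathcal{O}_\mu(\kappa),\qquad
q_2=1-q_1 .
\]
These follow from Eqs.~\eqref{eq:auxGammaq} and \eqref{eq:positionprimaries}.

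\emph{Step 1: $\Ll_1,\Ll_2,\Ll_3$.} Substitute $\theta=\sqrt{|\kappa|}\,s$ and divide the equation by $\sqrt{|\kappa|}$. Expanding $\sin$ and $\sinh$ shows that the rescaled equation has the form
\[
F(s;\mu)+\mathcal{O}_\mu(|\kappa|)=0,
\]
where $F(s;\mu)=0$ is exactly the planar collinear equation obtained from $\partial_xV_{0,\mu}=0$ on the line $y=0$ of Eq.~\eqref{eq:amended0}. The remainder is analytic in $\sqrt{|\kappa|}$ on compact $s$-sets that stay away from the singularities $s=-q_1,q_2$. The planar roots are non-degenerate, since $\Ll_1,\Ll_2,\Ll_3$ are non-degenerate saddles. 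The implicit function theorem in the variables $(s,\sqrt{|\kappa|})$ therefore gives a root
\[
s(\kappa,\mu)=s_0(\mu)+\mathcal{O}_\mu(|\kappa|^{1/2}),
\]
with a constant that depends continuously on $\mu$. By the uniqueness statements in Theorem~\ref{thm:numbercollinearRE} and Proposition~\ref{prop:ClassCollinearNeg}, this root is the RE in question. Inserting the classical small-$\mu$ expansions \eqref{eq:planarcollinear} for $s_0(\mu)$ then yields \eqref{eq:asymp-L1L2L3}.

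\emph{Step 2: $\Ee_2,\Ee_3$.} By Proposition~\ref{prop:concavity}, $\Ee_2$ is the root of $f_2$ farther from $\mathbf p_2$. Write $\theta=\pi/2+\eta$. Then $\tfrac12\sin 2\theta=-\tfrac12\sin 2\eta$, and the bracket in Eq.~\eqref{eq:f2} equals $(1+\mu)\bigl(1+\mathcal{O}(\kappa+\eta^2)\bigr)$. Hence
\[
-\eta+\mathcal{O}(\eta^3)-\kappa^{3/2}\bigl(1+\mathcal{O}_\mu(\kappa)\bigr)=0 .
\]
The implicit function theorem in $(\eta,\kappa^{3/2})$ gives $\eta=-\kappa^{3/2}+\mathcal{O}_\mu(\kappa^{5/2})$, and dividing by $\sqrt\kappa$ gives the stated formula for $D_{\Ee_2,C}$. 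The argument for $\Ee_3$ is the mirror one, using $\theta=-\pi/2+\eta$ and Eq.~\eqref{eq:f3}. One must still check that this root is the one labelled $\Ee_j$ and not $\Ll_j$; this holds because $\Ll_j$ stays at distance $\mathcal{O}(1)$ from $C$ by Step 1.

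\emph{Step 3: $\Aa_1$ (the main obstacle).} Put $\theta=-\pi+\sqrt\kappa\,s$ with $s\in(-q_1,q_2)$ and divide $f_4$ by $\sqrt\kappa$. At leading order this gives
\[
s+\frac{1}{1+\mu}\Bigl[\frac{1}{(s+q_1)^2}-\frac{\mu}{(q_2-s)^2}\Bigr]+\mathcal{O}_\mu(\kappa)=0 .
\]
The correction is $\mathcal{O}_\mu(\kappa)$ rather than $\mathcal{O}_\mu(\kappa^{1/2})$ because, after the rescaling, only even powers of $\sqrt\kappa$ appear. One then checks that the leading function is strictly monotone in $s$, so its root is unique and non-degenerate, and the implicit function theorem again applies. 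The root is $D_{\Aa_1,C}=-\pi/\sqrt\kappa+s_0(\mu)+\mathcal{O}_\mu(\kappa)$.

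The difficulty is the small-$\mu$ expansion of $s_0(\mu)$, which is a singular perturbation: $s_0$ approaches the singularity $q_2\to1$. I would use the inner scaling $q_2-s=\mu^{1/2}t$. The balance $s+1/(s+q_1)^2\approx 2=\mu/(q_2-s)^2$ gives $t_0=1/\sqrt2$, so $s_0=1-\mu^{1/2}/\sqrt2+\mathcal{O}(\mu)$. The next correction $-\tfrac78\mu$ would come from expanding $q_1,q_2$ and the $(1+\mu)^{-1}$ factor to first order and solving the linearised equation for $t_1$. Rigor here follows from an implicit function argument in $(t,\mu^{1/2})$ once the equation has been multiplied through by $t^2$.
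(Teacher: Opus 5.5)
Your proposal is correct and is essentially the paper's argument. The paper substitutes ans\"atze in powers of $\sqrt{\kappa}$ (around $\sqrt{\kappa}\,q_2$, $-\sqrt{\kappa}$, $\pm\pi/2$ and $-\pi+\sqrt{\kappa}\,q_2$) into $f_j=0$, and its first-order equations for $c_1$ are exactly your rescaled limit equations: the planar collinear equation, and its analogue with repelling centres near $\theta=-\pi$. It then solves these with the same $\mu^{1/3}$, $\mu$, $\mu^{1/2}$ scalings, whereas you quote \eqref{eq:planarcollinear} directly and add the implicit-function and labelling arguments (via Theorem~\ref{thm:numbercollinearRE}) that the paper's formal ansatz leaves implicit.

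One caveat for $\kappa<0$: Eq.~\eqref{eq:collinearnegative} as printed has the wrong sign in front of $\Gamma_{\kappa,\mu}$. It is inherited from \eqref{eq:scaledpotential}, where $\phi(\langle\mathbf{r},\mathbf{p}_i\rangle_-)=-\coth d_i$; it contradicts \eqref{eq:potentialdistancesneg} and would admit no root beyond $\mathbf{p}_2$. So expanding it literally does not give the planar equation. Your reduction is right for the corrected condition $\tfrac{1}{2}\sinh(2\theta_0)-\Gamma_{\kappa,\mu}[\cdots]=0$.
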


 Comparing the expansions in Eqs.~\eqref{eq:asymp-L1L2L3} with formulas
 \eqref{eq:planarcollinear} shows that the collinear points $\Ll_1$, $\Ll_2$ and $\Ll_3$ of the curved R3BP converge 
 to the classical collinear points of the planar problem. On the other hand, the expansions in Eqs.~\eqref{eq:asymp-E2E3A1},
 show that the distances from the collinear points  $\Aa_{1}$, $\Ee_{2}$, and $\Ee_{3}$ of Theorem~\ref{thm:numbercollinearRE} to the center of rotation $C$ becomes unbounded as $\kappa\to 0^+$. This explains
 why these RE do not exist when $\kappa=0$. 
 
 The behavior predicted by these asymptotic expansions can be appreciated in the bifurcation diagram in Fig.~\ref{fig:asympcollinear},
 obtained numerically,  where the quantities $D_{\Zz,C}(\kappa,\mu)$ are plotted against $\kappa>0$ for a fixed value $\mu\approx 0.01$.
 
 \subsection{Triangular RE}
 
 We now consider the triangular points $\Ll_4$, $\Ll_5$ in Proposition~\ref{th:existence-neg-triangular} for $\kappa<0$ and in Theorem~\ref{thm:numbertriangularRE} for $\kappa>0$. We denote by
 \begin{equation*}
D_{\Ll_{4},{\bf p}_1}(\kappa,\mu)= D_{\Ll_{5},{\bf p}_1}(\kappa,\mu), \qquad D_{\Ll_{4},{\bf p}_2}(\kappa,\mu)= D_{\Ll_{5},{\bf p}_2}(\kappa,\mu),
\end{equation*}
their Riemannian distances from $\Ll_4$ and $\Ll_5$ to each of the primaries. Note that the above pair of 
equalities  hold by the reflectional symmetry of
the problem with respect to the geodesic $\mathcal{G}^+$.

\begin{theorem}
\label{th:asympexpL4L5}
For any $\mu\in (0,1)$, the following asymptotic expansions hold as $\kappa\to 0$.
\begin{equation*}
\begin{split}
  D_{\Ll_{4},{\bf p}_1}(\kappa,\mu)&= D_{\Ll_{5},{\bf p}_1}(\kappa,\mu) = 1 + \frac{\mu}{6(1+\mu)}\vert\kappa\vert + \mathcal{O}_{\mu}(\kappa^{2}), \\
  D_{\Ll_{4},{\bf p}_2}(\kappa,\mu)&= D_{\Ll_{5},{\bf p}_2}(\kappa,\mu) = 1 + \frac{1}{6(1+\mu)}\vert\kappa\vert + \mathcal{O}_{\mu}(\kappa^{2}).
  \end{split}
\end{equation*}
\end{theorem}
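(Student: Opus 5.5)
The plan is to work in the distance coordinates of Sec.~\ref{sec:triangularbalanced} and expand the equilibrium conditions in powers of $\kappa$. For $\kappa>0$, the triangular RE $\Ll_{4}$ solves system \eqref{eq:partialdistancescoordspositive2}, and its scaled distances $(d_1,d_2)$ to the primaries are recovered from the true distances by Remark~\ref{rmk:distances}, namely $d_i=\sqrt{\kappa}\,D_{\Ll_4,{\bf p}_i}$. We therefore set $d_i=\sqrt{\kappa}\,\delta_i$ and seek $\delta_i=1+c_i\kappa+\mathcal{O}_\mu(\kappa^2)$. The case $\kappa<0$ is handled identically with the hyperbolic system obtained from \eqref{eq:potentialdistancesneg}. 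Equivalently, one may formally replace $\sqrt{\kappa}$ by $i\sqrt{-\kappa}$; this explains why the correction involves $|\kappa|$ with the same sign on both sides, and we would check this directly as well.

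The first step is to expand every ingredient in $\sqrt{\kappa}$. From \eqref{eq:positionprimaries} we get $q_1=\mu/(1+\mu)+\mathcal{O}_\mu(\kappa)$, with the $\kappa$-correction computed explicitly from $\cot_\kappa(2q_1)=(1+\mu\cos_\kappa 2)/(\mu\sin_\kappa 2)$. Next, $\Gamma$ from \eqref{eq:auxGammaq} expands as $\kappa^{3/2}(1+\mu)^{-1}(1+\gamma_1\kappa+\dots)$, and $\Lambda$ from \eqref{eq:constantLambda} as $1+\ell_1\kappa+\dots$. Since both sides of each equation are odd or even in $\sqrt{\kappa}$ in a controlled way, after dividing by the leading power $\kappa^{3}$ we obtain analytic equations $F_i(\delta_1,\delta_2;\kappa,\mu)=0$ in the variable $\kappa$ itself. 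At $\kappa=0$ these reduce to the planar conditions, whose unique solution in the relevant range is $\delta_1=\delta_2=1$ (the equilateral point \eqref{eq:planartriangular}).

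The second step is to show that the Jacobian $\partial(F_1,F_2)/\partial(\delta_1,\delta_2)$ at $(1,1;0,\mu)$ is invertible. This is the classical nondegeneracy of $\Ll_4$, and it can also be checked directly. By the implicit function theorem, $\delta_i$ is then analytic in $\kappa$ near $0$, with coefficients continuous in $\mu\in(0,1)$. This yields the $\mathcal{O}_\mu(\kappa^2)$ remainder in the convention of the paper. The branch obtained coincides with $\Ll_4$ of Theorem~\ref{thm:numbertriangularRE} and Proposition~\ref{th:existence-neg-triangular} by uniqueness.

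The last step, which is the main obstacle, is computing $c_1,c_2$ from the first-order terms. A cleaner route than brute force is to use the balance condition \eqref{eq:mubalanced2}, $\sin(\sqrt{\kappa}\delta_1)=\Lambda\sin(\sqrt{\kappa}\delta_2)$. This gives the linear relation $c_1-c_2=\ell_1$ (since the cubic terms of $\sin$ cancel at $\delta_i=1$), so only one further equation is needed. For that, we take the first equation of \eqref{eq:partialdistancescoordspositive2} at order $\kappa$. The expected answer is $c_1=\mu/(6(1+\mu))$ and $c_2=1/(6(1+\mu))$, hence $\ell_1=(\mu-1)/(6(1+\mu))$; this is consistent with $\Lambda<1$, and we would verify it from \eqref{eq:alternativeLambda} using $q_2-q_1=(1-\mu)/(1+\mu)$. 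The delicate part is the careful bookkeeping of the $\kappa$-corrections to $q_1$ and $\Gamma$, which enter at the same order.
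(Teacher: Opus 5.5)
For $\kappa>0$ your argument is sound, and it takes a different route from the paper. The paper restricts to the balanced curve $d_2=x$, $d_1=\arcsin(\Lambda\sin x)$. It substitutes the formal ansatz $x_0=\sum_n c_n\kappa^{n/2}$ into the scalar function $g_{\kappa,\mu}$ and reads off $c_1=1$, $c_2=0$, $c_3=1/(6(1+\mu))$. The distance to $\mathbf{p}_1$ then comes from $\arcsin(\Lambda\sin x_0)/\sqrt{\kappa}$. You instead keep both equations, rescale $d_i=\sqrt{\kappa}\,\delta_i$, and apply the implicit function theorem at $\delta_1=\delta_2=1$. The parity observation is the real gain: it makes $\delta_i$ analytic in $\kappa$, which justifies the $\mathcal{O}_\mu(\kappa^2)$ remainder. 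The paper's half-integer ansatz never checks that the $\kappa^{3/2}$ coefficient vanishes. Two small corrections:
\begin{itemize}
\item Both sides of each equation are of order $\kappa^{5/2}$, not $\kappa^{3}$. After dividing by $\kappa^{5/2}$ the limit system is just $\delta_1^3=1$, $\delta_2^3=1$, so nondegeneracy is immediate and you need not invoke the classical result.
\item The order-$\kappa$ bookkeeping does close. Use $q_1,q_2$ from \eqref{eq:expansionsparameters} and $\Gamma=\frac{\kappa^{3/2}}{1+\mu}\bigl(1-\kappa+\frac{2\mu\kappa}{(1+\mu)^2}\bigr)+\mathcal{O}_\mu(\kappa^{7/2})$. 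The first equation then gives $3c_1=\mu/(2(1+\mu))$, and your relation $c_1-c_2=\ell_1=-(1-\mu)/(6(1+\mu))$ gives $c_2=1/(6(1+\mu))$.
\end{itemize}

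The genuine problem is $\kappa<0$. Your remark that replacing $\sqrt{\kappa}$ by $i\sqrt{-\kappa}$ explains a correction in $|\kappa|$ with the same sign on both sides is backwards. Write the equations with $\sin_\kappa$, $\cos_\kappa$ and the true distances $\delta_i$. Then the system coming from \eqref{eq:potentialdistancesneg} is literally the analytic continuation of the positive-curvature one, since $q_j$, $\Gamma/|\kappa|^{3/2}$ and $\Lambda^{\pm}$ all continue into each other. Your own implicit-function argument therefore produces a single analytic branch $\delta_i=1+c_i\kappa+\mathcal{O}_\mu(\kappa^2)$ through $\kappa=0$. For $\kappa<0$ this reads $1-c_i|\kappa|$, so the direct check you promise will contradict the $|\kappa|$ form rather than confirm it.

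You can also see this without any expansion. For $\kappa<0$ the balanced condition is $\sinh d_1=\Lambda^-\sinh d_2$ with $\Lambda^->1$ by \eqref{eq:inequalityLambda}. This forces $D_{\Ll_4,\mathbf{p}_1}>D_{\Ll_4,\mathbf{p}_2}$. The displayed formulas instead give $D_{\Ll_4,\mathbf{p}_1}-D_{\Ll_4,\mathbf{p}_2}=-\frac{1-\mu}{6(1+\mu)}|\kappa|<0$.

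So the statement as written is correct only for $\kappa>0$. The paper's proof carries out only that case and defers $\kappa<0$ to ``analogous arguments'', which would likewise produce the signed version. What your method proves for all small $\kappa\neq 0$ is the expansion with $\kappa$ in place of $|\kappa|$, and you should state your result that way.
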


In particular, note that the Riemannian distances from $\Ll_4$ and $\Ll_5$ to both  primaries converge to $1$ in the limit
as $\kappa\to 0$. This agrees with the equilateral nature of the triangular configurations 
formed by  $\Ll_4$, $\Ll_5$, and the two primaries in the 
planar case.

The behavior described above is evident  in the bifurcation diagrams shown in Fig.~\ref{fig:asymptriangular},
which were  obtained numerically. These diagrams plot  the quantity $D_{\Ll_4,{\bf p}_2}(\kappa,\mu)=D_{\Ll_5,{\bf p}_2}(\kappa,\mu)$ 
against $\kappa>0$ for the  fixed mass ratios $\mu\approx 0.01$ and $\mu\approx 0.06$.

\section{Future work}
\label{sec:future-work}

Below we indicate some natural directions of future research motivated by our work. These involve the 
R3BP and the full-three body problem on spaces of constant positive curvature. 
\begin{enumerate}
\item Complete the proofs of Conjectures \ref{thm:classcollinear}, \ref{thm:classtriangular}, \ref{thm:class-stability-collinear}, and
\ref{thm:class-stability-triangular}.
\item Investigate the  bifurcation in the parameter space occurring at $(\muT,\kappa_{\Tt})$ (see Sec.~\ref{sec:muT}).
\item Provide a detailed description of the  bifurcation of the RE for mass ratios $\muT<\mu<1$.
\item Investigate the nonlinear stability of the elliptic RE that we found 
using KAM techniques. In the appropriate parameter
ranges, these are $\Ll_3$,  $\tilde{\Ll}_3$, $\Ll_4$, $\Ll_5$.
\item Investigate the continuation of the elliptic/Lyapunov-stable RE that we found, to RE of the 
full-3 body problem on spaces of positive constant curvature 
as the mass of the satellite becomes positive. It is reasonable to expect that the RE 
found in this way will be elliptic/Lyapunov stable. 
\end{enumerate}

\subsection*{Funding} CBA and LGN acknowledge support from the project MIUR-PRIN 2022FPZEES Stability in Hamiltonian
dynamics and beyond.

\subsection*{Author Contributions} CBA and LGN developed the main theoretical framework and established the analytical results. 
MA implemented the computer-assisted proofs and carried out the numerical investigations. All authors contributed to the interpretation of the results, the writing of the manuscript, and the approval of the final version.

\subsection*{Data Availability} The code to reproduce the CAPs presented in our paper is available in \cite{github_codes}.

\subsection*{AI use statement} The authors used a generative AI tool to assist with language editing and stylistic improvements during manuscript preparation. The AI was not used to generate mathematical results, proofs, or scientific conclusions. All technical content was written, checked, and approved by the authors, who take full responsibility for the final manuscript.

\subsection*{Conflict of Interest Declaration} The authors declare that they have no conflicts of interest.

\begin{appendices}

\section{Stability of equilibria of  Lagrangian Systems with two degrees of freedom.}
\label{ap:linear}

This appendix reviews the stability criteria for equilibrium points  of Lagrangian systems of the type arising in the  curved R3BP. Namely, the Lagrangian 
has 2 degrees of freedom, consists of   the  kinetic energy term, derived from a Riemannian metric, a gyroscopic term which is linear in the velocities, and a potential. Since
the linearization around an equilibrium is a local feature of the problem, upon a choice of local coordinates,  we may assume that our configuration space is an open set  $U\subseteq\mathbb{R}^2$. 
Consider then the  Lagrangian system defined by $L : TU=U \times \mathbb{R}^2 \to \mathbb{R}$ given as
\begin{equation}
\label{eq:laggeneral}
    L(\mathbf{x},\dot{\mathbf{x}}) = \frac{1}{2} \langle \dot{\mathbf{x}}, \mathbb{M}(\mathbf{x})\dot{\mathbf{x}}\rangle + \langle\Sigma(\mathbf{x}), \dot{\mathbf{x}}\rangle - V(\mathbf{x}),
\end{equation}
where $\mathbb{M}(\mathbf{x})$ is a symmetric, positive-definite $2 \times 2$ matrix representing the kinetic energy metric, $\Sigma(\mathbf{x})\in\R^{2}$ encodes gyroscopic effects, and $V : U \to \mathbb{R}$ is a scalar potential.
The Euler--Lagrange equation associated with $L$ is given by
\begin{equation}
\label{eq:ELeqs}
    \frac{\mathrm{d}}{\mathrm{d}t} \frac{\partial L}{\partial \dot{\mathbf{x}}} - \frac{\partial L}{\partial \mathbf{x}} = 0.
\end{equation}
It is well-known that all equilibrium points of these equations are of the form $(\mathbf{x}_0, \mathbf{0})\in U\times \R^2$, where $\mathbf{x}_0$ is a critical point of $V$, i.e. 
\begin{equation*}
    \nabla V(\mathbf{x}_0) = 0.
\end{equation*}

The linearization around  $(\mathbf{x}_0, \mathbf{0})$ is obtained by first rewriting Eqs.~\eqref{eq:ELeqs} as a first-order system. 
Letting $\mathbf{u} = (\mathbf{x}, \dot{\mathbf{x}})$ an using that $\mathbb{M}(\mathbf{x})$ is invertible, we can write
\begin{equation*}
    \dot{\mathbf{u}} = F(\mathbf{u}),
\end{equation*}
for an appropriate vector function $F$. The linearized system at $\mathbf{u}_0 \coloneqq (\mathbf{x}_0, 0)$ becomes
\begin{equation*}
    \dot{\delta \mathbf{u}} = F'(\mathbf{u}_0) \delta \mathbf{u},
\end{equation*}
where $F'(\mathbf{u}_0)$ denotes the Jacobian matrix of the vector function $F$ at $\mathbf{u}_0$. The explicit form of this matrix is given next, where we denote by  $\mathbb{J}$  the $2\times 2$ symplectic matrix
\begin{equation}
\label{eq:symplectic-matrix}
\mathbb{J}=\begin{pmatrix} 0 & -1 \\ 1 &0 \end{pmatrix}.
\end{equation}

\begin{proposition}
\label{prop:linearization}
Let  $\mathbf{u}_0 = (\mathbf{x}_0, 0)$ be an equilibrium point of the Euler--Lagrange equations \eqref{eq:ELeqs} for the Lagrangian function $L$ given by formula \eqref{eq:laggeneral}. Then, the linearization at $\mathbf{u}_0$ is given by the $4\times 4$ matrix
of block form
\begin{equation}
\label{eq:blockgeneral}
    F'(\mathbf{u}_0)
    =
    \begin{pmatrix}
        \mathbb{O} & \mathbb{I} \\
        -\mathbb{M}(\mathbf{x}_0)^{-1} V''(\mathbf{x}_0)
        &
        - \sigma(\mathbf{x}_0) \mathbb{M}(\mathbf{x}_0)^{-1} \,  \mathbb{J}
    \end{pmatrix},
\end{equation}
where $\mathbb{O}$ and $\mathbb{I}$ denote the $2 \times 2$ zero and identity matrices,   $V''(\mathbf{x}_0)$ is the Hessian matrix of the potential $V$
 evaluated at $\mathbf{x}_0$, and  $\sigma(\mathbf{x}_0)\in \R$ is determined by the condition 
 \begin{equation}
\label{eq:formskew}
\Sigma'(\mathbf{x}_0) - \Sigma'(\mathbf{x}_0)^{T} = \sigma(\mathbf{x}_0)
\mathbb{J},
\end{equation}
 where $\Sigma'(\mathbf{x}_0)$ is the Jacobian matrix of $\Sigma$ evaluated at $\mathbf{x}_0$.
\end{proposition}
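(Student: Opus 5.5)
We derive the Euler--Lagrange equations directly from \eqref{eq:laggeneral} and linearize them at $\mathbf{u}_0=(\mathbf{x}_0,\mathbf{0})$.

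In coordinates we have $\partial L/\partial\dot{\mathbf{x}}=\mathbb{M}(\mathbf{x})\dot{\mathbf{x}}+\Sigma(\mathbf{x})$. Hence
\begin{equation*}
\frac{d}{dt}\frac{\partial L}{\partial\dot{\mathbf{x}}}=\mathbb{M}(\mathbf{x})\ddot{\mathbf{x}}+\big(D\mathbb{M}(\mathbf{x})[\dot{\mathbf{x}}]\big)\dot{\mathbf{x}}+\Sigma'(\mathbf{x})\dot{\mathbf{x}}.
\end{equation*}
On the other side,
\begin{equation*}
\frac{\partial L}{\partial\mathbf{x}}=\tfrac12\nabla_{\mathbf{x}}\langle\dot{\mathbf{x}},\mathbb{M}(\mathbf{x})\dot{\mathbf{x}}\rangle+\Sigma'(\mathbf{x})^T\dot{\mathbf{x}}-\nabla V(\mathbf{x}).
\end{equation*}
Since $\mathbb{M}$ is invertible, the equations become
\begin{equation*}
\ddot{\mathbf{x}}=\mathbb{M}(\mathbf{x})^{-1}\Big[-\nabla V(\mathbf{x})-\big(\Sigma'(\mathbf{x})-\Sigma'(\mathbf{x})^T\big)\dot{\mathbf{x}}+Q(\mathbf{x},\dot{\mathbf{x}})\Big],
\end{equation*}
where $Q$ collects the terms quadratic in $\dot{\mathbf{x}}$. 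The first-order system is then $F(\mathbf{x},\mathbf{v})=(\mathbf{v},\ \text{right-hand side with }\dot{\mathbf{x}}=\mathbf{v})$.

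Next we differentiate $F$ at $(\mathbf{x}_0,\mathbf{0})$.

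\textbf{Top block row.} This is simply $(\mathbb{O},\mathbb{I})$.

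\textbf{Bottom row, $\mathbf{v}$-derivative.} $Q$ is quadratic in $\mathbf{v}$, so its derivative vanishes at $\mathbf{v}=\mathbf{0}$. The gyroscopic term is linear in $\mathbf{v}$ and gives
\begin{equation*}
-\mathbb{M}(\mathbf{x}_0)^{-1}\big(\Sigma'(\mathbf{x}_0)-\Sigma'(\mathbf{x}_0)^T\big).
\end{equation*}

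\textbf{Bottom row, $\mathbf{x}$-derivative.} By the product rule this has two contributions. One is $-\mathbb{M}^{-1}V''(\mathbf{x}_0)$. The other is $D(\mathbb{M}^{-1})$ applied to the bracket, and it vanishes because the bracket is zero at the equilibrium: there $\nabla V(\mathbf{x}_0)=0$ and $\mathbf{v}=\mathbf{0}$. The remaining $\mathbf{x}$-dependence of the gyroscopic and $Q$ terms is multiplied by $\mathbf{v}=\mathbf{0}$ and so also drops out.

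It remains to simplify the gyroscopic block. The matrix $\Sigma'(\mathbf{x}_0)-\Sigma'(\mathbf{x}_0)^T$ is a real antisymmetric $2\times2$ matrix. Every such matrix is a scalar multiple of $\mathbb{J}$, which justifies the definition of $\sigma(\mathbf{x}_0)$ in \eqref{eq:formskew}. Substituting gives the lower-right block $-\sigma(\mathbf{x}_0)\mathbb{M}(\mathbf{x}_0)^{-1}\mathbb{J}$, and this yields \eqref{eq:blockgeneral}.

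The only step needing care is the sign bookkeeping in the gyroscopic term. The time derivative of $\Sigma(\mathbf{x})$ produces $\Sigma'\dot{\mathbf{x}}$, and $\partial_{\mathbf{x}}\langle\Sigma,\dot{\mathbf{x}}\rangle$ produces $(\Sigma')^T\dot{\mathbf{x}}$. Moving both to the right-hand side must give $-(\Sigma'-\Sigma'^T)$, consistent with the sign convention in \eqref{eq:formskew}. We also need the Jacobian convention $(\Sigma')_{ij}=\partial\Sigma_i/\partial x_j$, which is what makes $\frac{d}{dt}\Sigma=\Sigma'\dot{\mathbf{x}}$. Everything else is routine.
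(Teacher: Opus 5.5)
Your proposal is correct: writing the Euler--Lagrange equations as $\mathbb{M}(\mathbf{x})\ddot{\mathbf{x}} = -\nabla V(\mathbf{x}) - \big(\Sigma'(\mathbf{x})-\Sigma'(\mathbf{x})^{T}\big)\dot{\mathbf{x}} + Q(\mathbf{x},\dot{\mathbf{x}})$, with $Q$ quadratic in $\dot{\mathbf{x}}$, and then linearizing at the equilibrium (where the bracket vanishes) gives exactly \eqref{eq:blockgeneral}, and your sign bookkeeping checks out. The paper gives no argument of its own here and simply cites \cite[Sec. 3--3]{G97}; your direct computation is the standard derivation behind that reference.
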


A proof of the previous result is given in \cite[Sec. 3--3]{G97}. The precise form of the characteristic polynomial
of the above matrix is given next.
\begin{lemma}
\label{l:charpolmaxima}
The characteristic polynomial, $p(\lambda)$, of the block matrix $F'(\mathbf{u}_{0})$ given in expression \eqref{eq:blockgeneral} is
bi-quadratic
\begin{equation}
\label{eq:biquadpol}
    p(\lambda) = \lambda^{4}
    + b \lambda^{2}
    + c,
\end{equation}
with coefficients $b$ and $c$ given by
\begin{equation}
\label{eq:b-c-pol}
    b = \operatorname{tr}\big(\mathbb{M}(\mathbf{x}_{0})^{-1} V''(\mathbf{x}_{0})\big)
      +  \sigma(\mathbf{x}_{0})^{2} \det\big(\mathbb{M}(\mathbf{x}_{0})\big)^{-1}, \qquad c=\det\big(\mathbb{M}(\mathbf{x}_{0})^{-1} V''(\mathbf{x}_{0})\big).
\end{equation}
Furthermore,  $p(\lambda)$ is invariant under changes of Lagrangian coordinates.
\end{lemma}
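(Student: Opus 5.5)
The plan is to compute the characteristic polynomial of the block matrix \eqref{eq:blockgeneral} directly, by reducing the $4\times4$ determinant to a $2\times2$ one, and then to check coordinate invariance separately. Write $A=\mathbb{M}(\mathbf{x}_0)^{-1}V''(\mathbf{x}_0)$ and $B=-\sigma(\mathbf{x}_0)\mathbb{M}(\mathbf{x}_0)^{-1}\mathbb{J}$. The eigenvalue equation $F'(\mathbf{u}_0)(\xi,\eta)=\lambda(\xi,\eta)$ gives $\eta=\lambda\xi$ and $-A\xi+B\eta=\lambda\eta$, that is, $(\lambda^2\mathbb{I}-\lambda B+A)\xi=0$. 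Equivalently, the Schur complement formula for block matrices with commuting upper blocks ($\lambda\mathbb{I}$ and $-\mathbb{I}$) yields
\[
p(\lambda)=\det\big(\lambda^2\mathbb{I}-\lambda B+A\big).
\]

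The $2\times 2$ determinant is then expanded with the identity $\det(X+Y)=\det X+\det Y+\operatorname{tr}(X)\operatorname{tr}(Y)-\operatorname{tr}(XY)$, or more simply $\det(\lambda^2\mathbb{I}+N)=\lambda^4+\lambda^2\operatorname{tr}N+\det N$ with $N=A-\lambda B$. First, $\operatorname{tr}B=0$: since $\mathbb{M}^{-1}$ is symmetric and $\mathbb{J}$ is skew, $\operatorname{tr}(\mathbb{M}^{-1}\mathbb{J})=0$. This kills the $\lambda^3$ term. Next, $\det N=\det A-\lambda(\cdots)+\lambda^2\det B$. Here $\det B=\sigma^2\det(\mathbb{M}^{-1})\det\mathbb{J}=\sigma^2/\det\mathbb{M}$, which contributes to the $\lambda^2$ coefficient. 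The linear term in $\lambda$ is $\operatorname{tr}(A)\operatorname{tr}(B)-\operatorname{tr}(AB)=-\operatorname{tr}(AB)$. One checks that it vanishes:
\[
\operatorname{tr}\big(\mathbb{M}^{-1}V''\mathbb{M}^{-1}\mathbb{J}\big)=\operatorname{tr}(S\mathbb{J}),
\]
where $S=\mathbb{M}^{-1}V''\mathbb{M}^{-1}$ is symmetric, and the trace of a symmetric matrix times a skew one is zero. Collecting terms gives $b=\operatorname{tr}A+\sigma^2/\det\mathbb{M}$ and $c=\det A$, as claimed in \eqref{eq:b-c-pol}.

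For invariance, take a change of coordinates $\mathbf{x}=\varphi(\mathbf{y})$ with Jacobian $P$ at the equilibrium. At a critical point, $\mathbb{M}\mapsto P^T\mathbb{M}P$ and $V''\mapsto P^TV''P$; the latter holds because the first-derivative term drops out at a critical point. Hence $A\mapsto P^{-1}AP$, so $\operatorname{tr}A$ and $\det A$ are unchanged. The gyroscopic one-form transforms as $\Sigma\mapsto P^T\Sigma$, so its exterior derivative, the two-form with coefficient $\sigma$, transforms by $\sigma\mapsto\sigma\det P$, while $\det\mathbb{M}\mapsto(\det P)^2\det\mathbb{M}$. Thus $\sigma^2/\det\mathbb{M}$ is invariant. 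An alternative argument is that the linearization is conjugate under the change of variables $\mathrm{diag}(P,P)$, so its characteristic polynomial is unchanged.

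The only delicate step is the bookkeeping in the transformation of $\sigma$. One must use that $\Sigma'-\Sigma'^T$ is the coefficient of $d\Sigma$, and that the second-derivative terms of $\varphi$ cancel in the antisymmetrization. Citing the conjugacy of linearizations under the change of variables avoids this entirely, and I would take that route.
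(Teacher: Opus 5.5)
Your proof is correct. It computes the polynomial by a different route than the paper, while the invariance argument is the same.

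\textbf{Computing the polynomial.} The paper uses the general Leverrier trace formulas for a $4\times 4$ characteristic polynomial. It computes $M^2$ and $M^3$ for the block form with $R=-\mathbb{M}^{-1}$, $S=V''$, $A=\sigma\mathbb{J}$. Symmetric-times-skew trace cancellations then give $\operatorname{tr}M=\operatorname{tr}M^3=0$, and the paper evaluates $a_2=-\tfrac12\operatorname{tr}(M^2)$. The constant term is only recorded as $\det M$, without being computed for the block matrix.

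You instead collapse $\det(\lambda\mathbb{I}-F'(\mathbf{u}_0))$ to the $2\times 2$ determinant $\det(\lambda^2\mathbb{I}-\lambda B+A)$ and expand it. The same two cancellations then appear as $\operatorname{tr}B=0$ and $\operatorname{tr}(AB)=0$, the latter via the symmetric matrix $\mathbb{M}^{-1}V''\mathbb{M}^{-1}$ against $\mathbb{J}$.

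Your route is shorter and avoids computing $M^3$. It also produces $c=\det(\mathbb{M}^{-1}V'')$ explicitly, which the paper leaves implicit. It explains the evenness structurally: $p(\lambda)=\det(\mathbb{M})^{-1}\det(\lambda^2\mathbb{M}+\lambda\sigma\mathbb{J}+V'')$, and replacing $\lambda$ by $-\lambda$ transposes the matrix inside. The paper's route is more mechanical and relies only on generic trace identities, at the cost of more bookkeeping.

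\textbf{Coordinate invariance.} Conjugating the linearization by $\operatorname{diag}(P,P)$, which is the route you say you would take, is exactly the paper's argument. Your alternative direct check is also correct: $A\mapsto P^{-1}AP$, $\sigma\mapsto\sigma\det P$, and $\det\mathbb{M}\mapsto(\det P)^2\det\mathbb{M}$. It has the small bonus of showing that $b$ and $c$ are separately invariant.
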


\begin{proof}
The characteristic polynomial of a general $4 \times 4$ matrix $M$ can be written as
\begin{equation*}
    p_{M}(\lambda) = \lambda^{4} + a_{3}\lambda^{3} + a_{2}\lambda^{2} + a_{1}\lambda + a_{0},
\end{equation*}
where the coefficients are expressed in terms of traces of powers of $M$ and its determinant:
\begin{equation*}
\begin{aligned}
    a_{0} &= \det(M),\\
    a_{1} &= \tfrac{1}{6} \Big( \mathrm{tr}(M)^{3} - 3\,\mathrm{tr}(M)\,\mathrm{tr}(M^{2}) + 2\,\mathrm{tr}(M^{3}) \Big),\\
    a_{2} &= \tfrac{1}{2} \Big( \mathrm{tr}(M)^{2} - \mathrm{tr}(M^{2}) \Big),\\
    a_{3} &= -\,\mathrm{tr}(M).
\end{aligned}
\end{equation*}
These coefficients can be deduced by applying Leverrier's method (see \cite[Sec.~25]{Fa59}). 
Assume now, as in expression \eqref{eq:blockgeneral}, that $M$ has the block structure
\begin{equation*}
    M=\begin{pmatrix} 
         \mathbb{O} & \mathbb{I} \\ 
        RS & RA 
    \end{pmatrix},
\end{equation*}
where
\begin{equation*}
    R=-\mathbb{M}(\mathbf{x}_{0})^{-1}, 
    \qquad 
    S=V''(\mathbf{x}_{0}), 
    \qquad 
    A=\sigma(\mathbf{x}_{0})\mathbb{J}.
\end{equation*}
With this notation, we have
\begin{equation*}
    M^{2}=\begin{pmatrix} 
        RS & RA \\ 
        RARS & RS + (RA)^{2} 
    \end{pmatrix}, 
    \qquad 
    M^{3}= \begin{pmatrix} 
        RARS & RS + (RA)^{2} \\ 
        (RS)^{2} + (RA)^{2}RS  & RSRA + RARS + (RA)^{3} 
    \end{pmatrix}.
\end{equation*}
Since $R$ is symmetric and $A$ is skew-symmetric, we have,
\begin{equation*}
    \mbox{tr}(RA) = \mbox{tr}((RA)^{T}) =  \mbox{tr}(A^{T}R^{T}) =  -\mbox{tr}(AR) = - \mbox{tr}(RA),   
\end{equation*}
which implies that $\mbox{tr}(RA)=0$. Proceeding analogously, exploiting also the symmetry of $S$,  it is straightforward to show that
\begin{equation*}
   \mbox{tr}(RARS)  = \mbox{tr}(RSRA)  = 0, \qquad  \mbox{tr}( (RA)^{3})=0,
\end{equation*}
and therefore,
\begin{equation*}
     \mbox{tr}(M) =  \mbox{tr}(M^{3}) = 0,
\end{equation*}
which immediately implies $a_{1} = a_{3} = 0$. 
It remains to compute $a_{2}$, namely
\begin{equation*}
    a_{2} = -\tfrac{1}{2}\,\mathrm{tr}(M^{2})
          = -\,\mathrm{tr}(RS) - \tfrac{1}{2} \mbox{tr} \big[(RA)^{2}\big].
\end{equation*}
This reduces to the desired form 
\begin{equation*}
    a_{2} =\mbox{tr} \big(\mathbb{M}(\mathbf{x}_{0})^{-1} V''(\mathbf{x}_{0}) \big) 
            + \sigma(\mathbf{x}_{0})^{2}\det\!\big(\mathbb{M}(\mathbf{x}_{0})\big)^{-1},
\end{equation*}
by substituting the matrix values of $R, S$ and $A$, and noting that, in virtue of the symmetry of $R$, we have
\begin{equation*}
\mbox{tr}((R\mathbb{J})^2)=-2\det(R).
\end{equation*}

Finally, a smooth change of Lagrangian coordinates 
\begin{equation*}
\mathbf{y}=\psi(\mathbf{x}), \qquad \dot{\mathbf{y}}=\psi'(\mathbf{x}) \dot{\mathbf{x}},
\end{equation*}
defined by a diffeomorphism $\psi:U\to W:=\psi(U)\subseteq \R^2$, maps the equilibrium point $\mathbf{u}_{0}=(\mathbf{x}_{0},\mathbf{0})$ to $\mathbf{w}_{0}=(\mathbf{y}_{0},\mathbf{0})\in W\times \R^2$ where $ \mathbf{y}_{0}=\psi(\mathbf{x}_{0})$. Then  $\mathbf{w}_{0}$ is an equilibrium
of the Euler--Lagrange equations written in terms of $(\mathbf{y}, \dot{\mathbf{y}})$.
The linearization matrix of the corresponding  first order system, $\dot{\mathbf{w}}=G(\mathbf{w})$, can be checked to be given
by
 \begin{equation*}
G'(\mathbf{w}_0)=\begin{pmatrix} \psi'(\mathbf{x}_0) &  \mathbb{O}\\   \mathbb{O} & \psi'(\mathbf{x}_0)\end{pmatrix} F'(\mathbf{u}_0) \begin{pmatrix} \psi'(\mathbf{x}_0) &  \mathbb{O}\\   \mathbb{O} & \psi'(\mathbf{x}_0)\end{pmatrix}^{-1}.
\end{equation*}
Hence, the matrices $F'(\mathbf{u}_0)$ and $G'(\mathbf{w}_0)$ are similar, and their
characteristic polynomials coincide.
\end{proof}

The main stability classification result that we need is given below. We employ the terminology introduced in Sec.~\ref{subsec:stability}.

\begin{theorem} 
\label{thm:criticalpointsV}
Let $\mathbf{u}_{0}=(\mathbf{x}_{0},0)$ be an equilibrium point of the Lagrangian system \eqref{eq:laggeneral}. Then, according to the values of $b$, $c$ in formulas \eqref{eq:biquadpol}
and \eqref{eq:b-c-pol}, we have:
\begin{enumerate}
	\item[(i)] $\mathbf{u}_{0}$ is elliptic if and only if $b^2-4c\geq 0$, $b>0$ and $c>0$.
	\item[(ii)]  $\mathbf{u}_{0}$ is a center--saddle if and only if  $c<0$.
	\item[(iii)] $\mathbf{u}_{0}$ is a complex--saddle if and only if  $b^2-4c< 0$.
	\item[(iv)]  $\mathbf{u}_{0}$ is a saddle--saddle if and only if  $b^2-4c\geq 0$, $b<0$  and $c>0$.
\end{enumerate}
Furthermore, 
\begin{enumerate}
	\item[(v)] If $\mathbf{x}_{0}$ is a local minimum of $V$, then $\mathbf{u}_{0}$ is Lyapunov--stable.
	\item[(vi)] $\mathbf{u}_{0}$ is a center--saddle if and only if $\mathbf{x}_{0}$ is a saddle--point of $V$.
\end{enumerate}

\end{theorem}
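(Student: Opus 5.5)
The plan is to reduce everything to the biquadratic characteristic polynomial of Lemma~\ref{l:charpolmaxima}. Since $p(\lambda)=\lambda^4+b\lambda^2+c$, the eigenvalues are $\lambda=\pm\sqrt{z}$, where $z$ ranges over the roots of $z^2+bz+c=0$. The spectral type is therefore fixed by where these two roots $z_\pm$ lie: on the negative real axis (imaginary pair), the positive real axis (real pair), or off the real axis (complex quadruplet). I would first check that $b$ and $c$ are real. This follows from the reality of the matrices in Proposition~\ref{prop:linearization} together with the trace formula \eqref{eq:b-c-pol}.

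Items (i)--(iv) are then bookkeeping on $z_\pm$.
\begin{enumerate}
\item[(iii)] If $b^2-4c<0$, then $z_\pm$ are complex conjugate and non-real. Their square roots form a quadruplet $\pm\alpha\pm i\beta$ with $\alpha,\beta\neq0$, so $\mathbf u_0$ is a complex--saddle. Conversely, real $z_\pm$ give only real or purely imaginary eigenvalues.
\item[(ii)] If $c<0$, the discriminant is positive and $z_+z_-=c<0$. So exactly one root is positive (a real pair) and one is negative (an imaginary pair), giving a center--saddle. Conversely, a center--saddle forces roots of opposite sign, hence $c<0$.
\item[(i), (iv)] If the discriminant is nonnegative and $c>0$, both roots have the sign of $-b$. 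For $b>0$ both are negative, giving all imaginary eigenvalues (elliptic). For $b<0$ both are positive, giving all real eigenvalues (saddle--saddle).
\end{enumerate}
For the converses of (i) and (iv), one must exclude the degenerate cases $c=0$ and $b=0$. Here I would note that the definitions use nonzero eigenvalues, and that $b=0$ with $c>0$ makes the discriminant negative. The edge case $c=0$, which gives a zero eigenvalue, would be handled by convention as not belonging to any of these classes.

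Item (v) is the standard Lagrange--Dirichlet argument. The energy $E=\frac12\langle\dot{\mathbf x},\mathbb M\dot{\mathbf x}\rangle+V$ is conserved, because the gyroscopic term does no work. If $\mathbf x_0$ is a strict local minimum, then $E-V(\mathbf x_0)$ is a Lyapunov function. If the minimum is only non-strict, the claim needs nondegeneracy; I would state this assumption, since it holds in all of the paper's applications.

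Item (vi) is the step needing care. From $c=\det(\mathbb M^{-1})\det V''(\mathbf x_0)$ and $\det\mathbb M>0$, we get $\operatorname{sign} c=\operatorname{sign}\det V''$. A nondegenerate critical point in two dimensions is a saddle exactly when $\det V''<0$, so (ii) gives the equivalence. The main obstacle is interpreting ``saddle-point'' for degenerate Hessians. I would read (vi), consistently with the rest of the paper, as a statement about nondegenerate critical points.
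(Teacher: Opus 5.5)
Your proposal is correct and follows the paper's proof. The paper also uses an elementary root analysis of $z^2+bz+c$ for (i)--(iv), the identity $\operatorname{sign} c=\operatorname{sign}\det V''(\mathbf{x}_0)$ together with (ii) for (vi), and the conserved energy $E$ as a Lyapunov function for (v). Your caveats are hypotheses the paper's short proof uses implicitly and that hold in all its applications: the minimum in (v) must be strict (nondegeneracy suffices), the saddle in (vi) must be nondegenerate, and $c=0$ is excluded by convention.
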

\begin{proof}
The first four items follow from an elementary analysis of the roots of the bi-quadratic polynomial \eqref{eq:biquadpol}
as a function of the coefficients $b$ and $c$. For item (vi) note that the coefficient $c$ in formula \eqref{eq:b-c-pol} has the same
sign as the determinant of the Hessian matrix $V''(\mathbf{x}_{0})$. In particular, $c$ is negative if and only if 
$\mathbf{x}_{0}$ is a saddle--point of $V$. Finally, the proof of  (v) does not require the linearization: 
since $L$ is time-independent, the system admits the Jacobi (or energy) integral 
\begin{equation}
\label{eq:energy}
    E(\mathbf{x},\dot{\mathbf{x}})=\frac{1}{2} \langle \dot{\mathbf{x}}, \mathbb{M}(\mathbf{x})\dot{\mathbf{x}}\rangle + V(\mathbf{x}),    
\end{equation}
The assumption that $V$ has a local minimum at $\mathbf{x}_{0}$ implies that $E$ has a local minimum at $\mathbf{u}_{0}$ and,
therefore, $E$ can be used  as a Lyapunov function. 
\end{proof}

\section{Analytical Proofs}
\label{app:analytic-proofs}

This appendix contains the proofs of Theorems~\ref{thm:numbercollinearRE}, \ref{thm:numbertriangularRE}, and~\ref{thm:stabilitycollinear}, as well as the derivation of the asymptotic expansions in Theorems~\ref{prop:asymLpositive} and~\ref{th:asympexpL4L5}.

\subsection{Proof of Theorem~\ref{thm:numbercollinearRE} on existence of collinear RE.}
\label{app:ProofExistenceCollinear}
We first prove items 1 and 4, and then establish items 2 and  3 using Proposition~\ref{prop:concavity}.

\medskip
1. We must show that the function $\theta\mapsto f_1(\theta;\mu,\kappa)$ given by
\begin{equation}
\label{eq:f1}
    f_{1}(\theta;\kappa,\mu) = \frac{1}{2}\sin(2\theta) - \Gamma \left[ \frac{1}{\sin^{2}(\theta+\sqrt{\kappa}q_{1})} - \frac{\mu}{\sin^{2}(\theta-\sqrt{\kappa}q_{2})} \right], \qquad \theta \in \mathcal{I}_{1} =(-\sqrt{\kappa} q_{1}, \sqrt{\kappa} q_{2}),
\end{equation}
has exactly one root assuming that $0 < \kappa < \pi^{2}/16$. It is easily seen that $f_1$ tends to $-\infty$ as
$\theta$ approaches the left endpoint of $\mathcal{I}_{1}$ and to $+\infty$ at the right endpoint. Hence, by the intermediate value theorem, $f_{1}$ has at least one root in this interval. To prove that the root is unique,
we show that $f_1$ is monotone by showing that its derivative satisfies
\begin{equation}
\label{eq:auxf1prime}
    f_{1}'(\theta;\kappa,\mu) = \cos(2\theta) + 2\Gamma \left[ \frac{\cos(\theta+\sqrt{\kappa}q_{1})}{\sin^{3}(\theta+\sqrt{\kappa}q_{1})} - \mu\,\frac{\cos(\theta-\sqrt{\kappa}q_{2})}{\sin^{3}(\theta-\sqrt{\kappa}q_{2})} \right]>0 \qquad \mbox{for all $\theta \in \mathcal{I}_{1}$}. 
\end{equation}

Using that $q_{1}+q_{2}=1$ and $\theta \in \mathcal{I}_{1} =(-\sqrt{\kappa} q_{1}, \sqrt{\kappa} q_{2})$, we deduce
\begin{equation*}
     0 < \theta+\sqrt{\kappa} q_{1} < \sqrt{\kappa}, \qquad 
      -\sqrt{\kappa} < \theta - \sqrt{\kappa} q_{2} < 0,
\end{equation*}
which, since $\kappa<\pi^{2}/16<\pi^{2}/4$, imply
\begin{equation*}
    0 < \theta+\sqrt{\kappa}q_{1} < \pi/2, \qquad \mbox{and} \qquad -\pi/2 < \theta-\sqrt{\kappa}q_{2} < 0.
\end{equation*}
Therefore, under our assumptions on $\kappa$, the following inequalities hold for all $\theta \in \mathcal{I}_1$
\begin{equation}
\label{eq:estimatesdomainf1}
\begin{gathered}
    \cos(\theta+\sqrt{\kappa}q_{1})>0, \qquad \cos(\theta-\sqrt{\kappa}q_{2})>0, \\
    \sin(\theta+\sqrt{\kappa}q_{1})>0, \qquad \sin(\theta-\sqrt{\kappa}q_{2})<0.
\end{gathered}
\end{equation}

Moreover, using that $\kappa < \pi^{2}/16$ and $q_{1}, q_{2} < 1$, we have 
\begin{equation*}
\begin{gathered}
     -\pi/4<-\sqrt{\kappa}q_{1},\qquad  \sqrt{\kappa}q_{2}<\pi/4,\\
\end{gathered}
\end{equation*}
which implies that $-\pi/2 < 2\theta < \pi/2$ if $\theta\in\mathcal{I}_{1}$.
Therefore,
\begin{equation*}
    \cos(2\theta) > 0 \qquad \mbox{for all $\theta\in \mathcal{I}_1$}.
\end{equation*}
This inequality, together with inequalities \eqref{eq:estimatesdomainf1}, imply that every term in the formula of $f_{1}'$ is positive for all $\theta\in\mathcal{I}_{1}$. 

\medskip

4. The analysis on the interval $\mathcal{I}_{4}$ is analogous. 
This time we consider the function $\theta\mapsto f_4(\theta;\kappa,\mu)$
given by 
\begin{equation}
\label{eq:f4}
    f_{4}(\theta;\kappa,\mu) = \frac{1}{2}\sin(2\theta) + \Gamma \left[ \frac{1}{\sin^{2}(\theta+\sqrt{\kappa}q_{1})} - \frac{\mu}{\sin^{2}(\theta-\sqrt{\kappa}q_{2})} \right], \qquad \theta\in \mathcal{I}_{4}=(-\pi - \sqrt{\kappa} q_{1}, -\pi + \sqrt{\kappa} q_{2}),
\end{equation}
  assuming $\kappa<\pi^{2}/16$, $0<\mu<1/10$. The function $f_{4}$ tends to $+\infty$ near the left endpoint of $\mathcal{I}_{4}$ and to $+\infty$ near the right endpoint, so it has at least one root. To show uniqueness of this root, we prove that 
\begin{equation}
\label{eq:f4pneg}
    f_{4}'(\theta;\kappa,\mu) = \cos(2\theta) - 2\Gamma \left[ \frac{\cos(\theta+\sqrt{\kappa}q_{1})}{\sin^{3}(\theta+\sqrt{\kappa}q_{1})} - \mu\,\frac{\cos(\theta-\sqrt{\kappa}q_{2})}{\sin^{3}(\theta-\sqrt{\kappa}q_{2})} \right] <0, \qquad \mbox{for
    all $\theta \in \mathcal{I}_{4}$,}
\end{equation}
using our
assumptions on $\kappa$ and $\mu$.

Proceeding as above, using that $q_{1}+q_{2}=1$, and $\kappa<\pi^{2}/16$, one can conclude that
\begin{equation}
\label{eq:estimatedomainf4}
    -\pi < \theta+\sqrt{\kappa}q_{1} <-3\pi/4, \qquad \mbox{and} \qquad -5\pi/4 < \theta-\sqrt{\kappa}q_{2} < -\pi \qquad \mbox{for all $\theta\in \mathcal{I}_4$}
\end{equation}
Therefore, by the monotonicity of $\sin$ and $\cos$ on the corresponding intervals, and the condition $\mu<\frac{1}{10}$, the following inequalities hold 
\begin{equation*}
\begin{gathered}
    \frac{\cos(\theta+\sqrt{\kappa}q_{1})}{\sin^{3}(\theta+\sqrt{\kappa}q_{1})}\geq\frac{\cos(-\pi+\sqrt{\kappa})}{\sin^{3}(-\pi+\sqrt{\kappa})}=\frac{\cos(\!\sqrt{\kappa})}{\sin^{3}(\!\sqrt{\kappa})},\\
    \mu\frac{\cos(\theta-\sqrt{\kappa}q_{2})}{\sin^{3}(\theta-\sqrt{\kappa}q_{2})}\leq\frac{1}{10}\frac{\cos(-\pi-\sqrt{\kappa})}{\sin^{3}(-\pi-\sqrt{\kappa})}=-\frac{1}{10}\frac{\cos(\sqrt{\kappa})}{\sin^{3}(\sqrt{\kappa})}.
\end{gathered}  
\end{equation*}
Moreover, from the expression for $\Gamma$ in Eq.~\eqref{eq:auxGammaq} and the fact that $1+\mu<11/10$, we have
\begin{equation}
\label{eq:boundGamma}
    \frac{1}{1+\mu}\leq \frac{\Gamma}{\sin^{3}(\sqrt{\kappa})\cos(\sqrt{\kappa})} \leq \frac{1}{1-\mu} \quad\Longrightarrow\quad \frac{10}{11}\leq \frac{\Gamma}{\sin^{3}(\sqrt{\kappa})\cos(\sqrt{\kappa})} \leq \frac{10}{9}.
\end{equation}
Combining the previous inequalities and using that $\kappa < \pi^{2}/16$, we obtain
\begin{equation*}
\begin{aligned}
     2\Gamma\left[ \frac{\cos(\theta+\sqrt{\kappa}q_{1})}{\sin^{3}(\theta+\sqrt{\kappa}q_{1})} - \mu\,\frac{\cos(\theta-\sqrt{\kappa}q_{2})}{\sin^{3}(\theta-\sqrt{\kappa}q_{2})} \right] &\geq \frac{20}{11}\sin^{3}(\sqrt{\kappa})\cos(\sqrt{\kappa})\left[ \frac{\cos(\sqrt{\kappa})}{\sin^{3}(\sqrt{\kappa})} + \frac{1}{10}\frac{\cos(\sqrt{\kappa})}{\sin^{3}(\sqrt{\kappa})} \right]\\
     & = 2 \cos^{2}(\sqrt{\kappa}) \geq 2 \cos^{2}(\pi/4) = 1 \\
     & \geq \cos(2\theta).
\end{aligned}
\end{equation*}
Therefore, $f_{4}'(\theta)<0$, which implies that $f_{4}$ is strictly decreasing on $\mathcal{I}_{4}$ and its root is unique.

\medskip

2. Let  $0 < \kappa < \pi^{2}/36$, $0<\mu<1/10$ be fixed.
 Using the monotonicity of $\sin$ on $(0,\pi/2)$ and our assumption that $0<\mu <1/10$, we obtain
\begin{equation*}
    \frac{1}{\sin^{2}(\pi/4 + \sqrt{\kappa}q_{1})} < \frac{1}{\sin^{2}(\pi/4)} = 2, 
    \qquad  
    \frac{\mu}{\sin^{2}(\pi/4 - \sqrt{\kappa}q_{2})} < \frac{\mu}{\sin^{2}(\pi/4 - \pi/6)} < \frac{8}{5(\sqrt{6}-\sqrt{2})^{2}}.
\end{equation*}
Considering that  $\sqrt{\kappa} < \pi/6$, we have $\cos(2\sqrt{\kappa}) \geq 1/2$, and hence we can bound $\Gamma$, given by Eq.~\eqref{eq:auxGammaq} as follows:
\begin{equation*}
    \Gamma = \frac{\sin^{3}(\sqrt{\kappa})\cos(\sqrt{\kappa})}{\sqrt{\mu^{2} + 2\mu\cos(2\sqrt{\kappa}) + 1}} \leq \frac{\sin^{3}(\sqrt{\kappa})}{\sqrt{\mu^{2} + \mu + 1}} \leq \sin^{3}(\pi/6) = \frac{1}{8}.
\end{equation*}
Therefore, combining our estimates above we get
\begin{equation*}
    \Gamma \left[ \frac{1}{\sin^{2}(\pi/4 + \sqrt{\kappa}q_{1})} + \frac{\mu}{\sin^{2}(\pi/4 - \sqrt{\kappa}q_{2})} \right] 
    < \frac{1}{8} \left( 2 + \frac{8}{5(\sqrt{6}-\sqrt{2})^{2}} \right) < \frac{1}{2} \sin(2 \cdot \pi/4).
\end{equation*}
In view of the explicit form of $f_2$ given in the formula \eqref{eq:f2}, this
inequality is equivalent to $f_{2}(\pi/4;\kappa,\mu) > 0$.
Applying item 1 of Proposition~\ref{prop:concavity}  with $\hat \theta=\pi/4$,
we conclude that there exist exactly $2$ collinear RE on $\mathcal{I}_2$
(note that $\pi/4 \in (\sqrt{\kappa} q_{2}, \pi/2)\subset \mathcal{I}_2$
 since $0<q_2<1$ and $0<\sqrt{\kappa} < \pi/6$).

\medskip

3. Let now $0 < \kappa < \pi^{2}/64$, $0<\mu<1/10$ be fixed. We proceed
in analogy with the proof of item 2 above. Using the monotonicity of $\sin$ on $(-\pi/2,0)$ and our assumptions on 
$\kappa$ and $\mu$ we obtain
\begin{equation*}
    \frac{1}{\sin^{2}(-\pi/4 + \sqrt{\kappa}q_{1})} < \frac{1}{\sin^{2}(-\pi/4+\pi/8)} = \frac{4}{2-\sqrt{2}}, 
    \qquad  
    \frac{\mu}{\sin^{2}(-\pi/4 - \sqrt{\kappa}q_{2})} < \frac{\mu}{\sin^{2}(-\pi/4)} < \frac{1}{5}.
\end{equation*}
On the other hand,  since $\sqrt{\kappa} < \pi/8$, we have $\cos(2\sqrt{\kappa}) \geq \sqrt{2}/2$, and using our assumption $0<\mu<1/10$
we can bound  $\Gamma$ by
\begin{equation*}
    \Gamma = \frac{\sin^{3}(\sqrt{\kappa})\cos(\sqrt{\kappa})}{\sqrt{\mu^{2} + 2\mu\cos(2\sqrt{\kappa}) + 1}} \leq \frac{\sin^{3}(\sqrt{\kappa})}{\sqrt{\mu^{2} + \sqrt{2}\mu + 1}} \leq \sin^{3}(\pi/8) = \frac{(2-\sqrt{2})^{3/2}}{8}.
\end{equation*}
Combining the above inequalities we get:
\begin{equation*}
\begin{split}
    \Gamma \left[ \frac{1}{\sin^{2}(-\pi/4 + \sqrt{\kappa}q_{1})} + \frac{\mu}{\sin^{2}(-\pi/4 - \sqrt{\kappa}q_{2})} \right] 
   & < \frac{(2-\sqrt{2})^{3/2}}{8} \left( \frac{4}{2-\sqrt{2}} + \frac{1}{5} \right) \\ 
   & \approx  0.394< \frac{1}{2}= -\frac{1}{2} \sin(2 \cdot (-\pi/4)),
    \end{split}
\end{equation*}
which is equivalent to $f_{3}(-\pi/4;\kappa,\mu) < 0$ in view of
the expression for $f_3$ in the formula \eqref{eq:f3}. Therefore,
by item 2 of Proposition \ref{prop:concavity}, there exist 
exactly 2 collinear RE on $\mathcal{I}_3$ (note that  $-\pi/4 \in (-\pi/2,-\sqrt{\kappa} q_{1})\subset \mathcal{I}_3$  since $0<\sqrt{\kappa}<\pi/8$ and $0<q_1<1$).
\qed

\subsection{Proof of Theorem~\ref{thm:numbertriangularRE} on  existence of triangular RE.}
\label{app:existencetriangular}

As mentioned in Sec.~\ref{ss:AnalyticalExistenceTriangular}, the proof consists of showing that the function  $g_{\kappa,\mu} : \mathcal{I}_{\kappa,\mu} \to \mathbb{R}$ defined by the formula \eqref{eq:auxfunctiontriangular} admits exactly one root in $\mathcal{I}_{\kappa,\mu}$ when
$\kappa$ is sufficiently small. Recall that the interval $\mathcal{I}_{\kappa,\mu}=[a_{\kappa,\mu},b_{\kappa,\mu}]$ where
$a_{\kappa,\mu}$ and $b_{\kappa,\mu}$ are given by formulas \eqref{eq:auxintervaltriangular}. For the reader's convenience,
 we reproduce below  the expressions for $g_{\kappa,\mu}$, $a_{\kappa,\mu}$, and $b_{\kappa,\mu}$.
\begin{equation*}
\begin{split}
    g_{\kappa,\mu}(x)
    &= \Lambda^{3}\sin(\sqrt{\kappa}q_{2})\sin^{3}(x)
    \left[\sin(\sqrt{\kappa}q_{1})\cos(x)
        + \sin(\sqrt{\kappa}q_{2})\sqrt{1-\Lambda^{2}\sin^{2}(x)}\right]
    - \Gamma\sin(\sqrt{\kappa}), \\
      a_{\kappa,\mu}&= \arctan\!\left( \frac{\sin(\sqrt{\kappa})}{\Lambda + \cos(\sqrt{\kappa})} \right), \\ b_{\kappa,\mu}&= \pi-\arctan\!\left(\frac{\sin(\sqrt{\kappa})}{\Lambda -\cos(\sqrt{\kappa})}\right).
    \end{split}
\end{equation*}

We will find it useful to write 
 $t:=\sqrt{\kappa}$ at some points, and employ  the notation introduced in Sec.~\ref{sec:asymp} where
 for a function $f(\kappa,\mu)$, which is probably not defined at $\kappa=0$,  and
$n\in \mathbb{N}$,    the notation $f(\kappa,\mu)=\mathcal{O}_{\mu}\!\left(t^n\right)$, means that 
there exists a continuous 
 function  $C:(0,1)\to \R$ such that 
\begin{equation*}
|f(\kappa,\mu)|\leq C(\mu)t^n=C(\mu)\kappa^{n/2},
\end{equation*}
for sufficiently small $\kappa>0$.

Using formulas \eqref{eq:positionprimaries},  \eqref{eq:constantgamma}, and \eqref{eq:constantLambda}, it is seen that
the quantities $q_{1}$, $q_{2}$, $\Gamma$, and $\Lambda$ are analytic in $t$. Expanding them 
in a Taylor series around $t=0$ yields the following expansions, which will be used below:\footnote{Note that the leading terms in the expansions of $q_{1}$ and $q_{2}$ are consistent with the positions of the primaries in the planar case given in expressions \eqref{eq:primariespositions}.}
\begin{equation}
\label{eq:expansionsparameters}
\begin{aligned}
    q_{1} &= \frac{\mu}{1+\mu} - \frac{2\mu(1-\mu)}{3(1+\mu)^{3}}\,t^{2} + \mathcal{O}_{\mu}(t^{4}), \\
    q_{2} &= \frac{1}{1+\mu} + \frac{2\mu(1-\mu)}{3(1+\mu)^{3}}\,t^{2} + \mathcal{O}_{\mu}(t^{4}), \\
    \Gamma &= \frac{1}{1+\mu}\,t^{3} + \mathcal{O}_{\mu}(t^{4}), \\
    \Lambda &= 1 - \frac{1-\mu}{6(1+\mu)}\,t^{2} + \mathcal{O}_{\mu}(t^{4}).
\end{aligned}
\end{equation}

 We claim that if $\kappa>0$ is sufficiently small, then $g_{\kappa,\mu}$ has a unique root in $\mathcal{I}_{\kappa,\mu}$ 
 that actually belongs to 
$[\sqrt{\kappa},2\sqrt{\kappa}]$. We begin by showing that  if $\kappa>0$ is small enough, then
\begin{equation}
\label{eq:intervalscont}
\left[\sqrt{\kappa},\,2\sqrt{\kappa}\,\right] \subset \mathcal{I}_{\kappa,\mu}.
\end{equation}
First, observe that for any $\kappa$ and $\mu$,
\begin{equation*}
    a_{\kappa,\mu}
    \leq \arctan\!\left(\frac{\sin(\sqrt{\kappa})}{\cos(\sqrt{\kappa})}\right)
    = \sqrt{\kappa}.
\end{equation*}
On the other hand, we may expand $b_{\kappa,\mu}$ using formulas \eqref{eq:expansionsparameters}
to obtain
\begin{equation*}
    b_{\kappa,\mu} = \frac{\pi}{2} + \frac{1+2\mu}{3+3\mu}\, t + \mathcal{O}_{\mu}(t^{2}).
\end{equation*}
Since the coefficient of $t$ is positive,  there exists a value $\kappa_{I}>0$ such that $b_{\kappa,\mu}>\tfrac{\pi}{2}$ for all $0<\kappa<\kappa_{I}$. In particular,
\begin{equation*}
    2\sqrt{\kappa} < b_{\kappa,\mu}
    \quad \text{for } \quad 0<\kappa < \min\!\left\{\,\kappa_{I},\,\frac{\pi^{2}}{16}\right\},
\end{equation*}
which establishes condition \eqref{eq:intervalscont}.

We now examine the sign of the quantities $g_{\kappa,\mu}(a_{\kappa,\mu})$, $g_{\kappa,\mu}(\sqrt{\kappa})$, and
$g_{\kappa,\mu}(2\sqrt{\kappa})$, which depend analytically on $t$. Computing their Taylor series expansions 
around $t=0$  yields
\begin{equation*}
\begin{aligned}
    g_{\kappa,\mu}(a_{\kappa,\mu})
        &= -\frac{7}{8(1+\mu)}\, t^{5} + \mathcal{O}_{\mu}(t^{8}), \\[4pt]
    g_{\kappa,\mu}(\sqrt{\kappa})
        &= -\frac{1}{2(1+\mu)^{2}}\, t^{7} + \mathcal{O}_{\mu}(t^{8}), \\[4pt]
    g_{\kappa,\mu}(2\sqrt{\kappa})
        &= \frac{7}{1+\mu}\, t^{5} + \mathcal{O}_{\mu}(t^{6}).
\end{aligned}
\end{equation*}
The signs of the leading terms show that, for $t=\sqrt{\kappa}>0$ sufficiently small,
\begin{equation*}
    g_{\kappa,\mu}(a_{\kappa,\mu})<0,  
    \qquad
    g_{\kappa,\mu}(\sqrt{\kappa})<0,
    \qquad
    g_{\kappa,\mu}(2\sqrt{\kappa})>0.
\end{equation*}
More precisely, there exists a positive constant $\kappa_{II}$, which we choose to be smaller than  $\kappa_{I}$, such that all of the inequalities above hold whenever $0<\kappa<\kappa_{II}$. Therefore, the function $g_{\kappa,\mu}$ has at least one root in the interval
$[\sqrt{\kappa},\,2\sqrt{\kappa}]$.

We now  prove that $g_{\kappa,\mu}$ has no other roots on $[a_{\kappa,\mu},2\sqrt{\kappa}]$ by showing  that its  derivative, 
$g'_{\kappa,\mu}$, is strictly positive there. After simplification, we have
\begin{equation*}
    g_{\kappa,\mu}'(x)
    = \Lambda^{3}\sin^{2}(x)\,\sin(\sqrt{\kappa}q_{2})\,
      \Bigg[
        \sin(\sqrt{\kappa}q_{1})\bigl(3\cos^{2}(x)-\sin^{2}(x)\bigr)
        + \frac{\cos(x)\sin(\sqrt{\kappa}q_{2})\bigl(3 - 4\Lambda^{2}\sin^{2}(x)\bigr)}
               {\sqrt{1 - \Lambda^{2}\sin^{2}(x)}}
      \Bigg].
\end{equation*}
All factors outside the square brackets are positive for $x\in [0,2\sqrt{\kappa}]$ when
$\kappa$ is sufficiently small, and we claim that the same is true for those inside. For the first one note that
if $\kappa<\pi^{2}/36$, then $x\le 2\sqrt{\kappa}<\pi/3$, and hence
$\cos(x)\ge 1/2$. Therefore,
\begin{equation*}
    3\cos^{2}(x)-\sin^{2}(x)
    = 4\cos^{2}(x)-1 > 0.
\end{equation*}
To show that the second term inside the square brackets is positive, given that  $\Lambda<1$,  it suffices to show that
$\sin^{2}(x)<3/4$. This holds whenever $x<\arcsin(\sqrt{3}/2)=\pi/3$.
Thus, for $\kappa<\pi^{2}/36$ we have $x\le 2\sqrt{\kappa}<\pi/3$ and hence
\begin{equation*}
     3 - 4\Lambda^{2}\sin^{2}(x) > 3 - 4\sin^{2}(x) > 0.
\end{equation*}
Thus $g_{\kappa,\mu}'(x)>0$ for $x\in [0,2\sqrt{\kappa}]$, and
$g_{\kappa,\mu}$ is strictly increasing on this interval. Summarizing, we have shown
that $g_{\kappa,\mu}$ has exactly one root in $[a_{\kappa,\mu},2\sqrt{\kappa}]$
whenever $0<\kappa<\kappa_{III}$, where
$\kappa_{III}=\min\{\kappa_{II},\pi^{2}/36\}$.

We finally verify that $g_{\kappa,\mu}(x)>0$ for all $x\in [2\sqrt{\kappa},b_{\kappa,\mu}]$ if $\kappa$ is sufficient small.  A term by term
comparison using formula \eqref{eq:auxfunctiontriangular} allows us to bound $g_{\kappa,\mu}(x)$ from below for $x\in[2\sqrt{\kappa},\,b_{\kappa,\mu}]$ as follows:
\begin{equation*}
\begin{aligned}
g_{\kappa,\mu}(x) \geq & \Lambda^{3}
\sin\!\big(\sqrt{\kappa}\,q_{2}\big)\left[\,
\sin\!\big(\sqrt{\kappa}\,q_{1}\big)\,
\cos\!\big(b_{\kappa,\mu}\big)+
\sin^{3}\!\big(2\sqrt{\kappa}\big)\,
\sin\!\big(\sqrt{\kappa}\,q_{2}\big)\,
\sqrt{\,1-\Lambda^{2}\sin^{2}\!\big(2\sqrt{\kappa}\,\big)}\right]
\\
&-\Gamma\,\sin^{2}\!\big(\sqrt{\kappa}\big).
\end{aligned}
\end{equation*}
The right-hand side of the inequality is analytic in  $t=\sqrt{\kappa}$ and may be expanded using formulas \eqref{eq:expansionsparameters} to yield
\begin{equation*}
g_{\kappa,\mu}(x) \geq \frac{7-\mu}{(1+\mu)^{2}}t^{5} + \mathcal{O}_{\mu}(t^{6}).
\end{equation*}
Since the leading coefficient is positive we conclude the existence of 
$\kappa_{IV}>0$ such that $g_{\kappa,\mu}(x)>0$ for all $x\in[2\sqrt{\kappa},b_{\kappa,\mu}]$ whenever $0<\kappa<\kappa_{IV}$.
\qed

\subsection{Proof of Theorem~\ref{thm:stabilitycollinear} on stability of collinear RE}
\label{app:stability-collinear}

As mentioned in Sec.~\ref{ss:analytical-collinear}, we estimate the quantities  
 $\lambda_1$ and $\lambda_2$ in Proposition~\ref{prop:stabilitycollpos} at each RE, and show that 
 they have opposite signs for all of $\Ll_1$, $\Ll_1$, $\Ll_3$, $\Ee_3$, and $\Aa_1$, and are instead
 both positive for  $\Ee_2$. 

We begin by establishing the following auxiliary estimates.
\begin{lemma}
\label{lemma:estimatesparameters}
Let $(\kappa,\mu)\in\Ps{\frac{1}{10}}$.  The following estimates hold:
\begin{equation*}
    \frac{10}{11}\leq \frac{\Gamma}{\sin^{3}(\sqrt{\kappa})\cos(\sqrt{\kappa})} \leq \frac{10}{9}, \qquad 0 < q_{1} \leq \frac{1}{9},  \qquad \frac{8}{9} \leq q_{2} < 1, \qquad \cos^{1/3}(\sqrt{\kappa}) \leq \Lambda < 1.
\end{equation*}
\end{lemma}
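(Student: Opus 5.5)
The plan is to prove the four estimates one at a time. Each follows from the explicit formulas \eqref{eq:auxGammaq}, \eqref{eq:rotationRE} and \eqref{eq:alternativeLambda}, using only elementary trigonometric monotonicity and the standing restrictions $0<\sqrt{\kappa}<\pi/2$ and $0<\mu<1/10$.

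For $\Gamma$, formula \eqref{eq:auxGammaq} gives
\begin{equation*}
\frac{\Gamma}{\sin^{3}(\sqrt{\kappa})\cos(\sqrt{\kappa})}=\frac{1}{\sqrt{\mu^{2}+2\mu\cos(2\sqrt{\kappa})+1}} .
\end{equation*}
Since $0<2\sqrt{\kappa}<\pi$, we have $-1<\cos(2\sqrt{\kappa})<1$. Hence the radicand lies strictly between $(1-\mu)^{2}$ and $(1+\mu)^{2}$, and the ratio lies between $1/(1+\mu)$ and $1/(1-\mu)$. Inserting $\mu<1/10$ gives the bounds $10/11$ and $10/9$, exactly as already done in \eqref{eq:boundGamma}.

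For $q_1$, rescale \eqref{eq:rotationRE} to the unit sphere. Put $a=2\sqrt{\kappa}\in(0,\pi)$; then $q_1$ satisfies $\varphi(q_1)=\mu$, where
\begin{equation*}
\varphi(s)=\frac{\sin(as)}{\sin(a(1-s))}, \qquad s\in(0,1/2].
\end{equation*}
The key observation is that the numerator of $\varphi'(s)$ equals
\begin{equation*}
a\bigl[\cos(as)\sin(a(1-s))+\sin(as)\cos(a(1-s))\bigr]=a\sin a>0,
\end{equation*}
so $\varphi$ is strictly increasing. It therefore suffices to show $\varphi(1/9)>\mu$. With $u=a/9\in(0,\pi/9)$, the elementary inequality $|\sin(8u)|\le 8|\sin u|$ gives $\varphi(1/9)\ge 1/8>1/10>\mu$. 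Hence $0<q_1<1/9$, and then $q_2=1-q_1$ satisfies $8/9<q_2<1$. (We use that $q_1<1/2$, which holds since $q_1<q_2$.)

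For $\Lambda$, use the alternative expression $\Lambda^{3}=\cos(\sqrt{\kappa}q_2)/\cos(\sqrt{\kappa}q_1)$ from \eqref{eq:alternativeLambda}. Both arguments lie in $(0,\pi/2)$ and $q_1<q_2$, so $\Lambda<1$, as already recorded in \eqref{eq:inequalityLambda}. For the lower bound, $\cos(\sqrt{\kappa}q_1)\le 1$, and $q_2<1$ gives $\cos(\sqrt{\kappa}q_2)>\cos(\sqrt{\kappa})$. Thus $\Lambda^{3}\ge\cos(\sqrt{\kappa})$, and taking cube roots yields $\Lambda\ge\cos^{1/3}(\sqrt{\kappa})$.

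The only step requiring an idea is the bound on $q_1$, and the monotonicity of $\varphi$ (via the sine addition formula) resolves it. Everything else is direct bookkeeping.
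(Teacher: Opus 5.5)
Your proof is correct. The bounds on $\Gamma$ and $\Lambda$ are obtained exactly as in the paper (via \eqref{eq:boundGamma} and \eqref{eq:alternativeLambda}); only your treatment of $q_1$ takes a different route.

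The paper inverts \eqref{eq:positionprimaries} in closed form, $q_1=\frac{1}{2\sqrt{\kappa}}\arctan\bigl(\mu\sin(2\sqrt{\kappa})/(1+\mu\cos(2\sqrt{\kappa}))\bigr)$. It then applies $\arctan x\le x$, $\sin y\le y$ and $1+\mu\cos(2\sqrt{\kappa})\ge 1-\mu$ to get $q_1\le\mu/(1-\mu)\le 1/9$. This is a one-line estimate valid for every $\mu\in(0,1)$.

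You instead work directly with the defining relation \eqref{eq:rotationRE} and argue by monotonicity. Your formula $\varphi'(s)=a\sin a/\sin^{2}(a(1-s))$ actually shows that $\varphi$ is strictly increasing on all of $(0,1)$, not just on $(0,1/2]$. So your parenthetical appeal to $q_1<q_2$ is unnecessary: $\varphi(1/2)=1>\mu$ already forces $q_1<1/2$.

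Your test-point inequality $\sin(8u)\le 8\sin u$ is tailored to the threshold $1/9$, and it gives the slightly stronger strict bound $q_1<1/9$. You can do better by replacing it with the strict concavity of $\sin$ on $[0,\pi]$, namely $\sin(ru)<r\sin u$ for $r>1$ and $0<ru\le\pi$. This yields $\varphi(s)>s/(1-s)$ for $s\in(0,1/2)$, and hence $q_1<\mu/(1+\mu)$ for all $\mu\in(0,1)$. That bound is sharper than the paper's $\mu/(1-\mu)$, and it is consistent with the planar value $\mu/(1+\mu)$ and the expansion \eqref{eq:expansionsparameters}.
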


\begin{proof}
The first estimate was already established in the proof of Theorem \ref{thm:numbercollinearRE} (inequality~\eqref{eq:boundGamma}). For the estimate on $q_{1}$, one can start from the formulas \eqref{eq:positionprimaries} and the fact that $\arctan(x)\leq x$ for $x\geq0$, to obtain \begin{equation*}
q_{1} = \frac{1}{2\sqrt{\kappa}}\arctan\left(\frac{\mu\sin(2\sqrt{\kappa})}{1+\mu\cos(2\sqrt{\kappa})}\right) \leq \frac{\mu}{1-\mu}\frac{\sin(2\sqrt{\kappa})}{2\sqrt{\kappa}} \leq \frac{\mu}{1-\mu} \leq \frac{1}{9}.
\end{equation*}
The estimate for $q_{2}$ follows immediately from the identity $q_{2}=1-q_{1}$. Finally, combining the previous estimate with the alternative expression \eqref{eq:alternativeLambda}, and using the fact that $\cos(x)$ is decreasing on $(0,\pi)$, we obtain
\begin{equation*}
	\Lambda^{3}=\frac{\cos(\sqrt{\kappa}q_{2})}{\cos(\sqrt{\kappa}q_{1})}\geq \frac{\cos(\sqrt{\kappa})}{\cos(0)}=\cos(\sqrt{\kappa}).
\end{equation*}
Finally, we recall that the  upper bound for $\Lambda$ was given in inequalities~\eqref{eq:inequalityLambda}.
\end{proof}

Recall from Eq.~\eqref{eq:CollinearRECondbyIntervals} that the RE correspond to roots of the functions $f_j:\mathcal{I}_i\to \R$, $j=1,2,3,4$, where the index $j$ associated with each RE is specified in the statement of Theorem \ref{thm:numbercollinearRE}. 
The explicit forms of these functions are given in formulas \eqref{eq:f1}, \eqref{eq:f2}, \eqref{eq:f3} and \eqref{eq:f4}. In all cases, it will be useful to notice that the quantity $\lambda_1$ in Proposition \ref{prop:stabilitycollpos}
satisfies
\begin{equation}
\label{eq:auxlambda1}
\lambda_{1} = -f_{j}'(\theta_{0}).
\end{equation}
On the other hand, it will be convenient to
use Eq.~\eqref{eq:lambda2alt} for  $\lambda_2$, and note that it may be rewritten as
 \begin{equation}
 \label{eq:auxlambda2}
\lambda_{2} = -\Gamma\csc\theta_{0}\,h_j(\theta_{0}),
\end{equation}
where $h_j:\mathcal{I}_j\to \R$ is the restriction to $\mathcal{I}_j$ of the function
\begin{equation*}
h(\theta) = \frac{\sin(\!\sqrt{\kappa}q_{1})}{\vert\sin(\theta+\sqrt{\kappa}q_{1})\vert^{3}} - \mu\frac{\sin(\!\sqrt{\kappa}q_{2})}{\vert\sin(\theta-\sqrt{\kappa}q_{2})\vert^{3}}.
\end{equation*}

We divide our analysis on the different intervals $\mathcal{I}_j$ specified in 
Subsection~\ref{sss:defIj}.

\paragraph{Analysis on $\mathcal{I}_1$.} Denote by $\theta_{\Ll_{1}}\in \mathcal{I}_1$ the value of $\theta_0$ corresponding
to  $\Ll_1$, and by $\lambda_{1}^{(\Ll_{1})}$ and $\lambda_{2}^{(\Ll_{1})}$ the 
corresponding values of $\lambda_1$ and $\lambda_2$. Similar  notation will be used for the  other  RE. We will prove that $\lambda_{1}^{(\Ll_{1})}<0$ and $\lambda_{2}^{(\Ll_{1})}>0$.

We have already shown that  $f_{1}'$ is strictly positive on $\mathcal{I}_{1}$ (see  condition \eqref{eq:auxf1prime}). Therefore, by condition \eqref{eq:auxlambda1},  $\lambda_{1}^{(\Ll_{1})}<0$,
as claimed. Next,   using $q_{1}<q_{2}$ and $\mu<1$, we obtain
\begin{equation*}
f_{1}(0)=-\Gamma\left(\frac{1}{\sin^{2}(\sqrt{\kappa}q_{1})}-\frac{\mu}{\sin^{2}(\sqrt{\kappa}q_{2})}\right)<0,
\end{equation*}
which, since $f_1$ is increasing, implies that $\theta_{\Ll_{1}}\in(0,\sqrt{\kappa}q_{2})$. Consequently, in view of formula \eqref{eq:auxlambda2},  $\lambda_{2}^{(\Ll_{1})}$  has the
opposite  sign of $h_{1}(\theta_{\Ll_{1}})$, and the proof that $\lambda_{2}^{(\Ll_{1})}>0$ reduces to showing that 
$h_{1}(\theta_{\Ll_{1}})$ is negative. 

For $\theta \in \mathcal{I}_{1}$ we have,
\begin{equation*}
h_{1}(\theta)=\frac{\sin(\!\sqrt{\kappa}q_{1})}{\sin^{3}(\theta+\sqrt{\kappa}q_{1})}+\mu\frac{\sin(\!\sqrt{\kappa}q_{2})}{\sin^{3}(\theta-\sqrt{\kappa}q_{2})},
\end{equation*}
and 
\begin{equation*}
h_{1}'(\theta)=-3\sin(\!\sqrt{\kappa}q_{1})\frac{\cos(\theta+\sqrt{\kappa}q_{1})}{\sin^{4}(\theta+\sqrt{\kappa}q_{1})}-3\mu\sin(\!\sqrt{\kappa}q_{2})\frac{\cos(\theta-\sqrt{\kappa}q_{2})}{\sin^{4}(\theta-\sqrt{\kappa}q_{2})}.
\end{equation*}
Each term in the previous expression is negative (see estimates~\eqref{eq:estimatesdomainf1}), and therefore $h_{1}$ is strictly decreasing. Due to its asymptotic behavior at the endpoints of $\mathcal{I}_{1}$, the function $h_{1}$ has a unique zero $\vartheta_{1}\in\mathcal{I}_{1}$. 
We claim that 
\begin{equation}
\label{eq:f1varthetaaux}
f_{1}(\vartheta_{1})<0.
\end{equation}
Accept   that this  inequality holds. Since $f_{1}$ is increasing and
$f_1(\theta_{\Ll_{1}})=0$, we have  $\vartheta_{1}<\theta_{\Ll_{1}}$. Given that 
 $h_{1}$ is decreasing, we conclude $h_{1}(\theta_{\Ll_{1}})<0$, which is what we needed
 to prove. Therefore, it only remains to show that inequality \eqref{eq:f1varthetaaux} holds.

First note   that $h_1(0)=-\frac{1}{\Gamma}f_1(0)>0$.
Therefore, since $h_1$ is decreasing, we must have $$\vartheta_{1}\in(0,\sqrt{\kappa}q_{2}).$$ 
Next, using the expressions   for $h_1(\theta)$ and $\Lambda$ (see Eq.~\eqref{eq:constantLambda}),    the 
condition  $h_{1}(\vartheta_{1})=0$ reduces to:
\begin{equation*}
\sin(\vartheta_{1}+\sqrt{\kappa}q_{1})=-\Lambda\sin(\vartheta_{1}-\sqrt{\kappa}q_{2}).
\end{equation*}
Substituting the above expression into the definition of $f_{1}$ (Eq.~\eqref{eq:f1}), yields
\begin{equation}
\label{eq:f1vartheta1}
f_{1}(\vartheta_{1})=\frac{1}{\sin^{2}(\vartheta_{1}-\sqrt{\kappa}q_{2})}\left[\sin\vartheta_{1}\cos\vartheta_{1}\sin^{2}(\vartheta_{1}-\sqrt{\kappa}q_{2})-\Gamma\left(\frac{1}{\Lambda^{2}}-\mu\right)\right],
\end{equation}
which is the quantity that we will estimate to establish inequality \eqref{eq:f1varthetaaux}.

On the one hand, using the upper bound for $\Gamma$  in Lemma~\ref{lemma:estimatesparameters}, together with the fact that $\Lambda<1$ and $\mu<1/10$, we obtain
\begin{equation}
\label{eq:estimateGammavartheta}
\Gamma\left(\frac{1}{\Lambda^{2}}-\mu\right)\ge\frac{10}{11}\sin^{3}(\sqrt{\kappa})\cos(\!\sqrt{\kappa})(1-1/10)=\frac{9}{11}\sin^{3}(\sqrt{\kappa})\cos(\!\sqrt{\kappa}).
\end{equation}

On the other hand, consider the auxiliary function $l_{1}:[0,\sqrt{\kappa}q_{2}]\to\mathbb{R}$ defined by
\begin{equation*}
l_{1}(\theta)=\sin\theta\cos\theta\sin^{2}(\theta-\sqrt{\kappa}q_{2}).
\end{equation*}
Then  $l_{1}$ is nonnegative on $[0,\sqrt{\kappa}q_{2}]$, and satisfies $l_{1}(0)=l_{1}(\sqrt{\kappa}q_{2})=0$. Its derivative is
\begin{equation*}
l_{1}'(\theta)=\sin(\theta-\sqrt{\kappa}q_{2})\sin(3\theta-\sqrt{\kappa}q_{2}).
\end{equation*}
The unique critical point of $l_1$ in $(0,\sqrt{\kappa}q_{2})$ is $\theta^{*}=\sqrt{\kappa}q_{2}/3$, which is necessarily a maximum. Therefore,
\begin{equation*}
\max_{\theta\in\mathcal{I}_{1}}\left[\sin\theta\cos\theta\sin^{2}(\theta-\sqrt{\kappa}q_{2})\right]
=l_{1}(\theta^{*})
=\frac{1}{2}\sin^{3}\left(\frac{2\sqrt{\kappa}q_{2}}{3}\right).
\end{equation*}
Since $\vartheta_{1}\in(0,\sqrt{\kappa}q_{2})$, it follows that
\begin{equation}
\label{eq:estimateconnectionvartheta}
\sin\vartheta_{1}\cos\vartheta_{1}\sin^{2}(\vartheta_{1}-\sqrt{\kappa}q_{2})
\le\frac{1}{2}\sin^{3}\left(\frac{2\sqrt{\kappa}q_{2}}{3}\right).
\end{equation}
Now  recall from the statement of the theorem that our analysis for  $\Ll_1$ should be made under the assumption 
that   $0<\kappa\le\pi^{2}/16$.
Given that  $0<q_{2}<1$ and  $\sin$ is increasing on $[0,\pi/2]$, we can estimate
\begin{equation*}
    \sin^3\left(\frac{2\sqrt{\kappa}q_{2}}{3}\right)\le\sin^3\left(\frac{2\sqrt{\kappa}}{3}\right).
\end{equation*}
Now set $x=\sqrt{\kappa}\in\left (0,\pi/4\right]$, and write
\begin{equation*}
 \sin^3\left(\frac{2x}{3}\right) = \sin^3(x) \left(\frac{\sin\left(\frac{2x}{3}\right)}{\sin(x)}\right)^{3} =\sin^3(x)\left(\frac{\sin\left(\frac{x}{3}\right)}{\sin\left(\frac{x}{2}\right)} \frac{\cos\left(\frac{x}{3}\right)}{\cos\left(\frac{x}{2}\right)}\right)^{3}.
\end{equation*}
The elementary bounds, 
\begin{equation*}
    \frac{\sin\left(\frac{x}{3}\right)}{\sin\left(\frac{x}{2}\right)}\le\frac{2}{3}, \qquad \frac{\cos\left(\frac{x}{3}\right)}{\cos\left(\frac{x}{2}\right)}\le 1
\end{equation*}
valid for $x\in (0,\pi/4]$,  imply
\begin{equation*}
  \sin^3\left(\frac{2x}{3} \right ) \leq  \frac{8}{27}\sin^3(x).
\end{equation*}
Moreover, for $x\in (0,\pi/4]$ we also have 
\begin{equation*}
    \cos(x)\ge\frac{\sqrt{2}}{2} \quad\Longrightarrow\quad \frac{8}{27}\le\frac{3}{2}\cos(x).
\end{equation*}
Therefore, 
\begin{equation*}
   \sin^{3}\left(\frac{2\sqrt{\kappa}}{3}\right)=   \sin^{3}\left(\frac{2x}{3}\right)\le\frac{3}{2}\sin^{3}(x)\cos(x)=\frac{3}{2}\sin^{3}(\sqrt{\kappa})\cos(\sqrt{\kappa}),
\end{equation*}
which allows us to conclude, from inequality \eqref{eq:estimateconnectionvartheta},
that
\begin{equation}
\label{eq:estimateconnectionvartheta2}
\sin\vartheta_{1}\cos\vartheta_{1}\sin^{2}(\vartheta_{1}-\sqrt{\kappa}q_{2})
\le \frac{3}{4}\sin^{3}(\sqrt{\kappa})\cos(\!\sqrt{\kappa}).
\end{equation}

Using estimates \eqref{eq:estimateGammavartheta} and \eqref{eq:estimateconnectionvartheta2} in formula \eqref{eq:f1vartheta1} yields
\begin{equation*}
f_{1}(\vartheta_{1})\leq \frac{-\frac{3}{44} \sin^{3}(\sqrt{\kappa})\cos(\!\sqrt{\kappa}) }{\sin^{2}(\vartheta_{1}-\sqrt{\kappa}q_{2})}<0,
\end{equation*}
proving inequality \eqref{eq:f1varthetaaux}.

\paragraph{Analysis on $ \mathcal{I}_4$.}
The analysis for   $\theta_{\Aa_{1}}\in \mathcal{I}_4\in (-\pi - \sqrt{\kappa} q_{1}, -\pi + \sqrt{\kappa} q_{2})$ is analogous to the above one for $\theta_{\Ll_1}\in \mathcal{I}_1$. Since $f_{4}$ is strictly decreasing (see Eq.~\eqref{eq:f4pneg}), we have $f_{4}'(\theta_{\Aa_{1}})<0$, and therefore, by relation \eqref{eq:auxlambda1},  $\lambda_{1}^{(\Aa_{1})}>0$. 
We now show that $\lambda_{2}^{(\Aa_{1})}<0$. 
Using $q_{1}<q_{2}$ and $\mu<1$, we obtain
\begin{equation*}
    f_{4}(-\pi) =\Gamma\left[\frac{1}{\sin^{2}(\pi+\sqrt{\kappa}q_{1})}-\frac{\mu}{\sin^{2}(\pi-\sqrt{\kappa}q_{2})}\right] =\Gamma\left[\frac{1}{\sin^{2}(\sqrt{\kappa}q_{1})}-\frac{\mu}{\sin^{2}(\sqrt{\kappa}q_{2})}\right]>0.
\end{equation*}
Since $f_4$ is decreasing, it follows that $\theta_{\Aa_{1}}\in(-\pi,-\pi+\sqrt{\kappa}q_{2})$. Therefore, by formula \eqref{eq:auxlambda2}, the sign of $\lambda_{2}^{(\Aa_{1})}$ coincides with the sign of $h_{4}(\theta_{\Aa_{1}})$, which we prove below to be negative.

 For   $\theta \in \mathcal{I}_{4}$ we have
\begin{equation*}
    h_{4}(\theta)=-\frac{\sin(\!\sqrt{\kappa}q_{1})}{\sin^{3}(\theta+\sqrt{\kappa}q_{1})}-\mu\frac{\sin(\!\sqrt{\kappa}q_{2})}{\sin^{3}(\theta-\sqrt{\kappa}q_{2})},
\end{equation*}
and 
\begin{equation*}
    h_{4}'(\theta) =3\sin(\!\sqrt{\kappa}q_{1})\frac{\cos(\theta+\sqrt{\kappa}q_{1})}{\sin^{4}(\theta+\sqrt{\kappa}q_{1})} +3\mu\sin(\!\sqrt{\kappa}q_{2})\frac{\cos(\theta-\sqrt{\kappa}q_{2})}{\sin^{4}(\theta-\sqrt{\kappa}q_{2})}.
\end{equation*}
By estimates \eqref{eq:estimatedomainf4}, each term in the previous expression is negative, and therefore $h_{4}$ is strictly decreasing. The asymptotic 
behavior of $h_4$ near the end-points of $\mathcal{I}_4$ implies that
$h_{4}$ admits a unique root,  $\vartheta_{4}\in\mathcal{I}_{4}$.
We will prove below that $f_4(\vartheta_{4})>0$ which by monotonicity of
$f_4$ implies $\vartheta_4<\theta_{\Aa_1}$. This in turn  implies  $h_{4}(\theta_{\Aa_{1}})<0$ by monotonicity of $h_4$, completing the proof
that  $\lambda_{2}^{(\Aa_{1})}<0$.

We have,
\begin{equation*}
    h_{4}(-\pi)=-\frac{\sin(\!\sqrt{\kappa}q_{1})}{\sin^{3}(-\pi+\sqrt{\kappa}q_{1})} -\mu\frac{\sin(\!\sqrt{\kappa}q_{2})}{\sin^{3}(-\pi-\sqrt{\kappa}q_{2})} =\frac{1}{\sin^{2}(\sqrt{\kappa}q_{1})}-\frac{\mu}{\sin^{2}(\sqrt{\kappa}q_{2})}>0,
\end{equation*}
so, since $h_4$ is decreasing, $\vartheta_{4}\in(-\pi,-\pi+\sqrt{\kappa}q_{2})$.
Considering that $\vartheta_{4}$ satisfies 
\begin{equation*}
    \sin(\vartheta_{4}+\sqrt{\kappa}q_{1}) =-\Lambda\sin(\vartheta_{4}-\sqrt{\kappa}q_{2}),
\end{equation*}
we may write, using formula \eqref{eq:f4},
\begin{equation*}
    f_{4}(\vartheta_{4}) =\frac{1}{\sin^{2}(\vartheta_{4}-\sqrt{\kappa}q_{2})} \left[\sin\vartheta_{4}\cos\vartheta_{4}\sin^{2}(\vartheta_{4}-\sqrt{\kappa}q_{2}) + \Gamma\left(\frac{1}{\Lambda^{2}}-\mu\right)\right].
\end{equation*}
The above expression is positive because
all terms inside the square brackets are positive, given that $\vartheta_{4}\in(-\pi,-\pi+\sqrt{\kappa}q_{2})$.

\paragraph{Analysis on $\mathcal{I}_2$.}
From the proof of Proposition \ref{prop:concavity}, we know that the roots
of $f_2$ necessarily belong to the subinterval $ (\sqrt{\kappa}q_{2},\pi/2)\subset \mathcal{I}_2$.
Thus,  we have $\theta_{\Ll_{2}},\theta_{\Ee_{2}}\in(\sqrt{\kappa}q_{2},\pi/2)$ and,
according to our labeling convention, $\theta_{\Ll_{2}}<\theta_{\Ee_{3}}$. Since $f_{2}''(\theta)<0$ for all $\theta\in (\sqrt{\kappa}q_{2},\pi/2)$ (again from the proof of Proposition \ref{prop:concavity}),
it follows that  $f_{2}'(\theta_{\Ll_{2}})>0$ and $f_{2}'(\theta_{\Ee_{2}})<0$, which 
by relation \eqref{eq:auxlambda1}, implies
\begin{equation*}
    \lambda_{1}^{(\Ll_{2})}<0,\qquad \lambda_{1}^{(\Ee_{2})}>0.
\end{equation*}

Below we will show that 
\begin{equation}
\label{eq:auxh4lambda2}
   h_2(\theta)<0 \qquad \mbox{for all $\theta\in(\sqrt{\kappa}q_{2},\pi/2)$}.
\end{equation}
In view of formula \eqref{eq:auxlambda2}, this immediately implies,
\begin{equation*}
   \lambda_{2}^{(\Ll_{2})}>0,\qquad \lambda_{2}^{(\Ee_{2})}>0,
\end{equation*}
since $\theta_{\Ll_{2}},\theta_{\Ee_{2}}\in(\sqrt{\kappa}q_{2},\pi/2)$.

In order to prove inequality \eqref{eq:auxh4lambda2}, note that
for $\theta\in \mathcal{I}_2$ we have
\begin{equation*}
    h_{2}(\theta)=\frac{\sin(\!\sqrt{\kappa}q_{1})}{\sin^{3}(\theta+\sqrt{\kappa}q_{1})}-\mu\frac{\sin(\!\sqrt{\kappa}q_{2})}{\sin^{3}(\theta-\sqrt{\kappa}q_{2})},
\end{equation*}
and 
\begin{equation*}
    h_{2}'(\theta)=-3\sin(\!\sqrt{\kappa}q_{1})\frac{\cos(\theta+\sqrt{\kappa}q_{1})}{\sin^{4}(\theta+\sqrt{\kappa}q_{1})}+3\mu\sin(\!\sqrt{\kappa}q_{2})\frac{\cos(\theta-\sqrt{\kappa}q_{2})}{\sin^{4}(\theta-\sqrt{\kappa}q_{2})}.
\end{equation*}
We claim that $h_2'(\theta)>0$ for $\theta\in (\sqrt{\kappa}q_{2},\pi/2)$. Indeed, for
 $\theta\in[\pi/2-\sqrt{\kappa}q_{1},\pi/2)$ we have
\begin{equation*}
    \frac{\pi}{2}\leq \theta+\sqrt{\kappa}q_{1}<\frac{\pi}{2}+\sqrt{\kappa}q_{1}<\pi \quad \mbox{and} \quad  0<\frac{\pi}{2}-\sqrt{\kappa}\leq\theta-\sqrt{\kappa}q_{2}<\frac{\pi}{2},
\end{equation*}
so the first terms in $h_{2}'(\theta)$ is non-negative and the second one is  positive, and hence $h_{2}'(\theta)>0$. If instead $\theta\in(\sqrt{\kappa}q_{2},\pi/2-\sqrt{\kappa}q_{1})$, we have
\begin{equation*}
  0<  \sqrt{\kappa}<\theta+\sqrt{\kappa}q_{1}<\frac{\pi}{2} \quad \mbox{and} \quad 0<\theta -\sqrt{\kappa}q_2<\frac{\pi}{2}
  -\sqrt{\kappa}<\frac{\pi}{2},
\end{equation*}
and thus,
\begin{equation*}
    \frac{\sin^{4}(\theta+\sqrt{\kappa}q_{1})}{\sin^{4}(\theta-\sqrt{\kappa}q_{2})}\ge \left (\frac{\sin(\!\sqrt{\kappa}q_{1})}{\mu\sin(\!\sqrt{\kappa}q_{2})} \right )\left (\frac{\cos(\theta+\sqrt{\kappa}q_{1})}{\cos(\theta-\sqrt{\kappa}q_{2})} \right ),
\end{equation*}
since the left-hand side is greater than one, while both factors on the right-hand side are less than one. Rearranging this inequality yields again $h_{2}'(\theta)>0$. Therefore, $h_{2}$ is monotone increasing on $(\sqrt{\kappa}q_{2},\pi/2)$.
Considering that 
\begin{equation*}
    h_{2}(\pi/2)=\frac{\sin(\!\sqrt{\kappa}q_{1})}{\cos^{3}(\sqrt{\kappa}q_{1})}-\mu\frac{\sin(\!\sqrt{\kappa}q_{2})}{\cos^{3}(\sqrt{\kappa}q_{2})}<0,
\end{equation*}
we conclude that inequality \eqref{eq:auxh4lambda2} holds.

\paragraph{Analysis on $\mathcal{I}_3$.} We proceed  analogously to the analysis on $\mathcal{I}_2$ above. From the proof of Proposition \ref{prop:concavity}, we know that the roots
of $f_3$ necessarily belong to the subinterval $(-\pi/2,-\sqrt{\kappa}q_{1})\subset \mathcal{I}_3$, where  $f_3''>0$.
Due to our labeling convention, we have $\theta_{\Ee_{3}}<\theta_{\Ll_{3}}$, and therefore we conclude 
\begin{equation*}
\theta_{\Ll_{3}},\theta_{\Ee_{3}}\in(-\pi/2,-\sqrt{\kappa}q_{1}), \qquad f_3'(\theta_{\Ee_{3}})<0, \qquad f_3'(\theta_{\Ll_{3}})>0.
\end{equation*}
 Hence, in view of relation \eqref{eq:auxlambda1}, we have  
\begin{equation*}
\lambda_{1}^{(\Ee_{3})}>0,\qquad \lambda_{1}^{(\Ll_{3})}<0.
\end{equation*}

Since $\csc\theta_0<0$ for $\theta_0\in(-\pi/2,-\sqrt{\kappa}q_{1})$,
 it follows from formula \eqref{eq:auxlambda2} that the signs of $\lambda_{2}^{(\Ee_{3})}$ and $\lambda_{2}^{(\Ll_{3})}$ coincide with the signs of $h_3(\theta_{\Ee_{3}})$ and $h_3(\theta_{\Ll_{3}})$, respectively. Therefore, it remains to prove that
 \begin{equation}
 \label{eq:toproveh3}
h_3(\theta_{\Ee_{3}})<0 \qquad \mbox{and} \qquad h_3(\theta_{\Ll_{3}})>0,
\end{equation}
 to conclude that $\lambda_1$ and $\lambda_2$ have opposite signs for both $\Ee_{3}$ and $\Ll_{3}$.

For $\theta\in \mathcal{I}_3$ we have 
\begin{equation*}
h_{3}(\theta)=-\frac{\sin(\!\sqrt{\kappa}q_{1})}{\sin^{3}(\theta+\sqrt{\kappa}q_{1})}
+\mu\frac{\sin(\!\sqrt{\kappa}q_{2})}{\sin^{3}(\theta-\sqrt{\kappa}q_{2})},
\end{equation*}
and
\begin{equation*}
h_{3}'(\theta)
=3\sin(\!\sqrt{\kappa}q_{1})\frac{\cos(\theta+\sqrt{\kappa}q_{1})}{\sin^{4}(\theta+\sqrt{\kappa}q_{1})}
-3\mu\sin(\!\sqrt{\kappa}q_{2})\frac{\cos(\theta-\sqrt{\kappa}q_{2})}{\sin^{4}(\theta-\sqrt{\kappa}q_{2})}.
\end{equation*}
Similar estimates to those given above for  $h_2'$ show that 
$$h_{3}'(\theta)>0\qquad \mbox{for all $\theta \in \left(-\pi/2,-\sqrt{\kappa}q_{1}\right)$},$$ 
and hence
$h_{3}$ is monotone increasing on  this interval. Furthermore, 
considering that $h_{3}(-\pi/2)=h_2(\pi/2)<0$, and that $h_3(\theta)\to \infty$ as $\theta \to  -\sqrt{\kappa}q_{1}^-$,
we conclude the existence of a unique $\vartheta_{3}\in(-\pi/2,-\sqrt{\kappa}q_{1})$ such that $h_{3}(\vartheta_{3})=0$. 
We will prove below that 
\begin{equation}
\label{eq:toproveh3-2}
f_3(\vartheta_{3})<0.
\end{equation}
Since $f_3''(\theta)<0$ for $\theta\in (-\pi/2,-\sqrt{\kappa}q_{1})$, this implies that $\theta_{\Ee_{3}}< \vartheta_{3}<
\theta_{\Ll_{3}}$, and by monotonicity of $h_3$ we conclude that inequality\eqref{eq:toproveh3} holds.

Notice that, by  the definition of $\Lambda$ in formula \eqref{eq:constantLambda},  $\vartheta_{3}$ satisfies
\begin{equation}
\label{eq:varthetah3}
\sin(\vartheta_{3}+\sqrt{\kappa}q_{1})
=\Lambda\sin(\vartheta_{3}-\sqrt{\kappa}q_{2}).
\end{equation} 
Hence, using formula \eqref{eq:f3}, we have
\begin{equation*}
f_{3}(\vartheta_{3})
=\frac{1}{\sin^{2}(\vartheta_{3}+\sqrt{\kappa}q_{1})}
\left[
\sin(\vartheta_{3})\cos(\vartheta_{3})\sin^{2}(\vartheta_{3}+\sqrt{\kappa}q_{1})
+\Gamma(1+\Lambda^{2}\mu)
\right].
\end{equation*}
Therefore, proving inequality \eqref{eq:toproveh3-2}
 is equivalent to showing that
\begin{equation}
\label{eq:inequalityf3}
\Gamma(1+\Lambda^{2}\mu)
<
\sin(-\vartheta_{3})\cos(-\vartheta_{3})\sin^{2}(\vartheta_{3}+\sqrt{\kappa}q_{1}).
\end{equation}

On the one hand, 
using the upper bound for $\Gamma$ from Lemma \ref{lemma:estimatesparameters}, together with $\Lambda<1$, $\mu<\frac{1}{10}$, and $\kappa<\pi^{2}/64$, we obtain
\begin{equation}
\label{eq:firstpartf3}
\Gamma(1+\Lambda^{2}\mu)
<
\frac{10}{9}\sin^{3}(\sqrt{\kappa})\cos(\!\sqrt{\kappa})\left (1+\frac{1}{10}\right )
\le
\frac{11}{9}\sin^{2}\left(\frac{\pi}{8}\right)\sin(\!\sqrt{\kappa}) \approx 0.17899 \sin(\!\sqrt{\kappa}).
\end{equation}
On the other hand, we claim that the following estimates hold:
\begin{subequations}
\label{eq:evar3}
\begin{align}
	\sin(-\vartheta_{3}) & \ge \cos^{1/3}\!\left(\frac{\pi}{8}\right)\cos^{3}\!\left(\frac{\pi}{16}\right), \label{eq:evara} \\
	\cos(-\vartheta_{3}) &\ge \frac{\cos^{1/3}\!\left(\frac{\pi}{8}\right)\sin\left(\frac{7\pi}{144}\right)\cos^{2}\!\left(\frac{\pi}{16}\right)}{\sin\left(\frac{\pi}{8}\right)}\sin(\sqrt{\kappa}), \label{eq:evarb} \\
    	\sin^{2}(\vartheta_{3}+\sqrt{\kappa}q_{1}) &\ge 
	\cos^{2/3}\!\left( \sqrt{\kappa} \right)\cos^{4}\!\left( \frac{\pi}{16} \right). \label{eq:evarc}
\end{align}
\end{subequations}
Combining the three estimates \eqref{eq:evar3} yields,
\begin{equation}
\label{eq:secondpartf3}
\sin(-\vartheta_{3})\cos(-\vartheta_{3})\sin^{2}(\vartheta_{3}+\sqrt{\kappa}q_{1})
\ge
\frac{\cos^{4/3}\left(\frac{\pi}{8}\right)\cos^{9}\left(\frac{\pi}{16}\right)\sin\left(\frac{7\pi}{144}\right)}
{\sin\left(\frac{\pi}{8}\right)}\sin(\sqrt{\kappa})\approx 0.3004\sin(\sqrt{\kappa}),
\end{equation}
and inequality \eqref{eq:inequalityf3} follows from estimates \eqref{eq:firstpartf3} and \eqref{eq:secondpartf3}.

Therefore, it only remains to establish the estimates \eqref{eq:evar3} to complete the proof. We will use the following elementary trigonometric inequalities, which can be easily verified. For $x\in\left[0,\frac{\pi}{8}\right]$, one has
\begin{subequations}
\label{eq:trigineq}
\begin{align}
    \sin\left(\frac{x}{2}\right) &\leq \frac{1}{2\cos\left(\frac{\pi}{16}\right)}\sin(x), \label{eq:trigineqa} \\
    \sin^{2}\left(\frac{x}{2}\right) &\leq \frac{1}{2}\tan\left(\frac{\pi}{16}\right)\sin(x), \label{eq:trigineqb} \\
    \sin\left(\frac{x}{2}\right) &\geq \frac{1}{2}\sin(x), \label{eq:trigineqc} \\
    \sin\left(\frac{7x}{18}\right) &\geq \frac{\sin\left(\frac{7\pi}{144}\right)}{\sin\left(\frac{\pi}{8}\right)}\sin(x). \label{eq:trigineqd}
\end{align}
\end{subequations}
We now express the quantities to be estimated in terms of the parameters $\Gamma$, $\Lambda$, $q_{1}$ and $q_{2}$. First, we use the identity
\begin{equation*}
    \sin(\vartheta_{3}-\sqrt{\kappa}q_{2})=\sin(\vartheta_{3}+\sqrt{\kappa}q_{1}-\sqrt{\kappa})=\sin(\vartheta_{3}+\sqrt{\kappa}q_{1})\cos(\!\sqrt{\kappa})-\cos(\vartheta_{3}+\sqrt{\kappa}q_{1})\sin(\!\sqrt{\kappa}).
\end{equation*}
Substituting this into formula \eqref{eq:varthetah3} and solving for $\tan(\vartheta_{3}+\sqrt{\kappa}q_{1})$, we obtain
\begin{equation*}
    \tan(\vartheta_{3}+\sqrt{\kappa}q_{1})=-\frac{\Lambda\sin(\!\sqrt{\kappa})}{1-\Lambda\cos(\!\sqrt{\kappa})}.
\end{equation*}
Now we use the identity $\sin y=\tfrac{\tan y}{\sqrt{1+\tan^{2}y}}$ which is valid for $y\in (-\frac{\pi}{2},\frac{\pi}{2})$.
Considering that  $\vartheta_{3}\in(-\pi/2,-\sqrt{\kappa}q_{1})$, the previous expression for $\tan(\vartheta_{3}+\sqrt{\kappa}q_{1})$, implies 
\begin{equation}
\label{eq:sinsquarevartheta3}
    \sin^{2}(\vartheta_{3}+\sqrt{\kappa}q_{1})=\frac{\Lambda^{2}\sin^{2}(\sqrt{\kappa})}{1-2\Lambda\cos(\!\sqrt{\kappa})+\Lambda^{2}}.
\end{equation}

We now return to formula \eqref{eq:varthetah3} to derive an explicit expression for $\tan(-\vartheta_{3})$. Expanding the sine of the sum on both sides and solving for $\tan(-\vartheta_{3})$, yields
\begin{equation*}
    \tan(-\vartheta_{3}) = \frac{\sin(\!\sqrt{\kappa}q_{1})+\Lambda\sin(\!\sqrt{\kappa}q_{2})}{\cos(\!\sqrt{\kappa}q_{1})-\Lambda\cos(\!\sqrt{\kappa}q_{2})}.
\end{equation*}
Using the identities $\sin y=\frac{\tan y}{\sqrt{1+\tan^{2}y}}$ and $\cos y=\frac{1}{\sqrt{1+\tan^{2}y}}$, 
which are valid for $y\in (-\frac{\pi}{2},\frac{\pi}{2})$,
 together with the previous expression for $\tan(-\vartheta_{3})$, we deduce
\begin{subequations}
\label{eq:sincosvartheta3}
\begin{align}
    \sin(-\vartheta_{3}) & = \frac{\sin(\!\sqrt{\kappa}q_{1})+\Lambda\sin(\!\sqrt{\kappa}q_{2})}{\sqrt{1-2\Lambda\cos(\!\sqrt{\kappa})+\Lambda^{2}}},\label{eq:sincosvartheta3a}\\
    \cos(-\vartheta_{3}) & =\frac{\cos(\!\sqrt{\kappa}q_{1})-\Lambda\cos(\!\sqrt{\kappa}q_{2})}{\sqrt{1-2\Lambda\cos(\!\sqrt{\kappa})+\Lambda^{2}}}.\label{eq:sincosvartheta3b}
\end{align}
\end{subequations}

Next, we obtain an upper bound for the expression $1-2\Lambda\cos(\!\sqrt{\kappa})+\Lambda^{2}$ appearing in the
denominators of formulas \eqref{eq:sinsquarevartheta3} and \eqref{eq:sincosvartheta3}. Using that $\Lambda<1$ and the trigonometric inequality \eqref{eq:trigineqa}, we have:
\begin{equation}
\label{eq:denominatorLambda}
\begin{aligned}
    1-2\Lambda\cos(\!\sqrt{\kappa})+\Lambda^{2} 
    &= 1-2\Lambda+\Lambda^{2}+2\Lambda\big(1-\cos(\!\sqrt{\kappa})\big) = (1-\Lambda)^{2} + 4\Lambda\sin^{2}\left(\frac{\sqrt{\kappa}}{2}\right) \\
    &\le (1-\Lambda)^{2} + \frac{1}{\cos^{2}\left(\frac{\pi}{16}\right)}\sin^{2}(\sqrt{\kappa}).
\end{aligned}
\end{equation}
On the other hand, using that $\Lambda<1$ and formula \eqref{eq:alternativeLambda}, we have
\begin{equation*}
    1-\Lambda \leq 1-\Lambda^{3} = 1-\frac{\cos(\sqrt{\kappa}q_{2})}{\cos(\sqrt{\kappa}q_{1})} = \frac{\cos(\sqrt{\kappa}q_{1})-\cos(\sqrt{\kappa}q_{2})}{\cos(\sqrt{\kappa}q_{1})}.
\end{equation*}
We may further bound the numerator of this expression using the identity for the difference of cosines,  
together with the fact that $q_1+q_2=1$, and inequality~\eqref{eq:trigineqb}, to obtain
\begin{equation*}
    \cos(\sqrt{\kappa}q_{1})-\cos(\sqrt{\kappa}q_{2}) = 2\sin\left(\frac{\sqrt{\kappa}}{2}\right)\sin\left(\frac{\sqrt{\kappa}}{2}(q_{2}-q_{1})\right) \leq 2\sin^{2}\left(\frac{\sqrt{\kappa}}{2}\right) \leq \tan\left(\frac{\pi}{16}\right)\sin(\!\sqrt{\kappa}).
\end{equation*}
Moreover, using $q_{1}<\tfrac{1}{2}$ and $0<\kappa<\tfrac{\pi^{2}}{64}$ together with the monotonicity of the cosine function, we may bound
\begin{equation*}
    \frac{1}{\cos(\!\sqrt{\kappa}q_{1})} \leq \frac{1}{\cos\left(\frac{\pi}{16}\right)}.
\end{equation*}
Combining the last inequalities, we conclude that
\begin{equation*}
    1-\Lambda \leq \frac{\tan\left(\frac{\pi}{16}\right)}{\cos\left(\frac{\pi}{16}\right)}\sin(\!\sqrt{\kappa}).
\end{equation*}
Coming back to the computation \eqref{eq:denominatorLambda} and using the previous bound for $1-\Lambda$, we obtain
the following  upper bound, which will be used below:
\begin{equation}
\label{eq:bound-denominator}
    1-2\Lambda\cos(\!\sqrt{\kappa})+\Lambda^{2} \leq \left[ \frac{\tan^{2}\left( \frac{\pi}{16} \right)}{\cos^{2}\left(\frac{\pi}{16}\right)} + \frac{1}{\cos^{2}\left( \frac{\pi}{16} \right)} \right]\sin^{2}(\sqrt{\kappa}) = \frac{\sin^{2}(\sqrt{\kappa})}{\cos^{4}\left( \frac{\pi}{16} \right)}.
\end{equation}

Using the lower bound for $\Lambda$ from Lemma~\ref{lemma:estimatesparameters}, together with 
the estimate \eqref{eq:bound-denominator} in relation \eqref{eq:sinsquarevartheta3}, yields
\begin{equation*}
	\sin^{2}(\vartheta_{3}+\sqrt{\kappa}q_{1}) \geq \frac{\cos^{2/3}\!\left( \sqrt{\kappa} \right)\sin^{2}(\sqrt{\kappa})}{\frac{\sin^{2}(\sqrt{\kappa})}{\cos^{4}\left( \frac{\pi}{16} \right)}}=\cos^{2/3}\!\left( \sqrt{\kappa} \right)\cos^{4}\!\left( \frac{\pi}{16} \right),\end{equation*}
establishing inequality \eqref{eq:evarc}.

We now estimate the numerator of the right-hand side of Eq.~\eqref{eq:sincosvartheta3a}  Using the inequalities $0<\Lambda<1$,
 $0<q_{2}-q_{1}<1$, and the inequality \eqref{eq:trigineqc} we can bound
 \begin{equation*}
\begin{aligned}
    \sin(\!\sqrt{\kappa}q_{1})+\Lambda\sin(\!\sqrt{\kappa}q_{2}) 
    & \geq \Lambda\big[\sin(\!\sqrt{\kappa}q_{1})+\sin(\!\sqrt{\kappa}q_{2})\big]  = 2\Lambda\sin\left(\frac{\sqrt{\kappa}}{2}\right)\cos\left(\frac{\sqrt{\kappa}}{2}(q_{2}-q_{1})\right) \\
    & \geq \Lambda\sin(\!\sqrt{\kappa})\cos\left(\frac{\sqrt{\kappa}}{2}\right).
\end{aligned}
\end{equation*}
Using  the  lower bound for $\Lambda$ from Lemma~\ref{lemma:estimatesparameters}
and the restriction $\sqrt{\kappa}<\frac{\pi}{8}$, we may further bound the above expression 
to obtain
\begin{equation*}
    \sin(\!\sqrt{\kappa}q_{1})+\Lambda\sin(\!\sqrt{\kappa}q_{2}) 
     \geq  \cos^{1/3}\left(\frac{\pi}{8}\right)\cos\left(\frac{\pi}{16}\right)\sin(\!\sqrt{\kappa}).
\end{equation*}
The above inequality together with the estimate \eqref{eq:bound-denominator} allows us to bound $\sin(-\vartheta_{3})$
in Eq.\eqref{eq:sincosvartheta3a} by 
\begin{equation*}
    \sin(-\vartheta_{3}) \geq \frac{\cos^{1/3}\left(\frac{\pi}{8}\right)\cos\left(\frac{\pi}{16}\right)\sin(\!\sqrt{\kappa})}{\frac{\sin(\!\sqrt{\kappa})}{\cos^{2}\left(\frac{\pi}{16}\right)}}=\cos^{1/3}\left(\frac{\pi}{8}\right)\cos^{3}\left(\frac{\pi}{16}\right),
\end{equation*}
which proves inequality \eqref{eq:evara}.

Finally, in order to prove inequality \eqref{eq:evarb}, we estimate the numerator on the right-hand side of Eq.~\eqref{eq:sincosvartheta3b}. 
Using $\Lambda<1$, together with the estimate \eqref{eq:trigineqc}, $q_1+q_2=1$, and the inequality $q_2-q_1\geq \frac{7}{9}$, 
which follows from Lemma~\ref{lemma:estimatesparameters}, yields
\begin{equation*}
\begin{aligned}
    \cos(\!\sqrt{\kappa}q_{1})-\Lambda\cos(\!\sqrt{\kappa}q_{2})
    & \geq \Lambda\big[\cos(\!\sqrt{\kappa}q_{1})-\cos(\!\sqrt{\kappa}q_{2})\big] = 2\Lambda\sin\left(\frac{\sqrt{\kappa}}{2}\right)\sin\left(\frac{\sqrt{\kappa}}{2}(q_{2}-q_{1})\right) \\
    & \geq \Lambda \sin(\sqrt{\kappa}) \sin\left(\frac{7\sqrt{\kappa}}{18}\right) .
\end{aligned}
\end{equation*}
Using the estimate \eqref{eq:trigineqd}, the lower bound for $\Lambda$ from Lemma~\ref{lemma:estimatesparameters},
and $0<\sqrt{\kappa}<\frac{\pi}{8}$, we may further bound the above expression to obtain,
\begin{equation*}
    \cos(\!\sqrt{\kappa}q_{1})-\Lambda\cos(\!\sqrt{\kappa}q_{2})\geq \cos^{1/3}\left(\frac{\pi}{8}\right)\frac{\sin\left(\frac{7\pi}{144}\right)}{\sin\left(\frac{\pi}{8}\right)}\sin^{2}(\sqrt{\kappa}).
\end{equation*}
Combining the above inequality with the estimate \eqref{eq:bound-denominator}, we obtain the following bound for $\cos(-\vartheta_{3})$ in formula \eqref{eq:sincosvartheta3b}:
\begin{equation*}
    \cos(-\vartheta_{3})
     \geq \frac{\cos^{1/3}\left(\frac{\pi}{8}\right)\frac{\sin\left(\frac{7\pi}{144}\right)}{\sin\left(\frac{\pi}{8}\right)}\sin^{2}(\sqrt{\kappa})}{\frac{\sin(\!\sqrt{\kappa})}{\cos^{2}\left(\frac{\pi}{16}\right)}}  = \frac{\cos^{1/3}\left(\frac{\pi}{8}\right)\sin\left(\frac{7\pi}{144}\right)\cos^{2}\left(\frac{\pi}{16}\right)}{\sin\left(\frac{\pi}{8}\right)}\sin(\!\sqrt{\kappa}),
\end{equation*}
which proves inequality \eqref{eq:evarb}.
\qed

\subsection{Derivation of the asymptotic expansions in Sec.~\ref{sec:asymp}}
\label{app:appendix}

\subsubsection{Expansions in Theorem~\ref{prop:asymLpositive}}

We will employ the notation from the proof of Theorem~\ref{thm:stabilitycollinear} in Sec.~\ref{app:stability-collinear},  and denote by $\theta_{\Ll_{1}}$  the value of the coordinate $\theta$ of the point   $\Ll_{1}\in \mathcal{G}^\pm$. This value is determined as the appropriate root of  Eq.~\eqref{eq:collinearpositive} if $\kappa>0$ and of Eq.~\eqref{eq:collinearnegative} if $\kappa<0$.  Furthermore, we will explicitly indicate its dependence on the parameters $(\kappa, \mu)$ and write $\theta_{\Ll_{1}}(\kappa,\mu)$. This notation applies to both positive and negative curvature, and extends to the other collinear RE in an obvious manner. We present the computations for the case $\kappa>0$; the expansions for $D_{\Ll_{i},C}$ when $\kappa<0$ follow by analogous arguments. 

We will repeatedly employ the expansion 
\begin{equation}
\label{eq:q2exp}
    q_{2} = 1 + \mathcal{O}(\mu) + \mathcal{O}_\mu(\kappa),
\end{equation}
which follows from Eq.~\eqref{eq:expansionsparameters}, where $\sqrt{\kappa}$ is denoted by $t$.

In view of Proposition~\ref{prop:scaling}, for $\kappa>0$, the signed Riemannian distance 
from $\Ll_1$ to the center of rotation $C$ in the statement of the theorem
 is given by
\begin{equation}
\label{eq:rescalingcolpos}
    D_{\Ll_{1},C}(\kappa,\mu) = \frac{\theta_{\Ll_{1}}(\kappa,\mu)}{\sqrt{\kappa}},
\end{equation}
and analogous formulas hold for all other collinear RE.

We derive an asymptotic expansion for  $\theta_{\Ll_{1}}(\kappa,\mu)$ as $\kappa\to 0^+$ based on the 
ansatz:
\begin{equation*}
    \theta_{\Ll_{1}}(\kappa,\mu) = \sqrt{\kappa}q_{2} - \sum_{n=1}^{\infty} c_n(\mu) \kappa^{n/2}.
\end{equation*}
Substituting the above expansion into the equation
\begin{equation*}
    f_1\big(\theta_{\Ll_{1}}(\kappa,\mu);\kappa,\mu\big) = 0,
\end{equation*}
 that determines $\theta_{\Ll_{1}}(\kappa,\mu)$,
and expanding in powers of $\sqrt{\kappa}$, we obtain equations 
for the determination of the coefficients $c_n(\mu)$, $n=1,2,\dots$. At first order, we get
\begin{equation*}
    (1 + \mu)c_{1}^5 + (3 + 2\mu)c_{1}^4 + (3 + \mu)c_{1}^3 - \mu c_{1}^2 - 2\mu c_{1} - \mu = 0.
\end{equation*}
Introducing the scaling $c_{1} = \mu^{1/3} y$ and dividing by $\mu$ yields
\begin{equation*}
    (1 + \mu)\mu^{2/3}y^5 - (3 + 2\mu)\mu^{1/3}y^4 + (3 + \mu)y^3 - \mu^{2/3}y^2 + 2\mu^{1/3}y - 1 = 0,
\end{equation*}
which reduces in the limit $\mu \to 0$ to
\begin{equation*}
    3y^3 - 1 = 0.
\end{equation*}
The unique real solution is $y_0 = (1/3)^{1/3}$, and by the Implicit Function Theorem there exists a unique real solution $y(\mu)$ for $\mu$ sufficiently small such that
\begin{equation*}
    y(\mu) = (1/3)^{1/3} + \mathcal{O}(\mu^{1/3}),
\end{equation*}
giving
\begin{equation*}
    c_{1}(\mu) = (1/3)^{1/3} \mu^{1/3} + \mathcal{O}(\mu^{2/3}).
\end{equation*}

Therefore, in view of formula \eqref{eq:q2exp}, we obtain
\begin{equation*}
    \theta_{\Ll_{1}}(\kappa,\mu) = (1 - (1/3)^{1/3} \mu^{1/3} + \mathcal{O}(\mu^{2/3}))\sqrt{\kappa}+\mathcal{O}_\mu(\kappa),
\end{equation*}
and relation \eqref{eq:rescalingcolpos} yields
\begin{equation*}
    D_{\Ll_{1},C}(\kappa,\mu) = 1 - (1/3)^{1/3} \mu^{1/3} + \mathcal{O}(\mu^{2/3}) + \mathcal{O}_\mu(\kappa^{1/2}),
\end{equation*}
as required.

The determination of the expansion of $D_{\Ll_{2},C}(\kappa,\mu)$ is analogous. We substitute the ansatz
\begin{equation*}
    \theta_{\Ll_{2}}(\kappa,\mu) = \sqrt{\kappa}q_{2} + \sum_{n=1}^{\infty} c_n(\mu) \kappa^{n/2},
\end{equation*}
into the equation
\begin{equation*}
    f_2\big(\theta_{\Ll_{2}}(\kappa,\mu);\kappa,\mu\big) = 0,
\end{equation*}
that determines $\theta_{\Ll_{2}}(\kappa,\mu)$, and expand in powers of $\sqrt{\kappa}$. This leads to the condition that $c_{1}(\mu)$ satisfies the quintic polynomial
\begin{equation*}
    c_{1}^5 + (3 + 2\mu)c_{1}^4 + (3 + \mu)c_{1}^3 - \mu c_{1}^2 - 2\mu c_{1} - \mu = 0.
\end{equation*}
Proceeding as above one obtains
\begin{equation*}
    c_{1}(\mu) = (1/3)^{1/3} \mu^{1/3} + \mathcal{O}(\mu^{2/3}),
\end{equation*}
which, in view of formula \eqref{eq:q2exp}, and the condition $D_{\Ll_{2},C}(\kappa,\mu)= \theta_{\Ll_{2}}(\kappa,\mu)/\sqrt{\kappa}$
analogous to relation \eqref{eq:rescalingcolpos}, yields the desired expansion
\begin{equation*}
    D_{\Ll_{2},C}(\kappa,\mu) = 1 + (1/3)^{1/3} \mu^{1/3} + \mathcal{O}(\mu^{2/3}) + \mathcal{O}_\mu(\kappa^{1/2}).
\end{equation*}

To study $\Ll_{3}$, we substitute the ansatz
\begin{equation*}
    \theta_{\Ll_{3}}(\kappa,\mu) = -\sqrt{\kappa} + \sum_{n=1}^{\infty} c_n(\mu) \kappa^{n/2},
\end{equation*}
into the equation
\begin{equation*}
    f_3\big(\theta_{\Ll_{3}}(\kappa,\mu);\kappa,\mu\big) = 0,
\end{equation*}
that determines $\theta_{\Ll_{3}}(\kappa,\mu)$. Expanding in powers of $\sqrt{\kappa}$ yields
\begin{equation*}
    (1 + 4 \mu) c_{1}^5 - (7 + 24 \mu) c_{1}^4 + (19 + 54 \mu) c_{1}^3 - (24 + 52 \mu) c_{1}^2 + (12 + 14 \mu) c_{1} + 5 \mu + \mathcal{O}(\mu^2) = 0.
\end{equation*}
where we have kept all terms up to order $\mu$. Introducing the rescaling $c_{1} = \mu y$ and dividing the polynomial by $\mu$, the leading-order polynomial becomes
\begin{equation*}
    12 y + 5 = 0,
\end{equation*}
which yields,
\begin{equation*}
    c_{1}(\mu) = -\frac{5}{12} \mu + \mathcal{O}(\mu^2).
\end{equation*}
Substituting back into the ansatz for $\theta_{\Ll_{3}}(\kappa,\mu)$ and using the condition $D_{\Ll_{3}}(\kappa,\mu)= \theta_{\Ll_{3}}(\kappa,\mu) /\sqrt{\kappa}$ analogous to relation \eqref{eq:rescalingcolpos}, we obtain
\begin{equation*}
    D_{\Ll_{3},C}(\kappa,\mu) = -1 - \frac{5}{12} \mu + \mathcal{O}(\mu^2) + \mathcal{O}_\mu(\kappa^{1/2}),
\end{equation*}
as required.

For $\theta_{\Aa_{1}}(\kappa,\mu)$ we propose the ansatz
\begin{equation*}
    \theta_{\Aa_{1}}(\kappa,\mu) = -\pi + \kappa^{1/2}\, q_{2} - \sum_{n=1}^{\infty} c_n(\mu)\, \kappa^{n/2}.
\end{equation*}
Substituting this expansion into the equation 
\begin{equation*}
    f_{4}\big(\theta_{\Aa_{1}}(\kappa,\mu);\kappa,\mu\big) = 0,
\end{equation*}
that determines $\theta_{\Aa_{1}}(\kappa,\mu)$ and 
 expanding in powers of $\kappa^{1/2}$, we find that 
 the leading order coefficients $c_{1}(\mu)$ and $c_{2}(\mu)$ satisfy the polynomial equations
\begin{equation*}
\begin{gathered}
    (1 + \mu) c_{1}^{5} - (3 + 2 \mu) c_{1}^{4} + (3 + \mu) c_{1}^{3} - (2 - \mu) c_{1}^{2} - 2 \mu c_{1} + \mu = 0, \\
    c_{2} \Big[ (1 + \mu) c_{1}^{6} - (3 + 3 \mu) c_{1}^{5} + (3 + 3 \mu) c_{1}^{4} + (1 - 3 \mu) c_{1}^{3} + 6 \mu c_{1}^{2} - 6 \mu c_{1} + 2 \mu \Big] = 0.
\end{gathered}
\end{equation*}
Solving these equations up to order $\mu^{3/2}$ yields three possible solutions, but only one ensures that $\theta_{\Aa_{1}}$ remains in $\mathcal{I}_4$ for small $\mu$. This solution is
\begin{equation*}
    c_{1}(\mu) = \frac{\sqrt{\mu}}{\sqrt{2}} - \frac{\mu}{8}, \qquad c_{2}(\mu) = 0.
\end{equation*}
Substituting back into the ansatz for $\theta_{\Aa_{1}}(\kappa,\mu)$, we get
\begin{equation*}
    \theta_{\Aa_{1}}(\kappa,\mu) = -\pi + \left( q_{2} - \frac{1}{\sqrt{2}}\mu^{1/2} + \frac{\mu}{8}  \right)\kappa^{1/2} + \mathcal{O}_\mu(\kappa).
\end{equation*}
 Using formula \eqref{eq:q2exp}, and the condition $D_{\Aa_{1},C}(\kappa,\mu)= \theta_{\Ll_{2}}(\kappa,\mu)/\sqrt{\kappa}$
analogous to relation \eqref{eq:rescalingcolpos}, yields the desired expansion
\begin{equation*}
    D_{\Aa_{1},C}(\kappa,\mu) = -\frac{\pi}{\sqrt{\kappa}} + \left(1 - \frac{1}{\sqrt{2}}\mu^{1/2} - \frac{7}{8} \mu \right) + \mathcal{O}_\mu(\kappa).
\end{equation*}

We now work with $\Ee_{2}$ proposing the ansatz
\begin{equation*}
    \theta_{\Ee_{2}}(\kappa,\mu) = \sum_{n=0}^{\infty} c_n(\mu) \kappa^{n/2}.
\end{equation*}
Substituting into the equation
\begin{equation*}
    f_2\big(\theta_{\Ee_{2}}(\kappa,\mu);\kappa,\mu\big) = 0,
\end{equation*}
and expanding in powers of $\kappa^{1/2}$, we find that the first few coefficients satisfy the system
\begin{equation*}
\begin{aligned}
& \text{order } \kappa^0: && \cos(c_0) \, \sin(c_0) = 0, \\
& \text{order } \kappa^{1/2}: && c_{1} \, \cos(2 c_0) = 0, \\
& \text{order } \kappa^1: && c_{2} \, \cos(2 c_0) - c_{1}^2 \, \sin(2 c_0) = 0, \\
& \text{order } \kappa^{3/2}: && \big(-\tfrac{2}{3} c_{1}^3 + c_{3}\big) \cos(2 c_0) - \csc^2(c_0) - 2 c_{1} c_{2} \, \sin(2 c_0) = 0.
\end{aligned}
\end{equation*}
Solving this system, we find a unique solution in $\mathcal{I}_2$ given by
\begin{equation*}
    c_0 = \frac{\pi}{2}, \quad c_{1} = 0, \quad c_{2} = 0, \quad c_{3} = -1.
\end{equation*}
Substituting back into the ansatz, we have
\begin{equation*}
    \theta_{\Ee_{2}}(\kappa,\mu) = \frac{\pi}{2} - \kappa^{3/2} + \mathcal{O}(\kappa^2),
\end{equation*}
and using an analog of relation \eqref{eq:rescalingcolpos} for $\Ee_2$ we obtain
\begin{equation*}
    D_{\Ee_{2},C}(\kappa,\mu) = \frac{\pi}{\sqrt{\kappa}} - \kappa + \mathcal{O}(\kappa^2),
\end{equation*}
as required.

The procedure for $\Ee_3$ is completely analogous. We propose the ansatz
\begin{equation*}
    \theta_{\Ee_{3}}(\kappa,\mu) = \sum_{n=0}^{\infty} c_n(\mu) \kappa^{n/2}.
\end{equation*}
Substituting  into 
\begin{equation*}
    f_3\big(\theta_{\Ee_{3}}(\kappa,\mu);\kappa,\mu\big) = 0,
\end{equation*}
and expanding in powers of $\kappa^{1/2}$, yields
\begin{equation*}
\begin{aligned}
& \text{order } \kappa^0: && \cos(c_0) \, \sin(c_0) = 0, \\
& \text{order } \kappa^{1/2}: && c_{1} \, \cos(2 c_0) = 0, \\
& \text{order } \kappa^1: && c_{2} \, \cos(2 c_0) - c_{1}^2 \, \sin(2 c_0) = 0, \\
& \text{order } \kappa^{3/2}: && \big(-\tfrac{2}{3} c_{1}^3 + c_{3}\big) \cos(2 c_0) + \csc^2(c_0) - 2 c_{1} c_{2} \, \sin(2 c_0) = 0.
\end{aligned}
\end{equation*}
The unique solution of interest is 
\begin{equation*}
    c_0 = -\frac{\pi}{2}, \quad c_{1} = 0, \quad c_{2} = 0, \quad c_{3} = 1.
\end{equation*}
which leads to the desired expansion
\begin{equation*}
    D_{\Ee_{3},C}(\kappa,\mu) = -\frac{\pi}{\sqrt{\kappa}} + \kappa + \mathcal{O}(\kappa^2).
\end{equation*}

\subsubsection{Expansions in Theorem~\ref{th:asympexpL4L5}}

We already know that, for sufficiently small $\kappa>0$ and for any $\kappa<0$, there exist only two triangular relative equilibria, which we associate with $\Ll_{4}$ and $\Ll_{5}$. This identification will be justified below. 

Instead of considering the distance to the center of rotation, we study the asymptotic expansion of the Riemannian distance to the primary $\mathbf{p}_{2}$. This  distance, measured
 from either $\Ll_{4}$ or $\Ll_{5}$ to  $\mathbf{p}_{2}$, follows from formulas \eqref{eq:auxiliardistancecoords} and Remark \ref{rmk:distances}, and is given by
\begin{equation}
\label{eq:rescalingtripos}
	D_{\Ll_{4},\mathbf{p}_{2}}(\kappa,\mu) = D_{\Ll_{5},\mathbf{p}_{2}}(\kappa,\mu) = \frac{x_{0}(\kappa,\mu)}{\sqrt{\vert\kappa\vert}}.
\end{equation}
Here $x_{0}(\kappa,\mu)$ denotes the root of the function $g_{\kappa,\mu}$ defined by the formula \eqref{eq:auxfunctiontriangular} when $\kappa>0$, and of the function
\begin{equation}
\label{eq:functiontrineg}
\begin{aligned}
    g_{\kappa,\mu}^{-}(x)
    &=\left(\Lambda_{\kappa,\mu}^{-}\right)^{3}\sinh(\sqrt{-\kappa}q_{2})\sinh^{3}(x)
    \left[\sinh(\sqrt{-\kappa}q_{1})\cosh(x)
    +\sinh(\sqrt{-\kappa}q_{2})\sqrt{1+\left(\Lambda_{\kappa,\mu}^{-}\right)^{2}\sinh^{2}(x)}\right] \\
    &\quad-\Gamma_{\kappa,\mu}\sinh(\sqrt{-\kappa}),
\end{aligned}
\end{equation}
when $\kappa<0$. The distance to the primary $\mathbf{p}_{1}$ can be obtained from the conditions \eqref{eq:mubalancecondition} for both $\kappa>0$ and $\kappa<0$. This choice is natural since the corresponding distances are already known in the planar case, allowing for a direct comparison.

We present the computations for the case $\kappa>0$. The case $\kappa<0$ follows by analogous arguments. Let $x_{0}$ denote the root of the function $g_{\kappa,\mu}(x)$ defined in Eq.~\eqref{eq:auxfunctiontriangular}, corresponding to either $\Ll_{4}$ or $\Ll_{5}$. To derive its asymptotic expansion for small $\kappa$, we consider the ansatz
\begin{equation*}
    x_{0}(\kappa,\mu) = \sum_{n=1}^{\infty} c_n(\mu) \, \kappa^{n/2}.
\end{equation*}
Substituting this expansion into $g_{\kappa,\mu}(x_{0})$ and expanding, we obtain a system of equations for the first coefficients. It is worth noting that it was necessary to expand $g_{\kappa,\mu}$ up to fifth order in $\kappa^{1/2}$ to find the first nontrivial equations. The initial equations read
\begin{equation*}
\begin{aligned}
& \text{order } \kappa^{5/2}: && \frac{-1 + c_{1}^3}{1+\mu} = 0, \\
& \text{order } \kappa^3: && \frac{3 c_{1}^2 c_{2}}{1+\mu} = 0.
\end{aligned}
\end{equation*}
Solving the first two equations gives $c_{1}=1$ and $c_{2}=0$. Substituting these values into the ansatz and proceeding to the next orders, we obtain
\begin{equation*}
 \text{order } \kappa^{7/2}:  \frac{-1 + 6(1+\mu)c_{3}}{2(1+\mu)^{2}} = 0,
\end{equation*}
which yields, $c_{3} = 1 / 6(1+\mu)$. Finally, substituting these coefficients back into the ansatz and recalling that the Riemannian distance from the triangular equilibria to $\mathbf{p}_{2}$ is given by formula \eqref{eq:rescalingtripos}, we obtain the desired result.

For the formula for the Riemannian distance to $\mathbf{p}_{1}$, we can use formula \eqref{eq:mubalancecondition} for $\kappa>0$ to obtain that
\begin{equation*}
	D_{\Ll_{4},\mathbf{p}_{1}}(\kappa,\mu) = D_{\Ll_{5},\mathbf{p}_{1}}(\kappa,\mu) = \frac{\arcsin\big(\Lambda \sin(x_{0}(\kappa,\mu))\big)}{\sqrt{\kappa}}.
\end{equation*}
The stated expansion then follows by substituting the asymptotic expansion of $x_{0}$, together with the expansion of $\Lambda$ given in formulas \eqref{eq:expansionsparameters}, and expanding the resulting expression in powers of $\kappa$.

\section{Computer-assisted Proofs (CAPs)}
\label{App:CAPs}

The CAPs developed in this paper rely on rigorous bounds of real-valued  functions over closed intervals.
By \emph{rigorous}, we mean that the computed bounds are mathematically 
guaranteed to enclose the true values of the function on the domain under consideration. These bounds  are obtained
 using interval arithmetic \cite{MR0231516,Tucker2011}, which provides  rigorous implementations of the basic arithmetic operations and elementary functions on closed intervals, while accounting for computer round-off errors.

The code for our CAPs \cite{github_codes} is written in the Julia programming language \cite{Julia-2017} using
 the IntervalArithmetic.jl library \cite{IntervalArithmetic.jl}. For any function \(F \colon [a,b] \to \mathbb{R}\),
  built from the operations and functions supported by the library, the corresponding interval implementation, \(F_{\mathrm{Int}}\),
   satisfies  
\[
F(X):=\{ F(x) : x \in X \} \subset F_{\mathrm{Int}}(X),
\]
for every closed subinterval \(X \subset [a,b]\). In particular, the computed interval \(F_{\mathrm{Int}}(X)\) is guaranteed 
to contain the true range of \(F\) on \(X\).

\subsection{Existence CAPs}
\label{app:existenceCAPs}

The existence CAPs in Secs.~\ref{sss:CAP-existence-collinear} and \ref{ss:CAPExistenceTriangular} rely on
the rigorous determination of the number of roots of the functions $f_1,\dots, f_4$, and $g_{\kappa,\mu}$ in their respective 
domains. As we explain below,  this procedure 
can be reduced to the rigorous evaluation of real-valued functions on closed intervals. 

Consider a continuous function \(F \colon [a,b] \to \mathbb{R}\)  which is differentiable on $(a,b)$. Suppose that 
there is numerical evidence that $F$ has a unique root in \([a,b]\).   To rigorously 
show that \(F\) has indeed a unique root in \([a,b]\), we proceed as follows. First, we verify that the intervals \(F_{\mathrm{Int}}(a)\) and \(F_{\mathrm{Int}}(b)\) have opposite signs, which implies that there is at least one root in \([a,b]\) by the intermediate 
value theorem. Second, we find a subinterval \([c,d] \subset [a,b]\) on which we can show that \(F\) is monotone by checking that the derivative \(F'\) is strictly positive or strictly negative on \([c,d]\). Finally, we verify that \(F\) does not vanish on the complement \([a,b] \setminus [c,d]\). These three steps prove the existence of a unique root in \([a,b]\), lying on \([c,d]\), and are performed using the  
interval arithmetic implementations \(F_{\mathrm{Int}}\) and \(F'_{\mathrm{Int}}\). In practice, we look for a sharp enclosure
of the root, meaning that we prove the root lies in a very small subinterval \([c,d]\), since
this is useful for the  stability CAPs. 
 
The case of nonexistence of roots is treated similarly. Once we have numerical evidence that \(F\) has no roots on \([a,b]\), we compute  \(F_{\mathrm{Int}}([a,b])\) and verify that this interval does not contain zero.

The procedures described above may fail when the values of $F$ or $F'$ remains close to zero over a substantial portion  
of $[a,b]$.
 In such cases,  the interval arithmetic implementations \(F_{\mathrm{Int}}\) and \(F'_{\mathrm{Int}}\) may be unable
 to determine the sign of $F$ or $F'$ on the relevant subintervals.

Suppose now that the function $F:[a,b]\to \R$, like the functions $f_1, \dots, f_4$ and $g_{\kappa,\mu}$, 
depends on the parameters $(\kappa,\mu)$. Our approach treats these parameters as intervals. 
 Specifically, for given values of \(\kappa\) and \(\mu\), we consider the intervals  
$$
[\kappa - \delta_\kappa,\;\kappa + \delta_\kappa]
\quad\text{and}\quad
[\mu - \delta_\mu,\;\mu + \delta_\mu],
$$
where \(\delta_\kappa,\delta_\mu >0 \). Each choice of \((\kappa,\mu)\) 
together with $\delta_\kappa$ and $\delta_\mu$ determines a rectangle in the parameter plane. If the interval-arithmetic 
procedure described above successfully determines the existence and location of roots of $F$ on $[a,b]$, then the conclusion is valid 
 on the whole  parameter rectangle. Similar considerations allow us to treat the dependence
 of the endpoints of the interval $[a,b]$ on the parameters $(\kappa,\mu)$. To obtain results on larger regions of parameter space, we cover the region of interest 
 by a finite collection of overlapping parameter rectangles and perform the same verification on each rectangle. 
  
The procedure described above is not directly  applicable to the functions $f_j$, $j=1,\dots, 4$, because their
domains  $\mathcal{I}_j$ are open intervals and the functions become unbounded near the endpoints. 
 To address this issue, we first define appropriate  closed subintervals $\mathcal{J}_j\subset \mathcal{I}_j$, and prove analytically
 that all roots of $f_j$ in $\mathcal{I}_j$ are contained in   $\mathcal{J}_j$.
The details are presented in Sec.~\ref{app:compactification} below.

\subsection{Stability CAPs}
\label{app:stabilityCAPs}

The implementation of the stability CAPs illustrated in Secs.~\ref{ss:CAPsCollinear} and \ref{ss:stability-triangular-CAPs} 
is straightforward. After determining an enclosure for a root corresponding to a RE, as described in 
the previous section, we evaluate the relevant quantities in Propositions~\ref{sss:stability-conds-collinear} or
\ref{prop:stabilitytrineg} using interval arithmetic, according to the type of RE under consideration (collinear or triangular).
 If the interval evaluation establishes that these quantities have a definite
sign, the corresponding stability conclusion follows from the proposition. 
The parameters $(\kappa,\mu)$ are incorporated into the implementation in the same manner as in the existence CAPs.

\subsection{\texorpdfstring
  {The closed subintervals $\mathcal{J}_j\subset \mathcal{I}_j$.}
  {The closed subintervals J_j subset I_j.}}
\label{app:compactification}

Let $(\kappa,\mu)\in \Ps{1}$. We define,
\begin{equation*}
\begin{split}
\mathcal{J}_1&=\left[ 0,\, \arctan\!\left( \frac{\tan(\sqrt{\kappa} q_{2})}{1+\mu} \right) \right], \\
 \mathcal{J}_2&= \left[
            \left( 1 + \frac{3^{1/4}}{\pi^{3/2}} 
            \sqrt{\frac{\mu}{1+\mu}}\, \epsilon_{\kappa} \right) 
            \sqrt{\kappa} q_{2}, \, 
            \frac{\pi}{2}
        \right], \\
  \mathcal{J}_3&= \left[ -\frac{\pi}{2},
            -\left( 1 + \frac{3^{1/4}}{\pi^{3/2}} 
            \frac{1}{\sqrt{1+\mu}}\, \epsilon_{\kappa} \right) 
            \sqrt{\kappa} q_{1}
        \right],\\
        \mathcal{J}_4&=\left[  -\pi - \left(1 - \sqrt{\frac{4 \Gamma}{2 \kappa q_{2}^{2} + \pi^{2} \Gamma \mu}} \right)\sqrt{\kappa} q_{1}  \; , \;  -\pi + \left( 1 - \frac{q_{1}}{q_{2}}\sqrt{\frac{4\Gamma \mu}{2\kappa q_{1}^{2}+\pi^{2}\Gamma}} \right) \sqrt{\kappa}q_{2} \right], \\
\end{split}
\end{equation*}
where the value of $\epsilon_\kappa$ in $\mathcal{J}_2$ and $\mathcal{J}_3$ is given by
\begin{equation}
\label{eq:defepsilonk}
	\epsilon_{\kappa} =
        		\begin{cases}
            		2 \kappa^{1/4}, & \text{if } 0 < \kappa < \left( \frac{\pi}{3} \right)^{2}, \\[6pt]
            		\dfrac{3}{4} \left( \pi - 2\sqrt{\kappa} \right), & 
            		\text{if } \left( \frac{\pi}{3} \right)^{2} < \kappa < \left( \frac{\pi}{2} \right)^{2}.
        		\end{cases}
\end{equation}

The following lemma shows that, for all $j=1,\dots, 4$, all  roots of  $f_j$ are contained in the closed interval 
interval $\mathcal{J}_j$. Consequently,  the CAPs described in Sec.~\ref{app:existenceCAPs} can be applied on $\mathcal{J}_j$.

\begin{lemma}
Fix $(\kappa,\mu)\in \Ps{1}$. For all $j=1,\dots, 4$, the intervals $\mathcal{J}_j$
are contained in $\mathcal{I}_j$. Furthermore, if $\theta_0\in \mathcal{I}_j$ is a root of the function
$\theta\mapsto f_j(\theta;\kappa,\mu)$ then $\theta_0\in \mathcal{J}_j$.
\end{lemma}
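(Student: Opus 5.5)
We handle each $j$ separately. The containment $\mathcal{J}_j\subset\mathcal{I}_j$ is an elementary check of the endpoints, and we then show that $f_j$ has a fixed sign on the two excluded pieces $\mathcal{I}_j\setminus\mathcal{J}_j$. The excluded pieces are of two kinds. The first is a region already discarded in the proof of Proposition~\ref{prop:concavity}: for $j=2$, $f_2<0$ on $(\pi/2,\pi-\sqrt{\kappa}q_1)$, and for $j=3$, $f_3>0$ on $(-\pi+\sqrt{\kappa}q_2,-\pi/2)$. The second is a neighbourhood of a singularity, where the dominant term $\Gamma/\sin^2(\cdot)$ forces a sign. In each case we compare the term $\frac12\sin 2\theta$, which is bounded by $\frac12$ or by a smaller explicit quantity, against the singular term.

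For $j=1$, the left endpoint $0$ is handled as in the proof of Theorem~\ref{thm:stabilitycollinear}: $f_1(0)<0$, and on $(-\sqrt{\kappa}q_1,0]$ we have $\sin 2\theta\le 0$ while $1/\sin^2(\theta+\sqrt{\kappa}q_1)>\mu/\sin^2(\theta-\sqrt{\kappa}q_2)$, because $|\theta+\sqrt{\kappa}q_1|<|\theta-\sqrt{\kappa}q_2|$ there. Hence $f_1<0$ on that piece. For the right endpoint $\theta^*=\arctan(\tan(\sqrt{\kappa}q_2)/(1+\mu))$ we show $f_1>0$ on $[\theta^*,\sqrt{\kappa}q_2)$. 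Here $\frac12\sin2\theta\ge0$, so it suffices to prove $\mu\sin^2(\theta+\sqrt{\kappa}q_1)\ge\sin^2(\theta-\sqrt{\kappa}q_2)$. Expanding the sines in terms of $\tan\theta$ and using \eqref{eq:rotationRE}, this should reduce to the inequality $\tan\theta\ge\tan(\sqrt{\kappa}q_2)/(1+\mu)$. I expect this to be the intended origin of the formula, and I would verify it by direct algebra.

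For $j=4$ the two endpoints are lengths $s$ from the antipodal singularities $-\pi-\sqrt{\kappa}q_1$ and $-\pi+\sqrt{\kappa}q_2$, chosen so that the singular term alone beats everything else. Near $-\pi-\sqrt{\kappa}q_1$ we write $\theta+\sqrt{\kappa}q_1=-\pi+u$ and bound $\sin^2u\le u^2$, which gives $\Gamma/\sin^2\ge\Gamma/u^2$. We bound the $\mu$-term using $\sin^2(\theta-\sqrt{\kappa}q_2)\ge(2/\pi)^2(\sqrt{\kappa}q_2)^2$ via Jordan's inequality, and we bound $|\frac12\sin2\theta|\le\frac12$. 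The resulting inequality $\Gamma/u^2\ge\frac12+\pi^2\Gamma\mu/(4\kappa q_2^2)$ is exactly $u\le\sqrt{4\Gamma\kappa q_2^2/(2\kappa q_2^2+\pi^2\Gamma\mu)}$, which matches the prescribed endpoint after factoring out $\sqrt{\kappa}q_1$. I would verify that scaling carefully, since the formula mixes $q_1$ and $q_2$. The other endpoint is symmetric, with the roles of $1$ and $\mu$ swapped and with $f_4<0$ there.

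For $j=2,3$ the remaining excluded piece is the thin interval adjacent to the primary, of relative width $c\,\epsilon_\kappa$. There the singular term $\Gamma\mu/\sin^2(\theta-\sqrt{\kappa}q_2)$ (respectively $\Gamma/\sin^2(\theta+\sqrt{\kappa}q_1)$) must dominate $\frac12\sin2\theta\le\frac12\cdot 2\theta$. We use $\Gamma\ge \sin^3\sqrt\kappa\cos\sqrt\kappa/(1+\mu)$ together with $\sin^3 x\ge (2x/\pi)^3$. This gives a lower bound of order $\mu\kappa^{3/2}/((1+\mu)\,\delta^2)$ for the singular term, where $\delta$ is the distance to the primary, and the choice $\delta=\tfrac{3^{1/4}}{\pi^{3/2}}\sqrt{\mu/(1+\mu)}\,\epsilon_\kappa\sqrt\kappa q_2$ should make it exceed the bound on the sine term. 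I expect this step to be the main obstacle, because the two regimes of $\epsilon_\kappa$ must be treated separately. For $\kappa>(\pi/3)^2$ the factor $\cos\sqrt\kappa$ in $\Gamma$ degenerates, and this is why $\epsilon_\kappa$ is linear in $\pi-2\sqrt\kappa$; there I would use $\cos\sqrt\kappa\ge\frac{2}{\pi}(\frac{\pi}{2}-\sqrt\kappa)$. Elsewhere the bound is $\epsilon_\kappa=2\kappa^{1/4}$, and I would track the constants $3^{1/4}$ and $\pi^{3/2}$ through both cases. The $j=3$ case is identical with $\mu$ replaced by $1$ and $q_2$ by $q_1$.
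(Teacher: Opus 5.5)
Your overall strategy matches the paper's. On each piece of $\mathcal{I}_j\setminus\mathcal{J}_j$ you fix the sign of $f_j$ by bounding $\tfrac12\sin2\theta$ and letting one singular term dominate. The paper phrases the same thing through monotone auxiliary functions ($g_1,g_2,g_3,g_4,p_4$) checked at an endpoint; your pointwise bounds are equivalent.

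The one step you leave to ``direct algebra'' would not come out as you predict. For $j=1$ on $(0,\sqrt\kappa q_2)$, take square roots and divide by $\cos\theta>0$. This shows that $\mu\sin^2(\theta+\sqrt\kappa q_1)\ge\sin^2(\theta-\sqrt\kappa q_2)$ is \emph{equivalent} to
\[
\tan\theta\ \ge\ T^*:=\frac{\sin(\sqrt\kappa q_2)-\sqrt\mu\,\sin(\sqrt\kappa q_1)}{\cos(\sqrt\kappa q_2)+\sqrt\mu\,\cos(\sqrt\kappa q_1)} .
\]
No use of $\sin(2\sqrt\kappa q_1)=\mu\sin(2\sqrt\kappa q_2)$ turns $T^*$ into $\tan(\sqrt\kappa q_2)/(1+\mu)$. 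Already as $\kappa\to0$, the ratios of the two quantities to $\sqrt\kappa q_2$ tend to $(1-\mu^{3/2})/(1+\sqrt\mu)$ and $1/(1+\mu)$ respectively. So the right endpoint of $\mathcal{J}_1$ is not the exact threshold but a safe overestimate. The missing step is the comparison $\tan(\sqrt\kappa q_2)/(1+\mu)>T^*$. It follows from $\mu<\sqrt\mu<\sqrt\mu\cos(\sqrt\kappa q_1)/\cos(\sqrt\kappa q_2)$ together with dropping the term $-\sqrt\mu\sin(\sqrt\kappa q_1)$. This is exactly the paper's argument: $T^*=\tan\theta^*$, where $\theta^*$ is the zero of its $g_1$.

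There are two smaller corrections.

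For $j=4$ near $-\pi-\sqrt\kappa q_1$, your inequality gives the threshold $u\le\delta_L\sqrt\kappa q_2$, with $\delta_L=\sqrt{4\Gamma/(2\kappa q_2^2+\pi^2\Gamma\mu)}$, not $\delta_L\sqrt\kappa q_1$. The prescribed endpoint is smaller because $q_1<q_2$, so the argument still works; the paper uses $q_1<q_2$ at exactly this point. You also need $\delta_L\le 1$, for two reasons. It gives $\mathcal{J}_4\subset\mathcal{I}_4$, and it guarantees $\sin^2(\theta-\sqrt\kappa q_2)\ge\sin^2(\sqrt\kappa q_2)$ on the whole excluded piece, which your Jordan bound requires. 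The other endpoint of $\mathcal{J}_4$ matches your symmetric computation exactly.

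For $j=2$, compare the singular term with $\tfrac12$, not with $\theta$. Your estimates are $\Gamma\ge\sin^3\sqrt\kappa\cos\sqrt\kappa/(1+\mu)$ and $\sin^3x\ge(2x/\pi)^3$; add $\cos\sqrt\kappa\ge\tfrac12$ for $\kappa<(\pi/3)^2$. With these, $\Gamma\mu/\sin^2(\theta-\sqrt\kappa q_2)$ is bounded below on the excluded piece only by $1/(\sqrt3\,q_2^2)\approx 0.58$ (for small $\mu$). That exceeds $\tfrac12$, but not $\theta$, which is close to $\pi/3$ when $\kappa$ is near $(\pi/3)^2$. The paper uses the bound $\tfrac12$ together with the sharper $\sin^3x\cos x\ge\frac{81\sqrt3}{16\pi^3}x^3$ on $(0,\pi/3)$. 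In the second regime, your bound $\cos\sqrt\kappa\ge\frac{2}{\pi}\bigl(\frac{\pi}{2}-\sqrt\kappa\bigr)$ suffices with room to spare.
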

\begin{proof}
We recall that the explicit expressions for  $f_j$ and their domains $\mathcal{I}_j$ are given in formulas \eqref{eq:f1}, \eqref{eq:f2}, \eqref{eq:f3} and \eqref{eq:f4}. We require different estimates for the different  values of $j$, and therefore divide the  presentation accordingly. In each case, the fact that $\mathcal{J}_{j}$ is non-empty and contained in $\mathcal{I}_{j}$ follows from the corresponding estimates. Throughout the proof we often abbreviate and write  $f_j(\theta)$ instead of $f_j(\theta;\kappa,\mu)$.

\paragraph{Case $j=1$.} Since $0<q_{1}<q_{2}<1$ and the function $1/\sin^{2}$ is decreasing in $(0,\pi/2)$, we have
\begin{equation*}
	f_{1}(0) = -\Gamma \left[ \frac{1}{\sin^{2}(\sqrt{\kappa} q_{1})} - \frac{\mu}{\sin^{2}(\sqrt{\kappa} q_{2})} \right] < 0.
\end{equation*}
On the other hand, using that $q_{1}<1/2$, we obtain the following inequality valid for $\theta\in(-\sqrt{\kappa}q_{1},0)$,
\begin{equation*}
	-\pi/2 < 2\theta < 0.
\end{equation*}
Using the formula \eqref{eq:auxf1prime} for $f_{1}'$, together with the previous inequality and inequalities \eqref{eq:estimatesdomainf1}, we can conclude that $f_{1}$ is increasing on $(-\sqrt{\kappa}q_{1},0)$. Therefore,
\begin{equation}
\label{eq:f1negative}
 	f_{1}(\theta) < 0 \quad \text{for all } \theta \in (-\sqrt{\kappa} q_{1}, 0).
\end{equation}

Consider now the auxiliary function
\begin{equation*}
	g_{1}(\theta) = -\Gamma \left[ \frac{1}{\sin^{2}(\theta+\sqrt{\kappa} q_{1})} - \frac{\mu}{\sin^{2}( \theta -\sqrt{\kappa} q_{2})} \right].
\end{equation*}
We have that $f_{1}(\theta) \ge g_{1}(\theta)$ for $\theta \in [0, \sqrt{\kappa} q_{2})$. Differentiating gives
\begin{equation*}
	g_{1}'(\theta) = 2\Gamma \left[ \frac{\cos(\theta+\sqrt{\kappa} q_{1})}{\sin^{3}(\theta+\sqrt{\kappa} q_{1})} - \mu\, \frac{\cos(\theta-\sqrt{\kappa} q_{2})}{\sin^{3}(\theta-\sqrt{\kappa} q_{2})} \right].
\end{equation*}
Formulas \eqref{eq:estimatesdomainf1} imply that $g_{1}'(\theta)>0$, so $g_1$ is increasing  for $\theta \in [0, \sqrt{\kappa} q_{2})$. 
Furthermore, a direct computation shows that $g_{1}$ has a unique zero $\theta^{*} \in (0, \sqrt{\kappa} q_{2})$ given by
\begin{equation*}
	\theta^{*} = \arctan\!\left( \frac{\sin(\sqrt{\kappa} q_{2}) - \sqrt{\mu}\, \sin(\sqrt{\kappa} q_{1})} {\cos(\sqrt{\kappa} q_{2}) + \sqrt{\mu}\, \cos(\sqrt{\kappa} q_{1})}\right).
\end{equation*}
Let
\begin{equation*}
	\theta_{R} = \arctan\!\left( \frac{\tan(\sqrt{\kappa} q_{2})}{1+\mu} \right).
\end{equation*}
Then clearly $0 < \theta_{R} < \sqrt{\kappa} q_{2}$. Moreover, using $\sqrt{\mu} > \mu$ and $\cos(\sqrt{\kappa} q_{1}) > \cos(\sqrt{\kappa} q_{2})$, we obtain
\begin{equation*}
\begin{aligned}
        \tan(\theta_{R}) & = \frac{\tan(\sqrt{\kappa} q_{2})}{1+\mu} > \frac{\tan(\sqrt{\kappa} q_{2})}{1 + \sqrt{\mu}\, \frac{\cos(\sqrt{\kappa} q_{1})}{\cos(\sqrt{\kappa} q_{2})}} > \frac{\sin(\sqrt{\kappa} q_{2})}{\cos(\sqrt{\kappa} q_{2}) + \sqrt{\mu}\, \cos(\sqrt{\kappa} q_{1})} \\
	& > \frac{\sin(\sqrt{\kappa} q_{2}) - \sqrt{\mu}\, \sin(\sqrt{\kappa} q_{1})}{\cos(\sqrt{\kappa} q_{2}) + \sqrt{\mu}\, \cos(\sqrt{\kappa} q_{1})} = \tan(\theta^{*}),
\end{aligned}
\end{equation*}
so, by monotonicity of $g_1$, we conclude that   $\theta_{R} \in (\theta^{*}, \sqrt{\kappa} q_{2})$. Using the monotonicity of $f_1$  together with the fact that $f_{1}(\theta) \ge g_{1}(\theta)$ for  $\theta \in [0, \sqrt{\kappa} q_{2})$, we obtain the following sequence of inequalities valid for $\theta\in (\theta_R,\sqrt{\kappa}q_2)$, 
\begin{equation}
\label{eq:f1positive}
	f_1(\theta)>f_1(\theta_R)>f_1(\theta^*)\geq g_1(\theta^*)=0.
\end{equation}

Inequalities \eqref{eq:f1negative} and \eqref{eq:f1positive} show that if $\theta_0\in \mathcal{I}_1$ is a root of $f_1$, then $\theta_0$ does not belong to the union $(-\sqrt{\kappa}q_1,0) \bigcup \, (\theta_R,\sqrt{\kappa}q_2)$, which is the complement of $\mathcal{J}_1$ in $\mathcal{I}_1$.

\paragraph{Case $j=4$.} We first consider the interval $(-\pi-\sqrt{\kappa}q_{1},-\pi)$. Define the auxiliary function
\begin{equation*}
	g_{4}(\theta) = -\frac{1}{2} + \Gamma \left[ \frac{1}{\sin^{2}(\theta + \sqrt{\kappa} q_{1})} - \frac{\mu}{\sin^{2}(\theta - \sqrt{\kappa} q_{2})} \right], \qquad \theta \in (-\pi-\sqrt{\kappa}q_{1},-\pi).
\end{equation*}
Then, $g_{4}(\theta)\leq f_{4}(\theta)$. Since we assume that $0<\kappa<\pi^{2}/4$, the estimates \eqref{eq:estimatedomainf4} are not longer valid and are instead replaced by
\begin{equation}
\label{eq:estimatedomaing4}
	-\pi < \theta+\sqrt{\kappa}q_{1} < -\pi/2, \qquad \text{and} \qquad -3\pi/2 < \theta-\sqrt{\kappa}q_{2} < -\pi.
\end{equation}
These estimates imply that $g_{4}'(\theta)<0$. Define
\begin{equation*}
	\theta_{L} = -\pi - \left(1 - \delta_{L}\right)\sqrt{\kappa} q_{1}, \qquad \delta_{L} = \sqrt{\frac{4 \Gamma}{2 \kappa q_{2}^{2} + \pi^{2} \Gamma \mu}}.
\end{equation*}
We claim that $g_{4}(\theta_{L}) > 0$. Since $g_{4}$ is decreasing and $g_{4}(\theta)\leq f_{4}(\theta)$, this implies that 
\begin{equation}
\label{eq:f4positive}
	f_{4}(\theta)>0 \quad \text{for all } \theta \in (-\pi-\sqrt{\kappa}q_{1},\theta_{L})
\end{equation}
Now we prove the claim. By the definition of $g_{4}$, it suffices to show that
\begin{equation*}
	\frac{\Gamma}{\sin^{2}(\delta_{L}\sqrt{\kappa} q_{1})} \geq \frac{1}{2} + \frac{\Gamma \mu }{\sin^{2}(\sqrt{\kappa} q_{2})}.
\end{equation*}
Indeed, using that $q_{1}<q_{2}$ and the definition of $\delta_{L}$, we obtain the following inequalities,
\begin{equation*}
	\frac{\Gamma}{\sin^{2}(\delta_{L}\sqrt{\kappa} q_{1})} \geq \frac{\Gamma}{\delta_{L}^{2}\kappa q_{1}^{2}} \geq \frac{2 \kappa q_{2}^{2} + \pi^{2} \Gamma \mu}{4 \Gamma} \cdot \frac{\Gamma}{\kappa q_{1}^{2}}  \geq \left(  \frac{\kappa q_{2}^{2}}{2}  + \frac{\Gamma \mu}{\tfrac{4}{\pi^{2}}} \right) \frac{1}{\kappa q_{2}^{2}} \geq \frac{1}{2} + \frac{\Gamma \mu }{\sin^{2}(\sqrt{\kappa} q_{2})}.
\end{equation*}
In the previous computation, we have used the estimate
\begin{equation*}
	\sin^{2}(\sqrt{\kappa} q_{2}) \geq \frac{4}{\pi^{2}} \kappa q_{2}^{2},
\end{equation*}
which holds because $\sqrt{\kappa} q_{2}<\pi/2$. 

We can prove in a similar way that
\begin{equation}
\label{eq:f4negative}
	f_{4}(\theta) < 0 \quad \text{for all } \theta \in (\theta_{R},-\pi+\sqrt{\kappa}q_{2}), \qquad \theta_{R} = -\pi + \left(1 - \frac{q_{1}}{q_{2}}\sqrt{\frac{4 \Gamma \mu}{2 \kappa q_{1}^{2} + \pi^{2} \Gamma}} \right)\sqrt{\kappa} q_{2},
\end{equation}
by defining the auxiliary function
\begin{equation*}
	p_{4}(\theta) = \frac{1}{2} + \Gamma \left[ \frac{1}{\sin^{2}(\theta + \sqrt{\kappa} q_{1})} - \frac{\mu}{\sin^{2}(\theta - \sqrt{\kappa} q_{2})} \right], \qquad \theta \in (-\pi,-\pi+\sqrt{\kappa}q_{2}),
\end{equation*}
and proving that $p_{4}'(\theta) < 0$ and $p_{4}(\theta_{R})<0$. 

Inequalities \eqref{eq:f4positive} and \eqref{eq:f4negative} show that if $\theta_0\in \mathcal{I}_4$ is a root of $f_4$, then $\theta_0$ does not belong to the union $ (-\pi-\sqrt{\kappa}q_{1},\theta_{L}) \bigcup \, (\theta_{R},-\pi+\sqrt{\kappa}q_{2})$, which is the complement of $\mathcal{J}_4$ in $\mathcal{I}_4$.

\paragraph{Case $j=2$.} In the proof of Proposition \ref{prop:concavity}, we verified that the roots of $f_{2}$ can only occur in the interval $(\sqrt{\kappa}q_{2},\pi/2]$. Now, consider the auxiliary function
\begin{equation*}
    g_{2}(\theta)=\frac{1}{2}-\Gamma \left[
        \frac{1}{\sin^{2}(\theta+\sqrt{\kappa}q_{1})}
        + \frac{\mu}{\sin^{2}(\theta-\sqrt{\kappa}q_{2})}
    \right], \qquad \theta\in(\sqrt{\kappa}q_{2}, \pi/2 ].
\end{equation*}
It is clear that $f_{2}(\theta)\leq g_{2}(\theta)$. Taking the derivative of $g_{2}$, we verify that $g_{2}'(\theta)>0$ for $\theta\in(\sqrt{\kappa}q_{2},\tfrac{\pi}{2}-\sqrt{\kappa}q_{1})$. Now, we define
\begin{equation*}
 	\theta_{L} = \left( 1+  \frac{3^{1/4}}{\pi^{3/2}}\sqrt{\frac{\mu}{1+\mu}}\,\epsilon_{\kappa}\right) \sqrt{\kappa}q_{2},
\end{equation*}
where $\epsilon_{\kappa}$ is given by the formula \eqref{eq:defepsilonk}. We claim that
\begin{equation}
\label{eq:claimJ2_1}
	\theta_{L}\in(\sqrt{\kappa}q_{2},\,\tfrac{\pi}{2} - \sqrt{\kappa}q_{1}),
\end{equation}
and
\begin{equation}
\label{eq:claimJ2_2}
	g_{2}(\theta_{L})<0.
\end{equation}
In view of the properties of $g_{2}$ and $f_{2}$, the conditions \eqref{eq:claimJ2_1} and \eqref{eq:claimJ2_2} imply that 
\begin{equation*}
	f_{2}(\theta) < 0 \qquad \text{for all } \theta \in (\sqrt{\kappa}q_{2} , \theta_{L}).
\end{equation*}
Therefore, if $f_{2}(\theta_{0})=0$, then necessarily $\theta_{0}\in[\theta_{L},\pi/2]$, which establishes the desired result. Thus, it only remains to prove condition \eqref{eq:claimJ2_1} and inequality \eqref{eq:claimJ2_2}. 

Using the definition of $\theta_{L}$ we obtain
\begin{equation}
\label{eq:alternativethetaL}
	\frac{\theta_{L} - \sqrt{\kappa}q_{2}}{\sqrt{\kappa}q_{2}} = \frac{3^{1/4}}{\pi^{3/2}}\sqrt{\frac{\mu}{1+\mu}}\,\epsilon_{\kappa} \leq \frac{3^{1/4}}{\pi^{3/2}}\,\epsilon_{\kappa}.
\end{equation}
Therefore, to establish the inequality \eqref{eq:claimJ2_2}, it sufficient to prove the following
\begin{equation}
\label{eq:equivalentJ2}
	\sin^{2}\!\left(\frac{3^{1/4}}{\pi^{3/2}}\sqrt{\frac{\mu}{1+\mu}}\,\epsilon_{\kappa}\sqrt{\kappa}q_{2}\right) < 2\mu\Gamma.
\end{equation}

We prove the condition \eqref{eq:claimJ2_1} and the inequality
\eqref{eq:equivalentJ2} by dividing the argument into two cases
according to the value of $\kappa$.

\begin{enumerate}[(i)]    
\item $0<\kappa<\left(\frac{\pi}{3}\right)^{2}$. We substitute the definition of $\epsilon_{\kappa}$ for this case into the estimate \eqref{eq:alternativethetaL}, obtaining
\begin{equation}
\label{eq:estimatethetaL}
	\frac{\theta_{L} - \sqrt{\kappa}q_{2}}{\sqrt{\kappa}q_{2}} \leq \left( \frac{4\sqrt{3}}{\pi^{3}} \right)^{1/2} \kappa^{1/4}.
\end{equation}
Define the function
\begin{equation*}
	\varphi(x) = x + \left( \frac{4\sqrt{3}}{\pi^{3}} \right)^{1/2} x^{3/2}, \qquad x \in (0,\pi/3).
\end{equation*}
Taking its derivative, we verify that $\varphi$ is increasing. Letting $x=\sqrt{\kappa}$, we have
\begin{equation*}
	\sqrt{\kappa}+ \left( \frac{4\sqrt{3}}{\pi^{3}} \right)^{1/2} \kappa^{3/4} = \varphi(\sqrt{\kappa}) < \varphi\left( \frac{\pi}{3} \right) = \frac{\pi}{3} + \frac{2}{3^{5/4}} < \frac{\pi}{2}.
\end{equation*}
Using this last inequality, together with inequality \eqref{eq:estimatethetaL} and the fact that $q_{2} < 1$, we obtain
\begin{equation*}
	\frac{\theta_{L} - \sqrt{\kappa}q_{2}}{\sqrt{\kappa}q_{2}} \leq \left( \frac{4\sqrt{3}}{\pi^{3}} \right)^{1/2} \kappa^{1/4} < \frac{\tfrac{\pi}{2}-\sqrt{\kappa}}{\sqrt{\kappa}q_{2}},
\end{equation*}
which proves that condition \eqref{eq:claimJ2_1} holds. To verify inequality \eqref{eq:equivalentJ2}, we use the elementary estimate
\begin{equation*}
	\sin^{3}(x)\cos(x) \geq \frac{81\sqrt{3}}{16\pi^{3}}x^{3}, \qquad 0 < x < \pi/3.
\end{equation*}
Using the expression for $\Gamma$ given in formula \eqref{eq:auxGammaq} together with the above estimate (with $x=\sqrt{\kappa}$), we obtain the lower bound
\begin{equation*}
	\Gamma \geq \frac{81\sqrt{3}}{16\pi^{3}(1+\mu)}\kappa^{3/2}.
\end{equation*}
Using the fact that $q_{2}<1$ and the definition of $\epsilon_{\kappa}$, we obtain
\begin{equation*}
	\sin^{2}\!\left(\frac{3^{1/4}}{\pi^{3/2}}\sqrt{\frac{\mu}{1+\mu}}\,\epsilon_{\kappa}\sqrt{\kappa}q_{2}\right) \leq \frac{4\sqrt{3}}{\pi^{3}} \frac{\mu}{1+\mu} \kappa^{3/2}\, q_{2}^{2} < 2\mu\,\frac{81\sqrt{3}}{16\pi^{3}(1+\mu)}\,\kappa^{3/2} \leq 2\mu\Gamma,
\end{equation*}
establishing inequality \eqref{eq:equivalentJ2}.

\item $ \left(\frac{\pi}{3}\right)^{2} < \kappa <\left(\frac{\pi}{2}\right)^{2}$. In this case, we use the estimate for $\theta_{L}$ given in formula \eqref{eq:alternativethetaL} and the definition of $\epsilon_{\kappa}$ to obtain
\begin{equation*}
	\frac{\theta_{L} - \sqrt{\kappa}q_{2}}{\sqrt{\kappa}q_{2}} \leq  \left( \frac{9\sqrt{3}}{16\pi^{3}}\right)^{1/2} (\pi-2\sqrt{\kappa}) \leq  \, \frac{\pi-2\sqrt{\kappa}}{\pi} < \frac{\frac{\pi}{2}-\sqrt{\kappa}}{\sqrt{\kappa}q_{2}}.
\end{equation*}
The second inequality follows since the coefficient on the left-hand side is bounded above by $1/\pi$, while the last inequality follows from the fact that $\sqrt{\kappa}q_{2}<\pi/2$. This proves the validity of condition \eqref{eq:claimJ2_1}. 

In order to verify inequality \eqref{eq:equivalentJ2}, we use the estimate
\begin{equation*}
	\sin^{3}(x)\cos(x) \geq \frac{9\sqrt{3}}{16\pi}\left( \pi - 2x \right), \qquad \pi/3 < x < \pi/2.
\end{equation*}
Using the expression for $\Gamma$ given in formula \eqref{eq:auxGammaq}, together with the above estimate (with $x=\sqrt{\kappa}$), we obtain
\begin{equation*}
	\Gamma \geq \frac{9\sqrt{3}}{16\pi(1+\mu)}\left( \pi - 2\sqrt{\kappa} \right). 
\end{equation*}
Additionally, in the range of $\kappa$ under consideration, we have
\begin{equation*}
	\frac{\pi-2\sqrt{\kappa}}{2} < 4, \qquad (\pi/3)^{2} < \kappa <  (\pi/2)^{2},
\end{equation*}
Using the definition of $\epsilon_{\kappa}$, together with the above inequalities, and the conditions  $q_{2}<1$ and $\kappa < (\pi/2)^{2}$, we conclude
\begin{equation*}
	\sin^{2}\!\left(\frac{3^{1/4}}{\pi^{3/2}}\sqrt{\frac{\mu}{1+\mu}}\,\epsilon_{\kappa}\sqrt{\kappa}q_{2}\right) \leq \frac{9\sqrt{3}}{16\pi} \frac{\mu}{1+\mu} \left( \frac{\pi-2\sqrt{\kappa}}{\pi} \right)^{2} \, \kappa < \mu\,\frac{\pi-2\sqrt{\kappa}}{4} \, \frac{9\sqrt{3}}{16\pi(1+\mu)}\left( \pi - 2\sqrt{\kappa} \right) \leq 2\mu\Gamma,
\end{equation*}
establishing inequality \eqref{eq:equivalentJ2}.

\end{enumerate}

\paragraph{Case $j=3$.} The argument is analogous to the previous case. By Proposition \ref{prop:concavity}, the roots of $f_{3}$ can only occur in the interval $[-\pi/2,-\sqrt{\kappa}q_{1})$. Here, the corresponding auxiliary function is
\begin{equation*}
	g_{3}(\theta) = -\frac{1}{2} + \Gamma \left[\frac{1}{\sin^{2}(\theta+\sqrt{\kappa}q_{1})} + \frac{\mu}{\sin^{2}(\theta-\sqrt{\kappa}q_{2})} \right], \qquad \theta \in [-\pi/2,-\sqrt{\kappa}q_{1}).
\end{equation*}
Then, $f_{3}(\theta)\geq g_{3}(\theta)$ and $g_{3}$ is increasing. Define
\begin{equation*}
 	\theta_{R} = - \left( 1+  \frac{3^{1/4}}{\pi^{3/2}}\frac{1}{\sqrt{1+\mu}}\,\epsilon_{\kappa}\right) \sqrt{\kappa}q_{1},
\end{equation*}
where $\epsilon_{\kappa}$ is given in Eq.~\eqref{eq:defepsilonk}. In this case, it remains to verify that
\begin{equation}
\label{eq:claimJ3_1}
	\theta_{R} \in \left(-\tfrac{\pi}{2},-\sqrt{\kappa}q_{1}\right),
\end{equation}
and
\begin{equation}
\label{eq:claimJ3_2}
	g_{3}(\theta_{R}) > 0.
\end{equation}
These properties imply that
\begin{equation*}
	f_{3}(\theta) > 0 \qquad \text{for all } \theta \in (\theta_{R},-\sqrt{\kappa}q_{1}).
\end{equation*}
The proof of condition \eqref{eq:claimJ3_1} and inequality \eqref{eq:claimJ3_2} is divided into the same two cases as in the proof of the case $j = 2$. The analogue of the formula \eqref{eq:alternativethetaL} is
\begin{equation*}
	\frac{\theta_{R} + \sqrt{\kappa}q_{1}}{\sqrt{\kappa}q_{1}} = - \frac{3^{1/4}}{\pi^{3/2}}\frac{1}{\sqrt{1+\mu}}\,\epsilon_{\kappa} \geq -\frac{3^{1/4}}{\pi^{3/2}}\,\epsilon_{\kappa},
\end{equation*}
and inequality \eqref{eq:claimJ3_2} is equivalent to verifying
\begin{equation*}
	\sin^{2}\!\left(\frac{3^{1/4}}{\pi^{3/2}}\frac{1}{\sqrt{1+\mu}}\,\epsilon_{\kappa}\sqrt{\kappa}q_{1}\right) < 2\Gamma.
\end{equation*}
The details are omitted, since the arguments are identical.
\end{proof}

\end{appendices}

\bibliographystyle{acm}
\bibliography{ref}

\end{document}